\documentclass[a4paper,USenglish,cleveref,autoref,thm-restate]
{lipics-v2021}


\usepackage{mathtools, bbm, amsthm, amsmath}
\numberwithin{theorem}{section}
\numberwithin{claim}{section}
\numberwithin{lemma}{section}
\numberwithin{definition}{section}
\numberwithin{corollary}{section}

\def\final{1}  
\def\iflong{\iffalse}
\ifnum\final=0  
\newcommand{\yonggang}[1]{{\color{blue}[{\tiny Yonggang: \bf #1}]\marginpar{*}}}
\newcommand{\danupon}[1]{{\color{red}[{\tiny Danupon: \bf #1}]\marginpar{\color{red}*}}}
\newcommand{\sagnik}[1]{{\color{green!50!black}[{\tiny Sagnik: \bf #1}]\marginpar{\color{green!50!black}*}}}
\newcommand{\todo}[1]{{\color{red}[{\tiny TODO: \bf #1}]\marginpar{\color{red}*}}}
\newcommand{\yuval}[1]{{\bf \color{red!50!black} YUVAL: #1}}
\newcommand{\jan}[1]{{\bf \color{green!50!black} Jan: #1}}
\newcommand{\blikstad}[1]{\textup{\color{magenta} [\textbf{Joakim}: #1]}}
\newcommand{\joachim}[1]{\textup{\color{blue} [\textbf{Joachim}: #1]}}
\newcommand{\zihang}[1]{\textup{\color{magenta} [\textbf{Zihang}: #1]}}
\newcommand{\mar}[1]{\textup{\color{red} [\textbf{Martin}: #1]}}
\newcommand{\nithin}[1]{\textup{\color{magenta} [\textbf{Nithin}: #1]}}
\newcommand{\TODO}[1]{{\color{blue!50!black} [{\bf Todo:} #1]}}

\else 

\newcommand{\yonggang}[1]{}
\newcommand{\danupon}[1]{}
\newcommand{\sagnik}[1]{}
\newcommand{\todo}[1]{}
\newcommand{\yuval}[1]{}
\newcommand{\jan}[1]{}
\newcommand{\blikstad}[1]{}
\newcommand{\TODO}[1]{}
\newcommand{\nithin}[1]{}
\newcommand{\mar}[1]{}
\fi  
\usepackage[ruled,vlined,linesnumbered]{algorithm2e} 

\bibliographystyle{plainurl}

\newcommand{\eps}{\varepsilon}

\newcommand{\poly}{\mathrm{poly}}
\newcommand{\polylog}{\mathrm{polylog}}

\newcommand{\set}[2][ ]{\{#2 \ifthenelse{\equal{#1}{ }}{ }{~|~#1}\}}

\newcommand{\cost}{\textsf{cost}}

\newcommand{\R}{\mathbb{R}}

\newcommand{\dist}{\mathsf{dist}}
\newcommand{\transpose}{\intercal}
\newcommand{\E}{\mathbb{E}}
\newcommand{\trunc}{\mathsf{truncate}}
\newcommand{\collapse}{\mathsf{collapse}}



\newcommand{\prob}[2]{\underset{#1}{\mathbb{P}}\left[#2\right]}


\newcommand{\Good}{\mathsf{Good}}
\newcommand{\Bad}{\mathsf{Bad}}

\Crefname{algocf}{Algorithm}{Algorithms}

\DeclareMathOperator*{\argmax}{arg\,max}
\DeclareMathOperator*{\argmin}{arg\,min}


\renewcommand{\paragraph}[1]{\medskip\noindent{\bf #1}\xspace}

\Crefname{algocfline}{Algorithm}{Algorithms}

\EventEditors{Oswin Aichholzer and Haitao Wang}
\EventNoEds{2}
\EventLongTitle{41st International Symposium on Computational Geometry
(SoCG 2025)}
\EventShortTitle{SoCG 2025}
\EventAcronym{SoCG}
\EventYear{2025}
\EventDate{June 23--27, 2025}
\EventLocation{Kanazawa, Japan}
\EventLogo{}
\SeriesVolume{332}
\ArticleNo{XX}     

\title{Sublinear Data Structures for Nearest Neighbor in Ultra High Dimensions}
\author{Martin G. Herold}{Max Planck Institute for Informatics \& Saarland Informatics Campus, Saarbr\"ucken, Germany}{mherold@mpi-inf.mpg.de}{https://orcid.org/0009-0002-1804-2842}{Funded by the Deutsche Forschungsgemeinschaft (DFG, German Research Foundation) – Project number 399223600.}
\author{Danupon Nanongkai}{Max Planck Institute for Informatics \& Saarland Informatics Campus, Saarbr\"ucken, Germany}{danupon@gmail.com}{https://orcid.org/0000-0003-4468-2675}{}
\author{Joachim Spoerhase}{University of Liverpool, Liverpool, United Kingdom}{Joachim.Spoerhase@liverpool.ac.uk}{https://orcid.org/0000-0002-2601-6452}{Part of this work was done when the author was a researcher at Max Planck Institute for Informatics \& Saarland Informatics Campus, Saarbr\"ucken, Germany, and at Aalto University, Finland. Partially supported by European Research Council (ERC) under the European Union’s Horizon 2020 research and innovation programme (grant agreement No. 759557).}
\author{Nithin Varma}{University of Cologne, Cologne, Germany}{nithvarma@gmail.com}{https://orcid.org/0000-0002-1211-2566}{Part of this work was done when the author was a researcher at Max Planck Institute for Informatics \& Saarland Informatics Campus, Saarbr\"ucken, Germany.}
\author{Zihang Wu}{Max Planck Institute for Informatics \& Saarland Informatics Campus, Saarbr\"ucken, Germany}{wuzihang98@gmail.com}{https://orcid.org/0009-0002-5396-5036}{}

\authorrunning{M.\,G.\,Herold, D. Nanongkai, J. Spoerhase, N. Varma, Z. Wu}
\Copyright{Martin G. Herold, Danupon Nanongkai, Joachim Spoerhase, Nithin Varma, and Zihang Wu}

\ccsdesc[500]{Theory of computation~Nearest neighbor algorithms}
\ccsdesc[300]{Theory of computation~Data structures design and analysis}
\ccsdesc[300]{Theory of computation~Computational geometry}
\ccsdesc[100]{Mathematics of computing~Probabilistic algorithms}

\keywords{sublinear data structure, approximate nearest neighbor}


\hideLIPIcs

\begin{document}

		\maketitle \pagenumbering{roman}
		
\begin{abstract}

Geometric data structures have been extensively studied in the regime where the dimension is much smaller than the number of input points. But in many scenarios in Machine Learning, the dimension can be much higher than the number of points and can be so high that the data structure might be unable to read and store all coordinates of the input and query points. 

Inspired by these scenarios and related studies in feature selection and explainable clustering, we initiate the study of geometric data structures in this ultra-high dimensional regime. Our focus is the {\em approximate nearest neighbor}  problem. 

In this problem, we are given a set of $n$ points $C\subseteq \mathbb{R}^d$  and have to produce a {\em small} data structure that can {\em quickly} answer the following query: given $q\in \mathbb{R}^d$, return a point $c\in C$ that is approximately nearest to $q$, where the distance is under $\ell_1$, $\ell_2$, or other norms. 
Many groundbreaking $(1+\epsilon)$-approximation algorithms have recently been discovered for $\ell_1$- and $\ell_2$-norm distances in the regime where $d\ll n.$

The main question in this paper is: {\em Is there a data structure with sublinear ($o(nd)$) space and sublinear ($o(d)$) query time when $d\gg n$?} This question can be partially answered from the machine-learning literature:
\footnote{All data structures discussed here are randomized and answer each query correctly with $\Omega(1)$ probability. The space complexity refers to the number of words and the time is in the word-RAM model. $\tilde O(f(n))$ hides $\mathrm{polylog}(f(n))$ factors}

\begin{itemize}
    \item   For $\ell_1$-norm distances, an $\tilde O(\log(n))$-approximation data structure with $\tilde O(n \log d)$  space and $O(n)$ query time can be obtained from explainable clustering techniques
    [Dasgupta et al. ICML'20; Makarychev and Shan ICML'21; Esfandiari, Mirrokni, and Narayanan SODA'22; Gamlath et al. NeurIPS'21; Charikar and Hu SODA'22]. 
      
    \item For $\ell_2$-norm distances, a $(\sqrt{3}+\epsilon)$-approximation data structure with $\tilde O(n \log(d)/\mathrm{poly}(\epsilon))$ space and $\tilde O(n/\mathrm{poly}(\epsilon))$ query time can be obtained from feature selection techniques [Boutsidis, Drineas, and Mahoney NeurIPS'09; Boutsidis~et~al. IEEE Trans. Inf. Theory'15; Cohen~et~al. STOC'15]. 

    \item For $\ell_p$-norm distances, a $O(n^{p-1}\log^2(n))$-approximation data structure with $O(n\log(n) + n\log(d))$ space and $O(n)$ query time can be obtained from the explainable clustering algorithms of [Gamlath et al. NeurIPS'21]. 

\end{itemize}
An important open problem is whether a $(1+\epsilon)$-approximation data structure exists. This is not known for any norm, even with higher  (e.g. $\mathrm{poly}(n)\cdot o(d)$) space and query time.

In this paper, we answer this question affirmatively. We present   $(1+\epsilon)$-approximation data structures with the following guarantees.
\begin{itemize}
    \item For $\ell_1$- and $\ell_2$-norm distances:
    $\tilde O(n \log(d)/\mathrm{poly}(\epsilon))$  space and $\tilde O(n/\mathrm{poly}(\epsilon))$ query time. We show that these space and time bounds are tight up to $\mathrm{poly}{(\log n/\epsilon)}$ factors.
    \item For $\ell_p$-norm distances: 
    $\tilde O(n^2 \log(d) (\log\log (n)/\epsilon)^p)$  space and $\tilde O\left(n(\log\log (n)/\epsilon)^p\right)$ query time.
\end{itemize}

Via simple reductions, our data structures imply sublinear-in-$d$ data structures for some other geometric problems; e.g. approximate orthogonal range search (in the style of [Arya and Mount SoCG'95]), furthest neighbor, and give rise to a sublinear $O(1)$-approximate representation of  $k$-median and $k$-means clustering. We hope that this paper inspires future work on sublinear geometric data structures. 
\end{abstract}
		
	
	\pagenumbering{arabic}

\section{Introduction}

\paragraph{Nearest Neighbor.}
The {\em nearest neighbor} problem in geometric spaces is defined as follows: We are given a set of $n$ points $C\subseteq \R^d$ to preprocess and then have to produce a data structure that can answer the following queries: given $q\in \R^d$, return the point $c^*$ in $C$ whose distance to $q$ is smallest, where the distance can be induced by $\ell_1$-, $\ell_2$-, and other norms. 



In the so-called {\em low-dimensional regime} \cite{AndoniNNRW17}---when $d=o(\log n)$ and thus the data structure's space and time complexities can be {\em exponential} in $d$---classical data structures can solve this problem efficiently (e.g. \cite{DBLP:journals/dcg/Clarkson99, DBLP:conf/stoc/KargerR02, DBLP:conf/soda/KrauthgamerL04, DBLP:conf/icml/BeygelzimerKL06}). Recently, novel techniques such as locality-sensitive hashing and sketching (e.g. \cite{DBLP:conf/stoc/IndykM98, DBLP:journals/cacm/AndoniI08, DBLP:conf/soda/AndoniINR14, DBLP:conf/stoc/AndoniR15, indyk-wagner18:apx-nn-limited-space}) have led to a new wave of data structures in the {\em high-dimensional regime} 
%
%
that solve the problem approximately: 
for any $\alpha\geq 1$, a $\alpha$-approximation data structure is the one that returns a point $c\in C$ whose distance to $q$ is at most $\alpha$ times the distance between $c^*$ and $q$.
These recent data structures typically require space and time complexities that are linear or super linear in $d$, which is acceptable in most applications. In fact, many papers start with an assumption that $d=O(\log n)$ which can be assumed due to Johnson-Lindenstrauss lemma \cite{johnson-lindenstrauss84:jl-lemma} at the cost of additional $\tilde O(d)$ space and query time. 

%

\paragraph{Ultra-High Dimensions.} But what if we cannot read all coordinates of the query $q$ and cannot store all coordinates of all $n$ points in $C$? There can be various reasons for such a situation to arise. For example, we may be unable to access the whole $q$ due to the high-cost concern of obtaining the data (e.g. high-cost medical tests or genome sequencing) and privacy concerns (e.g. providing only the necessary medical or genetic information while protecting unrelated sensitive data). More crucially, the size of the data, i.e., $d$ might be prohibitively large to query all of its attributes (as is usually the case for genetic data) and, yet, $n$ can be much smaller than $d$. For example, each point in $C$ may be a center of a cluster among some small number of clusters obtained from prior training and we aim to classify a new data point $q\in \R^d$ by assigning it to the nearest representative. In this case, to accurately classify $q$ without accessing all attributes of $q$, we need to exploit the fact that there are typically a small number of clusters. Additionally, classifying $q$ based on a few attributes is crucial for {\em explainability}---when it is important for human users to know which factors play a role in the classifying decision.

\paragraph{Related Work and Existing Results.}
Scenarios like the above are a motivation behind at least two lines of research in machine learning. 
{\bf (i)} In {\em Feature Selection}, the goal is to select a subset of relevant features of the input dataset. For example, Boutsidis, Drineas, and Mahoney in NeurIPS'09 \cite{BoutsidisDM09} (also see \cite{DBLP:journals/tit/BoutsidisZMD15}) show that given a set $S \subseteq \R^d$ and a parameter $k\ll d$, we can select roughly $k$ coordinates such that we are required to only access these coordinates when computing $k$-means clustering, at the expense of a $(3+\epsilon)$ multiplicative approximation factor.
The approximation factor was subsequently improved to $(1+\epsilon)$ in the STOC'15 paper by Cohen~et~al. \cite{DBLP:conf/stoc/CohenEMMP15}. 
%
%
{\bf (ii)} In \emph{Explainable Clustering}, given data that are already classified into $k$ clusters via cluster centers,
the goal is to construct a decision tree with some desired properties that can classify the input data points into clusters without too much loss in the clustering cost compared to the original clustering. For example, Makarychev and Shan \cite{MakarychevS21} show that given a $k$-median clustering on $\ell_1$ distance, one can construct such a decision tree that returns a clustering whose cost is $\tilde O(\log k)$ times the original cost. 
%
%
Since the number of accessed coordinates intuitively plays an important role in explainability, it is not surprising that one of the desired properties of the decision tree is that it never accesses more than $O(k)$ coordinates. 




The feature selection and explainable clustering problems are {\em static problems}; i.e., the algorithms for these problems receive all of their inputs before returning the output. This is in contrast to the nearest neighbor problem which is a {\em data structure problem}. 
Nevertheless, given the connections between clustering and nearest neighbor problems, techniques from some of the previous works on these static problems can be extended to the nearest neighbor problem. (All data structures discussed here are randomized and answer each query correctly with $\Omega(1)$ probability. The space complexity refers to the number of words and the time is in the word-RAM model. $\tilde O(f(n))$ hides $\polylog(f(n))$ factors.)

\begin{itemize}
    \item For $\ell_1$-norm distances, an $O(\log(n))$-approximation data structure with  $O(n \log (n) + n \log d)$  space and $O(n)$ query time can be obtained by adapting the explainable clustering algorithms of \cite{GPSY23, MakarychevS23}.
    This is because these algorithms only require the center of each cluster, and not the remaining input data points. 
    Weaker results can also be obtained from \cite{DBLP:conf/soda/EsfandiariMN22, MakarychevS21, MoshkovitzDRF20, DBLP:conf/nips/GamlathJPS21}.



      
    \item For $\ell_2$-norm distances, a $(\sqrt{3}+\epsilon)$-approximation data structure with $\tilde O(n \log(d)/\epsilon^2)$ space and $O(n\log(n)/\epsilon^2)$ query time can be obtained by reducing to feature selection algorithms \cite{BoutsidisDM09, DBLP:journals/tit/BoutsidisZMD15, DBLP:conf/stoc/CohenEMMP15} (see~\cref{sec:reduction-clustering}).\footnote{In \cref{sec:reduction-clustering} we also argue why a natural reduction to feature selection~\cite{DBLP:conf/stoc/CohenEMMP15} fails to give a $(1+\epsilon)$-approximate nearest neighbor data structure.} Weaker results for the case of $\ell_2$-norm distances can also be obtained from explainable clustering algorithms of \cite{makarychev-shan22:explainable-k-means, MakarychevS21, DBLP:conf/soda/EsfandiariMN22,DBLP:conf/nips/GamlathJPS21, DBLP:conf/soda/CharikarH22, MoshkovitzDRF20}.
    
    

    
    \item For $\ell_p$-norm distances, an $O(n^{p-1}\log^2(n))$-approximation data structure with $O(n\log(n) + n\log(d))$ space and $O(n)$ query time can be obtained from the explainable clustering algorithms of \cite{DBLP:conf/nips/GamlathJPS21}.  
\end{itemize}

In contrast to other regimes, where many efficient $(1+\epsilon)$-approximation data structures are known, no  $(1+\epsilon)$-approximation data structure is known for the ultra-high dimensional regime for any norm, and even when we allow higher space and query time (e.g. $\poly(n)\cdot o(d)$ space and query time).


\paragraph{Our Results.} We first observe that every $(1+\epsilon)$-approximate nearest neighbor data structure requires $\Omega(n)$ space and must make $\Omega(n)$ queries (see \cref{thm:approx-NN-lb}). Our main contributions are data structures with nearly matching space and query bounds in the ultra-high dimensional regime. Our data structures require polynomial preprocessing time, answer each query correctly with $1-\delta$ probability, and guarantee the following bounds.

\begin{itemize}
    \item For $\ell_1$-norm distances: 
    $O(n \log^2(n/\delta) \log(d)/(\epsilon^5\delta^4))$  space and 
    $O(n \log (n/\delta)/(\epsilon^3\delta^2))$ query time. See \cref{theorem:sample-reuse-l1}.
    \item For $\ell_2$-norm distances: 
    $O(n \log^2(n/\delta) \log(d)/(\epsilon^6\delta^4))$  space and
    $O(n \log (n/\delta)/(\epsilon^3\delta^2))$ query time. See \cref{theorem:sample-reuse-l2}.
    \item For $\ell_p$-norm distances: 
    $O(n^2 \log(n/\delta) \log(d)(\log\log n)^p/\epsilon^{p+2})$  space
    and $O(n \log (n/\delta)\allowbreak (\log\log n)^p/\epsilon^{p+2})$ query time. See \cref{theorem:Lp-NN}.
\end{itemize}

Our data structure for $\ell_1$- and $\ell_2$-norm distances uses essentially the same space and query time as the previously best result and provides a stronger approximation guarantee.  For other norms, our data structure needs higher (but sublinear) space and gives strong approximation guarantees.

\paragraph{Connection to Explainable Clustering.} 
A recent line of research investigates explainable clustering \cite{DBLP:conf/soda/EsfandiariMN22, MakarychevS21, MoshkovitzDRF20, DBLP:conf/nips/GamlathJPS21, GPSY23, MakarychevS23}. The input is a set $C$ of $k$ cluster centers. The output is a (random) decision tree assigning any query point to a cluster represented by a label in $[k]$. The current best result \cite{GPSY23} for explainable $k$-median clustering under $\ell_1$-distances guarantees that for \emph{any} set $P$ of data points the clustering generated by the decision tree is in expectation within a factor $O(\log k)$ of the optimal assignment of $P$ to the centers in $C$, that is, of the nearest-center assignment. The decision tree has $k$ leaves. Therefore, it can be interpreted as $O(\log k)$-approximate representation of the clustering $C$ with sublinear $\tilde{O}(k)$ space and query complexity. There are two aspects why this representation is considered explainable: (i) the decision tree is easy to interpret for humans (ii) it is compact as it has only sublinear $\tilde{O}(k)$ space and query complexity. When applied to a single query point these data structures constitute a $O(\log k)$-approximate nearest neighbor data structure in expectation. 

Our techniques have implications in the converse direction. We argue (see Section~\ref{sec:expected-apx}) that our techniques give a sublinear $\tilde{O}(k)$ space and query complexity data structure for an $O(1)$-approximate nearest neighbor in \emph{expectation}. This implies a sublinear $O(1)$-approximate representation for $k$-median clustering improving upon the $O(\log k)$-approximation obtained via decision trees. Notice that while our data structure does not produce decision trees it does constitute a compact representation of the clustering. Our techniques also imply a sublinear $O(1)$-approximate representation of $k$-means clustering in Euclidean space improving upon the $O(\log^2 k)$-bound implied by explainable $k$-means clustering \cite{makarychev-shan22:explainable-k-means}.



\paragraph{Perspective: Sublinear Data Structures.} Geometric data structures are fundamental computational objects that have been extensively studied and used for decades. To the best of our knowledge, there was no prior result on geometric data structure problems in the ultra-high dimensional regime. Showing that it is possible to design sublinear geometric data structures is the main conceptual contribution of our paper.


As another example of where our techniques are applicable, consider the orthogonal range search problem. Here, we are given a set $C\subseteq \R^d$ to preprocess and need to answer the following query: Given $q_1$ and $q_2$ in $\R^d$, return all points $p \in C$ such that the $i^{th}$ attribute of $p$ has value between the $i^{th}$ attributes of~$q_1$ and~$q_2$, for every $i$. This problem can be solved using range trees and $k$-$d$ trees \cite{Bentley80, McCreight85, Chazelle86, Chazelle88} in the low-dimensional regime. For higher dimensions, some approximation algorithms have been proposed where some other points can be returned as long as they are not ``far'' from the query \cite{AryaM00,ChazelleLM08,FonsecaM10}, i.e., the distance between the point returned and the range is within an $\epsilon$ fraction of the distance between $q_1$ and $q_2$---we call such a range search as $(1+\epsilon)$-approximate range search. We approach this problem, as before, from a data-structure point of view. Using similar techniques, we can design a data structure of size $\tilde O(n)$ that processes $n$ centers such that given any `valid' orthogonal range $(q_1, q_2)$ (i.e., a range which contains at least one center point), queries only $\tilde O(n/\epsilon^2)$ coordinates of $q_1$ and $q_2$ to report all centers that are within the $(1+\epsilon)$-approximate range. 
The details of this can be found in \Cref{sec:range-search}. 

Besides orthogonal range search, simple reductions also lead to data structures for other proximity problems such as the furthest neighbor problem, where we can guarantee the same bounds as the nearest neighbor problem.

We hope that this paper inspires future work on sublinear geometric data structures. 

\paragraph{Organization.}
Section~\ref{sec:tech-overview} gives an overview of the main technical contributions of the paper.
Section~\ref{sec:sample-reuse} describes our $(1+\epsilon)$-approximate nearest neighbor data structures with $\tilde O(n)$ space and query complexity for the $\ell_1$- and $\ell_2$-norm. 

\cref{sec:warm-up} contains the outline of a simple data structure with space $\tilde O(n^2)$ and query time $\tilde O(n)$ for $\ell_1$-norm, with some simplifying assumptions as warmup. \cref{sec:missing-proof} contains missing proofs of \cref{sec:sample-reuse} and extension to $\ell_2$ case. \cref{sec:expected-apx} describes our expected constant-approximate nearest neighbor data structure, and shows the connection to explainable clustering. \cref{sec:point-reduction} describes our $(1+\epsilon)$-approximate nearest neighbor data structure with space $\tilde O(n^2)$ and query time $\tilde O(n)$ that works for general $\ell_p$-norms. \cref{sec:range-search} shows the implications of our results for Range Search problem. \cref{sec:reduction-clustering} shows the connection between approximate nearest neighbor and feature selection for clustering. \cref{sec:lower-bound} shows $\Omega(n)$ space and query tight lower bound.

\section{Technical Overview}\label{sec:tech-overview}

\paragraph{Warmup: Quadratic Space via Pairwise Sampling.}Let $c$ be an input point, let $q$ be an unknown query point, and let $\|c-q\|_1=\sum_b|c^{(b)}-q^{(b)}|$ denote their $\ell_1$ distance. First, observe that approximating this distance requires memorizing {\em all} coordinates of $c$ because $c$ and $q$ may differ in only a single coordinate. Our strategy is to \emph{compare} distances to multiple input points without knowing the (approximate) distances.
    For intuition, consider when we have two input points $c_1$ and $c_2$. Consider an extreme case when $c_1$ and $c_2$ differ only in one coordinate~$b$. In this case, keeping only this coordinate suffices to tell for any query $q$ if it is closer to $c_1$ or $c_2$. To extend this intuition to when $c_1$ and~$c_2$ differ in many coordinates, a natural idea is to sample each coordinate~$b$ independently with probability proportional to the difference $|c_1^{(b)}-c_2^{(b)}|$; more precisely, the probability that we sample coordinate~$b$ is\footnote{This can be viewed as an application of so-called importance sampling---a technique for variance reduction. The independent sampling process described here differs slightly from the one in Section~\ref{sec:warm-up}.}
\begin{displaymath}
p^{(b)}=\frac{|c_1^{(b)}-c_2^{(b)}|}{\|c_1-c_2\|_1}\,.
\end{displaymath} 
We further scale the sample entry by a factor of $1/p^{(b)}$. The expected number of sampled coordinates is then $\sum_b p^{(b)}=\sum_b |c_1^{(b)}-c_2^{(b)}|/\|c_1-c_2\|_1=1$. Let $\epsilon\in (0,1)$ be the approximation parameter, and let $\delta\in (0,1)$ be a bound on the error probability. We repeat this process to store a set $I$ of $O(\log (1/\delta)/\epsilon^2)$ coordinates in total. For $x\in\R^d$, we denote by $x^{(I)}$ the restriction of $x$ to coordinates in $I$. We denote by $S=\textsf{diag}((1/p^{(b)})_{b\in I})$ denote the diagonal matrix representing the scaling factors. Then $Sx^{(I)}$ is the vector obtained from applying the above sampling process to~$x$.

Under what we call {\em bounding box assumption} (which we will remove later), where $\min(c_1^{(b)},c_2^{(b)})\leq q^{(b)}\leq \max(c_1^{(b)},c_2^{(b)})$ for each coordinate $b$, we can find a $(1+\epsilon)$-approximate nearest neighbor between $c_1$ and $c_2$. This is due to the following property. (Details in Section~\ref{sec:warm-up}.) 
\begin{quote}
     {\em Comparator Property:} Let $\epsilon, \delta\in (0,1)$ and assume we sample a set $I$ of $O(\log (1/\delta)/\epsilon^2)$ coordinates as described above. If $\|c_2-q\|_1>(1+\epsilon)\|c_1-q\|_1$ then $\|Sc_2^{(I)}-Sq^{(I)}\|_1>(1+\epsilon/2)\|Sc_2^{(I)}-Sq^{(I)}\|_1$ with probability $1-\delta$.
\end{quote}
With this idea, we can construct an $\tilde O(n^2)$-space data structure by repeating the above for every pair $(c_i, c_j)$ of input points. It is not hard to find the approximately nearest neighbor to $q$ with $n-1$ comparisons: this is very simple if the comparator is exact, and the same idea can be extended for our approximate comparator as well; see Section~\ref{sec:warm-up} for details (also see \cite{AjtaiFHN16} for previous work on imprecise comparisons).

\paragraph{Towards Linear Space via Global Sampling.} The drawback of the previous algorithm is its  $\tilde O(n^2)$ space complexity. To avoid this, a natural idea is to reduce the number of pairs for which we perform the above sampling process. However, it is unclear to us if this is possible at all. Another idea is to do the above process {\em more globally} instead of pairwise. Here is a simple way to do this: Let $C=\{c_1,\dots,c_n\}$ be the set of input points. We (independently) sample each coordinate $b$ with probability $p^{(b)}=\max_{ij}p^{(b)}_{ij}$ where $p_{ij}^{(b)}=|c_i^{(b)}-c_j^{(b)}|/\|c_i-c_j\|_1$ denotes the probability of sampling $b$ under the pairwise sampling process for $c_i,c_j\in C$.

At least two questions arise from the idea above: 
(i) For any pair $c_i$ and $c_j$ and query $q$ in the bounding box can we still tell (approximately) if $c_i$ or $c_j$ is nearer to $q$ as in the pairwise process (in particular, are we creating noise by sampling some coordinates with higher probility than $p_{ij}^{(b)}$)? %
This is not so hard to address: Despite sampling each coordinate with probability higher than $p_{ij}^{(b)}$, the expectation does not change due to rescaling by $1/p^{(b)}$, and the concentration behavior is only better. More precisely, the comparator property can still be proven. 

A more unclear point is: (ii) How many coordinates do we sample in total? For example, why is it not possible that in the worst case every pair $c_i$ and $c_j$ needs different coordinates and thus we are forced to select $\Omega(n^2)$ coordinates in total. We show that this bad situation will not happen. To give an idea, we consider the special case $C\subseteq\{0,1\}^d$. The overall expected number of coordinates sampled is $S(C)=\sum_b p^{(b)}=\sum_b \max_{ij} p_{ij}^{(b)}$. (We ignore sampling repetition, which increases our bound only by~$O(\log n)$.) Assume w.l.o.g.\ that $(c_1,c_2)$ is the closest pair in $C$. Notice that $p^{(b)}=p_{12}^{(b)}$ for every coordinate~$b$ in the set $D$ of coordinates where $c_1$ and $c_2$ differ; they minimize $\|c_i-c_j\|_1$ and hence maximize $p_{ij}^{(b)}$. Therefore $S(C)$ drops by at most $\sum_b p_{12}^{(b)}=\sum_b |c_1^{(b)}-c_2^{(b)}|/\|c_1-c_2\|_1=1$ if we remove the coordinates in $D$ from all input points. On the other hand, we can identify $c_1$ and $c_2$ after removing $D$ obtaining a set $C'$ with at most~$n-1$ points. Thus $S(C)\leq S(C')+1$. Using induction we can upper bound $S(C)$ by $n$. This inductive proof strategy can be extended to the general $\ell_1$ metric although the proof is notably more technical. Even though $\ell_1$ and $\ell_2$ metrics are equivalent on the Boolean cube, we do not know how to extend the above proof strategy to the general $\ell_2$ metric. To obtain a linear bound for $\ell_2$ we rely on an entirely different, linear-algebraic approach leveraging a latent connection between our sampling probabilities and the singular-value decomposition.
Our proof is based on a connection to the feature selection approach by Boustidis et al.~\cite{DBLP:journals/tit/BoutsidisZMD15} for $k$-Means clustering. In fact, our sampling probabilities can be upper bounded by theirs, which are based on SVD. Notice, however, that we achieve a $(1+\epsilon)$-approximation for nearest neighbor as compared to their $(3+\epsilon)$-approximation for $k$-Means.\footnote{In Section~\ref{sec:reduction-clustering}, we provide a formal reduction to feature selection by Boustidis et al.~\cite{DBLP:journals/tit/BoutsidisZMD15}. There is follow-up work achieving even guarantee $1+\epsilon$ for feature selection  in clustering~\cite{DBLP:conf/stoc/CohenEMMP15}. However, we do not know how to apply their approach to our setting. In particular, the above reduction is not applicable.}

Let $I$ denote the set of coordinates sampled by our global scheme. Memorizing these coordinates for all input points would still require $n|I|=\tilde O(n^2)$ space. To obtain $\tilde O(n)$ space we apply standard dimension reduction tools such as the Johnson-Lindenstrauss transform for $\ell_2$ or the technique by Indyk~\cite{indyk06:l1-dimension-reduction} for $\ell_1$. Specifically, we sample a suitable random $m\times |I|$ matrix $M$, where $m=O(\log n)$, and store the set $C'=\{\,MSc_i^{(I)}\mid c_i\in C \,\}$ requiring $mn=\tilde O(n)$ space. Storing $M$ itself requires $m|I|=\tilde O(n)$ space, too, as does storing $I$ itself. Upon receiving a query $q$, we compute $MSq^{(I)}$ and output the nearest neighbor in $C'$, which takes $\tilde O(n)$ coordinate queries. Here we use that the comparator property guarantees $\|Sq^{(I)}-Sc_j^{(I)}\|_1>(1+\epsilon/2)\|Sq^{(I)}-Sc_i^{(I)}\|_1$ if $\|q-c_j\|_1>(1+\epsilon)\|q-c_i\|_1$. Therefore, if we bound the distance distortion of the dimension reduction by, say, $1+\epsilon/10$ then $F(MSq^{(I)},MSc_j^{(I)})>(1+\epsilon/5)F(MSq^{(I)}, MSc_i^{(I)})$ for the function $F$ from the $\ell_1$ dimension reduction by Indyk~\cite{indyk06:l1-dimension-reduction}. This is sufficient for approximately comparing the distances. 

\paragraph{Removing the Bounding Box Assumption.} If a query point $q$ does not satisfy the bounding box property for two points $c_i,c_j\in C$ the comparator property may fail with too high probability. Specifically, it may happen that $\|q-c_j\|_1>(1+\epsilon)\|q-c_i\|_1$ and yet $\|Sq^{(I)}-Sc_j^{(I)}\|_1=(1+\xi)\|Sq^{(I)}-Sc_i^{(I)}\|_1$ for some positive $\xi\ll\epsilon$. In such a scenario, applying dimension reduction via matrix $M$ may result in $F(MSq^{(I)},MSc_j^{(I)})<F(MSq^{(I)},MSc_i^{(I)})$ because it distorts distances by factor~$1+\Theta(\epsilon)$.

A natural attempt to resolve this issue is to ``truncate'' the query point so it comes sufficiently close to the bounding box. This approach is described in Section~\ref{sec:point-reduction} and can in fact be applied to general $\ell_p$ metrics. However, it requires explicitly storing the projected points and therefore $\tilde O(n^2)$ space.

To retain our near-linear space bound, we do not actually change the sampling or the query procedure. Rather, we relax in our analysis on the requirement that the comparator property needs to hold for all point pairs. Specifically, we argue (using Markov) that with sufficiently high probability our distance estimate to the (unknown) nearest neighbor $o$ is within a bounded factor (*). We argue that under this condition all comparisons with $o$ do satisfy the comparator property, which is enough to satisfy correctness. Towards this, we distinguish for each $c\in C$ between \emph{good} coordinates $b$ where $q^{(b)}$ is relatively close to the interval $[c^{(b)},o^{(b)}]$. For the set of good coordinates, strong concentration properties hold---similar to those under the bounding box assumption. For the remaining (bad) coordinates, we use two properties: First, the query point is indifferent between $c$ and $o$ on the set of bad coordinates as it is sufficiently distant and thus indifferent between $o$ and $c$ on these coordinates. This and property (*) automatically implies that the total contribution of the bad coordinates cannot be too high. Together, these properties imply that (a relaxation of) the comparator property holds for all comparisons with $o$ thereby implying the $(1+\epsilon)$-approximation guarantee also after removing the bounding box assumption. For details, we refer to the full technical sections below.

\section{Preliminaries}\label{sec:preliminaries}
In this section, we list some basic terminology, definitions and notation that we use throughout.

\begin{definition}[Approximate Notation]
    Given real numbers $a,b>0$ and $0<\epsilon<1$, we say that $a\approx_{\epsilon} b$ if $(1-\epsilon)b \leq a \leq (1+\epsilon)b$.
\end{definition}
\begin{definition}[Coordinate Index Notation]
    For a point $a\in \R^d$, we use $a^{(i)}$ to represent the $i$-th coordinate of $a$. For a set or multiset $I = \{i_1, i_2, \dots, i_m\} \subseteq [d]$, we denote $a^{(I)}$ to be $\left(a^{(i_1)}, a^{(i_2)}, \dots, a^{(i_m)}\right) \in \R^m$.
\end{definition}

In this paper, we look into the approximate nearest neighbor problem, defined as follows.
\begin{definition}[$\alpha$-Approximate Nearest Neighbor Problem]
    Given a set $C=\{c_1, c_2, \dots, c_n\}$ of $n$ points in a metric space $(X, \dist)$, build a data structure that given any query point $q \in X$, returns index $i\in [n]$, such that $
    \dist(q, c_i) \leq \alpha \cdot \min_{j\in [n]}\dist(q, c_j). $
\end{definition}
When $C$ only has 2 points, we use the shorthand \emph{2-point comparator} to denote such an approximate nearest neighbor data structure. 
\begin{definition}[2-point Comparator]
    Given a set $C=\{c_1, c_2\}$ of $2$ points in a metric space $(X, \dist)$ and parameters $0\leq \epsilon, \delta < 1$, we call a data structure as a 2-point comparator, if for any query point $q \in X$, with probabilty $1-\delta$, it returns index $i\in \{1,2\}$, such that $
    \dist(q, c_i) \leq (1+\epsilon) \cdot \min_{j\in \{1,2\}}\dist(q, c_j). $
\end{definition}

We use the following Chernoff bounds in this paper.
\begin{theorem}[Chernoff Bound]
    Let $X_1, X_2, \dots, X_n$ be independent random variables, and $X_i \in [0, c]$. Define $X = \sum_{i=1}^n X_i$ and $\mu = \E[X]$. Then
    \begin{align*}
        &\prob{}{X \geq (1+\delta)\mu} \leq \exp(-\delta^2 \mu /(3c)), \text{ for } 0 < \delta \leq 1\\
        &\prob{}{X \geq (1+\delta)\mu} \leq \exp(-\delta \mu /(3c)), \text{ for } \delta \geq 1\\
        &\prob{}{X \leq (1-\delta)\mu} \leq \exp(-\delta^2 \mu /(2c)), \text{ for } \delta \geq 0.
    \end{align*}
\end{theorem}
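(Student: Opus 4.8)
The plan is to establish all three inequalities by the standard exponential-moment (Chernoff--Bernstein) method, where the only twist compared to the textbook $\{0,1\}$-valued case is that we exploit the range $[0,c]$ via a convexity (chord) bound on the moment generating function.

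\textbf{Upper tail.} Fix a parameter $t>0$. Markov's inequality applied to $e^{tX}$ gives
\[
\prob{}{X\ge(1+\delta)\mu}=\prob{}{e^{tX}\ge e^{t(1+\delta)\mu}}\le e^{-t(1+\delta)\mu}\,\E\!\left[e^{tX}\right],
\]
and by independence $\E[e^{tX}]=\prod_{i=1}^{n}\E[e^{tX_i}]$. To bound each factor I would use that $u\mapsto e^{tu}$ is convex, so on $[0,c]$ it lies below the chord through $(0,1)$ and $(c,e^{tc})$, i.e.\ $e^{tu}\le 1+\tfrac{u}{c}(e^{tc}-1)$ for $u\in[0,c]$; taking expectations and using $1+z\le e^{z}$ yields $\E[e^{tX_i}]\le\exp\!\left(\tfrac{\E[X_i]}{c}(e^{tc}-1)\right)$, hence $\E[e^{tX}]\le\exp\!\left(\tfrac{\mu}{c}(e^{tc}-1)\right)$. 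Substituting and optimizing over $t$ (the minimizer is $t=\tfrac1c\ln(1+\delta)$) gives
\[
\prob{}{X\ge(1+\delta)\mu}\le\left(\frac{e^{\delta}}{(1+\delta)^{1+\delta}}\right)^{\mu/c}=\exp\!\left(-\frac{\mu}{c}\bigl((1+\delta)\ln(1+\delta)-\delta\bigr)\right).
\]
It then remains to check the elementary scalar inequalities $(1+\delta)\ln(1+\delta)-\delta\ge\delta^2/3$ for $0<\delta\le1$ and $(1+\delta)\ln(1+\delta)-\delta\ge\delta/3$ for $\delta\ge1$; each follows by examining the sign of a one-variable auxiliary function together with its first two derivatives on the relevant interval (both vanish at $\delta=0$, resp.\ are positive at $\delta=1$, and the second derivative controls monotonicity of the first).

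\textbf{Lower tail.} The argument is symmetric with $-t$ in place of $t$: for $t>0$,
\[
\prob{}{X\le(1-\delta)\mu}=\prob{}{e^{-tX}\ge e^{-t(1-\delta)\mu}}\le e^{t(1-\delta)\mu}\,\E\!\left[e^{-tX}\right],
\]
and convexity of $u\mapsto e^{-tu}$ on $[0,c]$ gives $\E[e^{-tX}]\le\exp\!\left(\tfrac{\mu}{c}(e^{-tc}-1)\right)$. Optimizing over $t$ (minimizer $t=\tfrac1c\ln\tfrac1{1-\delta}$, assuming $\delta<1$; for $\delta\ge1$ the claim is vacuous since $X\ge0$) yields $\prob{}{X\le(1-\delta)\mu}\le\exp\!\left(-\tfrac{\mu}{c}\bigl(\delta+(1-\delta)\ln(1-\delta)\bigr)\right)$, and one concludes with the identity $\delta+(1-\delta)\ln(1-\delta)=\sum_{k\ge1}\tfrac{\delta^{k+1}}{k(k+1)}\ge\delta^2/2$, whose $k=1$ term already supplies the bound.

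\textbf{Main obstacle.} There is no conceptual difficulty here; the places that genuinely require attention are (i) stating the chord bound correctly so that the hypothesis $X_i\in[0,c]$ is used properly (this is the replacement for the familiar $e^{tu}\le 1+(e^{t}-1)u$ valid only on $[0,1]$, and it is what produces the $1/c$ in the exponents), and (ii) verifying the three scalar inequalities with the precise constants $1/3$, $1/3$, and $1/2$ asserted in the statement rather than weaker ones. I would carry these out in the order above: set up the MGF bound once, specialize to the upper and lower tails, and finish each with the corresponding calculus lemma.
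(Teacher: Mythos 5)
Your proposal is a complete and correct derivation of all three inequalities by the standard exponential-moment method; the chord bound $e^{tu}\le 1+\frac{u}{c}(e^{tc}-1)$ on $[0,c]$ and the optimization at $t=\frac1c\ln(1+\delta)$ are exactly right, and the three closing scalar inequalities hold with the stated constants (the only cosmetic point is that for $\delta=1$ in the lower tail the bound still follows by letting $t\to\infty$, rather than being vacuous). The paper states this Chernoff bound in its preliminaries as a standard fact and does not give a proof, so there is nothing to compare against; your argument is the canonical one that the paper implicitly relies on.
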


\section{Near-linear Space Data Structures for \texorpdfstring{$\ell_1$}{l1} and \texorpdfstring{$\ell_2$}{l2} Metrics}
\label{sec:sample-reuse}
In this section, we design a data structure with near-linear space and query complexity for approximate nearest neighbor under $\ell_1$ and $\ell_2$ metrics. 
We prove the following theorem.

\begin{theorem}\label{theorem:sample-reuse-l1l2}
    Consider a set $C$ of $n$ points $c_1, \dots, c_n \in \R^d$ equipped with $\ell_1$ or $\ell_2$ metric. Given $0 < \epsilon < 1/4$ and $0 < \delta < 1$, we can efficiently construct a randomized data structure such that for any query point $q\in \R^d$ the following conditions hold with probability $1-\delta$. (i) The data structure reads $O(n \log (n)/(\poly(\epsilon,\delta)))$ coordinates of $q$. (ii) It returns $i\in [n]$ where $c_i$ is a $(1+\epsilon)$-approximate nearest neighbor of $q$ in $C$. (iii) The data structure has a size of $O(n \log^2(n) \log(d)/(\poly(\epsilon,\delta)))$ words. 
\end{theorem}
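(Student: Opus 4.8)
The plan is to follow the roadmap laid out in the technical overview, building the data structure in three layers: (1) a pairwise importance-sampling comparator under the bounding box assumption, (2) a single \emph{global} sampling scheme that simultaneously realizes all pairwise comparators while sampling only $\tilde O(n/\poly(\epsilon,\delta))$ coordinates, and (3) dimension reduction on top of the sampled coordinates to bring the space down from $\tilde O(n^2)$ to $\tilde O(n)$, together with a relaxed analysis that removes the bounding box assumption. I would prove the $\ell_1$ case first and then indicate the modifications for $\ell_2$.

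\emph{Step 1 (the comparator).} For a pair $c_i,c_j$, sample each coordinate $b$ independently with probability (roughly) $p^{(b)}_{ij}=|c_i^{(b)}-c_j^{(b)}|/\|c_i-c_j\|_1$, repeated $O(\log(1/\delta)/\epsilon^2)$ times, and rescale a sampled coordinate by $1/p^{(b)}$. Under the bounding box assumption each sampled term $|Sc_i^{(b)}-Sq^{(b)}|$ is bounded by $\|c_i-c_j\|_1$ and has the right expectation $\|c_i-q\|_1$, so Chernoff (as stated in the preliminaries) gives that the sampled $\ell_1$ distances are $(1\pm\epsilon/10)$-approximations of the true ones with probability $1-\delta$; this yields the Comparator Property. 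Then argue that replacing $p^{(b)}_{ij}$ with any larger sampling probability $p^{(b)}\geq p^{(b)}_{ij}$ only improves concentration (expectation unchanged after rescaling, per-term bound only smaller), so the global scheme $p^{(b)}=\max_{ij}p^{(b)}_{ij}$ still satisfies the Comparator Property for every pair.

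\emph{Step 2 (linear bound on the number of sampled coordinates).} The key quantity is $S(C)=\sum_b\max_{ij}p^{(b)}_{ij}$ (times an $O(\log(n/\delta))$ repetition factor). For $\ell_1$: sort pairs by distance, let $(c_1,c_2)$ be closest; on the coordinate set $D$ where they differ, $p^{(b)}=p^{(b)}_{12}$, so removing $D$ decreases $S$ by at most $1$ while reducing the point count by one (after identifying $c_1,c_2$); induct to get $S(C)=O(n)$ — in the general (non-Boolean) $\ell_1$ metric this needs the more careful ``inductive collapse'' argument promised in the overview. For $\ell_2$: invoke the stated connection to the SVD-based feature-selection sampling probabilities of Boutsidis et al.~\cite{DBLP:journals/tit/BoutsidisZMD15}, upper-bounding our $\max_{ij}p^{(b)}_{ij}$ by their leverage-type scores whose sum is $O(n)$. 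Multiplying by the $\tilde O(1/\poly(\epsilon,\delta))$ repetitions and a Chernoff/union bound gives $|I|=O(n\log(n)/\poly(\epsilon,\delta))$ coordinates sampled, which is the query complexity (ii) once we also handle Step 3.

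\emph{Step 3 (dimension reduction and removing the bounding box assumption).} Naively storing $c_i^{(I)}$ for all $i$ costs $n|I|=\tilde O(n^2)$, so apply a distance-preserving map $M$ of target dimension $m=O(\log n/\poly(\epsilon))$: Johnson–Lindenstrauss for $\ell_2$, and Indyk's $\ell_1$ dimension-reduction sketch (with its comparison function $F$) for $\ell_1$. Store $I$, $S$, $M$, and $\{MSc_i^{(I)}\}$, totaling $O(n\log^2(n)\log(d)/\poly(\epsilon,\delta))$ words (the $\log d$ is for naming coordinates, the extra $\log n$ from $m$ and repetitions). On query $q$ read $q^{(I)}$, form $MSq^{(I)}$, and output $\arg\min_i F(MSq^{(I)},MSc_i^{(I)})$. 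Since the comparator gives a $(1+\epsilon/2)$ gap and $M$ distorts by at most $1+\epsilon/10$, the ordering relative to the true nearest neighbor is preserved. The hardest part — and the place I expect to spend the most effort — is removing the bounding box assumption \emph{without} truncation (which would reintroduce quadratic space): one uses a single Markov bound to argue that with probability $\geq 1-\delta$ the sampled distance to the true nearest neighbor $o$ is within a bounded factor (property (*)); then for each $c$, split coordinates into \emph{good} ones (where $q^{(b)}$ is near $[c^{(b)},o^{(b)}]$, giving bounding-box-like concentration) and \emph{bad} ones (where $q$ is so far that it is essentially indifferent between $c$ and $o$, so their contributions nearly cancel and, by (*), are bounded in total). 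Combining, a relaxed Comparator Property holds for all comparisons against $o$, which suffices for the $(1+\epsilon)$-guarantee. Finally, a union bound over the $O(\log(n/\delta))$ bad events and rescaling $\epsilon,\delta$ by constants completes the proof; the $\poly(\epsilon,\delta)$ factors come out as $\epsilon^{-5}\delta^{-4}$ ($\ell_1$) and $\epsilon^{-6}\delta^{-4}$ ($\ell_2$) after tracking the repetition counts through all three layers.
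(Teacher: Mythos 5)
Your plan closely reproduces the paper's approach: global max-probability coordinate sampling, the collapse induction giving $\sum_b p^{(b)} \le n$ for $\ell_1$, an SVD/Cauchy--Schwarz bound for $\ell_2$, dimension reduction via Indyk's $\ell_1$ sketch / Johnson--Lindenstrauss, and a single Markov bound plus a good/bad coordinate split to remove the bounding-box assumption. The decomposition into four events (Markov overestimate bound, coordinate-selection preserves the ratio, dimension reduction preserves distances, $|I|$ is near-linear) is exactly what the paper does.

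One step would fail as written, though. In Step 1 you assert that, under the bounding box assumption, Chernoff with $T=O(\log(1/\delta)/\epsilon^2)$ samples yields a two-sided $(1\pm\epsilon/10)$-multiplicative approximation of \emph{both} $\|c_i-q\|_1$ and $\|c_j-q\|_1$. This is not true when the smaller distance is tiny relative to $\|c_i-c_j\|_1$: the per-sample bound is $\Theta(\|c_i-c_j\|_1)$, so multiplicative concentration requires $T=\Omega\bigl(\log(1/\delta)\,\|c_i-c_j\|_1/(\epsilon^2\|c_j-q\|_1)\bigr)$, which is unbounded. The triangle inequality ensures the \emph{larger} of the two distances is $\ge \frac12\|c_i-c_j\|_1$ and hence concentrates multiplicatively, but the smaller one need not. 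The paper handles this with a case split (see Lemma~\ref{lemma:chernoff_for_good_l1} and the analogous Lemma~\ref{lem:strong-comparator-l1} in the warm-up): if $\dist_{\Good}(c_j,q)\ge\frac18\dist_{\Good}(c_i,c_j)$, multiplicative Chernoff applies to both; otherwise only the larger distance concentrates multiplicatively, and an \emph{additive} Chernoff bound on the small one shows the empirical ratio exceeds a fixed constant (e.g.\ $3$), which also suffices to pick the right center. This same case split has to propagate through the good/bad-region analysis (Claim~\ref{claim:robust-estimate}). You would likely find this when fleshing out Step 1, but as stated the step does not hold.
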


The proof of \cref{theorem:sample-reuse-l1l2} for the case of $\ell_1$-metric is presented in \cref{sec:l1samplereuse} with missing proofs deferred to \cref{sec:appendixforlone}, where as the proof for the case of $\ell_2$-metric is in \cref{sec:linear-space-l2}.
Both proofs follow the same structure although they differ in details.

\subsection{Data Structure for \texorpdfstring{$\ell_1$}{l1}-metric with Near-linear Space and Query Time}\label{sec:l1samplereuse}

The preprocessing and querying procedures of our data structure are described in \cref{alg:sample-reuse-l1}.
The algorithm receives a set $C$ of $n$ points and parameters $\epsilon, \delta \in (0,1)$ as inputs for its preprocessing phase. During preprocessing, it iteratively samples coordinates from $[d]$ to form a multiset $I$.\footnote{Multiple coordinates may be added to $I$ in the same iteration, and $I$ may contain repeated coordinates.} Let $R$ denote the set $C$ of $n$ points restricted to $I$. An $\ell_1$ dimension reduction is then applied to $R$ to further reduce the space. 
In the query phase, the algorithm receives a query $q \in \R^d$. It probes $q$ on all coordinates in $I$ and applies the same $\ell_1$ dimension reduction to the coordinate-selected query. The output, identifying which point in $C$ is the nearest neighbor of $q$, is determined by which point, after coordinate selection and dimension reduction, is closest to the query.

\begin{algorithm}[htp]\label{alg:sample-reuse-l1}
    \caption{An approximate nearest neighbor data structure for $n$ points in $\R^d$ for $\ell_1$ metric.}
    \SetKwProg{preprocessing}{Preprocessing}{}{}
    \SetKwProg{query}{Query}{}{}
    \SetKwProg{comparator}{$E$}{}{}
    \SetKw{store}{store}
    \preprocessing{$(C, \epsilon, \delta)$\tcp*[f]{Inputs: $C=\{c_1, \dots, c_n\}\subseteq \R^d, \epsilon \in (0, \frac{1}{4}), \delta\in (0,1)$}}{
        Let $I \leftarrow \emptyset$ be a multiset and
        $T \leftarrow O( \log (n/\delta)/(\epsilon^3 \delta^2))$\;
        
        \For{$t\in [T]$}{
            \For{$b\in [d]$}{
                Add $b$ to $I$ with probability  $p^{(b)}\triangleq \max_{(i,j)\in {\binom{n}{2}}} \frac{\left|c_i^{(b)} - c_j^{(b)}\right|}{\|c_i- c_j\|_1}$\;
            }
        }
        Let $R=\{r_1, r_2, \dots, r_n\}\subseteq \R^{I}$, where for $i \in [n], b \in I$, we have $r_i^{(b)} = c_i^{(b)} / p^{(b)}$\;
        Let $m \leftarrow O(\log(n/\delta)/(\epsilon^2 \delta^2))$\;
        Sample $M\in \mathbb{R}^{[m] \times I}$, where each entry of $M$ follows the Cauchy distribution, whose density function is $c(x) = \frac{1}{\pi(1+x^2)}$. 
        Notice that $M(\cdot)\colon\mathbb{R}^{I} \rightarrow \mathbb{R}^m$ is an oblivious linear mapping \cite{indyk06:l1-dimension-reduction}\;

        \store{$I$, $M(R) = \{Mr_i|i\in [n]\}$, $\{p^{(b)}| b\in I\}$, $M$ }
    }
    \query{$(q)$ \tcp*[f]{Inputs: $q\in \R^d$}}{
        Query $q^{(b)}, \forall b\in I$\;
        Let $u\in \mathbb{R}^{I}$, where $\forall b \in I, u^{(b)} = q^{(b)} / p^{(b)}$\;
        Let $F((x_1, \dots, x_m), (y_1, \dots, y_m)) \coloneq \mathsf{median}(|x_1-y_1|, \dots, |x_m- y_m|)$\;
        Let $\hat{i} = \argmin_{i\in [n]} F(Mr_i, Mu)$\;
        \Return{$\hat{i}$}
    }
\end{algorithm}

\begin{theorem}\label{theorem:sample-reuse-l1}
    Consider a set $C$ of $n$ points $c_1, \dots, c_n \in \R^d$ equipped with $\ell_1$ metric. Given $0 < \epsilon < 1/4$ and $0 < \delta < 1$, with probability $(1-\delta)$, we can efficiently construct a randomized data structure (See \cref{alg:sample-reuse-l1}) such that for any query point $q\in \R^d$ the following conditions hold with probability $1-\delta$. (i) The data structure reads $O(n \log (n/\delta)/(\epsilon^3\delta^2))$ coordinates $q$. (ii) It returns $i\in [d]$ where $c_i$ is a $(1+\epsilon)$-approximate nearest neighbor of $q$ in $C$. (iii) The data structure has a size of $O(n \log^2(n/\delta) \log(d)/(\epsilon^5\delta^4))$ words. 
\end{theorem}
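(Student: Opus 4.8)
The plan is to establish the three claims of \cref{theorem:sample-reuse-l1} separately: the query complexity (i) and the space bound (iii) both follow from a single combinatorial estimate on the total sampling mass, whereas the approximation guarantee (ii) requires combining a ``global comparator'' analysis (first under a bounding-box assumption, then without it) with the guarantee of the Cauchy-based $\ell_1$ dimension reduction applied to a set of $O(n)$ difference vectors.

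First I would prove the key structural lemma that $S(C):=\sum_{b\in[d]} p^{(b)} = O(n)$. Since $p^{(b)}$ is a \emph{maximum} of the per-pair importance probabilities $p_{ij}^{(b)} = |c_i^{(b)}-c_j^{(b)}|/\|c_i-c_j\|_1$, and each fixed pair satisfies $\sum_b p_{ij}^{(b)}=1$, the bound is proved by induction on $n$: take the closest pair $(c_1,c_2)$ and let $D$ be the set of coordinates where they differ; on $D$ we have $p^{(b)}=p_{12}^{(b)}$ (the closest pair maximizes the importance), so deleting $D$ drops $S$ by at most $\sum_{b\in D}p_{12}^{(b)}=1$, while after deleting $D$ one can ``collapse'' $c_1$ and $c_2$ into a single point to get an $(n-1)$-point instance $C'$ with $S(C)\le S(C')+1$. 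Given $S(C)=O(n)$, we get $\E|I| = T\cdot S(C) = O(Tn)$, and a Chernoff bound yields $|I| = O(Tn) = O(n\log(n/\delta)/(\epsilon^3\delta^2))$ with probability $1-\delta$; this is exactly the number of coordinates of $q$ that the query phase probes, giving (i). For (iii), the stored objects are $M(R)$ ($nm$ words), the matrix $M$ ($m|I|$ entries), the multiset $I$ ($|I|$ coordinate indices, each of bit-length $O(\log d)$), and the scalars $\{p^{(b)}\}$; plugging $m=O(\log(n/\delta)/(\epsilon^2\delta^2))$ and $|I|=O(n\log(n/\delta)/(\epsilon^3\delta^2))$, the dominant term is $O(m|I|)=O(n\log^2(n/\delta)/(\epsilon^5\delta^4))$, times a $\log d$ factor accounting for the bit-length of indices and matrix entries, which matches (iii).

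For the approximation guarantee, fix a query $q$ and let $o=c_{i^*}$ be a true nearest neighbor. I would first treat, exactly as in the warmup (\cref{sec:warm-up}), the \emph{bounding-box} case: for a pair $(c_i,c_j)$ with $q$ coordinatewise between $c_i$ and $c_j$, each sampled coordinate $b$ contributes $|c_i^{(b)}-q^{(b)}|/p^{(b)}\le |c_i^{(b)}-c_j^{(b)}|/p^{(b)}\le \|c_i-c_j\|_1$ to the scaled distance $\|S r_i^{(I)} - S u^{(I)}\|_1$, the expectation equals $T\|c_i-q\|_1$, and a Chernoff bound over the $|I|$ bounded terms gives $\|S r_i^{(I)} - S u^{(I)}\|_1 \approx_{\epsilon/10} T\|c_i-q\|_1$ with probability $1-\delta/\poly(n)$. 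To \emph{remove} the bounding-box assumption without truncation (which would force $\tilde O(n^2)$ space), I would instead use Markov's inequality to argue that with probability $1-\delta$ the scaled estimate to $o$ itself is within a bounded factor of $T\|o-q\|_1$, and then, for every competitor $c$, split its coordinates into \emph{good} ones (where $q^{(b)}$ lies within a small multiple of the interval $[c^{(b)},o^{(b)}]$), on which the concentration above still applies, and \emph{bad} ones (where $q^{(b)}$ is far from that interval), on which $q$ is essentially indifferent between $c$ and $o$ and whose total estimated contribution is controlled by the Markov bound; together these yield a relaxed comparator property that holds simultaneously for all comparisons involving $o$. Finally I would invoke Indyk's $\ell_1$ dimension reduction: with $m=O(\log(n/\delta)/(\epsilon^2\delta^2))$ i.i.d.\ Cauchy rows, the median estimator $F$ satisfies $F(Mx,My)\approx_{\epsilon/10}\|x-y\|_1$ for all $O(n)$ relevant difference vectors with probability $1-\delta$; chaining this distortion with the comparator property shows $F(Mr_{i^*},Mu) < F(Mr_j,Mu)$ for every $j$ with $\|q-c_j\|_1 > (1+\epsilon)\|q-c_{i^*}\|_1$, so $\hat i = \argmin_i F(Mr_i,Mu)$ is a $(1+\epsilon)$-approximate nearest neighbor.

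I expect the main obstacle to be the mass bound $\sum_b p^{(b)}=O(n)$ for the general (non-Boolean) $\ell_1$ metric — specifically, making the inductive ``collapse'' step rigorous, since after deleting the coordinates of the closest pair one must argue that identifying the two points does not cause any surviving sampling probability $p^{(b)}$ to increase, as otherwise the induction fails to close. A secondary difficulty is interleaving the out-of-bounding-box analysis with the dimension reduction: because the Cauchy transform distorts $\ell_1$ distances by a $(1+\Theta(\epsilon))$ factor, the comparator property must be established with enough slack that even an arbitrarily small multiplicative gap between $\|q-c_j\|_1$ and $(1+\epsilon)\|q-o\|_1$ survives — this is precisely what forces the good/bad coordinate dichotomy together with the Markov control on the nearest neighbor's estimate, rather than a naive per-pair union bound over the comparator property.
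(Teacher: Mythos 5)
Your proposal is correct and follows essentially the same route as the paper: the inductive ``collapse'' argument on the closest pair to bound $\sum_b p^{(b)} \le n$ (the paper's Lemma~4.10 via Claim~4.12), a Chernoff bound for $|I|=O(Tn)$, Markov's inequality to control the nearest neighbor's own estimate, the good/bad coordinate dichotomy to replace truncation, and Indyk's Cauchy-based $\ell_1$ dimension reduction with union bound over the $n$ difference vectors. You also correctly identify the two genuine technical hurdles (closing the collapse induction and threading the out-of-box comparator slack through the $1+\Theta(\epsilon)$ distortion of the sketch), which the paper handles exactly as you describe.
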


To prove~\cref{theorem:sample-reuse-l1}, we consider the following four events and prove that each of them holds with probability $1-\delta'$, where $\delta' = \delta/4$. We establish the correctness of the data structure by conditioning on the first three events. By conditioning on the last event, we demonstrate the space and query complexity.
\begin{enumerate}
    \item The first event is that the distance between a query and its nearest neighbor is overestimated at most by the factor $4/\delta$.
    \item The second event is that for a query point the estimated distance to its nearest neighbor is significantly smaller than the estimated distance to any center that is not an approximate nearest neighbor.
    \item The third event is that the dimension reduction preserves the distances between the coordinate-selected points and the query.
    \item The fourth event is that at most near-linear many coordinates are sampled in the preprocessing and accessed in the query phase. 
\end{enumerate}
Finally, by applying a union bound, we show that all four events hold with probability $1-\delta$, which implies that the requirements of \Cref{theorem:sample-reuse-l1l2} are met.
We show the proof in the following subsections in line with the structure described above.

\subsubsection{Upper Bound for Estimate of Distance between Query and its Nearest Neighbor}\label{sec:first-section-l1}
We first show that the distance between a query point and its nearest neighbor is overestimated at most by a factor $1/\delta'$ with probability $1-\delta'$. 
\begin{claim}\label{claim:Markovlone}
    Let $q\in \R^n$ be a query and let $i^*\in [n]$ be the index of its nearest neighbor. It holds that
        $\frac{1}{T}\|r_{i^*} - u\|_1 \leq \frac{1}{\delta'} \|c_{i^*} - q\|_1,$
    with probability $1-\delta'$.
\end{claim}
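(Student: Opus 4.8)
The plan is to recognise $\|r_{i^*}-u\|_1$ as a sum of $T$ independent and identically distributed nonnegative random variables, to bound its expectation by $T\,\|c_{i^*}-q\|_1$, and then to invoke Markov's inequality. The deliberately weak multiplicative slack $1/\delta'$ is precisely what a single application of Markov delivers, and it is essentially unavoidable here: the inequality must hold for the (a priori unknown) true nearest neighbor, so unlike elsewhere in the argument we cannot amplify the success probability by a median over independent repetitions.

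First I would split the sampling by round. For $t \in [T]$ let $I_t \subseteq [d]$ be the (random) set of coordinates added to the multiset $I$ during iteration $t$, so that $I = \biguplus_{t \in [T]} I_t$ as a multiset; the $I_t$ are i.i.d.\ because every iteration runs the same independent Bernoulli sampling with the fixed probabilities $p^{(b)}$. Unfolding $r_{i^*}^{(b)} = c_{i^*}^{(b)}/p^{(b)}$ and $u^{(b)} = q^{(b)}/p^{(b)}$ gives
\[
  \|r_{i^*}-u\|_1 \;=\; \sum_{b\in I}\frac{|c_{i^*}^{(b)}-q^{(b)}|}{p^{(b)}} \;=\; \sum_{t\in[T]} X_t,
  \qquad X_t \;\triangleq\; \sum_{b\in I_t}\frac{|c_{i^*}^{(b)}-q^{(b)}|}{p^{(b)}} \;\ge\; 0.
\]
For the per-round expectation, observe that a coordinate $b$ with $p^{(b)}>0$ lands in $I_t$ with probability exactly $p^{(b)}$, whereas a coordinate with $p^{(b)}=0$ is one on which all centers agree; such a coordinate is never sampled and contributes no term. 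Hence
\[
  \E[X_t] \;=\; \sum_{b\,:\,p^{(b)}>0} p^{(b)} \cdot \frac{|c_{i^*}^{(b)}-q^{(b)}|}{p^{(b)}}
  \;=\; \sum_{b\,:\,p^{(b)}>0} |c_{i^*}^{(b)}-q^{(b)}| \;\le\; \|c_{i^*}-q\|_1,
\]
so by linearity $\E\big[\|r_{i^*}-u\|_1\big] = \sum_{t\in[T]} \E[X_t] \le T\,\|c_{i^*}-q\|_1$.

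To conclude, I would apply Markov's inequality to the nonnegative random variable $\|r_{i^*}-u\|_1$ at threshold $(T/\delta')\,\|c_{i^*}-q\|_1$, obtaining
\[
  \prob{}{\|r_{i^*}-u\|_1 \;\ge\; \tfrac{1}{\delta'}\,T\,\|c_{i^*}-q\|_1}
  \;\le\; \frac{\E\big[\|r_{i^*}-u\|_1\big]}{\tfrac{1}{\delta'}\,T\,\|c_{i^*}-q\|_1} \;\le\; \delta',
\]
and dividing the complementary event through by $T$ gives $\tfrac1T\|r_{i^*}-u\|_1 \le \tfrac1{\delta'}\|c_{i^*}-q\|_1$ with probability $1-\delta'$; in the degenerate case $q = c_{i^*}$ we have $X_t = 0$ almost surely and the bound is trivial. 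I do not anticipate a genuine obstacle in this claim — the only points needing care are the round-wise decomposition (which is what makes the expectation scale linearly in $T$) and the coordinates with $p^{(b)}=0$, which are constant across all centers and therefore irrelevant both to the algorithm and to the estimate.
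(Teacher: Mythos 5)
Your proof is correct and follows essentially the same route as the paper: express $\|r_{i^*}-u\|_1$ via linearity of expectation over the $T\times d$ independent Bernoulli sampling events, bound the expectation by $T\,\|c_{i^*}-q\|_1$, and apply Markov's inequality. The only (minor) difference is that you handle coordinates with $p^{(b)}=0$ explicitly and therefore obtain an inequality $\E[\|r_{i^*}-u\|_1]\le T\,\|c_{i^*}-q\|_1$ where the paper asserts equality; this is in fact slightly more careful (the query may differ from the centers on such coordinates), and the Markov step only needs the upper bound anyway.
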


\subsubsection{Coordinate Selection Preserves Distance Ratio}
In this section, we show that, the following holds: 
for any query point $q\in \R^n$, whose nearest neighbor is $c_{i^*}$, with probability $1-2\delta'$, for all $c_i\in C$ such that $\frac{\|c_i-q\|_1}{\|c_{i^*}-q\|_1} \geq 1+\epsilon$,  the estimated ratio is significantly  larger than one, that is $\frac{\|r_i-u\|_1}{\|r_{i^*}-u\|_1} \geq 1+\epsilon\delta'/4$.
This statement is formalized in \Cref{lemma:robust-feature-selection-l1}.

To prove the statement, we first partition the coordinates in $[d]$ into good and bad regions for each pair of points. 
In the good region are the coordinates on which the query is not too far away from one of the points. 
The remaining coordinates are in the bad region.
The intuition is that, restricted to the coordinates in the good region, the distance between one of the points and the query  is bounded.
This implies that we can estimate these distances using a Chernoff bound.
For the bad coordinates the intuition is that the distances from the query to the 2 points are so large that the ratio between these distances is close to one. Additionally these distances are not overestimated because of the previous event in \cref{sec:first-section-l1}.

\begin{definition}[Good region and bad region]
    Consider two points $c_1, c_2\in \R^d$, and a query point $q\in\R^d$. We define the good region $\Good(c_1, c_2, q)$ w.r.t.\ $(c_1,c_2,q)$ as $$\{z\in [d] \mid\min\{c_1^{(z)}, c_2^{(z)}\} - \frac{8}{\epsilon \delta'} |c_1^{(z)} - c_2^{(z)}| \leq q^{(z)} \leq \max\{c_1^{(z)}, c_2^{(z)}\} + \frac{8}{\epsilon \delta'} |c_1^{(z)} - c_2^{(z)}|\}.$$ We define the bad region w.r.t.\ $(c_1,c_2,q)$ as $\Bad(c_1, c_2, q) = [d]\setminus \Good(c_1, c_2, q)$. When there is no ambiguity, we use $\Good$ for $\Good(c_1, c_2, q)$, and $\Bad$ for $\Bad(c_1, c_2, q)$. 
\end{definition}
Additionally, we define the distance between points restricted to subsets of coordinates.
\begin{definition}[Restricted distance]
    Given $x,y \in \R^d$, we define the distance between $x$ and $y$ restricted to region $J\subseteq [d]$ as 
    $\dist_J(x,y) = \sum_{z\in J} \left|x^{(z)} - y^{(z)}\right|$. 
    Furthermore, for $u,v \in \R^I$, where $I$ is a multiset, whose elements are from $[d]$, we define $\dist_J(u,v) = \sum_{z\in I} \mathbbm{1}_{z\in J}\cdot\left|x^{(z)} - y^{(z)}\right|$. 
\end{definition}


The following lemma shows that for two points, whose distances to the query differ significantly, then the majority of the distance between the two points is present in the good region. The intuition of the lemma is that if the bad region contributes a lot to the distance $\|b-a\|_1$, it implies that the bad region contributes much to $\|b-q\|_1$ and $\|a-q\|_1$, then the ratio between the two distances should be very close to 1. 

\begin{lemma}\label{lemma:good-region-is-large-l1}
Consider three points \(a, b, q \in \mathbb{R}^d\). If \(\frac{\|b - q\|_1}{\|a - q\|_1} \geq 1 + \epsilon\), then \(\frac{\dist_{\Good(a, b, q)}(b, a)}{\|b - a\|_1} \geq \frac{13}{16}\).
\end{lemma}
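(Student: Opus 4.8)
The plan is to prove the contrapositive. Assume $\dist_{\Good(a,b,q)}(b,a) < \tfrac{13}{16}\|b-a\|_1$, equivalently $\dist_{\Bad(a,b,q)}(b,a) > \tfrac{3}{16}\|b-a\|_1$, and derive $\|b-q\|_1 < (1+\epsilon)\|a-q\|_1$, contradicting the hypothesis $\|b-q\|_1/\|a-q\|_1 \ge 1+\epsilon$. Throughout write $K \triangleq \tfrac{8}{\epsilon\delta'}$ for the blow‑up factor in the definition of $\Good$, and recall $\delta' = \delta/4 < 1/4$, so $K > 32/\epsilon$. The degenerate cases are handled first: if $a=q$ every coordinate is good and the claimed ratio is $1$; if $a=b$ the hypothesis is false. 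So we may assume $\|a-q\|_1>0$ and $\|b-a\|_1>0$.

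The first real step is a one‑line geometric observation about a bad coordinate $z\in\Bad(a,b,q)$: by definition $q^{(z)}$ lies outside the interval $[\min\{a^{(z)},b^{(z)}\},\max\{a^{(z)},b^{(z)}\}]$ at distance strictly more than $K|a^{(z)}-b^{(z)}|$ from it, and since $a^{(z)}$ is one of the two endpoints of that interval, this forces $|q^{(z)}-a^{(z)}| \ge K|a^{(z)}-b^{(z)}|$ (and symmetrically for $b$). Summing this over $z\in\Bad$ gives $\dist_{\Bad}(a,q) \ge K\cdot\dist_{\Bad}(b,a)$, hence $\|a-q\|_1 \ge \dist_{\Bad}(a,q) \ge K\cdot\dist_{\Bad}(b,a)$. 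Combined with the contrapositive assumption $\|b-a\|_1 < \tfrac{16}{3}\dist_{\Bad}(b,a)$, this yields $\|b-a\|_1 \le \tfrac{16}{3K}\|a-q\|_1$.

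The second step just invokes the triangle inequality globally: $\|b-q\|_1 \le \|a-q\|_1 + \|b-a\|_1 \le \bigl(1+\tfrac{16}{3K}\bigr)\|a-q\|_1$. Since $K > 32/\epsilon$ we have $\tfrac{16}{3K} < \tfrac{\epsilon}{6} < \epsilon$, so $\|b-q\|_1 < (1+\epsilon)\|a-q\|_1$, the desired contradiction. I do not expect a genuine obstacle in this argument; the only thing requiring care is the constant bookkeeping — matching the numerical constant $\tfrac{8}{\epsilon\delta'}$ hard‑coded into the definition of $\Good$ against the target fraction $\tfrac{13}{16}$ — and the slack here is comfortable ($K\gtrsim 32/\epsilon$ versus the roughly $\tfrac{16}{3\epsilon}$ actually needed), so $\tfrac{13}{16}$ is not tight and any constant strictly below $1$ would go through with a correspondingly larger blow‑up factor. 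A secondary point to double‑check is that the cruder global triangle bound $\|b-q\|_1\le\|a-q\|_1+\|b-a\|_1$ really suffices (it does), so there is no need to split distances into good and bad regions as the surrounding intuition suggests.
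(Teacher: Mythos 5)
Your proof is correct and is essentially the paper's argument run in the contrapositive direction: both rest on the observation that each bad coordinate forces $|q^{(z)}-a^{(z)}| \ge \tfrac{8}{\epsilon\delta'}|a^{(z)}-b^{(z)}|$, followed by the triangle inequality relating $\|b-q\|_1$, $\|a-q\|_1$, and $\|b-a\|_1$. The only cosmetic difference is that the paper lower-bounds the sum $\|a-q\|_1+\|b-q\|_1$ by $\tfrac{16}{\epsilon\delta'}\dist_{\Bad}(b,a)$ and upper-bounds that sum via the hypothesis, whereas you use only $\|a-q\|_1 \ge \tfrac{8}{\epsilon\delta'}\dist_{\Bad}(b,a)$ together with the crude bound $\|b-q\|_1\le\|a-q\|_1+\|b-a\|_1$; this works because, as you note, the slack in the constants is generous.
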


The following lemma shows that for two points $c_i, c_j$ and a query $q$ such that $q$ is significantly closer to $c_j$, we can estimate the distances, restricted to the good region, between one of the points and the query.
\begin{lemma}\label{lemma:chernoff_for_good_l1}
Let $c_i, c_j$ where $i,j\in [n]$ be such that for the query point $q \in \R^d$ it holds that $\|c_i - q\|_1 \geq (1+\epsilon)\|c_j - q\|_1$. Let $\Good$ be shorthand for $\Good(c_i, c_j, q)$ and $\Bad$ for $\Bad(c_i, c_j, q)$. The following statements are true:



    \begin{enumerate}
        \item $\dist_{\Good} (c_i, q) \geq (1+\epsilon) \dist_{\Good} (c_j, q)$.
        \item If $\dist_{\Good} (c_j, q) \geq \frac18 \dist_{\Good} (c_i, c_j)$, then with probability $1-\delta'/n$, 
        \begin{align*}
            \dist_{\Good} (r_i, u) \approx_{\frac{\epsilon}{16}} T\cdot \dist_{\Good}(c_i, q),\\
            \dist_{\Good} (r_j, u) \approx_{\frac{\epsilon}{16}} T\cdot \dist_{\Good}(c_j, q).
        \end{align*}
        \item If $\dist_{\Good} (c_j, q) < \frac18 \dist_{\Good} (c_i, c_j)$, then $\frac{\dist_{\Good} (c_i, q)}{\dist_{\Good} (c_j, q)} > 7$. With probability $1-\delta'/n$, 
        \begin{align*}
            \dist_{\Good} (r_i, u) &\approx_{\frac{\epsilon}{16}} T\cdot \dist_{\Good}(c_i, q),\\
            \frac{\dist_{\Good} (r_i, u)}{\dist_{\Good} (r_j, u)} &\geq 3.
        \end{align*}
    \end{enumerate}
\end{lemma}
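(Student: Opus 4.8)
The plan is to dispatch the three statements in order: the first is a purely deterministic "region" argument, while the second and third follow from Chernoff bounds once $\dist_\Good(r_i,u)$ and $\dist_\Good(r_j,u)$ are written as sums of independent, suitably bounded random variables. For statement~1, I would first note that on every $z\in\Bad$ the value $q^{(z)}$ lies outside the interval $[\min\{c_i^{(z)},c_j^{(z)}\},\max\{c_i^{(z)},c_j^{(z)}\}]$ by more than $\tfrac{8}{\epsilon\delta'}|c_i^{(z)}-c_j^{(z)}|$, and since $c_i^{(z)},c_j^{(z)}$ both lie in that interval this forces $|c_i^{(z)}-q^{(z)}|,|c_j^{(z)}-q^{(z)}|\ge\tfrac{8}{\epsilon\delta'}|c_i^{(z)}-c_j^{(z)}|$. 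Combined with the reverse triangle inequality $\bigl||c_i^{(z)}-q^{(z)}|-|c_j^{(z)}-q^{(z)}|\bigr|\le|c_i^{(z)}-c_j^{(z)}|$, this gives coordinatewise, hence in sum, $\dist_\Bad(c_i,q)\le(1+\tfrac{\epsilon\delta'}{8})\dist_\Bad(c_j,q)$. Splitting the hypothesis $\|c_i-q\|_1\ge(1+\epsilon)\|c_j-q\|_1$ over $\Good$ and $\Bad$, the leftover bad terms $(1+\epsilon)\dist_\Bad(c_j,q)-\dist_\Bad(c_i,q)$ are nonnegative because $\epsilon>\tfrac{\epsilon\delta'}{8}$, leaving $\dist_\Good(c_i,q)\ge(1+\epsilon)\dist_\Good(c_j,q)$.

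Next I would set up the concentration framework. For $\ell\in\{i,j\}$ write $\dist_\Good(r_\ell,u)=\sum_{t\in[T]}\sum_{b\in[d]}X^\ell_{t,b}$, where $X^\ell_{t,b}$ is the indicator that $b$ was added to $I$ in round $t$ times $\mathbbm{1}_{b\in\Good}\,|c_\ell^{(b)}-q^{(b)}|/p^{(b)}$; these are independent with $\E[X^\ell_{t,b}]=\mathbbm{1}_{b\in\Good}|c_\ell^{(b)}-q^{(b)}|$, so the sum has mean $T\dist_\Good(c_\ell,q)$. Crucially, for $b\in\Good$ the definition of the good region gives $|c_\ell^{(b)}-q^{(b)}|\le(1+\tfrac{8}{\epsilon\delta'})|c_i^{(b)}-c_j^{(b)}|$ while $p^{(b)}\ge|c_i^{(b)}-c_j^{(b)}|/\|c_i-c_j\|_1$, so every $X^\ell_{t,b}$ lies in $[0,\tfrac{9}{\epsilon\delta'}\|c_i-c_j\|_1]$ (using $\epsilon\delta'\le1$). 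Hence whenever the mean is $\Omega(T\|c_i-c_j\|_1)$, the Chernoff bound with $T=\Theta(\log(n/\delta)/(\epsilon^3\delta^2))$ and $\delta'=\delta/4$ yields $\dist_\Good(r_\ell,u)\approx_{\epsilon/16}T\dist_\Good(c_\ell,q)$ with failure probability at most $\delta'/(2n)$. For statement~2, the case hypothesis $\dist_\Good(c_j,q)\ge\tfrac18\dist_\Good(c_i,c_j)$ together with \Cref{lemma:good-region-is-large-l1} (which gives $\dist_\Good(c_i,c_j)\ge\tfrac{13}{16}\|c_i-c_j\|_1$, using that $\Good$ is symmetric in its first two arguments) gives $\dist_\Good(c_j,q)\ge\tfrac{13}{128}\|c_i-c_j\|_1$, and then $\dist_\Good(c_i,q)\ge\tfrac{13}{128}\|c_i-c_j\|_1$ too by statement~1; both means are thus large enough, and a union bound over $\ell\in\{i,j\}$ finishes the case.

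For statement~3 I would first observe that $\dist_\Good(c_i,c_j)>0$ here, so the triangle inequality restricted to $\Good$ gives $\dist_\Good(c_i,q)\ge\dist_\Good(c_i,c_j)-\dist_\Good(c_j,q)>7\dist_\Good(c_j,q)$ (the first claim) and also $\dist_\Good(c_i,q)>\tfrac78\dist_\Good(c_i,c_j)\ge\tfrac{91}{128}\|c_i-c_j\|_1$. The latter lets the framework above bound $\dist_\Good(r_i,u)\ge(1-\tfrac{\epsilon}{16})T\dist_\Good(c_i,q)$ with failure probability at most $\delta'/(2n)$. For $\dist_\Good(r_j,u)$ we cannot get a multiplicative bound since its mean $\mu'=T\dist_\Good(c_j,q)$ may be arbitrarily small; instead I would apply the large-deviation branch of the Chernoff bound to the same nonnegative, $\tfrac{9}{\epsilon\delta'}\|c_i-c_j\|_1$-bounded sum with target $\mu'+\tfrac17 T\dist_\Good(c_i,q)$, which is at least $2\mu'$ by the "$>7$" gap, so it is exceeded with probability at most $\exp(-\Omega(\epsilon\delta'T))\le\delta'/(2n)$ (again using $\dist_\Good(c_i,q)\ge\tfrac12\|c_i-c_j\|_1$ to bound the exponent). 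On the resulting good event $\dist_\Good(r_j,u)<\tfrac27 T\dist_\Good(c_i,q)$, and comparing with the bound on $\dist_\Good(r_i,u)$ yields $\dist_\Good(r_i,u)/\dist_\Good(r_j,u)\ge(1-\tfrac{\epsilon}{16})\big/\tfrac27>3$; a union bound over the two events completes the case.

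The only genuinely delicate point is the upper bound on $\dist_\Good(r_j,u)$ in statement~3: multiplicative concentration is unavailable because its mean can vanish, so one must convert a large-deviation estimate into a useful \emph{additive} bound by exploiting the constant-factor gap $\dist_\Good(c_i,q)>7\dist_\Good(c_j,q)$ guaranteed by the case. Everything else — splitting over $\Good$ and $\Bad$, the expectation and range computations for the $X^\ell_{t,b}$, propagating the "$\Omega(\|c_i-c_j\|_1)$" lower bounds through the case analysis, and checking the constants line up given $\epsilon<1/4$, $\delta'=\delta/4$ and $\epsilon\delta'\le1$ — is routine bookkeeping.
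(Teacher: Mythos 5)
Your proposal is correct and follows essentially the same path as the paper's proof: for statement~1, the same good/bad split together with the coordinate-wise bound $\dist_\Bad(c_i,q)\le(1+\tfrac{\epsilon\delta'}{8})\dist_\Bad(c_j,q)$; for statements~2 and~3, the same decomposition of $\dist_\Good(r_\ell,u)$ into independent, rescaled indicator terms in $\bigl[0,(1+\tfrac{8}{\epsilon\delta'})\|c_i-c_j\|_1\bigr]$ with mean $T\dist_\Good(c_\ell,q)$, followed by Chernoff bounds after propagating the lower bound $\dist_\Good(c_i,c_j)\ge\tfrac{13}{16}\|c_i-c_j\|_1$ from \cref{lemma:good-region-is-large-l1}. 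The only cosmetic difference is in part~3: the paper bounds $\dist_\Good(r_j,u)$ additively by $\E[\cdot]+\tfrac{T}{8}\dist_\Good(c_i,c_j)$, while you use the target $\mu'+\tfrac17T\dist_\Good(c_i,q)$; since in this case $\dist_\Good(c_i,q)$ and $\dist_\Good(c_i,c_j)$ are within a constant factor, the two calculations are equivalent up to constants and both yield the ratio bound of~$3$.
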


In the following lemma we state that our bounds on the distance estimates are good enough. Concretely, for a point that is not an approximate nearest neighbor, the estimated distance between this point and the query is significantly larger than the estimated distance between the nearest neighbor and the query.

\begin{lemma}\label{lemma:robust-feature-selection-l1}
    For any query point $q \in \R^d$, with probability $1-2\delta'$, it holds for all $i\in [n]$, if $\frac{\|c_i-q\|_1}{\|c_{i^*}-q\|_1} \geq 1+\epsilon$, then $\frac{\|r_i-u\|_1}{\|r_{i^*}-u\|_1} \geq 1+\epsilon\delta'/4$.
\end{lemma}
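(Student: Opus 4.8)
The plan is to fix the query point $q$ with nearest neighbor $c_{i^*}$ and to condition on two events that each hold with probability at least $1-\delta'$ (so both hold with probability $1-2\delta'$ by a union bound): (a)~the conclusion of \cref{claim:Markovlone}, i.e.\ $\|r_{i^*}-u\|_1\le\tfrac{T}{\delta'}\|c_{i^*}-q\|_1$, equivalently $\|c_{i^*}-q\|_1\ge\tfrac{\delta'}{T}\|r_{i^*}-u\|_1$; and (b)~the conclusion of \cref{lemma:chernoff_for_good_l1} applied, for every index $i$ with $\tfrac{\|c_i-q\|_1}{\|c_{i^*}-q\|_1}\ge1+\epsilon$, to the pair $(c_i,c_{i^*})$ with $c_{i^*}$ playing the role of the closer point (so the hypothesis of \cref{lemma:chernoff_for_good_l1} becomes exactly $\|c_i-q\|_1\ge(1+\epsilon)\|c_{i^*}-q\|_1$); since each of the at most $n$ invocations fails with probability at most $\delta'/n$, a union bound over $i$ suffices. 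From now on fix such an $i$, abbreviate $\Good=\Good(c_i,c_{i^*},q)$ and $\Bad=\Bad(c_i,c_{i^*},q)$, write $\|r_i-u\|_1=\dist_{\Good}(r_i,u)+\dist_{\Bad}(r_i,u)$ and likewise for $r_{i^*}$, and aim at $\|r_i-u\|_1\ge(1+\epsilon\delta'/4)\|r_{i^*}-u\|_1$.

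First I would handle the bad region, where $\Bad$ is essentially an isometry between the two centers. By definition of $\Bad$, for each $z\in\Bad$ the point $q^{(z)}$ lies so far outside the interval spanned by $c_i^{(z)}$ and $c_{i^*}^{(z)}$ that $|c_i^{(z)}-c_{i^*}^{(z)}|\le\tfrac{\epsilon\delta'}{8}|q^{(z)}-c_{i^*}^{(z)}|$, hence $|q^{(z)}-c_i^{(z)}|\approx_{\epsilon\delta'/8}|q^{(z)}-c_{i^*}^{(z)}|$. Summing this coordinatewise (with the weights $1/p^{(z)}$, hence for every realization of the multiset $I$) gives $\dist_{\Bad}(r_i,u)\ge(1-\tfrac{\epsilon\delta'}{8})\dist_{\Bad}(r_{i^*},u)$, and applying the same estimate to $c_i,c_{i^*},q$ themselves gives $\dist_{\Bad}(c_i,q)\le(1+\tfrac{\epsilon\delta'}{8})\dist_{\Bad}(c_{i^*},q)$. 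Plugging the latter into $\|c_i-q\|_1\ge(1+\epsilon)\|c_{i^*}-q\|_1$ and cancelling the bad parts shows that essentially the whole distance gap lies in $\Good$: writing $\beta=\dist_{\Bad}(c_{i^*},q)/\|c_{i^*}-q\|_1$, one gets $\dist_{\Good}(c_i,q)-\dist_{\Good}(c_{i^*},q)\ge\epsilon(1-\tfrac{\delta'\beta}{8})\|c_{i^*}-q\|_1\ge\tfrac{7\epsilon}{8}\|c_{i^*}-q\|_1$.

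Next I would pass to the good region and invoke \cref{lemma:chernoff_for_good_l1} for the pair $(c_i,c_{i^*})$. In its balanced case ($\dist_{\Good}(c_{i^*},q)\ge\tfrac18\dist_{\Good}(c_i,c_{i^*})$) it yields $\dist_{\Good}(r_i,u)\approx_{\epsilon/16}T\dist_{\Good}(c_i,q)$ and $\dist_{\Good}(r_{i^*},u)\approx_{\epsilon/16}T\dist_{\Good}(c_{i^*},q)$, while the case hypothesis plus the triangle inequality bounds $\dist_{\Good}(c_i,q)\le\dist_{\Good}(c_i,c_{i^*})+\dist_{\Good}(c_{i^*},q)\le9\dist_{\Good}(c_{i^*},q)\le9(1-\beta)\|c_{i^*}-q\|_1$. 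Combining the lower estimate on $\dist_{\Good}(r_i,u)$, the upper estimate on $\dist_{\Good}(r_{i^*},u)$, the gap bound from the previous step, and this bound on $\dist_{\Good}(c_i,q)+\dist_{\Good}(c_{i^*},q)$, a short calculation gives $\dist_{\Good}(r_i,u)-\dist_{\Good}(r_{i^*},u)\ge\tfrac{3\epsilon T}{8}\|c_{i^*}-q\|_1$. In the unbalanced case the much stronger $\dist_{\Good}(r_i,u)/\dist_{\Good}(r_{i^*},u)\ge3$, combined with $\dist_{\Good}(r_i,u)\approx_{\epsilon/16}T\dist_{\Good}(c_i,q)$ and $\dist_{\Good}(c_i,q)\ge\tfrac{7\epsilon}{8}\|c_{i^*}-q\|_1$, gives an even larger value. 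Finally I would reassemble: since $\bigl|\dist_{\Bad}(r_i,u)-\dist_{\Bad}(r_{i^*},u)\bigr|\le\dist_{\Bad}(r_i,r_{i^*})$, we have $\|r_i-u\|_1-\|r_{i^*}-u\|_1\ge\tfrac{3\epsilon T}{8}\|c_{i^*}-q\|_1-\dist_{\Bad}(r_i,r_{i^*})$, and $\dist_{\Bad}(r_i,r_{i^*})\le\tfrac{\epsilon\delta'}{8}\dist_{\Bad}(r_{i^*},u)\le\tfrac{\epsilon\delta'}{8}\|r_{i^*}-u\|_1\le\tfrac{\epsilon T}{8}\|c_{i^*}-q\|_1$ by event~(a); hence $\|r_i-u\|_1-\|r_{i^*}-u\|_1\ge\tfrac{\epsilon T}{4}\|c_{i^*}-q\|_1$. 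Dividing by $\|r_{i^*}-u\|_1$ and using $\|c_{i^*}-q\|_1\ge\tfrac{\delta'}{T}\|r_{i^*}-u\|_1$ from event~(a) yields $\|r_i-u\|_1/\|r_{i^*}-u\|_1\ge1+\epsilon\delta'/4$, as wanted.

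The main obstacle is the balanced case of the good region. There $\dist_{\Good}(c_i,q)$ and $\dist_{\Good}(c_{i^*},q)$ may be of the same order, so the two separate $(1\pm\epsilon/16)$ Chernoff errors do not combine into a multiplicative error on their difference; the accounting has to be kept consistent (tracking the bad-region share $\beta$ on both sides) or one loses a factor of $\epsilon$ in the gap and the $\epsilon\delta'/4$ bound breaks. Relatedly, the three quantities that must be balanced---the $\Omega(\epsilon)\|c_{i^*}-q\|_1$ lower bound on the good-region gap (which hinges on $\Bad$ being a near-isometry, ultimately traceable to \cref{lemma:good-region-is-large-l1} inside \cref{lemma:chernoff_for_good_l1}), the concentration of the good-region distances, and the slack factor $T/\delta'$ in the overestimation bound of \cref{claim:Markovlone}---leave little room, so the constants in the definitions of $\Good$/$\Bad$, in the Chernoff error, and in the case threshold have to be picked carefully.
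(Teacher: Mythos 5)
Your proposal is correct and uses the same three ingredients as the paper (the Markov bound of \cref{claim:Markovlone}, the good/bad decomposition with the near-isometry on \(\Bad\), and the Chernoff concentration of \cref{lemma:chernoff_for_good_l1}), but it organizes the final arithmetic differently. The paper extracts a standalone algebraic statement (Claim~B.4, with an auxiliary inequality Claim~B.3) about abstract quantities \(g_1,g_2,b_1,b_2,g_1',g_2',b_1',b_2'\) and works ratio-wise throughout, repeatedly subtracting \(1+\epsilon\delta'/8\) from both sides of the ratio inequality. You instead work additively, lower-bounding the good-region gap
\(\dist_\Good(r_i,u)-\dist_\Good(r_{i^*},u)\ge\tfrac{3\epsilon T}{8}\|c_{i^*}-q\|_1\)
and then peeling off at most \(\tfrac{\epsilon T}{8}\|c_{i^*}-q\|_1\) from the bad region before converting to a ratio via the Markov bound at the very end. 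The additive bookkeeping makes the budget transparent (\(\tfrac38-\tfrac18=\tfrac14\) and one factor of \(\delta'\) from Markov gives \(\epsilon\delta'/4\)), and correctly flags the subtle point you raise: the \((1\pm\epsilon/16)\) Chernoff errors on \(\dist_\Good(c_i,q)\) and \(\dist_\Good(c_{i^*},q)\) are absolute rather than relative to their difference, so you must carry the bad-region share \(\beta\) on both sides (in the gap lower bound and in the bound \(\dist_\Good(c_i,q)+\dist_\Good(c_{i^*},q)\le 10(1-\beta)\|c_{i^*}-q\|_1\)) to land exactly on \(\tfrac{3\epsilon T}{8}\) at \(\beta=0\); with the rounded-down \(\tfrac{7\epsilon}{8}\) gap alone and without the \((1-\beta)\) factor you would only reach \(\tfrac{\epsilon T}{4}\) in the good region and hence \(\epsilon\delta'/8\) overall. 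You flag this, so the argument is sound; it is essentially the same proof as the paper in a more inline, additively-organized form.
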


\subsubsection{\texorpdfstring{$\ell_1$}{l1} Dimension Reduction Preserves the Distance Ratio}
\begin{theorem}[Theorem 1 in \cite{indyk06:l1-dimension-reduction}]\label{thm:indyk06}
    Consider 2 points $a, b \in \R^d$ and parameters $0<\epsilon, \delta < 1$. We sample a random matrix $M\in \R^{m \times d}$, where $m = O(\log(1/\delta)/(\epsilon^2))$ and each entry of $M$ is sampled from Cauchy distribution, whose density function is $c(x) = \frac{1}{\pi(1+x^2)}$. Then with probability $1-\delta$, $F(Ma, Mb) \approx_\epsilon \|a-b\|_1$, where $F((x_1, \dots, x_m), (y_1, \dots, y_m)) \coloneq \mathsf{median}(|x_1-y_1|, \dots, |x_m- y_m|)$.
\end{theorem}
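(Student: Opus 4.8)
The plan is to exploit the $1$-stability of the Cauchy distribution to reduce the statement to a concentration estimate for the empirical median of i.i.d.\ half-Cauchy variables. Write $v = a-b$; if $v = 0$ the claim is trivial, so assume $v \neq 0$. Since $M$ is linear, for each $k \in [m]$ the $k$-th coordinate of $Ma - Mb$ equals $(Mv)_k = \sum_{j=1}^d M_{kj} v_j$, a linear combination of i.i.d.\ standard Cauchy variables; by $1$-stability of the Cauchy law this is distributed exactly as $\|v\|_1 \cdot Z_k$ with $Z_k$ standard Cauchy, and $Z_1, \dots, Z_m$ are mutually independent because the rows of $M$ are. Hence $F(Ma, Mb) = \mathsf{median}(|(Mv)_1|, \dots, |(Mv)_m|)$ is distributed as $\|v\|_1 \cdot \mathsf{median}(|Z_1|, \dots, |Z_m|)$, so it suffices to show that the empirical median $\widehat\mu$ of $|Z_1|, \dots, |Z_m|$ satisfies $\widehat\mu \approx_\epsilon 1$ with probability $1 - \delta$.

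Next I would pin down the population median and quantify how fast the distribution function moves away from it. The distribution function of $|Z|$ is $G(t) = \Pr[\,|Z| \le t\,] = \tfrac{2}{\pi} \arctan t$, so $G(1) = \tfrac12$ and the true median is $1$. Since the density $g(t) = \tfrac{2}{\pi} \cdot \tfrac{1}{1+t^2}$ is bounded below by an absolute constant on the interval $[1-\epsilon, 1+\epsilon] \subseteq [0,2]$, this yields, uniformly in $\epsilon \in (0,1)$, the estimates $G(1+\epsilon) \ge \tfrac12 + c\epsilon$ and $G(1-\epsilon) \le \tfrac12 - c\epsilon$ for some absolute constant $c > 0$.

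The remaining argument is a routine Chernoff bound on sums of indicators. The event $\{\widehat\mu > 1+\epsilon\}$ says that strictly fewer than $m/2$ of the independent $\{0,1\}$ variables $\mathbbm{1}[\,|Z_k| \le 1+\epsilon\,]$ equal $1$; each has mean at least $\tfrac12 + c\epsilon$, so by a standard Chernoff bound (see \cref{sec:preliminaries}) their sum is below $m/2$ with probability at most $\exp(-\Omega(c^2 \epsilon^2 m))$. Symmetrically, $\{\widehat\mu < 1-\epsilon\}$ says that more than $m/2$ of the variables $\mathbbm{1}[\,|Z_k| \le 1-\epsilon\,]$ equal $1$, each of mean at most $\tfrac12 - c\epsilon$, and the same tail bound applies. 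Choosing $m = \Theta(\log(1/\delta)/\epsilon^2)$ with a sufficiently large constant makes each probability at most $\delta/2$, and a union bound over the two events completes the proof.

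The one genuinely delicate point --- and the place where the heavy tail of the Cauchy could cause trouble --- is that one must never appeal to a moment of $|Z|$, whose expectation is already infinite; all concentration is performed on the \emph{bounded} Bernoulli indicators $\mathbbm{1}[\,|Z_k| \le 1 \pm \epsilon\,]$, for which the Chernoff bound applies regardless of tail behaviour. The remaining details are minor: fixing a convention for the empirical median when $m$ is even (e.g.\ taking the $\lceil m/2 \rceil$-th order statistic), and carrying the constant $c$ along explicitly so that the bound on $m$ is uniform in $\epsilon$ rather than merely valid for small $\epsilon$.
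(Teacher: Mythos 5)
Your proof is correct: the reduction via $1$-stability of the Cauchy distribution to the empirical median of i.i.d.\ half-Cauchy variables, the computation $G(t)=\tfrac{2}{\pi}\arctan t$ giving population median $1$ with density bounded below near $1$, and the Chernoff bound applied only to the bounded indicator variables (never to moments of the Cauchy, which do not exist) is exactly the standard argument. The paper itself does not prove this statement --- it imports it as Theorem~1 of the cited reference \cite{indyk06:l1-dimension-reduction} --- and your reconstruction is essentially the proof given there, so there is nothing further to reconcile.
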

\begin{claim}
    Let $q\in \R^n$ be a query. Furthermore, let $i^*\in [n]$ be the index of its nearest neighbor. \Cref{alg:sample-reuse-l1} outputs an index $i$ such that
    \begin{align*}
        \frac{\|c_i - q\|_1}{\|c_{i^*} - q\|_1} \leq 1+\epsilon
    \end{align*}
    with probability $1-3\delta'$.
\end{claim}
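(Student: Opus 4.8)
The plan is to chain together the three events already established earlier in the section and then apply the $\ell_1$ dimension reduction theorem of Indyk as the final ingredient. Concretely, let $q\in\R^d$ be the query and $i^*$ the index of its true nearest neighbor in $C$. We want to show that the index $\hat i$ returned by \cref{alg:sample-reuse-l1}, namely $\hat i=\argmin_{i\in[n]}F(Mr_i,Mu)$, satisfies $\|c_{\hat i}-q\|_1\le(1+\epsilon)\|c_{i^*}-q\|_1$.

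First I would fix the good events. By \cref{lemma:robust-feature-selection-l1}, with probability $1-2\delta'$ we have: for every $i\in[n]$ with $\|c_i-q\|_1\ge(1+\epsilon)\|c_{i^*}-q\|_1$, the coordinate-selected vectors satisfy $\|r_i-u\|_1\ge(1+\epsilon\delta'/4)\|r_{i^*}-u\|_1$. Next, I would invoke \cref{thm:indyk06} (Indyk's $\ell_1$ dimension reduction) with error parameter, say, $\epsilon\delta'/20$ and failure probability $\delta'/n$ per point, so that a union bound over the $n$ centers plus $i^*$ gives that with probability $1-\delta'$, simultaneously for all $i$, $F(Mr_i,Mu)\approx_{\epsilon\delta'/20}\|r_i-u\|_1$. (This requires choosing $m=O(\log(n/\delta')/(\epsilon\delta')^2)$, which matches the setting $m=O(\log(n/\delta)/(\epsilon^2\delta^2))$ in the algorithm up to constants.) Union-bounding these two events with the event of \cref{claim:Markovlone} — wait, actually \cref{lemma:robust-feature-selection-l1} already absorbs \cref{claim:Markovlone} as one of its two constituent sub-events, so the total failure probability so far is $2\delta'+\delta'=3\delta'$, matching the claim's statement.

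Conditioned on both events, I would argue as follows. Take any $i$ that is \emph{not} a $(1+\epsilon)$-approximate nearest neighbor, i.e.\ $\|c_i-q\|_1>(1+\epsilon)\|c_{i^*}-q\|_1$. Then
\begin{align*}
F(Mr_i,Mu)&\ge(1-\tfrac{\epsilon\delta'}{20})\|r_i-u\|_1\ge(1-\tfrac{\epsilon\delta'}{20})(1+\tfrac{\epsilon\delta'}{4})\|r_{i^*}-u\|_1\\
&>(1+\tfrac{\epsilon\delta'}{20})\|r_{i^*}-u\|_1\ge F(Mr_{i^*},Mu),
\end{align*}
where the strict middle inequality uses that $(1-x/5)(1+x)>1+x/5$ for $x=\epsilon\delta'/4\in(0,1)$ (an elementary check: $(1-x/5)(1+x)=1+\tfrac{4x}{5}-\tfrac{x^2}{5}>1+\tfrac{x}{5}$ iff $\tfrac{3x}{5}>\tfrac{x^2}{5}$ iff $x<3$). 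Hence such an $i$ cannot be the argmin, so $\hat i$ must be a $(1+\epsilon)$-approximate nearest neighbor, which is exactly the desired conclusion with the stated failure probability $3\delta'$.

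The only mildly delicate point — and the step I would be most careful about — is the bookkeeping of which constants and failure probabilities are consumed where: \cref{lemma:robust-feature-selection-l1} already costs $2\delta'$ and internally relies on \cref{claim:Markovlone}, so I must not double-count it, and I must verify that the dimension-reduction error $\epsilon\delta'/20$ is small enough relative to the multiplicative gap $1+\epsilon\delta'/4$ guaranteed by \cref{lemma:robust-feature-selection-l1} for the above two-sided distortion argument to still leave a strict gap. Everything else is a routine union bound. Note this claim proves only correctness with the distance-ratio guarantee; the query complexity and space bounds (the fourth event) are handled separately and combined in the proof of \cref{theorem:sample-reuse-l1}.
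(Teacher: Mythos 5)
Your proposal is correct and follows essentially the same route as the paper's proof: condition on \cref{lemma:robust-feature-selection-l1} (cost $2\delta'$, with \cref{claim:Markovlone} already absorbed), union-bound Indyk's $\ell_1$ dimension reduction over the $n$ coordinate-selected centers (cost $\delta'$), and observe that the $1+\epsilon\delta'/4$ gap survives the $1\pm O(\epsilon\delta')$ distortion so no non-approximate neighbor can be the argmin. The only differences are immaterial constants (you use distortion $\epsilon\delta'/20$ where the paper uses $\epsilon\delta'/100$) and your explicit arithmetic verification of the surviving gap, which the paper leaves implicit.
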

\begin{proof}
Applying \cref{thm:indyk06} with $m = O(\log(n/\delta')/(\epsilon^2 \delta'^2))$, given any $v \in R$, it holds with probability $1-\delta'/(n+1)$, that $F(Mv, Mu) \in [1-\epsilon\delta'/100, 1+\epsilon\delta'/100] \|v- u\|_1$. By union bound, with probability $1-\delta'$, it holds for all $v\in R$.
By \Cref{lemma:robust-feature-selection-l1} it holds with probability $1-2\delta'$ that for all $i\in [n]$, if $\frac{\|c_i - q\|_1}{\|c_{i^*} - q\|_1} \geq 1+\epsilon$, then $\frac{\|r_i - u\|_1}{\|r_{i^*} - u\|_1} \geq 1+\epsilon\delta'/4$, thus $\frac{F(Mr_i, Mu)}{F(Mr_{i^*}, Mu)} > 1$ with probability $1-3\delta'$, i.e. the algorithm will not output $i$ as the index of the approximate nearest neighbor.
\end{proof}

\subsubsection{Space and Query Complexity}\label{sec:l1-space-query-analysis}
Next, we argue the space and query complexity.
The main result in this section is that for any set $C$ of centers the sum of probabilities $\sum_{b\in [d]} p^{(b)}$ is upper bounded by the number of centers.

\begin{lemma}\label{lemma:math-program}
    $1\leq \sum_{b\in [d]} p^{(b)} \leq n$.
\end{lemma}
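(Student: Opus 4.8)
The plan is to handle the two inequalities separately; the lower bound is a one-liner and the upper bound is where the work lies. For the lower bound, assume (as we may) that the $n\ge 2$ input points are pairwise distinct, so every $\|c_i-c_j\|_1>0$ and each $p^{(b)}$ is well defined. Fixing any single pair $(i,j)$ and using that $p^{(b)}$ is a maximum over all pairs,
\[ \sum_{b\in[d]}p^{(b)}\ \ge\ \sum_{b\in[d]}\frac{|c_i^{(b)}-c_j^{(b)}|}{\|c_i-c_j\|_1}\ =\ \frac{\|c_i-c_j\|_1}{\|c_i-c_j\|_1}\ =\ 1. \]

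For the upper bound, the idea I would use is to route each coordinate's ``witness pair'' through a minimum spanning tree. Let $T$ be a minimum spanning tree of the complete graph on $\{c_1,\dots,c_n\}$ with edge weights $w_{ij}=\|c_i-c_j\|_1$; it has exactly $n-1$ edges. The property of $T$ I need is the standard minimax (bottleneck) property: for any two vertices $u,v$, every edge on the unique $u$--$v$ path in $T$ has weight at most $w_{uv}$ --- otherwise, deleting that edge and inserting $(u,v)$ would yield a lighter spanning tree, a contradiction. Now fix a coordinate $b$, let $(i^*,j^*)$ attain $p^{(b)}=|c_{i^*}^{(b)}-c_{j^*}^{(b)}|/w_{i^*j^*}$, and write the $i^*$--$j^*$ path in $T$ as $i^*=u_0,u_1,\dots,u_k=j^*$. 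The one-dimensional triangle inequality gives $|c_{i^*}^{(b)}-c_{j^*}^{(b)}|\le\sum_{l=1}^{k}|c_{u_{l-1}}^{(b)}-c_{u_l}^{(b)}|$, and the minimax property gives $w_{u_{l-1}u_l}\le w_{i^*j^*}$ for every $l$, hence
\[ p^{(b)}=\frac{|c_{i^*}^{(b)}-c_{j^*}^{(b)}|}{w_{i^*j^*}}\ \le\ \sum_{l=1}^{k}\frac{|c_{u_{l-1}}^{(b)}-c_{u_l}^{(b)}|}{w_{i^*j^*}}\ \le\ \sum_{l=1}^{k}\frac{|c_{u_{l-1}}^{(b)}-c_{u_l}^{(b)}|}{w_{u_{l-1}u_l}}\ \le\ \sum_{(x,y)\in E(T)}\frac{|c_x^{(b)}-c_y^{(b)}|}{w_{xy}}, \]
where the last step uses that the path edges lie in $E(T)$ and every term is nonnegative. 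Summing over all coordinates and exchanging the order of summation then finishes it:
\[ \sum_{b\in[d]}p^{(b)}\ \le\ \sum_{(x,y)\in E(T)}\frac{\sum_{b\in[d]}|c_x^{(b)}-c_y^{(b)}|}{w_{xy}}\ =\ \sum_{(x,y)\in E(T)}\frac{\|c_x-c_y\|_1}{\|c_x-c_y\|_1}\ =\ |E(T)|\ =\ n-1\ \le\ n. \]

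I do not expect a serious obstacle: everything beyond the minimax property of the MST is the triangle inequality and bookkeeping, and the one genuine idea is realizing that routing witness pairs through a minimum spanning tree (rather than through, say, a star or an arbitrary tree) is precisely what makes the denominators cooperate --- that choice is the crux. Two remarks. First, the same bound can be obtained by an induction that repeatedly ``collapses'' the current closest pair (essentially Kruskal's algorithm in disguise), but fixing one tree $T$ up front keeps the estimate cleanest. Second, this argument is genuinely $\ell_1$-specific: for $\ell_2$ the analogous final step would only give $\|c_x-c_y\|_1/\|c_x-c_y\|_2\le\sqrt{d}$ per tree edge, which is far too weak, consistent with the $\ell_2$ case being handled by the separate SVD-based argument.
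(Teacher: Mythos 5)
Your proof is correct, and it takes a genuinely different route from the paper's. The paper's argument is inductive: it defines a coordinate-wise \emph{collapse} operator, applies it with respect to the closest pair $(c_{i_0},c_{j_0})$ to obtain a set $C'$ with $\le n-1$ points, and then proves the key inequality $P_n(C)\le P_{n-1}(C')+1$ by a per-coordinate case analysis on how the collapse perturbs the witness pair's numerator. Your argument instead fixes a single minimum spanning tree up front, routes each coordinate's witness pair along its tree path, and invokes the MST bottleneck property to replace the single denominator $w_{i^*j^*}$ by the (no larger) per-edge denominators $w_{u_{l-1}u_l}$; the coordinate-wise triangle inequality and an exchange of sums then reduce everything to $|E(T)|=n-1$. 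Your version is non-inductive, avoids having to define and verify the collapse operator's two properties, isolates the one genuinely geometric fact being used (bottleneck/minimax optimality of the MST under $\ell_1$ distances), and even delivers the marginally sharper constant $n-1$ rather than $n$. You are also right that the paper's collapse induction is ``Kruskal in disguise'': each collapse step contracts the current lightest pair, so unrolling the recursion implicitly builds the same tree you write down explicitly. Your closing observation about why this breaks for $\ell_2$ (the final step would only give $\|c_x-c_y\|_1/\|c_x-c_y\|_2\le\sqrt{d}$ per edge) accurately explains why the paper switches to an SVD-based argument there. The only thing I would tighten is the opening caveat: rather than assuming all points are distinct, it is cleaner to say that if duplicates exist they contribute nothing to any $p^{(b)}$ (the $\max$ implicitly ranges over pairs with $c_i\ne c_j$, exactly as the paper's lower-bound line does), so one may pass to the set of distinct points, of size $n'\le n$, and conclude $\sum_b p^{(b)}\le n'-1\le n$.
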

\begin{proof}
The lower bound for $\sum_{b\in [d]} p^{(b)}$ can be seen as follows: Fix arbitrary $i_1, j_1 \in [n]$ s.t.\ $\|c_{i_1} - c_{j_1}\|_1 \neq 0$. We have, $\sum_{b\in [d]} p^{(b)} \geq \sum_{b\in [d]} \frac{\left|c_{i_1}^{(b)} - c_{j_1}^{(b)}\right|}{\|c_{i_1} - c_{j_1}\|_1} = 1$.

Next we prove the upper bound for $\sum_{b\in [d]} p^{(b)}$. We first introduce an operation that we call ``$\collapse$''.
    For real numbers $\tau_1 \leq \tau_2$, 
    $$
    \collapse_{\tau_1, \tau_2}(x) \coloneq 
    \begin{cases}
        x, & x \leq \tau_1, \\
        \tau_1, & \tau_1 < x \leq \tau_2, \\
        x - (\tau_2 - \tau_1), & x > \tau_2.
    \end{cases}
    $$
    For $\tau_1 > \tau_2$, $\collapse_{\tau_1, \tau_2}(x) := \collapse_{\tau_2, \tau_1}(x)$. 
    
    We also slightly abuse the notation, and define for $a, b, c \in \R^d$, 
    $$\collapse_{a,b}(c) \triangleq (\collapse_{a_1, b_1}(c_1), \collapse_{a_2, b_2}(c_2), \dots, \collapse_{a_d, b_d}(c_d)) \in \R^d.$$

\begin{claim}\label{lemma:collapse-property}
Let $x,y,\tau_1,\tau_2 \in \R$, where $x \geq y$. We have 
    (i) $x - |\tau_1 -\tau_2| \leq \collapse_{\tau_1, \tau_2}(x) \leq x$ and (ii) $x-y - |\tau_1 -\tau_2| \leq\collapse_{\tau_1, \tau_2}(x) - \collapse_{\tau_1, \tau_2}(y) \leq x-y.$
\end{claim}
\begin{proof}
    Denote $x' = \collapse_{\tau_1, \tau_2}(x)$ and $y' = \collapse_{\tau_1, \tau_2}(y)$. Assume w.l.o.g.\ that $\tau_1 \leq \tau_2$, otherwise we switch them.
    
    For the first argument, $\forall x\in \R, x - |\tau_1 -\tau_2| \leq \collapse_{\tau_1, \tau_2}(x) \leq x$,  this is true because of the definition of collapse.
    
    For the second argument, $\forall x\geq y \in \R$, $x-y - |\tau_1 -\tau_2| \leq\collapse_{\tau_1, \tau_2}(x) - \collapse_{\tau_1, \tau_2}(y) \leq x-y,$ consider $$f(x) = x-\collapse_{\tau_1, \tau_2}(x) = \begin{cases}
        0, &x\leq \tau_1,\\
        x-\tau_1, &\tau_1 \leq x \leq \tau_2,\\
        \tau_2 - \tau_1, &x > \tau_2.
    \end{cases}$$
    Notice that $f(x)$ is a non-decreasing function, thus $f(x) \geq f(y)$, which proves $x-y \geq x'-y'$. Notice that $0\leq f(x) \leq \tau_2 - \tau_1$, thus $f(x) \leq f(y) + (\tau_2 -\tau_1)$, which proves $x'-y' \geq x-y-(\tau_2 - \tau_1)$.
\end{proof}

\begin{claim}\label{claim:collapse}
Let $c_{i_0}, c_{j_0}$ be the closest pair of 
points in a set $C = \{c_1, c_2, \dots, c_n\}$ of points, i.e., $(i_0, j_0) = \argmin_{(i,j)\colon i\neq j} \|c_i - c_j \|_1$.    
Define $P_n(C)$ to be $\sum_{b \in [d]} \max_{i\neq j}\frac{|c_i^{(b)} - c_j^{(b)}|}{\|c_i - c_j\|_1}$.
The set $C' = \{c_i' = \collapse_{c_{i_0}, c_{j_0}}(c_i) | i\in [n] \}$ is such that (i) it contains $n-1$ points and (ii) $P_n(C) \leq P_{n-1}(C') + 1.$  
\end{claim}

\begin{proof}
Note, by the definition of the collapse operation,
    that $c'_{i_0} = c'_{j_0}$. Hence, $C'$ has at most $n-1$ points, proving part (i) of the claim.

    To prove (ii),
  define $p_b = \max_{c_i, c_j \in C, i\neq j}\frac{|c_i^{(b)} - c_j^{(b)}|}{\|c_i - c_j\|_1}$ and $p_b' = \max_{c'_i, c'_j \in C', i\neq j}\frac{|{c'_i}^{(b)} - {c'_j}^{(b)}|}{\|c'_i - c'_j\|_1}$. 
    We prove that $p_b - p_b' \leq \frac{| c_{i_0}^b - c_{j_0}^b |}{\|c_{i_0} - c_{j_0}\|_1}, \forall b \in [d]$. The claim follows, since $P_k(C) - P_{k-1}(C') = \sum_b (p_b - p_b') \leq \sum_b \frac{| c_{i_0}^{(b)} - c_{j_0}^{(b)} |}{\|c_{i_0} - c_{j_0}\|_1} = 1$.

    Fix some $b$, choose $(u,v) = \argmax_{(i,j)\colon c_i^{(b)} \neq c_j^{(b)}} \frac{| c_i^{(b)} - c_j^{(b)} |}{\|c_i - c_j\|_1}$, i.e. $p_b = \frac{| c_u^{(b)} - c_v^{(b)} |}{\|c_u - c_v\|_1}$.

    If ${c'_u}^{(b)} \neq {c'_v}^{(b)}$, ${p_b}' \geq \frac{| {c'_u}^{(b)} - {c'_v}^{(b)} |}{\|c'_u - c'_v\|_1} \geq \frac{| {c_u^{(b)}} - {c_v^{(b)}} | - |{c_{i_0}^{(b)}} - {c_{j_0}^{(b)}}|}{\|c_u - c_v\|_1}.$ The second inequality follows from \Cref{lemma:collapse-property}. Thus 
    $
    p_b - p_b' \leq \frac{|{c_{i_0}^{(b)}} - {c_{j_0}^{(b)}}|}{\|c_u - c_v\|_1} \leq \frac{|{c_{i_0}^{(b)}} - {c_{j_0}^{(b)}}|}{\|c_{i_0} - c_{j_0}\|_1}.
    $
    If ${c'_u}^{(b)} = {c'_v}^{(b)}$, it means $|{c_u^{(b)}} - {c_v^{(b)}}| \leq |{c_{i_0}^{(b)}} - {c_{j_0}^{(b)}}|$, thus 
    $
    p_b - p_b' \leq p_b = \frac{| c_u^{(b)} - c_v^{(b)} |}{\|c_u - c_v\|_1} \leq \frac{|{c_{i_0}^{(b)}} - {c_{j_0}^{(b)}}|}{\|c_{i_0} - c_{j_0}\|_1}.
    $
\end{proof}
Solving the recurrence given by \cref{claim:collapse} with the boundary condition that for a set $C''$ of two points, $P_2(C'') = 1$, we get that $P_n(C) \leq n$.
Since $P_n(C)$, for the set of $n$ points, is by definition, equal to $\sum_{b \in [d]} p^{(b)}$, the statement of the lemma follows. 
\end{proof}

\begin{claim}\label{clm:I-size-upperbound-l1}
    With probability $1-\delta'$, we have $|I| \leq 2Tn$.    
\end{claim}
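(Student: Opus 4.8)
The plan is a one-shot Chernoff bound on $|I|$, with the mean controlled by \cref{lemma:math-program}. First I would observe that $|I|$ is a sum of $Td$ independent $\{0,1\}$-valued random variables: for $t\in[T]$ and $b\in[d]$ let $X_{t,b}$ indicate whether coordinate $b$ is added to $I$ during iteration $t$, so that $|I| = \sum_{t\in[T]}\sum_{b\in[d]} X_{t,b}$, the $X_{t,b}$ are mutually independent, each lies in $[0,1]$, and $\E[X_{t,b}] = p^{(b)}$. Hence $\mu := \E[|I|] = T\sum_{b\in[d]} p^{(b)}$, and \cref{lemma:math-program} yields $T \leq \mu \leq Tn$.

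Next I would invoke the Chernoff bound from \cref{sec:preliminaries} in the regime ``$\delta\geq 1$'' with $c=1$. Writing the target deviation as $2Tn = (1+\lambda)\mu$, the inequality $\mu\leq Tn$ forces $\lambda = 2Tn/\mu - 1 \geq 1$, so that regime applies, and moreover $\lambda\mu = 2Tn - \mu \geq Tn$. Therefore
\begin{displaymath}
    \prob{}{|I| \geq 2Tn} = \prob{}{|I| \geq (1+\lambda)\mu} \leq \exp(-\lambda\mu/3) \leq \exp(-Tn/3) \leq \exp(-T/3).
\end{displaymath}
Finally, since $0<\epsilon<1/4$ and $0<\delta<1$, the value $T = \Theta(\log(n/\delta)/(\epsilon^3\delta^2))$ satisfies $T/3 \geq \log(1/\delta') = \log(4/\delta)$ for a suitable choice of the constant hidden in the definition of $T$, so $\exp(-T/3) \leq \delta'$, which is exactly the claim.

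I do not expect a genuine obstacle. The only two points needing a moment of care are: (i) checking that the desired deviation $2Tn$ is at least twice $\mu$, so that the form of the Chernoff bound linear in $\lambda$ (rather than quadratic) is the one to use---this is immediate from $\mu\leq Tn$; and (ii) confirming that the resulting exponent $\Theta(Tn)$, or even just $\Theta(T)$, dominates $\log(1/\delta')$, which follows directly from the definition of $T$. It is worth emphasizing that \cref{lemma:math-program} is doing the essential work here: without the bound $\sum_{b} p^{(b)}\leq n$, the mean $\mu$ could be as large as $\Theta(Tn^2)$ and the claimed near-linear bound on $|I|$ would fail.
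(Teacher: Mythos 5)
Your proof is correct and follows essentially the same route as the paper's: decompose $|I|$ into the independent indicators $X_{t,b}$, use \cref{lemma:math-program} to get $T\leq\E[|I|]\leq Tn$, and apply the linear-regime Chernoff bound so the exponent is at least $T/3$. The only cosmetic difference is that you bound $\Pr[|I|\geq 2Tn]$ directly while the paper bounds $\Pr[|I|\geq 2\E[|I|]]$ and then uses $2\E[|I|]\leq 2Tn$; both are valid.
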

\begin{proof}
   We define $X_{t,z}$ for $t \in [T], z \in [d]$ to be the indicator random variable for the event that we add $z$ to $I$ at iteration $t$. Note that $\Pr[X_{t,z} = 1] = p^{(z)}$.
    We can see that $|I| = \sum_{t\in[T], z\in[d]} X_{t,z}$.
    Hence, $\E[|I|] = \sum_{t\in[T], z\in[d]} p^{(z)} \leq T\cdot \sum_{b \in [d]} p^{(b)} \leq T\cdot n$, where the last inequality follows from \cref{lemma:math-program}. Similarly, we can see that 
    $\E[|I|] \geq T$.
    Using Chernoff bound, we have 
      $\Pr\left[|I| - \E\left[|I|\right] \geq \E\left[|I|\right]\right] \leq \exp (-\E[|I|]/3) \leq \exp(-T/3) \leq \delta/4.$
    Thus, with probability $1-\delta/4$, we have $|I| \leq 2 E[|I|] \leq 2Tn$.
\end{proof}


\begin{lemma}
    With prob. $1-\delta'$, the space complexity of \cref{alg:sample-reuse-l1} is $O(\frac{n\log^2(n/\delta)\log(d)}{\epsilon^5 \delta^4})$ wordsize, and the number of coordinates that we query is $O(\frac{n\log(n/\delta)}{\epsilon^3 \delta^2})$. 
\end{lemma}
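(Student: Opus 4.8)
The plan is to condition on the single high-probability event already isolated in \cref{clm:I-size-upperbound-l1}, namely that $|I|\le 2Tn$; this occurs with probability at least $1-\delta'$ and, crucially, does not depend on the (future) query point $q$. Everything else in both bounds is deterministic once the multiset $I$ and the matrix $M$ are fixed. For the query complexity there is essentially nothing to do: the query procedure of \cref{alg:sample-reuse-l1} reads $q^{(b)}$ only for coordinates $b$ appearing in $I$, so it accesses at most $|I|\le 2Tn$ coordinates of $q$, and substituting $T=O(\log(n/\delta)/(\epsilon^3\delta^2))$ gives the claimed $O(n\log(n/\delta)/(\epsilon^3\delta^2))$.

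For the space bound I would simply walk through the four objects that are stored and add up their sizes, using $|I|=O(Tn)$ throughout. (1)~The multiset $I$ has $O(Tn)$ elements, each a coordinate index in $[d]$ occupying $O(\log d)$ bits. (2)~The scaling factors $\{p^{(b)}:b\in I\}$ form $O(Tn)$ numbers; since each $p^{(b)}$ is a ratio of a coordinate difference over an $\ell_1$-distance of two input points, storing it to the $O(\log d)$ bits of precision that the rest of the analysis can absorb costs $O(\log d)$ per entry. (3)~The matrix $M\in\R^{[m]\times I}$ has $m|I|=O(mTn)$ entries, each a Cauchy variable truncated to $O(\log d)$ bits (enough for \cref{thm:indyk06} to hold up to a negligible extra error with our parameters), i.e.\ $O(mTn\log d)$ in total. (4)~The image $M(R)=\{Mr_i\}_{i\in[n]}$ is $n$ vectors in $\R^m$, i.e.\ $O(nm)$ numbers of $O(\log d)$ bits each. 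The term from $M$ dominates, so the total is $O(mTn\log d)$; plugging in $m=O(\log(n/\delta)/(\epsilon^2\delta^2))$ and $T=O(\log(n/\delta)/(\epsilon^3\delta^2))$ yields $O\left(\frac{n\log^2(n/\delta)\log d}{\epsilon^5\delta^4}\right)$, as claimed. Note that the genuine content behind these estimates --- the bound $\sum_{b\in[d]}p^{(b)}\le n$ of \cref{lemma:math-program} and the concentration of $|I|$ around $T\sum_b p^{(b)}$ --- has already been carried out, so both bounds here are immediate corollaries of \cref{clm:I-size-upperbound-l1}.

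I expect the only mildly delicate point to be the bit-precision accounting that produces the $\log d$ factor: one has to argue that $O(\log d)$ bits per stored real number (the $p^{(b)}$'s, the entries of $M$, and the coordinates of the $Mr_i$'s) is both \emph{enough} --- so that rounding does not spoil the distance-ratio estimate or \cref{thm:indyk06} --- and an \emph{upper bound} on what is needed, so that each coordinate index and each rounded value occupies $O(\log d)$ bits (hence ``words'' up to the word size). Everything else is pure bookkeeping; in particular there is no further probabilistic argument, since the space and query guarantees depend on the algorithm's randomness only through the event $\{|I|\le 2Tn\}$.
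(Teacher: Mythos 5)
Your proof is correct and follows essentially the same route as the paper: condition on the event $|I|\le 2Tn$ from \cref{clm:I-size-upperbound-l1} (which rests on \cref{lemma:math-program} and the Chernoff bound), read off the query complexity as $|I|=O(Tn)$, and sum the word sizes of $I$, $M$, $M(R)$, and the scaling factors. The paper is slightly more cavalier about bit precision (it attaches the $\log d$ factor only to storing $I$, counting the entries of $M$ and $M(R)$ as single words), but your more careful accounting lands on the same dominant term and the same final bound.
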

\begin{proof}
    By \cref{clm:I-size-upperbound-l1}, we have that the number of coordinates of $q$ that we query is $|I| = O(Tn) = O(\frac{n\log(n/\delta)}{\epsilon^3 \delta^2})$. Next we analyze the space complexity of the data structure. Matrix $M$ is of wordsize $O(m|I|)$, $M(R)$ is of wordsize $O(mn)$, $I$ is of wordsize $O(|I| \log(d))$. Thus the space complexity of the our data strcuture is $O(m|I| + mn +|I|\log(d)) = O(\frac{n\log^2(n/\delta)\log(d)}{\epsilon^5 \delta^4})$ wordsize.
\end{proof}

 Using union bound, the success probability of the data structure is at least $1-4\delta' = 1-\delta$.

\bibliography{socg-main.bbl}

\begin{thebibliography}{10}

\bibitem{AjtaiFHN16}
Mikl{\'{o}}s Ajtai, Vitaly Feldman, Avinatan Hassidim, and Jelani Nelson.
\newblock Sorting and selection with imprecise comparisons.
\newblock {\em {ACM} Trans. Algorithms}, 12(2):19:1--19:19, 2016.

\bibitem{DBLP:journals/cacm/AndoniI08}
Alexandr Andoni and Piotr Indyk.
\newblock Near-optimal hashing algorithms for approximate nearest neighbor in high dimensions.
\newblock {\em Commun. {ACM}}, 51(1):117--122, 2008.

\bibitem{DBLP:conf/soda/AndoniINR14}
Alexandr Andoni, Piotr Indyk, Huy~L. Nguyen, and Ilya~P. Razenshteyn.
\newblock Beyond locality-sensitive hashing.
\newblock In {\em Proceedings of the Twenty-Fifth Annual {ACM-SIAM} Symposium on Discrete Algorithms (SODA'14)}, pages 1018--1028. {SIAM}, 2014.

\bibitem{AndoniIP06}
Alexandr Andoni, Piotr Indyk, and Mihai P{\u{a}}tra{\c{s}}cu.
\newblock On the optimality of the dimensionality reduction method.
\newblock In {\em 47th Annual {IEEE} Symposium on Foundations of Computer Science (FOCS' 2006)}, pages 449--458. {IEEE} Computer Society, 2006.
\newblock URL: \url{https://doi.org/10.1109/FOCS.2006.56}.

\bibitem{AndoniNNRW17}
Alexandr Andoni, Huy~L. Nguyen, Aleksandar Nikolov, Ilya~P. Razenshteyn, and Erik Waingarten.
\newblock Approximate near neighbors for general symmetric norms.
\newblock In {\em (STOC'17)}, pages 902--913. {ACM}, 2017.

\bibitem{DBLP:conf/stoc/AndoniR15}
Alexandr Andoni and Ilya~P. Razenshteyn.
\newblock Optimal data-dependent hashing for approximate near neighbors.
\newblock In {\em (STOC'15)}, pages 793--801. {ACM}, 2015.

\bibitem{AryaM00}
Sunil Arya and David~M. Mount.
\newblock Approximate range searching.
\newblock {\em Comput. Geom.}, 17(3-4):135--152, 2000.

\bibitem{Bentley80}
Jon~Louis Bentley.
\newblock Multidimensional divide-and-conquer.
\newblock {\em Commun. {ACM}}, 23(4):214--229, 1980.

\bibitem{DBLP:conf/icml/BeygelzimerKL06}
Alina Beygelzimer, Sham~M. Kakade, and John Langford.
\newblock Cover trees for nearest neighbor.
\newblock In {\em Machine Learning, Proceedings of the Twenty-Third International Conference (ICML'06)}, volume 148 of {\em {ACM} International Conference Proceeding Series}, pages 97--104. {ACM}, 2006.

\bibitem{BoutsidisDM09}
Christos Boutsidis, Petros Drineas, and Michael~W Mahoney.
\newblock Unsupervised feature selection for the k-means clustering problem.
\newblock In {\em Advances in Neural Information Processing Systems(NeurIPS'09)}, volume~22. Curran Associates, Inc., 2009.

\bibitem{DBLP:journals/tit/BoutsidisZMD15}
Christos Boutsidis, Anastasios Zouzias, Michael~W. Mahoney, and Petros Drineas.
\newblock Randomized dimensionality reduction for $k$-means clustering.
\newblock {\em {IEEE} Trans. Inf. Theory}, 61(2):1045--1062, 2015.

\bibitem{DBLP:conf/soda/CharikarH22}
Moses Charikar and Lunjia Hu.
\newblock Near-optimal explainable k-means for all dimensions.
\newblock In {\em Proceedings of the 2022 {ACM-SIAM} Symposium on Discrete Algorithms, (SODA'22)}, pages 2580--2606. {SIAM}, 2022.

\bibitem{Chazelle86}
Bernard Chazelle.
\newblock Filtering search: {A} new approach to query-answering.
\newblock {\em {SIAM} J. Comput.}, 15(3):703--724, 1986.

\bibitem{Chazelle88}
Bernard Chazelle.
\newblock A functional approach to data structures and its use in multidimensional searching.
\newblock {\em {SIAM} J. Comput.}, 17(3):427--462, 1988.

\bibitem{ChazelleLM08}
Bernard Chazelle, Ding Liu, and Avner Magen.
\newblock Approximate range searching in higher dimension.
\newblock {\em Comput. Geom.}, 39(1):24--29, 2008.

\bibitem{DBLP:journals/dcg/Clarkson99}
Kenneth~L. Clarkson.
\newblock Nearest neighbor queries in metric spaces.
\newblock {\em Discret. Comput. Geom.}, 22(1):63--93, 1999.

\bibitem{DBLP:conf/stoc/CohenEMMP15}
Michael~B. Cohen, Sam Elder, Cameron Musco, Christopher Musco, and Madalina Persu.
\newblock Dimensionality reduction for k-means clustering and low rank approximation.
\newblock In {\em Proceedings of the Forty-Seventh Annual {ACM} on Symposium on Theory of Computing, (STOC'15)}, pages 163--172. {ACM}, 2015.

\bibitem{FonsecaM10}
Guilherme~Dias da~Fonseca and David~M. Mount.
\newblock Approximate range searching: The absolute model.
\newblock {\em Comput. Geom.}, 43(4):434--444, 2010.

\bibitem{DBLP:conf/soda/EsfandiariMN22}
Hossein Esfandiari, Vahab~S. Mirrokni, and Shyam Narayanan.
\newblock Almost tight approximation algorithms for explainable clustering.
\newblock In {\em Proceedings of the 2022 {ACM-SIAM} Symposium on Discrete Algorithms, (SODA'22)}, pages 2641--2663. {SIAM}, 2022.

\bibitem{DBLP:conf/nips/GamlathJPS21}
Buddhima Gamlath, Xinrui Jia, Adam Polak, and Ola Svensson.
\newblock Nearly-tight and oblivious algorithms for explainable clustering.
\newblock In {\em Advances in Neural Information Processing Systems 34: Annual Conference on Neural Information Processing Systems 2021, (NeurIPS'21)}, pages 28929--28939, 2021.

\bibitem{DBLP:books/ox/07/Golub07e}
Gene~H. Golub.
\newblock Some modified matrix eigenvalue problems.
\newblock In {\em Milestones in Matrix Computation - Selected Works of Gene H. Golub, with Commentaries}, pages 467--484. Oxford University Press, 2007.

\bibitem{GPSY23}
Anupam Gupta, Madhusudhan~Reddy Pittu, Ola Svensson, and Rachel Yuan.
\newblock The price of explainability for clustering.
\newblock In {\em 64th {IEEE} Annual Symposium on Foundations of Computer Science, (FOCS'23)}, pages 1131--1148. {IEEE}, 2023.

\bibitem{indyk06:l1-dimension-reduction}
Piotr Indyk.
\newblock Stable distributions, pseudorandom generators, embeddings, and data stream computation.
\newblock {\em J. {ACM}}, 53(3):307--323, 2006.

\bibitem{DBLP:conf/stoc/IndykM98}
Piotr Indyk and Rajeev Motwani.
\newblock Approximate nearest neighbors: Towards removing the curse of dimensionality.
\newblock In {\em Proceedings of the Thirtieth Annual {ACM} Symposium on the Theory of Computing (STOC'98)}, pages 604--613. {ACM}, 1998.

\bibitem{indyk-wagner18:apx-nn-limited-space}
Piotr Indyk and Tal Wagner.
\newblock Approximate nearest neighbors in limited space.
\newblock In {\em Proc. Conference On Learning Theory (COLT'18)}, volume~75, pages 2012--2036, 2018.

\bibitem{johnson-lindenstrauss84:jl-lemma}
William~B. Johnson and Joram Lindenstrauss.
\newblock Extensions of {L}ipschitz mappings into a {H}ilbert space.
\newblock In {\em Conference in modern analysis and probability ({N}ew {H}aven, {C}onn., 1982)}, volume~26 of {\em Contemp. Math.}, pages 189--206. Amer. Math. Soc., Providence, RI, 1984.

\bibitem{DBLP:conf/stoc/KargerR02}
David~R. Karger and Matthias Ruhl.
\newblock Finding nearest neighbors in growth-restricted metrics.
\newblock In {\em Proceedings on 34th Annual {ACM} Symposium on Theory of Computing (STOC'02)}, pages 741--750. {ACM}, 2002.

\bibitem{DBLP:conf/soda/KrauthgamerL04}
Robert Krauthgamer and James~R. Lee.
\newblock Navigating nets: simple algorithms for proximity search.
\newblock In {\em Proceedings of the Fifteenth Annual {ACM-SIAM} Symposium on Discrete Algorithms, (SODA'04)}, pages 798--807. {SIAM}, 2004.

\bibitem{makarychev-etal19:jl-k-means}
Konstantin Makarychev, Yury Makarychev, and Ilya~P. Razenshteyn.
\newblock Performance of {Johnson-Lindenstrauss} transform for \emph{k}-means and \emph{k}-medians clustering.
\newblock In {\em Proc. 51st Annual {ACM} {SIGACT} Symposium on Theory of Computing (STOC'19)}, pages 1027--1038, 2019.

\bibitem{MakarychevS21}
Konstantin Makarychev and Liren Shan.
\newblock Near-optimal algorithms for explainable k-medians and k-means.
\newblock In {\em (ICML'21)}, volume 139 of {\em Proceedings of Machine Learning Research}, pages 7358--7367. {PMLR}, 2021.

\bibitem{makarychev-shan22:explainable-k-means}
Konstantin Makarychev and Liren Shan.
\newblock Explainable \emph{k}-means: don't be greedy, plant bigger trees!
\newblock In {\em Proc. 54th Annual {ACM} {SIGACT} Symposium on Theory of Computing (STOC'22)}, pages 1629--1642, 2022.

\bibitem{MakarychevS23}
Konstantin Makarychev and Liren Shan.
\newblock Random cuts are optimal for explainable k-medians.
\newblock In Alice Oh, Tristan Naumann, Amir Globerson, Kate Saenko, Moritz Hardt, and Sergey Levine, editors, {\em Advances in Neural Information Processing Systems 36: Annual Conference on Neural Information Processing Systems 2023, (NeurIPS'23)}, 2023.

\bibitem{McCreight85}
Edward~M. McCreight.
\newblock Priority search trees.
\newblock {\em {SIAM} J. Comput.}, 14(2):257--276, 1985.

\bibitem{MoshkovitzDRF20}
Michal Moshkovitz, Sanjoy Dasgupta, Cyrus Rashtchian, and Nave Frost.
\newblock Explainable k-means and k-medians clustering.
\newblock In {\em (ICML'20)}, volume 119 of {\em Proceedings of Machine Learning Research}, pages 7055--7065. {PMLR}, 2020.

\end{thebibliography}
\bibstyle{plainurl}

\newpage
\appendix

\section{Warm-Up: Data Structure for \texorpdfstring{$\ell_1$}{l1} Metric with \texorpdfstring{$\tilde{O}(n^2)$}{near-quadratic} Space and \texorpdfstring{$\tilde{O}(n)$}{near-linear} Query Complexity}\label{sec:warm-up}

As a warm-up, we describe in this section our data structure with $O(n^2 \log(n/\delta) \log(d)/\epsilon^2)$ space and $O(n \log (n/\delta)/\epsilon^2)$ query complexity for approximate nearest neighbor problem under the $\ell_1$ metric. For simplicity, we make the assumption that the query is in the scaled bounding box of each pair of points defined as follows.

\begin{definition}[Scaled Bounding Box]
    For any two points $a,b\in \R^d$, the scaled bounding box $B_{a,b}$ specified by $(a,b)$ is $$\begin{aligned}
    \left\{x\in \R^d \mid \forall z\in [d], \min\{a^{(z)}, b^{(z)}\} - 100 \cdot |a^{(z)} - b^{(z)}| \leq x^{(z)} \right.\\
    \left.\leq \max\{a^{(z)}, b^{(z)}\} + 100 \cdot |a^{(z)} - b^{(z)}|\right\}.
    \end{aligned}$$
\end{definition}

We show the following.

\begin{theorem}\label{thm:warm-up-l1}
    Consider a set $C$ of $n$ points $c_1, \dots, c_n \in \mathbb{R}^d$ equipped with $\ell_1$ metric. Given parameters $0 < \epsilon, \delta < 1$, we can construct a randomized data structure such that for any query point $q\in R^d$, which lies in the bounding box of each pair of points, i.e.\ $\forall i,j\in [n], q \in B_{c_i, c_j}$, the following conditions hold with probability $1 - \delta$. (i) The data structure reads $O(n \log (n/\delta)/\epsilon^2)$ coordinates of $q$. (ii) It returns $i \in [n]$ where $c_i$ is a $(1+\epsilon)$-approximate nearest neighbor of $q$ in $C$. (iii) The data structure has a size of $O(n^2 \log(n/\delta) \log(d)/\epsilon^2)$ words.
\end{theorem}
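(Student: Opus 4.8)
The plan is to store a separate two-point comparator for each of the $\binom n2$ pairs of input points, obtained by importance sampling, and then to answer a query by running a single knockout-style tournament of $n-1$ comparisons in the spirit of selection under imprecise comparisons~\cite{AjtaiFHN16}. For a pair $(c_i,c_j)$ the preprocessing samples a multiset $I_{ij}$ of $T=O(\log(n/\delta)/\epsilon^2)$ coordinates, each drawn as coordinate $b$ with probability $p^{(b)}_{ij}=|c_i^{(b)}-c_j^{(b)}|/\|c_i-c_j\|_1$ (note $\sum_b p^{(b)}_{ij}=1$), and stores $c_i^{(b)},c_j^{(b)},p^{(b)}_{ij}$ for $b\in I_{ij}$; given $q$ it reads $q^{(b)}$ for $b\in I_{ij}$ and forms the unbiased estimators $\widehat{d_i}=\frac1T\sum_{b\in I_{ij}}|c_i^{(b)}-q^{(b)}|/p^{(b)}_{ij}$ and $\widehat{d_j}$ (analogously), with means $\|c_i-q\|_1$ and $\|c_j-q\|_1$. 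The comparator declares ``$c_i$ clearly closer'' if $\widehat{d_j}/\widehat{d_i}>1+\epsilon/2$, ``$c_j$ clearly closer'' if $\widehat{d_i}/\widehat{d_j}>1+\epsilon/2$, and ``tie'' otherwise.

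The first key step is to prove, assuming $q\in B_{c_i,c_j}$ (the Scaled Bounding Box of the pair), that with probability $1-\delta/n^2$ the comparator is \emph{reliable}: it never declares a point clearly closer when that point is in fact more than $1+\epsilon$ times farther, and it does declare one of the two clearly closer whenever the distances differ by more than a factor $1+\epsilon$. The bounding-box assumption yields $|c_i^{(b)}-q^{(b)}|\le 101\,|c_i^{(b)}-c_j^{(b)}|$, hence every estimator term is at most $101\|c_i-c_j\|_1$. Writing $L=\|c_i-c_j\|_1$ and $A=\|c_i-q\|_1\le B=\|c_j-q\|_1$, the triangle inequality gives $L\le A+B\le 2B$, so a Chernoff bound over the $T$ samples gives $\widehat{d_j}\approx_{\epsilon/10} B$ with high probability; likewise $\widehat{d_i}\approx_{\epsilon/10}A$ w.h.p.\ when $A\ge L/100$, whereas when $A<L/100$ (so $B>99A$, i.e.\ the points are genuinely far apart) a large-deviation Chernoff bound gives $\widehat{d_i}<L/50<\widehat{d_j}$ w.h.p. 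In each regime $T=O(\log(n/\delta)/\epsilon^2)$ samples suffice, and a union bound over all $\binom n2$ pairs makes all comparators simultaneously reliable for the given $q$ with probability $1-\delta$.

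The second step is the tournament: set $w\leftarrow c_1$; for $k=2,\dots,n$, run the comparator on $(w,c_k)$ and replace $w$ by $c_k$ only when it reports ``$c_k$ clearly closer''; finally output the index of $w$. Conditioning on all comparators being reliable, $w$ only ever moves to a point strictly closer to $q$, so $\|q-w\|_1$ is nonincreasing over the run; moreover, at the step at which $c_k$ equals the true nearest neighbor $o$, reliability forces $w$ to become $o$ unless $w$ is already within a factor $1+\epsilon$ of $o$, and in either case monotonicity keeps $w$ within $1+\epsilon$ of $o$ thereafter. Thus the output is a $(1+O(\epsilon))$-approximate nearest neighbor, and running the comparators with parameter $\Theta(\epsilon)$ gives the claimed factor $1+\epsilon$. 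For the resource bounds: the tournament makes $n-1$ comparisons, each reading the $T$ coordinates of one multiset $I_{ij}$, hence reads $O(nT)=O(n\log(n/\delta)/\epsilon^2)$ coordinates of $q$; and the data structure stores $\binom n2$ comparators, each holding $O(T)$ coordinate indices of $O(\log d)$ bits together with the corresponding reals, for a total of $O(n^2T\log d)=O(n^2\log(n/\delta)\log(d)/\epsilon^2)$ words.

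I expect the main obstacle to be the reliability proof in the regime where $q$ is far from one of the two centers relative to $\|c_i-c_j\|_1$: there the term bound $101\|c_i-c_j\|_1$ dwarfs the mean $\|c_i-q\|_1$, so the standard relative-error Chernoff bound is useless, and one must instead control the (one-sided) overestimation of the near center via the large-deviation Chernoff bound and argue that the well-concentrated estimate of the far center still dominates it. A secondary point is that a naive knockout tournament with a plain $(1+\epsilon)$-comparator would compound error over $\Theta(n)$ rounds; this is circumvented by the ``switch only on a confident comparison'' rule together with the monotonicity argument above.
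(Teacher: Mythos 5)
Your proposal is correct and is essentially the paper's own proof: pairwise importance sampling with $p^{(b)}_{ij}=|c_i^{(b)}-c_j^{(b)}|/\|c_i-c_j\|_1$ and $T=O(\log(n/\delta)/\epsilon^2)$ samples per pair, a three-outcome comparator ($\bot$ on close calls), the bounding-box bound of $101\|c_i-c_j\|_1$ on each estimator term, the same two-regime Chernoff analysis (near center well-concentrated vs.\ near center potentially tiny, handled by a one-sided large-deviation bound showing the far center's estimate dominates), and the linear ``switch only on a confident comparison'' tournament whose correctness rests on monotonicity of $\|q-w\|_1$ and on the step where the true nearest neighbor is compared. The only cosmetic differences are the thresholds (you use $A<L/100$ where the paper uses $A<L/8$) and that you normalize the estimators by $1/T$.
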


\cref{thm:warm-up-l1} is a special case of \cref{theorem:Lp-NN}, where a rigorous proof is provided. In this section, we present a more concise proof to provide intuition. Our approach is the following. We first describe a data structure for the approximate nearest neighbor problem for two points (i.e. a 2-point comparator), and then employ a naive min-finding procedure with oracle access to the 2-point comparator, to extend to the approximate nearest neighbor with $n$ points.

\subsection{A Strong Comparator for Two Points}

In this section we describe our approximate nearest neighbor data strucutre for 2 points, which we refer to as 2-point comparator (see \cref{alg:strong-2points}). It gets two points $a, b \in \R^d$ and parameters $\epsilon, \delta \in (0,1)$ as inputs to its preprocessing phase. In the preprocessing phase, it iteratively samples coordinates from $[d]$ to form a multiset $I$. Specifically, in a single iteration, it samples each coordinate with probability proportional to the distance between $a$ and $b$ on that coordinate. The data structure stores $I$ as well as the coordinates of $a$ and $b$ in $I$. In the query phase, it receives a query $q \in B_{a,b}$ upon which it probes $q$ on all coordinates in $I$ to compute estimates $x$ and $y$ of the distances of $q$ to $a$ and $b$, respectively. Its final answer, as to which among $a$ or $b$ is the nearer neighbor of $q$, depends on the ratio $x/y$.

We analyze the correctness of \cref{alg:strong-2points} in \cref{lem:strong-comparator-l1}. In particular, we show the following strong guarantees: (i) if one point is significantly closer to the query than the other point, the comparator outputs the correct point with probability $1- \delta$, (ii) otherwise, the comparator may output a special symbol $\perp$, but never outputs a wrong point as the nearest neighbor. 

\begin{algorithm}[htbp]\label{alg:strong-2points}
    \caption{An approximate nearest neighbor data structure $A_{(a,b)}$ for two points $a,b\in \R^d$.}
    \SetKwProg{preprocessing}{Preprocessing}{}{}
    \SetKwProg{query}{Query}{}{}
    \SetKw{store}{store}
    \preprocessing{$(a,b, \epsilon, \delta)$\tcp*[f]{Inputs: $a,b\in \R^d, \epsilon, \delta\in (0,1)$}}{
    
        Let $I \leftarrow \emptyset$ be a multiset and
        $T \leftarrow O( \log (1/\delta)/\epsilon^2)$\;
        \For{$t\in [T]$}{
            Sample $i_t\in [d]$, where $\forall z\in [d], \Pr[i_t = z] = p^{(z)} = \frac{\left|b^{(z)} - a^{(z)}\right|}{\|b-a\|_1}$\;
            Add $i_t$ to $I$\;
        }
        \store{$I, a^{(I)}, b^{(I)}$, where $a^{(I)} = \{a^{(i)} | i\in I\}$}.
    }
    \query{$(q)$ \tcp*[f]{Inputs: $q\in B_{a,b} \subseteq \R^d$}}{
        Let $x,y \leftarrow 0$\;
        \For{$i\in I$}{
            Query $q$ on coordinate ${i}$, i.e. $q^{(i)}$\; 
             
            Let $x \leftarrow x+\frac{\left|q^{(i)} - a^{(i)}\right|}{p^{(i)}}, y \leftarrow y+\frac{\left|q^{(i)} - b^{(i)}\right|}{p^{(i)}}$\;
        }
        \uIf{$\frac{y}{x} \geq 1+\frac{\epsilon}{2}$}{
        \Return{$a$ as nearest neighbor}
        }
        \uElseIf{$\frac{x}{y} \geq 1+\frac{\epsilon}{2}$}{
            \Return{$b$ as nearest neighbor}
        }
        \Else{
            \Return{$\bot$}
        }
    }
\end{algorithm}

\begin{lemma}[Strong Comparator Correctness]\label{lem:strong-comparator-l1}
   \cref{alg:strong-2points}, given inputs $a,b \in \mathbb{R}^d, \epsilon,\delta \in (0,1)$, satisfies the following conditions with probability $1-\delta$:
    \begin{enumerate}
        \item if $\frac{\|b-q\|_1}{\|a-q\|_1} \geq 1+\epsilon$, it outputs $a$; if $\frac{\|a-q\|_1}{\|b-q\|_1} \geq 1+\epsilon$, it outputs $b$,
        \item if $\frac{\|b-q\|_1}{\|a-q\|_1} \geq 1$, it outputs $a$ or $\bot$; if $\frac{\|a-q\|_1}{\|b-q\|_1} \geq 1$, it outputs $b$ or $\bot$.
    \end{enumerate}
\end{lemma}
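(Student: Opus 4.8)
The plan is to analyze the estimators $x = \sum_{i\in I}|q^{(i)}-a^{(i)}|/p^{(i)}$ and $y = \sum_{i\in I}|q^{(i)}-b^{(i)}|/p^{(i)}$ produced by \cref{alg:strong-2points} and show they concentrate tightly around $T\|a-q\|_1$ and $T\|b-q\|_1$ respectively. First I would observe that for each sampled index $i_t$, the contribution $|q^{(i_t)}-a^{(i_t)}|/p^{(i_t)}$ is an unbiased estimator of $\|a-q\|_1$, since $\E\big[|q^{(i_t)}-a^{(i_t)}|/p^{(i_t)}\big] = \sum_z p^{(z)}\cdot |q^{(z)}-a^{(z)}|/p^{(z)} = \|a-q\|_1$ (and similarly for $b$). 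Hence $\E[x] = T\|a-q\|_1$ and $\E[y] = T\|b-q\|_1$. The crucial point is to bound the range of each summand so a Chernoff bound applies: here the scaled bounding box assumption $q\in B_{a,b}$ enters. On any coordinate $z$, we have $|q^{(z)}-a^{(z)}| \le |a^{(z)}-b^{(z)}| + 100|a^{(z)}-b^{(z)}| = 101|a^{(z)}-b^{(z)}|$, so $|q^{(z)}-a^{(z)}|/p^{(z)} \le 101\|a-b\|_1$. Since $\|a-b\|_1 \le \|a-q\|_1 + \|b-q\|_1 \le 2\max(\|a-q\|_1,\|b-q\|_1)$, each summand is bounded by $O(1)$ times $\max(\|a-q\|_1,\|b-q\|_1)$, which is $O(1)$ times the smaller distance up to the approximation factor when the ratio is bounded.

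Next I would apply the Chernoff bound from the Preliminaries to both $x/T$ and $y/T$. With $T = O(\log(1/\delta)/\epsilon^2)$ and each summand bounded by $O(\max(\|a-q\|_1,\|b-q\|_1))$, the Chernoff bound gives that with probability $1-\delta$ (after a union bound over the two estimators), both $x \approx_{\epsilon/100} T\|a-q\|_1$ and $y \approx_{\epsilon/100} T\|b-q\|_1$ — but one has to be slightly careful because the deviation parameter in Chernoff is relative to $\mu/c$ where $c$ is the range bound, and $\mu/c$ may be as small as a constant when the distances are comparable; this is exactly why we need the additive slack to be measured against $\max(\|a-q\|_1,\|b-q\|_1)$ rather than the individual distance. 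Concretely I would show $|x - T\|a-q\|_1| \le \tfrac{\epsilon}{100}T\max(\|a-q\|_1,\|b-q\|_1)$ and likewise for $y$, which suffices for all the comparisons below.

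With this two-sided estimate in hand, both parts of the lemma follow by case analysis on the ratio $x/y$ versus $1+\epsilon/2$. For part (1): if $\|b-q\|_1 \ge (1+\epsilon)\|a-q\|_1$, then using the estimates, $y \ge T\|b-q\|_1 - \tfrac{\epsilon}{100}T\|b-q\|_1$ and $x \le T\|a-q\|_1 + \tfrac{\epsilon}{100}T\|b-q\|_1$, so $y/x \ge \frac{(1-\epsilon/100)\|b-q\|_1}{\|a-q\|_1 + (\epsilon/100)\|b-q\|_1} \ge \frac{(1-\epsilon/100)(1+\epsilon)}{1 + (\epsilon/100)(1+\epsilon)} > 1+\epsilon/2$ for $\epsilon<1$; hence the algorithm returns $a$. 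The symmetric case is identical. For part (2): if $\|b-q\|_1 \ge \|a-q\|_1$, we must rule out the algorithm returning $b$, i.e.\ rule out $x/y \ge 1+\epsilon/2$. But $x \le T\|a-q\|_1 + \tfrac{\epsilon}{100}T\max(\ldots)$ and $y \ge T\|b-q\|_1 - \tfrac{\epsilon}{100}T\max(\ldots) \ge T\|a-q\|_1 - \tfrac{\epsilon}{100}T\max(\ldots)$, so $x/y \le \frac{\|a-q\|_1 + (\epsilon/100)\max}{\|a-q\|_1 - (\epsilon/100)\max}$, and since $\|b-q\|_1 \ge \|a-q\|_1$ implies $\max = \|b-q\|_1$ with $\|b-q\|_1/\|a-q\|_1$ either close to $1$ or large — in the former subcase this ratio is at most $1+\epsilon/4 < 1+\epsilon/2$, and in the latter subcase $x$ is dwarfed by $y$ — so the algorithm returns $a$ or $\bot$, never $b$.

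The main obstacle I anticipate is the bookkeeping in part (2): the naive bound $|x - T\|a-q\|_1| \le \tfrac{\epsilon}{100}T\|a-q\|_1$ (relative to the individual distance) is \emph{not} what Chernoff delivers when $\|b-q\|_1 \gg \|a-q\|_1$, because the summands in $x$ are bounded in terms of $\|a-b\|_1 \approx \|b-q\|_1$, not $\|a-q\|_1$; one must phrase the concentration with the additive error scaled by $\max(\|a-q\|_1,\|b-q\|_1)$ throughout and then verify that this weaker guarantee still cleanly separates the three output regions. Choosing the constant $100$ in the bounding-box definition and the threshold $1+\epsilon/2$ (versus $1+\epsilon$) with enough slack is what makes the case analysis go through, and getting those constants mutually consistent is the fiddly part. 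Everything else — unbiasedness, the range bound, the Chernoff application — is routine.
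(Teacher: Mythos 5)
Your proof is correct and captures the same core ideas as the paper's proof: unbiasedness via importance sampling, a per-summand range bound of $O(\|a-b\|_1)$ from the bounding-box assumption, and Chernoff applied to $x$ and $y$. Where you diverge is in how the concentration argument is packaged. The paper (Proof of \cref{lem:strong-comparator-l1}) splits into two cases on whether $\E[X_t] \geq \tfrac18\|b-a\|_1$: in Case~1 it applies a multiplicative Chernoff bound to $x$ ($x\approx_{\epsilon/8}T\|a-q\|_1$), and in Case~2 it applies an additive bound ($|x-\E[x]|\leq\tfrac{T}{8}\|b-a\|_1$, so $x\leq\tfrac14T\|b-a\|_1$) and concludes from $y/x\geq 3$. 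You instead prove a single unified additive guarantee $|x - T\|a-q\|_1|\leq\tfrac{\epsilon}{100}T\max(\|a-q\|_1,\|b-q\|_1)$ (and similarly for $y$) and push all the case analysis into the final elementary manipulation of the ratio $y/x$. This is a legitimate and arguably cleaner bookkeeping: the key insight you explicitly call out---that the additive error must be measured against $\max(\|a-q\|_1,\|b-q\|_1)$ rather than the individual distance, because the summand range is $O(\|a-b\|_1)$ which can dwarf $\|a-q\|_1$---is exactly what the paper's case split is designed to handle, and your formulation makes the reason for it more transparent. One small point worth verifying when you fill in details: your unified bound requires using the linear (large-deviation) regime of Chernoff when $\|a-q\|_1 \ll \max(\cdots)$ and observing that the lower tail is then vacuous since $x\geq 0$; with that noted, $T=\Theta(\log(1/\delta)/\epsilon^2)$ suffices throughout, matching the algorithm's parameter.
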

\begin{proof}[Proof Sketch]
    In this proof, we assume that $\|b-q\|_1 \geq \|a-q\|_1$, since the other case is symmetric. Together with the fact $\|b-q\|_1 + \|a-q\|_1 \geq \|b-a\|_1$, we get $\|b-q\|_1 \geq \frac12 \|b-a\|_1$.

The algorithm makes its decision based on the relative values of $x$ and $y$, which are estimates for the distances from $q$ to $a$ and from $q$ to $b$, respectively.
    To analyze the ratio $y/x$, we need to define some random variables to analyze how we sample $I$ in the preprocessing. 
    Let $X_t = \frac{\left|q^{(i_t)} - a^{(i_t)}\right|}{p^{(i_t)}}, Y_t = \frac{\left|q^{(i_t)} - b^{(i_t)}\right|}{p^{(i_t)}}, \forall t\in [T]$. We know that $x = \sum_{t\in [T]} X_t$ and $y = \sum_{t\in [T]} Y_t$. These nonnegative random variables are bounded: $X_t = \frac{\left|q^{(i_t)} - a^{(i_t)}\right|}{\left|b^{(i_t)} - a^{(i_t)}\right|}\cdot \|b-a\|_1 \leq 101\cdot\|b-a\|_1$, because $q$ lies in the bounding box specified by $a,b$. Similarly we have $Y_t \leq 101\cdot\|b-a\|_1, \forall t\in [T]$. 
    The expectation of $X_t$ is $\E[X_t] = \sum_{z\in [d]} p^{(z)} \cdot \frac{\left|q^{(z)} - a^{(z)}\right|}{p^{(z)}} = \|a-q\|_1$.
    Similarly $\E[Y_t] = \|b-q\|_1$. 

    Since $\E[Y_t] = \|b-q\|_1 \geq \frac12 \|b-a\|_1$, using Chernoff bound, we have $\Pr\left[|y - \E[y]| > \frac{\epsilon}{8} \E[y]\right] \allowbreak\leq \frac{\delta}{2}$ by choosing $T = \Theta(\log(1/\delta)/\epsilon^2)$.
    To analyze $x$, we consider the following two cases:
    
    \noindent \underline{Case 1.} If $\E[X_t] = \|a-q\|_1 \geq \frac18 \|b-a\|_1$, then using Chernoff bound, we have \\$\Pr\left[|x - \E[x]| > \frac{\epsilon}{8} \E[x]\right] \leq \frac{\delta}{2}$ by choosing $T = \Theta(\log(1/\delta)/\epsilon^2)$.
    
    \noindent \underline{Case 2.} If $\E[X_t] = \|a-q\|_1 < \frac18 \|b-a\|_1$, then using Chernoff bound, we have \\$\Pr\left[|x - \E[x]| > \frac{T}8 \|b-a\|_1\right] \leq \frac{\delta}{2}$ by choosing $T = \Omega(\log(1/\delta))$. 

To prove item 1 of~\cref{lem:strong-comparator-l1}, suppose that $\frac{\|b-q\|_1}{\|a-q\|_1} \geq 1+\epsilon$. If we are in Case 1, with probability $1-\delta$ we have $x \approx_{\frac{\epsilon}{8}} T\|a-q\|_1$ and $y \approx_{\frac{\epsilon}{8}} T\|b-q\|_1$. This implies that, with probability $1-\delta$ we get $\frac{y}{x} \geq \frac{1-\frac{\epsilon}{8}}{1+\frac{\epsilon}{8}} \cdot (1+\epsilon) \geq 1+\frac{\epsilon}{2}$. Thus the querying phase correctly outputs $a$ as the nearest neighbor. If we are in Case 2, with probability $1-\delta$ we have $x \leq \frac{1}{4} T\|b-a\|_1$, and $y \approx_{\frac{\epsilon}{8}} T\|b-q\|_1$, which implies that $\frac{y}{x} \geq \frac{(1-\frac{\epsilon}{8})(1-\frac18)}{\frac{1}{4} } \geq 3$. Thus the querying phase correctly outputs $a$.

    The second item of~\cref{lem:strong-comparator-l1} can be proved similarly.
\end{proof}

\subsection{Data Structure for \texorpdfstring{$\ell_1$}{l1} Metric with \texorpdfstring{$\tilde{O}(n^2)$}{near-quadratic} Space}
In this section, we show that combining \cref{alg:strong-2points} with a naive minimum-finding approach using an imprecise comparison oracle gives a data structure for $n$ points under $\ell_1$ metric (see \cref{alg:naive-min-finding}) that uses  $\tilde{O}(n^2)$ space. \cref{alg:naive-min-finding} maintains a candidate for the nearest neighbor and evaluates each point against this candidate using the two-point comparator. If a point is found to be significantly closer to the query than the current candidate, the candidate is updated accordingly.

\begin{algorithm}[htp]\label{alg:naive-min-finding}
    \caption{An approximate nearest neighbor data structure for $n$ points in $\R^d$ for $\ell_1$ metric.}
    \SetKwProg{preprocessing}{Preprocessing}{}{}
    \SetKwProg{query}{Query}{}{}
    \SetKwProg{comparator}{$E$}{}{}
    \SetKw{store}{store}
    \preprocessing{$(C, \epsilon, \delta)$\tcp*[f]{Inputs: $C=\{c_1, \dots, c_n\}\subseteq \R^d, \epsilon \in (0, \frac{1}{4}), \delta\in (0,1)$}}{
        \For{$1\leq i < j \leq n$}{
            $Preprocessing(c_i, c_j, \epsilon, \delta/n^2)$ \tcp*{Construct $A_{(c_i, c_j)}$ from \cref{alg:strong-2points}}
        }
    }
    \query{$(q)$ \tcp*[f]{Inputs: $q\in \R^d, \forall i,j\in [n], q\in B_{c_i, c_j}$}}{
        Let $j = 1$\;
        \For{$2\leq i \leq n$}{
            In $A_{(c_j, c_i)}$, run $Query(q)$\;
            \If{the query outputs $c_i$}{
            Let $j \leftarrow i$\;
            }      
        }
        \Return{$j$}
    }
\end{algorithm}

\begin{proof}(for \cref{thm:warm-up-l1})
    By a union bound, we get that with probability $1 - \delta$, for all $1 \leq i < j \leq n$, the data structure $A_{(c_i,c_j)}$ does not fail. We condition on this event in the rest of the proof. Notice that $j$ is a variable under updating.  

    \cref{alg:naive-min-finding} maintains an index $j \in [n]$ throughout its execution such that $c_j$ is a candidate $(1+\epsilon)$-approximate nearest neighbor of $q$. Initially, $j$ is set to $1$. The algorithm runs the 2-point comparator $n-1$ times, where in the $i$-th iteration, a comparison is made between $c_i$ and the current nearest neighbor candidate $c_j$. If $A_{c_i, c_j}$ outputs $c_i$, then we know, by \cref{lem:strong-comparator-l1}, that $c_i$ is strictly closer to $q$ than $c_j$. Additionally, if $c_i$ were closer to $q$ by a factor of $1 + \epsilon$ or more, then we are guaranteed that $c_i$ is output by $A_{c_i, c_j}$.
    
    Let $i^*$ be such that $c_{i^*}$ is the true nearest neighbor of $q$. If the value of $j$ is updated in iteration $i^*$, then \cref{alg:naive-min-finding} will not update it in any further iteration and we output the true nearest neighbor. If the value of $j$ is not updated in iteration $i^*$, then it must be the case that $||q - c_j||_1 \leq (1+\epsilon)\cdot ||q - c_{i^*}||_1$ and the final output is also a $1+\epsilon$ approximate nearest neighbor of $q$.




    Since the data structure consists of $n^2$ copies of data structure in \cref{alg:strong-2points}, the space complexity of \cref{alg:naive-min-finding} is $n^2 \times O(\log(n^2/\delta)/\epsilon^2) = O(n^2 \log(n/\delta) \log(d)/\epsilon^2)$. We make $n-1$ queries to the data structure in \cref{alg:strong-2points}, leading to the query complexity $O(n \log (n/\delta)/\epsilon^2)$.
\end{proof}

\section{Missing Proofs from \texorpdfstring{\Cref{sec:sample-reuse}}{Section 4} for \texorpdfstring{$\ell_1$}{l1}-metric and complete proof \texorpdfstring{$\ell_2$}{l2}-metric}\label{sec:missing-proof}
\subsection{Missing Proofs from \texorpdfstring{\Cref{sec:sample-reuse}}{Section 4} for \texorpdfstring{$\ell_1$}{l1}-metric}\label{sec:appendixforlone}

\begin{proof}[Proof of \cref{claim:Markovlone}]
    For any $i\in [n]$, we have $\E[\|r_i - u\|_1] = T \|c_i - q\|_1$, because 
    \begin{align*}
        \E[\|r_i - u\|_1] &= \E\left[\sum_{b\in I} \left|r_i^{(b)} - u^{(b)}\right|\right] \\ &= \E\left[\sum_{t\in [T], b\in [d]} \mathbbm{1}_{\{\text{in iteration $t$, we add $b$ to $I$\}}}\left|r_i^{(b)} - u^{(b)}\right|\right] \\
        &= \sum_{t\in [T], b\in [d]} p^{(b)}\left|r_i^{(b)} - u^{(b)}\right| = \sum_{t\in [T], b\in [d]} p^{(b)}\left|\frac{c_i^{(b)}}{p^{(b)}} - \frac{q^{(b)}}{p^{(b)}}\right| \\
        &= \sum_{t\in [T], b\in [d]} \left|c_i^{(b)} - q^{(b)}\right| = T \|c_i - q\|_1.
    \end{align*}
    Using Markov's inequality, we have with probability $1-\delta'$,
\begin{equation*}
    \|r_{i^*} - u\|_1 \leq \frac{1}{\delta'}\E[\|r_i - u\|_1]  \leq \frac{T}{\delta'} \|c_{i^*} - q\|_1. \qedhere 
\end{equation*}
\end{proof}

\begin{proof}[Proof of \cref{lemma:good-region-is-large-l1}]
By scaling the inequality $(1 + \epsilon) \|a - q\|_1 \leq \|b - q\|_1$, we get that
\[
\left(2 + \frac{2}{\epsilon}\right) \|a - q\|_1 \leq \frac{2}{\epsilon} \|b - q\|_1.
\]
Next, by rearranging and using the triangle inequality, we have:
\[
\|b - q\|_1 + \|a - q\|_1 \leq \left(1 + \frac{2}{\epsilon}\right) (\|b-q\|_1 - \|a-q\|_1) \leq \left(1 + \frac{2}{\epsilon}\right) \|b - a\|_1.
\]
Since \(\Bad(a, b, q) \subseteq [d]\), we have:
\begin{equation*}
    \begin{aligned}
        \|b - q\|_1 + \|a - q\|_1 \geq \dist_{\Bad(a, b, q)}(b, q) + \dist_{\Bad(a, b, q)}(a, q) \\
        \geq \sum_{z \in \Bad(a, b, q)} \left(\frac{8}{\epsilon \delta'} + \frac{8}{\epsilon \delta'}\right)|b^{(z)} - a^{(z)}|.
    \end{aligned}
\end{equation*}

This implies that: 
\[ 
\frac{16}{\epsilon \delta'} (\|b - a\|_1 - \dist_{\Good(a, b, q)}(b, a)) \leq \left(1 + \frac{2}{\epsilon}\right) \|b - a\|_1.
\]
Rearranging the terms, we finally get:
\[
\frac{\dist_{\Good(a, b, q)}(b, a)}{\|b - a\|_1} \geq \frac{14 - \epsilon \delta'}{16} \geq \frac{13}{16}. \qedhere
\]
\end{proof}

\begin{proof}[Proof of \cref{lemma:chernoff_for_good_l1}]
    Combining the following two properties,
    \begin{align*}
         \frac{\dist_{\Good} (c_i, q) + \dist_{\Bad} (c_i, q)}{\dist_{\Good} (c_j, q)+\dist_{\Bad} (c_j, q)} = \frac{\|c_i - q\|_1}{\|c_j - q\|_1} \geq 1+\epsilon, \tag{assumption}\\
         \frac{\dist_{\Bad} (c_i, q)}{\dist_{\Bad} (c_j, q)} \leq 1+\frac{\epsilon\delta'}{8} < 1+\epsilon. \tag{definition of bad region}         
    \end{align*}
    implies
    \begin{equation*}
        \frac{\dist_{\Good} (c_i, q)}{\dist_{\Good} (c_j, q)} \geq 1+\epsilon.
    \end{equation*}
    Therefore the first item of the lemma is proved.
    
    In the algorithm, we construct $\{r_i\mid i\in [n]\}$ and $u$ through sampling a multiset $I$. To analyze how the data structure samples $I$ in the preprocessing, we define the following random variables $\forall t\in [T], z\in \Good$
    \[P_{t, z} =  \begin{cases} 
      \frac{|q^{(z)} - c_i^{(z)}|}{p^{(z)} }, & w.p.\; p^{(z)},\hspace{0.95cm}\text{i.e. if we add $z$ to $I$ at iteration $t$.}\\
      0, & w.p.\; 1-p^{(z)}, \quad\text{i.e. if we do not add $z$ to $I$ at iteration $t$.}
   \end{cases}\]
   \[Q_{t, z} =  \begin{cases} 
      \frac{|q^{(z)} - c_j^{(z)}|}{p^{(z)} } , & w.p.\; p^{(z)},\hspace{0.95cm}\text{i.e. if we add $z$ to $I$ at iteration $t$.}\\
      0, & w.p.\; 1-p^{(z)}, \quad\text{i.e. if we do not add $z$ to $I$ at iteration $t$.}
   \end{cases}\]
   Notice that $\sum_{t\in [T],z\in \Good} P_{t,z} = \dist_\Good(r_i, u)$ and $\sum_{t\in [T],z\in \Good} Q_{t,z} = \dist_\Good(r_j, u)$.  

    Observe that $\forall t\in [T], z\in \Good$, $0 \leq P_{t,z} \leq \left(1+\frac{8}{\epsilon\delta'}\right) \|c_i - c_j\|_1$. This is because, if $c_i^{(z)} = c_j^{(z)}$, from the definition of good region we have $q^{(z)} = c_i^{(z)}$, thus $P_{t,z} = 0$; otherwise $c_i^{(z)} \neq c_j^{(z)}$, 
    \begin{align*}
        0 \leq P_{t,z} \leq \frac{|q^{(z)} - c_i^{(z)}|}{|c_i^{(z)} - c_j^{(z)}| } \cdot \|c_i - c_j\|_1 \leq \left(1+\frac{8}{\epsilon\delta'}\right) \|c_i - c_j\|_1,
    \end{align*}
    where the second inequality comes from ``oversampling'': $p^{(z)} \geq \frac{|c_i^{(z)} - c_j^{(z)}|}{\|c_i - c_j\|_1}$, and the third inequality comes from the definition of good region.
    By a similar argument, we have $0 \leq Q_{t,z}\leq \left(1+\frac{8}{\epsilon\delta'}\right) \|c_i - c_j\|_1$. 

    Notice $\forall t\in [T], \E[\sum_{z\in \Good} P_{t,z}] = \dist_{\Good}(c_i, q)$ and  $\E[\sum_{z\in \Good} Q_{t,z}] = \dist_{\Good}(c_j, q)$. Because 
    \begin{equation*}
    \begin{aligned}
        \E[\sum_{z\in \Good} P_{t,z}] &= \sum_{z\in \Good} \E[P_{t,z}] = \sum_{z\in \Good} p^{(z)} \cdot \frac{|q^{(z)} - c_i^{(z)}|}{p^{(z)}} = \sum_{z\in \Good} |q^{(z)} - c_i^{(z)}| \\
        &= \dist_{\Good}(c_i, q).
    \end{aligned}
    \end{equation*}
    Since $\forall t\in [T], \E[\sum_{z\in \Good} (P_{t,z} + Q_{t,z})] = \dist_{\Good}(c_i, q) + \dist_{\Good}(c_j, q)\geq  \dist_{\Good}(c_i, c_j)$, we have $\E[\sum_{z\in \Good}P_{t,z}] \geq \frac{1}{2} \dist_{\Good}(c_i, c_j)$. In the rest of the proof, we use $\E[\sum_{z\in \Good} P_{t,z}]$ without specifying $t\in [T]$, because they have the same value for all $t\in [T]$. Let $0<\epsilon_2 = \epsilon/16 < 1$, using Chernoff bound and \cref{lemma:chernoff_for_good_l1}, \begin{align*}
    &\Pr\left[\left|\sum_{t\in [T], z\in \Good} P_{t,z} - \E\left[\sum_{t\in [T], z\in \Good} P_{t,z}\right]\right| \geq \epsilon_2 \E\left[\sum_{t\in [T], z\in \Good} P_{t,z}\right]\right] \\
    \leq &2\exp\left(-\frac{\epsilon_2^2 T \E[\sum_{z\in [d]} P_{t,z}]}{3(1+\frac{8}{\epsilon\delta'}) \|c_i - c_j\|_1}\right) \\
    = & 2\exp\left(-\frac{\epsilon_2^2 T }{3(1+\frac{8}{\epsilon\delta'})} \frac{\dist_{\Good}(c_i, q)}{\dist_{\Good}(c_i, c_j)} \frac{\dist_{\Good}(c_i, c_j)}{\|c_i - c_j\|_1}\right) \\ 
    \leq & 2\exp\left(-\frac{\epsilon_2^2 T }{3(1+\frac{8}{\epsilon\delta'})} \cdot \frac12 \cdot \frac{13}{16}\right) \leq \frac{\delta'}{2n},
    \end{align*}
    by choosing 
    $T = \Theta(\frac{\log(n/\delta)}{\epsilon^3 \delta^2} )$. 
    

    \smallskip 
    
    \noindent \underline{Case 1.} If $\E[\sum_{z\in \Good} Q_{t,z}] \geq \frac18 \cdot \dist_{\Good}(c_i, c_j)$, using Chernoff bound and \cref{lemma:chernoff_for_good_l1}, we have 
    
    $$\Pr\left[\left|\sum_{t\in [T], z\in \Good} Q_{t,z} - \E\left[\sum_{t\in [T], z\in \Good} Q_{t,z}\right]\right| \geq \epsilon_2 \E\left[\sum_{t\in [T], z\in \Good} Q_{t,z}\right]\right] \leq \frac{\delta'}{2n},$$
    by choosing 
    $T = \Theta(\frac{\log(n/\delta)}{\epsilon^3 \delta^2} ) $. By union bound, with probability $1-\delta'/n$, we have 
    \begin{equation*}
        \sum_{t\in [T],z\in \Good} P_{t,z}  \approx_{\frac{\epsilon}{16}} \E\left[\sum_{t\in [T],z\in \Good} P_{t,z}\right], \text{ and } \sum_{t\in [T],z\in \Good} Q_{t,z} \approx_{\frac{\epsilon}{16}} \E\left[\sum_{t\in [T],z\in \Good} Q_{t,z}\right].
    \end{equation*}
    Therefore the second item of the lemma is proved.
\smallskip

    \noindent \underline{Case 2.} If $\E[\sum_{z\in \Good} Q_{t,z}] < \frac18 \dist_{\Good}(c_i, c_j)$. Then we have $\E[\sum_{z\in \Good} P_{t,z}] > (1-\frac18) \dist_{\Good}(c_i, c_j)$, which gives that $\frac{\dist_{\Good} (c_i, q)}{\dist_{\Good} (c_j, q)} > 7$.
    

    Using Chernoff bound and \cref{lemma:chernoff_for_good_l1}, we get
    \begin{align*}
    & \Pr\left[\left|\sum_{t\in [T],z\in \Good} Q_{t,z} - \E[\sum_{t\in [T], z\in \Good} Q_{t,z}]\right| > \frac{T}8 \cdot \dist_{\Good}(c_i, c_j) \right] \\
    \leq & 2\exp\left( -\frac{T \cdot \frac18\dist_{\Good}(c_i, c_j)}{3(1+\frac{8}{\epsilon\delta'}) \|c_i - c_j\|_1}\right) \\
    \leq & 2\exp\left( -\frac{T  }{24(1+\frac{8}{\epsilon\delta'})} \frac{\dist_{\Good}(c_i, c_j)}{\|c_i - c_j\|_1} \right)\leq \frac{\delta'}{2n}. 
    \end{align*}
    by choosing $T = \Omega(\frac{\log(n/\delta)}{\epsilon \delta} ) $.

    By union bound, with probability $1-\delta'/n$, we have 
    \begin{equation*}
        \sum_{t\in [T],z\in \Good} P_{t,z} \approx_{\frac{\epsilon}{16}} \E\left[\sum_{t\in [T], z\in \Good} P_{t,z}\right], \text{ and } \sum_{t\in [T],z\in \Good} Q_{t,z} \leq \frac{T}{4} \dist_{\Good}(c_i, c_j).
    \end{equation*}
    Thus 
    \begin{equation*}
        \frac{\sum_{t\in [T],z\in \Good} P_{t,z}}{\sum_{t\in [T],z\in \Good} Q_{t,z}} \geq \frac{(1-\frac{\epsilon}{16})(1-\frac18)\dist_{\Good}(c_i, c_j)}{\frac14 \dist_{\Good}(c_i, c_j)} \geq 3.
    \end{equation*}
    The third item of the lemma is proved.
\end{proof}

\begin{proof}[Proof of \cref{lemma:robust-feature-selection-l1}]
    We know that $\|r_{i^*} - u\|_1 \leq \frac{T}{\delta'} \|c_{i^*} - q\|_1$ holds with probability $1-\delta'$ by \Cref{claim:Markovlone}.
    Thus, it is sufficient to show the statement of the lemma holds under this assumption with probability $1-\delta'$.

    Fix $i\in [n]$. Since we only have 2 points from $C$ that are $c_i$ and $c_{i^*}$, we use $\Good$ for $\Good(c_i, c_{i^*}, q)$.

    Denote $b_1 = \dist_\Bad(c_i, q)$, $b_2 = \dist_\Bad(c_{i^*}, q)$, $g_1 = \dist_\Good(c_i, q)$, $g_2 = \dist_\Good(c_{i^*}, q)$. Also denote $b_1' = \dist_\Bad(r_i, u)/T$, $b_2' = \dist_\Bad(r_{i^*}, u)/T$, $g_1' = \dist_\Good(r_i, u)/T$, $g_2' = \dist_\Good(r_{i^*}, u)/T$.

    For a fixed $i \in [n]$, we have $\frac{\|c_i - q\|_1}{\|c_{i^*} - q\|_1} \geq 1+\epsilon$, which is $\frac{g_1 + b_1}{g_2 + b_2} \geq 1+\epsilon$. We are given that $\|r_{i^*} - u\|_1 \leq T/\delta' \|c_{i^*} - q\|_1$, which is $g_2' + b_2' \leq \frac{1}{\delta'}(g_2 + b_2)$. From the definition of bad region, we know that $1 - \frac{\epsilon\delta'}{8} \leq \frac{b_1}{b_2} \leq 1 + \frac{\epsilon\delta'}{8}$. Also $1 - \frac{\epsilon\delta'}{8} \leq \frac{b_1'}{b_2'} \leq 1 + \frac{\epsilon\delta'}{8}$, because we select and rescale coordinates, which does not affect the ratio.

    Applying \cref{lemma:chernoff_for_good_l1} on $c_i$ and $c_{i^*}$ gives that with probability $1-\delta'/n$,
    \begin{itemize}
        \item either $g_1' \approx_{\frac{\epsilon}{16}} g_1$, $g_2' \approx_{\frac{\epsilon}{16}} g_2$,
        \item or $\frac{g_1}{g_2} > 7$, $g_1' \approx_{\frac{\epsilon}{16}} g_1$, $\frac{g_1'}{g_2'} \geq 3$.
    \end{itemize}
Given these conditions, the following \cref{claim:robust-estimate} gives a lower bound on the ratio $\frac{g_1' + b_1'}{g_2' + b_2'}$. We defer the proof of the claim to the appendix.

    \begin{claim}\label{claim:robust-estimate}
    Suppose $b_1, b_2, g_1, g_2, b_1', b_2', g_1', g_2' \geq 0$. We are given the following conditions:
    \begin{itemize}
        \item $\frac{g_1+b_1}{g_2+b_2} \geq 1+\epsilon$,
        \item $1-\frac{\epsilon\delta'}{8} \leq \frac{b_1}{b_2} \leq 1+\frac{\epsilon\delta'}{8}$, $1-\frac{\epsilon\delta'}{8} \leq \frac{b_1'}{b_2'} \leq 1+\frac{\epsilon\delta'}{8}$,
        \item $g_2' + b_2' \leq \frac{1}{\delta'}(g_2+b_2)$,
        \item $g_1'$ approximates $g_1$, and $g_2'$ approximates $g_2$, i.e.
        \begin{itemize}
            \item either $g_1' \approx_{\frac{\epsilon}{16}} g_1$, $g_2' \approx_{\frac{\epsilon}{16}} g_2$,
            \item or $\frac{g_1}{g_2} > 7$, $g_1' \approx_{\frac{\epsilon}{16}} g_1$, $\frac{g_1'}{g_2'}\geq 3$.
        \end{itemize}
    \end{itemize}  
    then we have $\frac{g_1'+b_1'}{g_2'+b_2'} \geq 1+\epsilon\delta'/4.$
\end{claim}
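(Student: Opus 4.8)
The plan is to treat \Cref{claim:robust-estimate} as a pure inequality and prove it by a short chain of reductions followed by a two-case analysis that mirrors the two bullets in its last hypothesis. Throughout I will assume the relevant denominators are positive; if $b_2=0$ then $b_1=0$ as well (and likewise $b_1'=b_2'=0$), so those terms simply drop out, and the fully degenerate all-zero input is vacuous.

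First I would extract from the first two hypotheses a lower bound on the ``good'' part of $c_i$. Writing $b_1=(1+\beta)b_2$ with $|\beta|\le\epsilon\delta'/8$, the hypothesis $g_1+b_1\ge(1+\epsilon)(g_2+b_2)$ rearranges to
\[
g_1 \;\ge\; (1+\epsilon)g_2 + (\epsilon-\beta)b_2 \;\ge\; (1+\epsilon)g_2 + \tfrac{7\epsilon}{8}\,b_2,
\]
using $\delta'<1$ so that $\beta\le\epsilon/8$. Second, I would reduce the target. Writing $b_1'=(1+\beta')b_2'$ with $|\beta'|\le\epsilon\delta'/8$, cross-multiplying the desired bound $\frac{g_1'+b_1'}{g_2'+b_2'}\ge 1+\epsilon\delta'/4$ and isolating $b_2'$ shows it suffices to prove $g_1'-(1+\epsilon\delta'/4)g_2'\ge(\epsilon\delta'/4-\beta')b_2'$, and since $\beta'\ge-\epsilon\delta'/8$ the right-hand coefficient is at most $\tfrac{3\epsilon\delta'}{8}$. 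Then the third hypothesis gives $b_2'\le\frac1{\delta'}(g_2+b_2)$, so it is enough to establish
\[
g_1' - \bigl(1+\tfrac{\epsilon\delta'}{4}\bigr)g_2' \;\ge\; \tfrac{3\epsilon}{8}\,(g_2+b_2).
\]

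Third, I would verify this last inequality in each case. In the case $g_1'\approx_{\epsilon/16}g_1$ and $g_2'\approx_{\epsilon/16}g_2$, substitute $g_1'\ge(1-\epsilon/16)g_1$, $g_2'\le(1+\epsilon/16)g_2$ and the bound $g_1\ge(1+\epsilon)g_2+\tfrac{7\epsilon}{8}b_2$ from the first step, then collect the coefficients of $g_2$ and of $b_2$ on the left; a direct estimate using $\epsilon<1/4$ and $\delta'<1$ shows the coefficient of $g_2$ is at least $7\epsilon/8-\epsilon\delta'/4-O(\epsilon^2)>\tfrac{3\epsilon}{8}$ and the coefficient of $b_2$ is at least $\tfrac{15}{16}\cdot\tfrac{7\epsilon}{8}>\tfrac{3\epsilon}{8}$, so the inequality holds termwise. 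In the other case ($g_1'\approx_{\epsilon/16}g_1$ and $g_1'/g_2'\ge3$), I would first use $g_2'\le g_1'/3$ to get $g_1'-(1+\epsilon\delta'/4)g_2'\ge g_1'\cdot\tfrac{2-\epsilon\delta'/4}{3}\ge\tfrac{g_1'}{2}$, and then use $g_1'\ge(1-\epsilon/16)g_1\ge(1-\epsilon/16)(g_2+\tfrac{7\epsilon}{8}b_2)$ and check $\tfrac12(1-\epsilon/16)(g_2+\tfrac{7\epsilon}{8}b_2)\ge\tfrac{3\epsilon}{8}(g_2+b_2)$ again by comparing coefficients.

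The only mildly delicate aspect is the bookkeeping of the small constants: one must confirm that the slack $\epsilon\delta'/8$ coming from the bad-region ratio and the distortion $\epsilon/16$ from the good-region estimates together still leave room for the target gap $\epsilon\delta'/4$. The case distinction itself is routine; the genuine subtlety is that in the second case we only have the one-sided comparison $g_1'/g_2'\ge3$ rather than a two-sided estimate of $g_2'$, which is precisely why the analysis needs the separate regime (and why \Cref{lemma:chernoff_for_good_l1} produces the constant-factor gap there in the first place).
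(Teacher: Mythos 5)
Your proof is correct and follows essentially the same approach as the paper's: you invoke the same hypotheses in the same order, the same two-case split drives the argument, and the final constants are obtained by the same $\epsilon\delta'$-bookkeeping. Your organization---cross-multiplying up front to reduce the claim to the single target inequality $g_1' - \bigl(1+\tfrac{\epsilon\delta'}{4}\bigr)g_2' \geq \tfrac{3\epsilon}{8}(g_2+b_2)$, then comparing coefficients of $g_2$ and $b_2$ in each case---is a modest streamlining of the paper's longer chain of ratio manipulations (which also routes through the auxiliary \Cref{claim:approximate-diff-l1}), but the underlying computation is the same.
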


 Applying \cref{claim:robust-estimate}, we get that $\frac{g_1' + b_1'}{g_2' + b_2'} \geq 1+\epsilon\delta'/4$, which implies $\frac{\|r_i - u\|_1}{\|r_{i^*} - u\|_1} \geq 1+\epsilon\delta'/4$.
    By union bound, this inequality holds for all $i \in [n]$ with probability $1-\delta'$. 
\end{proof}

\begin{claim} \label{claim:approximate-diff-l1}
    Suppose $g_1, w, g_1', w' \geq 0$, and $\frac{g_1}{w} \geq 1+\frac{\epsilon}{2}$. $g_1'$ and $w'$ are estimates of $g_1$ and $w$, where $g_1' \approx_{\epsilon/16} g_1$ and $w' \approx_{\epsilon/16} w$. Then $g_1' - w' \geq \left(\frac{3}{4} - \frac{\epsilon}{16}\right) (g_1 - w)$. 
\end{claim}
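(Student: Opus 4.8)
The plan is to expand the two approximation guarantees and then reduce the desired inequality directly to the ratio hypothesis $g_1/w \geq 1+\epsilon/2$. First I would record that $g_1' \approx_{\epsilon/16} g_1$ gives $g_1' \geq (1-\tfrac{\epsilon}{16})g_1$, and $w' \approx_{\epsilon/16} w$ gives $w' \leq (1+\tfrac{\epsilon}{16})w$ (here nonnegativity matters only to know these are the relevant directions). Subtracting,
\[
g_1' - w' \;\geq\; (1-\tfrac{\epsilon}{16})g_1 - (1+\tfrac{\epsilon}{16})w \;=\; (g_1-w) - \tfrac{\epsilon}{16}(g_1+w).
\]

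Next I would compare this lower bound with the target quantity $(\tfrac34-\tfrac{\epsilon}{16})(g_1-w)$. Cancelling the common term $-\tfrac{\epsilon}{16}(g_1-w)$ on both sides, the claim becomes equivalent to
\[
\tfrac14(g_1-w) \;\geq\; \tfrac{\epsilon}{16}\bigl((g_1+w)-(g_1-w)\bigr) \;=\; \tfrac{\epsilon}{8}\,w .
\]
This is exactly where the hypothesis enters: $g_1 \geq (1+\tfrac{\epsilon}{2})w$ yields $g_1-w \geq \tfrac{\epsilon}{2}w$, hence $\tfrac14(g_1-w) \geq \tfrac{\epsilon}{8}w$, which closes the argument. (The degenerate case $w=0$ is immediate, since then $g_1'-w' = g_1' \geq (1-\tfrac{\epsilon}{16})g_1 \geq (\tfrac34-\tfrac{\epsilon}{16})g_1$.)

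There is essentially no real obstacle here: the computation is short and elementary. The only points needing a bit of care are applying the two $\approx_{\epsilon/16}$ bounds in the correct directions (a lower bound on $g_1'$, an upper bound on $w'$), and noting that $g_1 \geq (1+\tfrac{\epsilon}{2})w \geq w$ so that $g_1-w \geq 0$ and the cancellation step does not reverse any inequality.
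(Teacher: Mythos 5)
Your proposal is correct and follows essentially the same route as the paper: both start from $g_1'-w' \geq (1-\tfrac{\epsilon}{16})g_1 - (1+\tfrac{\epsilon}{16})w$ and reduce the remaining inequality to the hypothesis $g_1 \geq (1+\tfrac{\epsilon}{2})w$ (the paper does this via the substitution $\rho = \tfrac34-\tfrac{\epsilon}{16}$, you by direct cancellation, which amounts to the same algebra). Your explicit handling of the $w=0$ case is a small extra nicety.
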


\begin{proof}
    From $g_1' \approx_{\epsilon/16} g_1$ and $w' \approx_{\epsilon/16} w$, we have 
    $
    g_1' - w' \geq \left(1-\frac{\epsilon}{16}\right) g_1 - \left(1+\frac{\epsilon}{16}\right) w.
    $
    Let $\rho = \frac{3}{4} - \frac{\epsilon}{16}$. We rewrite $\frac{g_1}{w} \geq 1+\frac{\epsilon}{2}$ as $\left(1-\frac{\epsilon}{16} - \rho\right) g_1 \geq \left(1+\frac{\epsilon}{16} - \rho\right) w$. This shows $\left(1-\frac{\epsilon}{16}\right) g_1 - \left(1+\frac{\epsilon}{16}\right) w \geq \rho (g_1 - w)$. 
    Thus we have $g_1' - w' \geq \rho (g_1 - w)$.
\end{proof}

\begin{proof}[Proof of \Cref{claim:robust-estimate}]

\noindent \underline{Case 1.} If $g_1' \approx_{\frac{\epsilon}{16}} g_1$, $g_2' \approx_{\frac{\epsilon}{16}} g_2$.

Given \( \frac{g_1 + b_1}{g_2 + b_2} \geq 1 + \epsilon \), we start by deducting \( 1 + \frac{\epsilon \delta'}{8} \) on both sides:
$
\frac{g_1 + b_1 - (1 + \frac{\epsilon \delta'}{8})(g_2 + b_2)}{g_2 + b_2} \geq 1 + \epsilon - (1 + \frac{\epsilon \delta'}{8}) \geq \frac{7}{8} \epsilon.
$
Since \( \frac{b_1}{b_2} \leq 1 + \frac{\epsilon \delta'}{8} \), we get
$
\frac{g_1 - (1 + \frac{\epsilon \delta'}{8})g_2}{g_2 + b_2} \geq \frac{7}{8} \epsilon.
$
Using \cref{claim:approximate-diff-l1} and letting \( w = (1 + \frac{\epsilon}{8})g_2 \), we have 
$
\frac{g_1' - (1 + \frac{\epsilon \delta'}{8})g_2'}{g_2 + b_2} \geq \frac{21 - \frac{7 \epsilon}{4}}{32} \epsilon \geq \frac{5 \epsilon}{8}.
$
Since \( g_2' + b_2' \leq \frac{1}{\delta'}(g_2 + b_2) \), we get
$
\frac{g_1' - (1 + \frac{\epsilon \delta'}{8})g_2'}{g_2' + b_2'} \geq \frac{5 \epsilon \delta'}{8}.
$
Rewriting the left-hand side as
$
\frac{g_1' - g_2' + b_1' - b_2' - \frac{\epsilon \delta'}{8}(g_2' + b_2') - b_1' + (1 + \frac{\epsilon \delta'}{8})b_2'}{g_2' + b_2'} \geq \frac{5 \epsilon \delta'}{8}.
$
Rearranging terms gives
$
\frac{g_1' - g_2' + b_1' - b_2' - b_1' + (1 + \frac{\epsilon \delta'}{8})b_2'}{g_2' + b_2'} \geq \frac{5 \epsilon \delta'}{8} + \frac{\epsilon \delta'}{8}.
$
Using \( \frac{b_1'}{b_2'} \geq 1 - \frac{\epsilon \delta'}{8} \), we get
$
\frac{g_1' - g_2' + b_1' - b_2' + (\frac{\epsilon \delta'}{4})b_2'}{g_2' + b_2'} = \frac{g_1' - g_2' + b_1' - b_2' - (1 - \frac{\epsilon \delta'}{8})b_2' + (1 + \frac{\epsilon \delta'}{8})b_2'}{g_2' + b_2'} \geq \frac{3 \epsilon \delta'}{4}.
$

This implies
$
\frac{g_1' - g_2' + b_1' - b_2'}{g_2' + b_2'} \geq \frac{3 \epsilon \delta'}{4} - \frac{\epsilon \delta'}{4} = \frac{\epsilon \delta'}{2}.
$
Finally, we obtain
$
\frac{g_1' + b_1'}{g_2' + b_2'} \geq 1 + \frac{\epsilon \delta'}{2}.
$

    \noindent \underline{Case 2.} If $\frac{g_1}{g_2} > 7$, $g_1' \approx_{\frac{\epsilon}{16}} g_1$, $\frac{g_1'}{g_2'}\geq 3$.
    
Given \( \frac{g_1 + b_1}{g_2 + b_2} \geq 1 + \epsilon \), we start by deducting \( 1 + \frac{\epsilon \delta'}{8} \) on both sides:
$
\frac{g_1 + b_1 - (1 + \frac{\epsilon \delta'}{8})(g_2 + b_2)}{g_2 + b_2} \geq 1 + \epsilon - (1 + \frac{\epsilon \delta'}{8}) \geq \frac{7}{8} \epsilon,
$
which simplifies to
$
\frac{g_1}{g_2 + b_2} \geq \frac{g_1 - (1 + \frac{\epsilon \delta'}{8})g_2}{g_2 + b_2} \geq \frac{7}{8} \epsilon
$
since \( \frac{b_1}{b_2} \leq 1 + \frac{\epsilon \delta'}{8} \) and \( g_2 \geq 0 \).
Using \( g_1' \approx_{\epsilon / 16} g_1 \), we get
$
\frac{g_1'}{g_2 + b_2} \geq \frac{7 \epsilon}{8 (1 + \epsilon / 16)}.
$
Since \( g_2' + b_2' \leq \frac{1}{\delta'}(g_2 + b_2) \), we have
$
\frac{g_1'}{g_2' + b_2'} \geq \frac{7 \epsilon \delta'}{8 (1 + \epsilon / 16)}.
$
Using \( \frac{g_1'}{g_2'} \geq 3 \Rightarrow \frac{3}{2}(g_1' - g_2') \geq g_1' \), we obtain
$
\frac{\frac{3}{2}(g_1' - g_2')}{g_2' + b_2'} \geq \frac{7 \epsilon \delta'}{12 (1 + \epsilon / 16)},
$
which can be rewritten as
$
\frac{g_1' - g_2'}{g_2' + b_2'} \geq \frac{7 \epsilon \delta'}{12 (1 + \epsilon / 16)}.
$
Since \( (1 - \frac{\epsilon \delta'}{8})g_2' \leq g_2' \), we get
$
\frac{g_1' - (1 + \frac{\epsilon \delta'}{8})g_2'}{g_2' + b_2'} \geq \frac{7 \epsilon \delta'}{12 (1 + \epsilon / 16)}.
$
Subtracting \( \frac{\epsilon \delta'}{8} \) from both sides:
$
\frac{g_1' - (1 + \frac{\epsilon \delta'}{8})g_2' - \frac{\epsilon \delta'}{8}(g_2' + b_2')}{g_2' + b_2'} \geq \frac{7 \epsilon \delta'}{12 (1 + \epsilon / 16)} - \frac{\epsilon \delta'}{8}
$
which simplifies to
$
\frac{g_1' - g_2' - \frac{\epsilon \delta'}{8} b_2'}{g_2' + b_2'} \geq \frac{7 \epsilon \delta'}{12 (1 + \epsilon / 16)} - \frac{\epsilon \delta'}{8}.
$
Rewriting gives
$
\frac{g_1' - g_2' + (1 - \frac{\epsilon \delta'}{8}) b_2' - b_2'}{g_2' + b_2'} \geq \frac{7 \epsilon \delta'}{12 (1 + \epsilon / 16)} - \frac{\epsilon \delta'}{8}.
$
Since \( \frac{b_1'}{b_2'} \geq 1 - \frac{\epsilon \delta'}{8} \), we get
$
\frac{g_1' - g_2' + b_1' - b_2'}{g_2' + b_2'} \geq \frac{7 \epsilon}{48 (1 + \epsilon / 16)} - \frac{\epsilon}{8}.
$
Finally, we obtain
$
\frac{g_1' + b_1'}{g_2' + b_2'} \geq 1 + \frac{7 \epsilon \delta'}{12 (1 + \epsilon / 16)} - \frac{\epsilon \delta'}{8} \geq 1 + \frac{\epsilon \delta'}{4}.
$

\end{proof}



\subsection{Data Structure for \texorpdfstring{$\ell_2$}{l2}-metric with Near-linear Space and Query Time}\label{sec:linear-space-l2}

The preprocessing and querying procedures of our data structure are described in \cref{alg:sample-reuse-l2}. At a high level, the preprocessing and querying phases of \cref{alg:sample-reuse-l2} are very similar to that of \cref{alg:sample-reuse-l1}. The main differences are (1) in the choice of the probability distribution used to sample coordinates and (2) in the specific dimension reduction technique. In this case, we use the Johnson Lindenstrauss mapping for the dimension reduction.

\begin{algorithm}[htp]\label{alg:sample-reuse-l2}
    \caption{An approximate nearest neighbor data structure for $n$ points in $\R^d$ for $\ell_2$ metric.}
    \SetKwProg{preprocessing}{Preprocessing}{}{}
    \SetKwProg{query}{Query}{}{}
    \SetKwProg{comparator}{$E$}{}{}
    \SetKw{store}{store}
    \preprocessing{$(C, \epsilon, \delta)$\tcp*[f]{Inputs: $C=\{c_1, \dots, c_n\}\subseteq \R^d, \epsilon \in (0, \frac{1}{4}), \delta\in (0,1)$}}{
        Let $I \leftarrow \emptyset$ be a multiset and
        $T \leftarrow O( \log (n/\delta)/(\epsilon^4\delta^2))$\;
        
        \For{$t\in [T]$}{
            \For{$b\in [d]$}{
                Add $b$ to $I$ with probability  $p^{(b)}\triangleq \max_{(i,j)\in {\binom{n}{2}}} \frac{|c_i^{(b)} - c_j^{(b)}|^2}{\|c_i- c_j\|_2^2}$\;
            }
        }
        Let $R=\{r_1, r_2, \dots, r_n\}\subseteq \R^{I}$, where for $i \in [n], b \in I$, we have $r_i^{(b)} = c_i^{(b)} / \sqrt{p^{(b)}}$\;

        Let $m = O(\log(n/\delta)/(\epsilon^2\delta^2))$\;
        Sample $M\in \mathbb{R}^{[m]\times I}$, where each entry of $M$ is chosen uniformly at random from $\left\{\pm \sqrt{\frac1m}\right\}$ \;
        $M(\cdot)\colon \mathbb{R}^{I} \rightarrow \mathbb{R}^{m}$ is an oblivious, linear Johnson-Lindenstrauss mapping\;

        \store{$I$, $M(R) = \{Mr_i\mid i\in [n]\}$, $\{p^{(b)}\mid b\in I\}$, $M$ }
    }
    \query{$(q)$ \tcp*[f]{Inputs: $q\in \R^d$}}{
        Query $q^{(b)}, b\in I$\;
        Let $u\in \mathbb{R}^{I}$, where $\forall b \in I, u^{(b)} = q^{(b)} / \sqrt{p^{(b)}}$\;
        Let $\hat{i} = \argmin_{i\in [n]} \|Mr_i - Mu\|_2$\;
        \Return{$\hat{i}$}
    }
\end{algorithm}

\begin{theorem}\label{theorem:sample-reuse-l2}
    Consider a set $C$ of $n$ points $c_1, \dots, c_n \in \R^d$ equipped with $\ell_2$ metric. Given $0 < \epsilon < 1/4$ and $0 < \delta < 1$, with probability $(1-\delta)$, we can efficiently construct a randomized data structure (See \cref{alg:sample-reuse-l2}) such that for any query point $q\in \R^d$ the following conditions hold with probability $1-\delta$. (i) The data structure reads $O(n \log (n/\delta)/(\epsilon^4\delta^2))$ coordinates $q$. (ii) It returns $i\in [d]$ where $c_i$ is a $(1+\epsilon)$-approximate nearest neighbor of $q$ in $C$. (iii) The data structure has a size of $O(n \log^2(n/\delta) \log(d)/(\epsilon^6\delta^4))$ words. 
\end{theorem}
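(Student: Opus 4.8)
The plan is to follow the same four-event skeleton used for \cref{theorem:sample-reuse-l1}, replacing the $\ell_1$ importance-sampling weights by the squared-$\ell_2$ weights $p^{(b)}=\max_{(i,j)}|c_i^{(b)}-c_j^{(b)}|^2/\|c_i-c_j\|_2^2$ of \cref{alg:sample-reuse-l2} and Indyk's $\ell_1$ sketch by a Johnson--Lindenstrauss map. The starting point is that, with $r_i^{(b)}=c_i^{(b)}/\sqrt{p^{(b)}}$ and $u^{(b)}=q^{(b)}/\sqrt{p^{(b)}}$, the quantity $\tfrac1T\|r_i-u\|_2^2=\tfrac1T\sum_{b\in I}(c_i^{(b)}-q^{(b)})^2/p^{(b)}$ is an \emph{unbiased} estimator of the squared distance $\|c_i-q\|_2^2$; hence all estimates and comparisons are performed on squared distances, and a $(1+\epsilon)$-approximation in distance corresponds to a $1+\Theta(\epsilon)$ gap in squared distance. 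First I would prove the analogue of \cref{claim:Markovlone}: by Markov's inequality, with probability $1-\delta'$ (where $\delta'=\delta/4$) the estimator $\tfrac1T\|r_{i^*}-u\|_2^2$ exceeds $\|c_{i^*}-q\|_2^2$ by at most a factor $1/\delta'$, so the estimated distance to the true nearest neighbor $c_{i^*}$ is overestimated by at most $1/\sqrt{\delta'}$.

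Next I would port the good/bad-region analysis (\cref{lemma:good-region-is-large-l1}--\cref{lemma:robust-feature-selection-l1} and \cref{claim:robust-estimate}) to squared distances. Define $\Good(c_1,c_2,q)$ as the coordinates $z$ with $q^{(z)}$ within $\tfrac{c}{\epsilon\delta'}|c_1^{(z)}-c_2^{(z)}|$ of the interval $[\min\{c_1^{(z)},c_2^{(z)}\},\max\{c_1^{(z)},c_2^{(z)}\}]$ for a suitable constant $c$. On $\Good$, using the oversampling bound $p^{(z)}\ge|c_i^{(z)}-c_j^{(z)}|^2/\|c_i-c_j\|_2^2$, we get $(c_i^{(z)}-q^{(z)})^2/p^{(z)}\le O\!\left((1/(\epsilon\delta'))^2\right)\|c_i-c_j\|_2^2$, i.e.\ bounded nonnegative summands, so a Chernoff bound estimates the good-region squared distances to within $1\pm\epsilon/16$ once $T=\Omega(\log(n/\delta)/(\epsilon^4\delta^2))$ as in \cref{alg:sample-reuse-l2} (the extra factor $1/(\epsilon\delta')$ over the $\ell_1$ case comes precisely from this squared per-coordinate bound). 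The $\ell_2$ analogue of \cref{lemma:good-region-is-large-l1} follows from $\|a-q\|_2+\|a'-q\|_2\le(1+2/\epsilon)\|a-a'\|_2$ whenever $\|a'-q\|_2\ge(1+\epsilon)\|a-q\|_2$, together with the fact that on the bad region $|c_i^{(z)}-q^{(z)}|\ge\tfrac{c}{\epsilon\delta'}|c_i^{(z)}-c_j^{(z)}|$, which forces the bad-region distances to be an $O(\delta')$-fraction of $\|c_i-c_j\|_2$ and hence the bad-region squared-distance ratio between $c_i$ and $c_j$ to lie in $1\pm O(\epsilon\delta')$. Combining these with the Markov event through a squared version of \cref{claim:robust-estimate} yields: conditioned on the first two events, for every $c_i$ that is not a $(1+\epsilon)$-approximate nearest neighbor, $\|r_i-u\|_2^2/\|r_{i^*}-u\|_2^2\ge 1+\Omega(\epsilon\delta')$. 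For the third event I would invoke a standard Johnson--Lindenstrauss concentration bound for the $\pm\sqrt{1/m}$ matrix $M$ with $m=O(\log(n/\delta)/(\epsilon^2\delta^2))$: with probability $1-\delta'$, simultaneously for all $i\in[n]$, $\|Mr_i-Mu\|_2\approx_{\epsilon\delta'/100}\|r_i-u\|_2$, so the $1+\Omega(\epsilon\delta')$ gap in squared distance survives and $\hat i=\argmin_i\|Mr_i-Mu\|_2$ is a $(1+\epsilon)$-approximate nearest neighbor; a union bound over the first three events gives correctness with probability $1-3\delta'$.

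The genuinely new ingredient, which I would isolate as the $\ell_2$ analogue of \cref{lemma:math-program}, is the bound $\sum_{b\in[d]}p^{(b)}\le n$. The collapse/induction argument does not transfer, so I would argue by linear algebra. Let $V=\mathrm{span}\{c_i-c_j:i,j\in[n]\}=\mathrm{span}\{c_i-\bar c:i\in[n]\}$, so $\dim V\le n-1$, and let $P$ be the orthogonal projection onto $V$. For any pair $(i,j)$ with $\|c_i-c_j\|_2\ne 0$ we have $c_i-c_j\in V$, hence $c_i^{(b)}-c_j^{(b)}=\langle e_b,c_i-c_j\rangle=\langle Pe_b,c_i-c_j\rangle$, and Cauchy--Schwarz gives $|c_i^{(b)}-c_j^{(b)}|^2\le\|Pe_b\|_2^2\,\|c_i-c_j\|_2^2$; therefore $p^{(b)}\le\|Pe_b\|_2^2$ and $\sum_{b}p^{(b)}\le\sum_b\|Pe_b\|_2^2=\mathrm{tr}(P)=\dim V\le n$. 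Writing $P=V_{n-1}V_{n-1}^{\transpose}$ with $V_{n-1}$ an orthonormal basis of $V$ drawn from the right singular vectors of the centered data matrix exhibits the $\|Pe_b\|_2^2$ as its rank-$(n-1)$ leverage scores, which is the connection to the feature-selection weights of \cite{DBLP:journals/tit/BoutsidisZMD15} (whose weights dominate ours). Given this, a Chernoff bound as in \cref{clm:I-size-upperbound-l1} shows $|I|\le 2Tn$ with probability $1-\delta'$, so the number of queried coordinates is $O(Tn)=O(n\log(n/\delta)/(\epsilon^4\delta^2))$ and the space is $O(m|I|+mn+|I|\log d)=O(n\log^2(n/\delta)\log(d)/(\epsilon^6\delta^4))$; a final union bound over all four events yields success probability $1-\delta$.

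I expect the main obstacle to be the squared-distance rewrite of the good/bad-region robustness chain (the analogues of \cref{lemma:good-region-is-large-l1}--\cref{lemma:robust-feature-selection-l1} and \cref{claim:robust-estimate}): squaring changes how the triangle inequality and the ``bad region contributes almost equally to both distances'' step interact, and the constants must be tracked carefully so that a clean $\Theta(\epsilon\delta')$ gap survives both the Chernoff step and the Johnson--Lindenstrauss distortion. By contrast, the bound $\sum_b p^{(b)}\le n$, although it needs a genuinely different (linear-algebraic) idea than the $\ell_1$ case, is short once the projection viewpoint is identified.
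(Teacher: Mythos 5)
Your proof matches the paper's four-event structure and all the steps are sound. The only stylistic deviation is in the key space lemma $\sum_{b}p^{(b)}\le n$: the paper writes out the SVD of the $n\times d$ matrix $C$, expands $c_i-c_j$ in that basis, and applies Cauchy--Schwarz to conclude $p^{(b)}\le\sum_{k\in[n]}V_{b,k}^2$, whereas you argue directly via the orthogonal projection $P$ onto $\mathrm{span}\{c_i-c_j\}$ and the identity $|c_i^{(b)}-c_j^{(b)}|=|\langle Pe_b,c_i-c_j\rangle|\le\|Pe_b\|_2\,\|c_i-c_j\|_2$ --- but these are the same argument (the paper's $\sum_k V_{b,k}^2$ is exactly $\|P'e_b\|_2^2$ for $P'$ the projection onto the row space of $C$, of which your span of differences is a subspace), and yours is marginally cleaner since it sidesteps the paper's simplifying assumption that $\mathrm{rank}(C)=n$ and even gives the slightly tighter bound $\mathrm{tr}(P)\le n-1$.
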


We show the proof in the following subsections in line with the proof structure described in \cref{sec:sample-reuse}. 
In particular, we show that each of the following good events, which we restate here for the sake of completeness, hold with probability at least $1-\delta'$, where $\delta' = \delta/4$.
\begin{enumerate}
    \item The first event is that the distance between a query and its nearest neighbor is overestimated at most by the factor $4/\delta$.
    \item The second event is that for a query point the estimated distance to its nearest neighbor is significantly smaller than the estimated distance to any center that is not an approximate nearest neighbor.
    \item The third event is that the dimension reduction preserves the distances between the coordinate-selected points and the query.
    \item The fourth event is that at most near-linear many coordinates are sampled in the preprocessing and accessed in the query phase. 
\end{enumerate}
Although the proof at a high level follows the same structure as for the $\ell_1$ case, there are significant differences between the proofs themselves. The main differences are in the dimension reduction as well as in bounding the space and query complexities.

    \subsubsection{Upper Bound for Estimate of Distance between Query and its Nearest Neighbor}\label{sec:first-section-l2}

    We first show that the squared distance between a query point and its nearest neighbor is overestimated at most by a factor $1/\delta'$ with probability $1-\delta'$. 

    \begin{claim}\label{claim:Markovl2}
    Let $q\in \R^n$ be a query and let $i^*\in [n]$ be the index of its nearest neighbor. It holds that
    \begin{align*}
        \frac{1}{T}\|r_{i^*} - u\|_2^2 \leq \frac{1}{\delta'} \|c_{i^*} - q\|_2^2,
    \end{align*}
    with probability $1-\delta'$.
    \end{claim}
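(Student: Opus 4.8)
The plan is to follow the same two-step recipe used in the proof of \cref{claim:Markovlone} for the $\ell_1$ case: first compute $\E\bigl[\|r_{i^*} - u\|_2^2\bigr]$ exactly over the randomness of the sampled multiset~$I$, and then convert this into the stated high-probability bound via a single application of Markov's inequality to the nonnegative random variable $\|r_{i^*} - u\|_2^2$. The only genuinely new point relative to the $\ell_1$ argument is that \cref{alg:sample-reuse-l2} rescales each sampled coordinate~$b$ by $1/\sqrt{p^{(b)}}$ rather than by $1/p^{(b)}$, which is precisely what makes \emph{squared} $\ell_2$ distances (rather than $\ell_1$ distances) the quantity preserved in expectation.

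For the expectation, I would unfold the definitions in \cref{alg:sample-reuse-l2}. Since $r_{i^*}^{(b)} = c_{i^*}^{(b)}/\sqrt{p^{(b)}}$ and $u^{(b)} = q^{(b)}/\sqrt{p^{(b)}}$ for every $b \in I$, we have $\bigl|r_{i^*}^{(b)} - u^{(b)}\bigr|^2 = \bigl|c_{i^*}^{(b)} - q^{(b)}\bigr|^2 / p^{(b)}$. Writing $\mathbbm{1}_{t,b}$ for the indicator of the event ``coordinate $b$ is added to $I$ in iteration~$t$'' (which has probability $p^{(b)}$, independently across all pairs $(t,b)$), linearity of expectation gives
\begin{align*}
\E\bigl[\|r_{i^*} - u\|_2^2\bigr]
 &= \E\Bigl[\sum_{t\in[T],\,b\in[d]} \mathbbm{1}_{t,b}\cdot\frac{\bigl|c_{i^*}^{(b)} - q^{(b)}\bigr|^2}{p^{(b)}}\Bigr] = \sum_{t\in[T],\,b\in[d]} p^{(b)}\cdot\frac{\bigl|c_{i^*}^{(b)} - q^{(b)}\bigr|^2}{p^{(b)}} \\
 &= \sum_{t\in[T],\,b\in[d]} \bigl|c_{i^*}^{(b)} - q^{(b)}\bigr|^2 = T\,\|c_{i^*} - q\|_2^2.
\end{align*}
(Coordinates $b$ with $p^{(b)}=0$ are exactly those on which all centers agree; such $b$ never enter~$I$, so the corresponding terms are simply absent from $\|r_{i^*}-u\|_2^2$, and the identity above holds with ``$\le$'' in this degenerate case, which is all we need.)

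It then remains to apply Markov's inequality: since $\|r_{i^*} - u\|_2^2 \ge 0$, we get $\Pr\bigl[\|r_{i^*} - u\|_2^2 > \tfrac{1}{\delta'}\E[\|r_{i^*} - u\|_2^2]\bigr] < \delta'$, and substituting the expectation bound and dividing by $T$ yields $\tfrac1T\|r_{i^*} - u\|_2^2 \le \tfrac{1}{\delta'}\|c_{i^*} - q\|_2^2$ with probability at least $1-\delta'$, as claimed. I do not expect any real obstacle here — the argument is essentially mechanical, with the only subtlety being the square-root rescaling noted above. The substantive parts of the $\ell_2$ analysis (preservation of distance ratios on the good/bad regions, the Johnson--Lindenstrauss dimension-reduction step, and the SVD-based bound on $\sum_b p^{(b)}$) occur in the later claims, not in this one.
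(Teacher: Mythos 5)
Your proof is correct and matches the paper's argument: both compute $\E[\|r_{i^*}-u\|_2^2]=T\|c_{i^*}-q\|_2^2$ by linearity of expectation (using the $1/\sqrt{p^{(b)}}$ rescaling and the independence of the $(t,b)$ samplings), and then apply Markov's inequality. Your extra remark about coordinates with $p^{(b)}=0$ is a harmless refinement the paper leaves implicit.
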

    \begin{proof}
        For any $i\in [n]$, we have $\E[\|r_i - u\|_2^2] = T \|c_i - q\|_2^2$, because 
    \begin{align*}
        \E[\|r_i - u\|_2^2] =& \E\left[\sum_{b\in I} \left(r_i^{(b)} - u^{(b)}\right)^2\right] \\
        =& \E\left[\sum_{t\in [T], b\in [d]} \mathbbm{1}_{\{\text{in iteration $t$, we add $b$ to $I$\}}}\left(r_i^{(b)} - u^{(b)}\right)^2\right] \\
        =& \sum_{t\in [T], b\in [d]} p^{(b)}\left(r_i^{(b)} - u^{(b)}\right)^2 = \sum_{t\in [T], b\in [d]} p^{(b)}\left(\frac{c_i^{(b)}}{\sqrt{p^{(b)}}} - \frac{q^{(b)}}{\sqrt{p^{(b)}}}\right)^2 \\
        =& \sum_{t\in [T], b\in [d]} \left(c_i^{(b)} - q^{(b)}\right)^2 = T \|c_i - q\|_2^2.
    \end{align*}
    Using Markov's inequality, we have with probability $1-\delta'$,
    \begin{equation*}
    \|r_{i^*} - u\|_2^2 \leq \frac{T}{\delta'} \|c_{i^*} - q\|_2^2.\qedhere
    \end{equation*}
    \end{proof}

    \subsubsection{Coordinate Selection Preserves Distance Ratio}

    In this section, we show that, the following holds: 
for any query point $q\in \R^n$, whose nearest neighbor is $c_{i^*}$, with probability $1-2\delta'$, for all $c_i\in C$ such that $\frac{\|c_i-q\|_2}{\|c_{i^*}-q\|_2} \geq 1+\epsilon$,  the estimated ratio is significantly  larger than one, that is $\frac{\|r_i-u\|_2}{\|r_{i^*}-u\|_2} \geq 1+\epsilon\delta'/12$.
This statement is formalized in \Cref{lemma:robust-feature-selection-l2}.

As before, we first partition the coordinates in $[d]$ into good and bad regions for each pair of points. Although the specific definition of the regions are different compared to the $\ell_1$ case, the intuition behind the definitions is the same.


\begin{definition}[Good region and bad region]
    Consider two points $c_1, c_2\in \R^d$, and a query point $q\in\R^d$. We define the good region $\Good(c_1, c_2, q)$ w.r.t.\ $(c_1,c_2,q)$ as $$\{z\in [d] \mid\min\{c_1^{(z)}, c_2^{(z)}\} - \frac{24}{\epsilon \delta'} |c_1^{(z)} - c_2^{(z)}| \leq q^{(z)} \leq \max\{c_1^{(z)}, c_2^{(z)}\} + \frac{24}{\epsilon \delta'} |c_1^{(z)} - c_2^{(z)}|\}.$$ We define the bad region w.r.t.\ $(c_1,c_2,q)$ as $\Bad(c_1, c_2, q) = [d]\setminus \Good(c_1, c_2, q)$. When there is no ambiguity, we use $\Good$ for $\Good(c_1, c_2, q)$, and $\Bad$ for $\Bad(c_1, c_2, q)$. 
\end{definition}
We also define the distance between points restricted to subsets of coordinates as before.
\begin{definition}[Restricted distance]
    Given $x,y \in \R^d$, we define the distance between $x$ and $y$ restricted to region $J\subseteq [d]$ as 
    $\dist_J(x,y) = \left(\sum_{z\in J} \left(x^{(z)} - y^{(z)}\right)^2\right)^{\frac12}$. 
    Furthermore, for $u,v \in \R^I$, where $I$ is a multiset, whose elements are from $[d]$, $\dist_J(u,v) = \left(\sum_{z\in I} \mathbbm{1}_{z\in J}\cdot\left(x^{(z)} - y^{(z)}\right)^2\right)^{\frac12}.$
    
\end{definition}


The following lemma is the $\ell_2$ version of \cref{lemma:good-region-is-large-l1}.

\begin{lemma}\label{lemma:good-region-is-large-l2}
    Consider 3 points $a,b,q\in\R^d$. If $\frac{\|b-q\|_2}{\|a-q\|_2} \geq 1+\epsilon$, then $\frac{\dist_{\Good(a,b,q)}(b,a)}{\|b-a\|_2} \geq \frac{13}{16}$.
\end{lemma}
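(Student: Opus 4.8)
The plan is to follow the structure of the proof of \cref{lemma:good-region-is-large-l1}, but working with squared coordinate differences wherever the $\ell_2$ distance fails to decompose additively across coordinates. First I would convert the hypothesis $(1+\epsilon)\|a-q\|_2 \le \|b-q\|_2$ into a bound on the \emph{sum} of the two distances: since $\|b-q\|_2-\|a-q\|_2 \ge \epsilon\|a-q\|_2$, we have $\|a-q\|_2 \le \tfrac1\epsilon(\|b-q\|_2-\|a-q\|_2)$, and therefore
\begin{equation*}
\|b-q\|_2+\|a-q\|_2 \;=\;(\|b-q\|_2-\|a-q\|_2)+2\|a-q\|_2\;\le\;\Bigl(1+\tfrac2\epsilon\Bigr)\bigl(\|b-q\|_2-\|a-q\|_2\bigr)\;\le\;\Bigl(1+\tfrac2\epsilon\Bigr)\|b-a\|_2,
\end{equation*}
where the last inequality is the reverse triangle inequality $\|b-q\|_2-\|a-q\|_2\le\|b-a\|_2$.

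Next I would lower-bound the left-hand side using only the bad coordinates. For any $z\in\Bad(a,b,q)$, the definition of the bad region places $q^{(z)}$ at distance strictly more than $\tfrac{24}{\epsilon\delta'}|a^{(z)}-b^{(z)}|$ from \emph{both} $a^{(z)}$ and $b^{(z)}$: on whichever side of the interval $[\min,\max]$ the value $q^{(z)}$ lies, it is that far from the nearer endpoint, hence at least that far from both endpoints. Squaring this and summing over $z\in\Bad$ gives $\dist_{\Bad}(a,q)\ge\tfrac{24}{\epsilon\delta'}\dist_{\Bad}(a,b)$ and, symmetrically, $\dist_{\Bad}(b,q)\ge\tfrac{24}{\epsilon\delta'}\dist_{\Bad}(a,b)$. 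Since restricting the $\ell_2$ norm to a subset of coordinates can only decrease it, $\|a-q\|_2+\|b-q\|_2\ge\dist_{\Bad}(a,q)+\dist_{\Bad}(b,q)\ge\tfrac{48}{\epsilon\delta'}\dist_{\Bad}(a,b)$.

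Combining the two estimates yields $\tfrac{48}{\epsilon\delta'}\dist_{\Bad}(a,b)\le(1+\tfrac2\epsilon)\|b-a\|_2$, i.e.\ $\dist_{\Bad}(a,b)\le\tfrac{\epsilon\delta'+2\delta'}{48}\|b-a\|_2\le\tfrac1{16}\|b-a\|_2$ using $\epsilon,\delta'<1$. Finally, the Pythagorean split $\|b-a\|_2^2=\dist_{\Good}(a,b)^2+\dist_{\Bad}(a,b)^2$ gives $\dist_{\Good}(a,b)^2\ge\bigl(1-\tfrac1{256}\bigr)\|b-a\|_2^2$, hence $\tfrac{\dist_{\Good}(a,b)}{\|b-a\|_2}\ge\sqrt{255/256}>\tfrac{13}{16}$. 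The only real point of care compared with \cref{lemma:good-region-is-large-l1} is having to argue at the level of squared coordinate differences for the bad region and then recombine via Pythagoras rather than linearly; the constant $\tfrac{24}{\epsilon\delta'}$ in the $\ell_2$ good-region definition is chosen precisely so this recombination clears the $\tfrac{13}{16}$ threshold with room to spare, so I do not expect a genuine obstacle—only the bookkeeping above.
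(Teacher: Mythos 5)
Your proof is correct and follows essentially the same route as the paper's: both derive $\|b-q\|_2 + \|a-q\|_2 \le \bigl(1+\tfrac{2}{\epsilon}\bigr)\|b-a\|_2$ from the hypothesis via the triangle inequality, lower-bound the left side using the bad-region definition to control $\dist_{\Bad}(a,b)$, and conclude by Pythagoras from $\|b-a\|_2^2 = \dist_{\Good}^2(a,b) + \dist_{\Bad}^2(a,b)$. The only difference is presentational: you keep distances unsquared until the final Pythagorean step, whereas the paper squares the triangle-inequality bound first (tacitly dropping the cross term), so your version is arguably slightly cleaner but not a genuinely different argument.
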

\begin{proof}
We first follow the same steps as in the proof of \cref{lemma:good-region-is-large-l1} and obtain:
 \[
        \dist_{\Bad(a,b,q)}(b,q) + \dist_{\Bad(a,b,q)}(a,q) \leq \left(1+\frac2\epsilon\right) \|b-a\|_2.
    \]
    Squaring both sides, we get:
    \[
        \dist^2_{\Bad(a,b,q)}(b,q) + \dist^2_{\Bad(a,b,q)}(a,q) \leq \left(1+\frac2\epsilon\right)^2 \|b-a\|_2^2.
    \]
    This implies that:
     \begin{align*}
         \frac{1152}{\epsilon^2\delta
    '^2} \left(\|b-a\|_2^2 - \dist^2_{\Good(a,b,q)}(b,a)\right) = \sum_{z\in \Bad(a,b,q)} 2\left(\frac{24}{\epsilon\delta'}\right)^2\left|b^{(z)} - a^{(z)}\right|^2  \\
    \leq \left(1+\frac2\epsilon\right)^2 \|b-a\|_2^2.
     \end{align*}

    Rearranging the terms and simplifying, we get:
    \[
    \frac{\dist^2_{\Good(a,b,q)}(b,a)}{\|b-a\|_2^2} \geq \frac{1143}{1152} \Rightarrow \frac{\dist_{\Good(a,b,q)}(b,a)}{\|b-a\|_2} \geq \frac{13}{16}.\qedhere
    \]
\end{proof}

The following lemma shows that for two points $c_i, c_j$ and a query $q$ such that $q$ is significantly closer to $c_j$, we can estimate the distances, restricted to the good region, between one of the points and the query.
\begin{lemma}\label{lemma:chernoff_for_good_l2}

    Let $c_i, c_j$ where $i,j\in [n]$ be such that for the query point $q \in \R^d$ it holds that $\|c_i - q\|_2 \geq (1+\epsilon)\|c_j - q\|_2$. Let $\Good$ be shorthand for $\Good(c_i, c_j, q)$ and $\Bad$ for $\Bad(c_i, c_j, q)$. The following statements are true:

    \begin{enumerate}
        \item $\dist_{\Good} (c_i, q) \geq (1+\epsilon) \dist_{\Good} (c_j, q)$.
        \item If $\dist^2_{\Good} (c_j, q) \geq \frac1{16} \dist^2_{\Good} (c_i, c_j)$, then with probability $1-\delta'/n$, 
        \begin{align*}
            \dist^2_{\Good} (r_i, u) \approx_{\frac{\epsilon}{16}} T\cdot \dist^2_{\Good}(c_i, q),\\
            \dist^2_{\Good} (r_j, u) \approx_{\frac{\epsilon}{16}} T\cdot \dist^2_{\Good}(c_j, q).
        \end{align*}
        \item If $\dist^2_{\Good} (c_j, q) < \frac1{16} \dist^2_{\Good} (c_i, c_j)$, then $\frac{\dist^2_{\Good} (c_i, q)}{\dist^2_{\Good} (c_j, q)} > 7$. With probability $1-\delta'/n$, 
        \begin{align*}
            \dist^2_{\Good} (r_i, u) &\approx_{\frac{\epsilon}{16}} T\cdot \dist^2_{\Good}(c_i, q),\\
            \frac{\dist^2_{\Good} (r_i, u)}{\dist^2_{\Good} (r_j, u)} &\geq 3.
        \end{align*}
    \end{enumerate}


    \end{lemma}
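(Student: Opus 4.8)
The plan is to follow the proof of \cref{lemma:chernoff_for_good_l1} essentially line by line, but working throughout with \emph{squared} distances and being careful that the triangle inequality---which fails for $\ell_2^2$---is only ever invoked for the genuine $\ell_2$ norm restricted to $\Good$ (which is still a norm) and then converted to squared quantities. For item~1: by the definition of the bad region, whose threshold $\tfrac{24}{\epsilon\delta'}$ is exactly three times the $\ell_1$ threshold $\tfrac{8}{\epsilon\delta'}$ precisely so that the squaring is absorbed, every $z\in\Bad$ satisfies $(q^{(z)}-c_i^{(z)})^2/(q^{(z)}-c_j^{(z)})^2\in[1-\tfrac{\epsilon\delta'}{8},\,1+\tfrac{\epsilon\delta'}{8}]$, hence $\dist^2_{\Bad}(c_i,q)/\dist^2_{\Bad}(c_j,q)\in[1-\tfrac{\epsilon\delta'}{8},\,1+\tfrac{\epsilon\delta'}{8}]$. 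Combining this with $\|c_i-q\|_2^2\geq(1+\epsilon)^2\|c_j-q\|_2^2$ via the mediant inequality (if $\tfrac{X+A}{Y+B}\geq r$ and $\tfrac{A}{B}\leq r$ then $\tfrac{X}{Y}\geq r$), applied with $r=(1+\epsilon)^2>1+\tfrac{\epsilon\delta'}{8}$, yields $\dist^2_{\Good}(c_i,q)\geq(1+\epsilon)^2\dist^2_{\Good}(c_j,q)$, which is item~1.

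For items~2 and~3, I would introduce for $t\in[T]$ and $z\in\Good$ the variables $P_{t,z}$ equal to $(q^{(z)}-c_i^{(z)})^2/p^{(z)}$ with probability $p^{(z)}$ and $0$ otherwise, and $Q_{t,z}$ defined analogously with $c_j$; since the data structure rescales coordinates by $1/\sqrt{p^{(z)}}$, we get $\sum_{t,z}P_{t,z}=\dist^2_{\Good}(r_i,u)$, $\sum_{t,z}Q_{t,z}=\dist^2_{\Good}(r_j,u)$, and $\E[\sum_z P_{t,z}]=\dist^2_{\Good}(c_i,q)$, $\E[\sum_z Q_{t,z}]=\dist^2_{\Good}(c_j,q)$ for each $t$. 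As in the $\ell_1$ case, $0\leq P_{t,z},Q_{t,z}\leq(1+\tfrac{24}{\epsilon\delta'})^2\|c_i-c_j\|_2^2$: if $c_i^{(z)}=c_j^{(z)}$ the good-region definition forces $q^{(z)}=c_i^{(z)}$, and otherwise ``oversampling'' $p^{(z)}\geq(c_i^{(z)}-c_j^{(z)})^2/\|c_i-c_j\|_2^2$ together with the good-region bound give the claim. Since $\dist_{\Good}$ is a norm, $\dist_{\Good}(c_i,q)+\dist_{\Good}(c_j,q)\geq\dist_{\Good}(c_i,c_j)$, so squaring and using $(x+y)^2\leq 2(x^2+y^2)$ gives $\dist^2_{\Good}(c_i,q)+\dist^2_{\Good}(c_j,q)\geq\tfrac12\dist^2_{\Good}(c_i,c_j)$; with item~1 (which in particular gives $\dist^2_{\Good}(c_i,q)\geq\dist^2_{\Good}(c_j,q)$) this yields $\dist^2_{\Good}(c_i,q)\geq\tfrac14\dist^2_{\Good}(c_i,c_j)\geq\tfrac18\|c_i-c_j\|_2^2$, the last inequality by \cref{lemma:good-region-is-large-l2}. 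A Chernoff bound with $\epsilon_2=\epsilon/16$, using the bound on $P_{t,z}$ and $T=\Theta(\log(n/\delta)/(\epsilon^4\delta^2))$, then gives $\dist^2_{\Good}(r_i,u)\approx_{\epsilon/16}T\dist^2_{\Good}(c_i,q)$ with probability $1-\delta'/(2n)$; this holds unconditionally and so supplies the first displayed estimate in both items~2 and~3.

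It remains to handle $\dist^2_{\Good}(r_j,u)$. In the regime of item~2, $\dist^2_{\Good}(c_j,q)\geq\tfrac{1}{16}\dist^2_{\Good}(c_i,c_j)\geq\tfrac{1}{32}\|c_i-c_j\|_2^2$ by \cref{lemma:good-region-is-large-l2}, so the identical Chernoff argument applied to $\sum_{t,z}Q_{t,z}$ gives $\dist^2_{\Good}(r_j,u)\approx_{\epsilon/16}T\dist^2_{\Good}(c_j,q)$ with probability $1-\delta'/(2n)$, and a union bound finishes item~2. In the regime of item~3, $\dist_{\Good}(c_j,q)<\tfrac14\dist_{\Good}(c_i,c_j)$, so the reverse triangle inequality gives $\dist_{\Good}(c_i,q)>\tfrac34\dist_{\Good}(c_i,c_j)>3\dist_{\Good}(c_j,q)$, whence $\dist^2_{\Good}(c_i,q)/\dist^2_{\Good}(c_j,q)>9>7$ and $\dist^2_{\Good}(c_i,q)>\tfrac{9}{16}\dist^2_{\Good}(c_i,c_j)$. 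Here $\E[\sum_z Q_{t,z}]=T\dist^2_{\Good}(c_j,q)$ can be arbitrarily small, so multiplicative concentration is useless; instead I would use the large-deviation (additive) form of the Chernoff bound to get $\sum_{t,z}Q_{t,z}\leq T\dist^2_{\Good}(c_j,q)+\tfrac{T}{16}\dist^2_{\Good}(c_i,c_j)\leq\tfrac{T}{8}\dist^2_{\Good}(c_i,c_j)$ with probability $1-\delta'/(2n)$ (for which $T=\Omega(\log(n/\delta)/(\epsilon\delta')^2)$ suffices), and a union bound with the estimate for $\sum P_{t,z}$ then yields $\dist^2_{\Good}(r_i,u)/\dist^2_{\Good}(r_j,u)\geq\tfrac{(1-\epsilon/16)\cdot 9/16}{1/8}=\tfrac{9}{2}\bigl(1-\tfrac{\epsilon}{16}\bigr)\geq 3$ for $\epsilon<1/4$, which is item~3.

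The genuinely non-mechanical step---exactly as in the $\ell_1$ proof---is this switch from multiplicative to additive Chernoff for $\sum_{t,z}Q_{t,z}$ in item~3, forced by the fact that $\dist^2_{\Good}(c_j,q)$ may vanish. The remaining difficulty is purely bookkeeping: confirming that the threshold $\tfrac{24}{\epsilon\delta'}$ makes the bad-region ratio bound $1\pm\tfrac{\epsilon\delta'}{8}$ survive the squaring, that the quadratic-mean inequality correctly transfers the triangle-inequality consequences to squared distances, and that the numerical gap ``$\geq 3$'' in item~3 still closes after passing from $\ell_2$ to $\ell_2^2$.
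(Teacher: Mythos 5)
Your proof is correct and follows essentially the same route as the paper's: the same random variables $P_{t,z},Q_{t,z}$, the same boundedness and expectation computations, the same case split on $\dist^2_{\Good}(c_j,q)$ versus $\tfrac{1}{16}\dist^2_{\Good}(c_i,c_j)$, and the same switch from multiplicative to additive Chernoff in item~3. The only (harmless) deviation is that you obtain the ratio bound in item~3 via the reverse triangle inequality for $\dist_{\Good}$ (getting a factor $9$ rather than the paper's $7$, which only makes the final ``$\geq 3$'' computation easier).
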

\begin{proof}
    Combining the following two properties, 
    \begin{equation*}
        \frac{ \dist^2_{\Bad} (c_i, q)}{\dist^2_{\Bad} (c_j, q)} \leq (1+\frac{\epsilon\delta'}{24})^2 < (1+\epsilon)^2,
    \end{equation*}
    \begin{equation*}
         \frac{\dist^2_{\Good} (c_i, q) + \dist^2_{\Bad} (c_i, q)}{\dist^2_{\Good} (c_j, q)+\dist^2_{\Bad} (c_j, q)} = \frac{\|c_i - q\|_2^2}{\|c_j - q\|_2^2} \geq (1+\epsilon)^2.
    \end{equation*}
    implies
    \begin{equation*}
        \frac{\dist^2_{\Good} (c_i, q)}{\dist^2_{\Good} (c_j, q)} \geq (1+\epsilon)^2.
    \end{equation*}
    Therefore the first item of the lemma is proved.
    
    In the algorithm, we construct $\{r_i\mid i\in [n]\}$ and $u$ through sampling a multiset $I$. To analyze how the data structure samples $I$ in the preprocessing, we define the following random variables $\forall t\in [T], z\in \Good$
    \[P_{t, z} =  \begin{cases} 
      \frac{|q^{(z)} - c_i^{(z)}|^2}{p^{(z)} }, & w.p.\; p^{(z)},\hspace{0.95cm}\text{i.e. if we add $z$ to $I$ at iteration $t$.}\\
      0, & w.p.\; 1-p^{(z)}, \quad \text{i.e. if we do not add $z$ to $I$ at iteration $t$.}
   \end{cases}\]
   \[Q_{t, z} =  \begin{cases} 
      \frac{|q^{(z)} - c_j^{(z)}|^2}{p^{(z)} } , & w.p.\; p^{(z)},\hspace{0.95cm}\text{i.e. if we add $z$ to $I$ at iteration $t$.}\\
      0, & w.p.\; 1-p^{(z)}, \quad\text{i.e. if we do not add $z$ to $I$ at iteration $t$.}
   \end{cases}\]
   Notice that $\sum_{t\in [T],z\in \Good} P_{t,z} = \dist^2_\Good(r_i, u)$ and $\sum_{t\in [T],z\in \Good} Q_{t,z} = \dist^2_\Good(r_j, u)$.  

    Observe that $\forall t\in [T], z\in \Good$, $0 \leq P_{t,z} \leq \left(1+\frac{24}{\epsilon\delta'}\right)^2 \|c_i - c_j\|_2^2$. This is because, if $c_i^{(z)} = c_j^{(z)}$, from the definition of good region we have $q^{(z)} = c_i^{(z)}$, thus $P_{t,z} = 0$; otherwise $c_i^{(z)} \neq c_j^{(z)}$, 
    \begin{align*}
        0 \leq P_{t,z} \leq \frac{|q^{(z)} - c_i^{(z)}|^2}{|c_i^{(z)} - c_j^{(z)}|^2 } \cdot \|c_i - c_j\|_2^2 \leq \left(1+\frac{24}{\epsilon\delta'}\right)^2 \|c_i - c_j\|_2^2,
    \end{align*}
    where the second inequality comes from ``oversampling'': $p^{(z)} \geq \frac{|c_i^{(z)} - c_j^{(z)}|^2}{\|c_i - c_j\|_2^2}$, and the third inequality comes from the definition of good region.
    By a similar argument, we have $0 \leq Q_{t,z}\leq (1+\frac{24}{\epsilon\delta'})^2 \|c_i - c_j\|_2^2$. 

    Notice $\forall t\in [T], \E[\sum_{z\in \Good} P_{t,z}] = \dist^2_{\Good}(c_i, q)$ and $\E[\sum_{z\in \Good} Q_{t,z}] = \dist^2_{\Good}(c_j, q)$. Because 
    \begin{align*}
        \E[\sum_{z\in \Good} P_{t,z}] =& \sum_{z\in \Good} \E[P_{t,z}] \\
        =& \sum_{z\in \Good} p^{(z)}\cdot \left(\frac{q^{(z)} - c_i^{(z)}}{\sqrt{p^{(z)}}}\right)^2 = \sum_{z\in \Good} |q^{(z)} - c_i^{(z)}|^2 = \dist^2_{\Good}(c_i, q).
    \end{align*}
    Since $\forall t\in [T], \E[\sum_{z\in \Good} (P_{t,z} + Q_{t,z})] = \dist^2_{\Good}(c_i, q) + \dist^2_{\Good}(c_j, q) \geq \frac{1}{2} \dist^2_{\Good}(c_i, c_j)$, we have $\E[\sum_{z\in \Good}P_{t,z}] \geq \frac{1}{4} \dist^2_{\Good}(c_i, c_j)$. In the rest of the proof, we use $\E[\sum_{z\in \Good} P_{t,z}]$ without specify which $t\in [T]$, because they have the same value for all $t\in [T]$. 

    Let $0<\epsilon_2 = \epsilon/16 < 1$. Using Chernoff bound and \cref{lemma:good-region-is-large-l2}, we have 
    \begin{align*}
    &\Pr\left[\left|\sum_{t\in [T], z\in \Good} P_{t,z} - \E\left[\sum_{t\in [T], z\in \Good} P_{t,z}\right]\right| \geq \epsilon_2 \E\left[\sum_{t\in [T], z\in \Good} P_{t,z}\right]\right] \\
    \leq & 2\exp\left(-\frac{\epsilon_2^2 T \E[\sum_{z\in \Good} P_{t,z}]}{3(1+\frac{24}{\epsilon\delta'})^2 \|c_i - c_j\|_2^2}\right) \\
    = & 2\exp\left(-\frac{\epsilon_2^2 T }{3(1+\frac{24}{\epsilon\delta'})^2} \frac{\dist^2_{\Good}(c_i, q)}{\dist^2_{\Good}(c_i, c_j)} \frac{\dist^2_{\Good}(c_i, c_j)}{\|c_i - c_j\|_2^2}\right) \\
    \leq & 2\exp\left(-\frac{\epsilon_2^2 T }{3(1+\frac{24}{\epsilon\delta'})^2} \cdot \frac{1}{4} \cdot \frac{1143}{1152}\right) \leq \frac{\delta'}{2n},
    \end{align*}
    by choosing 
    $T = \Theta(\frac{1}{\epsilon^4 \delta^2} \log(n/\delta)) $. 
    
    \noindent \underline{Case 1.} If $\E[\sum_{z\in \Good} Q_{t,z}] \geq \frac{1}{16} \cdot \dist^2_{\Good}(c_i, c_j)$, using Chernoff bound and \cref{lemma:good-region-is-large-l2}, we have $$\Pr\left[\left|\sum_{t\in [T], z\in \Good} Q_{t,z} - \E\left[\sum_{t\in [T], z\in \Good} Q_{t,z}\right]\right| \geq \epsilon_2 \E\left[\sum_{t\in [T], z\in \Good} Q_{t,z}\right]\right] \leq \frac{\delta'}{2n},$$
    by choosing 
    $T = \Theta(\frac{1}{\epsilon^4 \delta^2} \log(n/\delta)) $. 

    By union bound, with probability $1-\delta'/n$, we have 
    \begin{equation*}
        \sum_{t\in [T],z\in \Good} P_{t,z}  \approx_{\frac{\epsilon}{16}} \E[\sum_{t\in [T],z\in \Good} P_{t,z}], \text{ and } \sum_{t\in [T],z\in \Good} Q_{t,z} \approx_{\frac{\epsilon}{16}} \E[\sum_{t\in [T],z\in \Good} Q_{t,z}],
    \end{equation*}
    Therefore the second item of the lemma is proved.

    \noindent \underline{Case 2.} If $\E[\sum_{z\in \Good} Q_{t,z}] < \frac{1}{16} \cdot \dist^2_{\Good}(c_i, c_j)$. Then we have $\E[\sum_{z\in \Good} P_{t,z}] > (\frac12-\frac1{16}) \dist^2_{\Good}(c_i, c_j)$, which gives that $\frac{\dist^2_{\Good} (c_i, q)}{\dist^2_{\Good} (c_j, q)} > 7$. Using Chernoff bound and \cref{lemma:good-region-is-large-l2}, we have
    \begin{align*}
    &\Pr\left[\left|\sum_{t\in [T],z\in \Good} Q_{t,z} - \E[\sum_{t\in [T], z\in \Good} Q_{t,z}]\right| > \frac{1}{16}T \cdot \dist^2_{\Good}(c_i, c_j) \right] \\
    <& 2\exp\left( -\frac{T \cdot \frac{1}{16}\dist^2_{\Good}(c_i, c_j)}{3(1+\frac{24}{\epsilon\delta'})^2 \|c_i - c_j\|_2^2}\right) \\
    \leq & 2\exp\left( -\frac{T  }{48(1+\frac{24}{\epsilon\delta'})^2} \frac{\dist^2_{\Good}(c_i, c_j)}{\|c_i - c_j\|_2^2} \right)\leq \frac{\delta'}{2n}. 
    \end{align*}
    by choosing $T = \Omega(\frac{1}{\epsilon^2 \delta^2} \log(n/\delta)) $.

    By union bound, with probability $1-\delta'/n$, we have 
    \begin{equation*}
        \sum_{t\in [T],z\in \Good} P_{t,z} \approx_{\frac{\epsilon}{16}} \E\left[\sum_{t\in[T],z\in \Good} P_{t,z}\right], \text{ and } \sum_{t\in [T],z\in \Good} Q_{t,z} \leq \frac{T}8 \dist^2_{\Good}(c_i, c_j).
    \end{equation*}
    Thus 
    \begin{equation*}
        \frac{\sum_{t\in [T],z\in \Good} P_{t,z}}{\sum_{t\in [T],z\in \Good} Q_{t,z}} \geq \frac{(1-\frac{\epsilon}{16})(\frac12 - \frac18)\dist^2_{\Good}(c_i, c_j)}{\frac18 \dist^2_{\Good}(c_i, c_j)} \geq 3.
    \end{equation*}
    The third item of the lemma is proved.
\end{proof}

In the following lemma we state that our bounds on the distance estimates are good enough. Concretely, for a point that is not an approximate nearest neighbor, the estimated distance between this point and the query is significantly larger than the estimated distance between the nearest neighbor and the query.
\begin{lemma}\label{lemma:robust-feature-selection-l2}
    For any query point $q \in \R^d$, with probability $1-2\delta'$, it holds for all $i\in [n]$, if $\frac{\|c_i-q\|_1}{\|c_{i^*}-q\|_1} \geq 1+\epsilon$, then $\frac{\|r_i-u\|_2}{\|r_{i^*}-u\|_2} \geq 1+\frac{\epsilon\delta'}{12}$.
\end{lemma}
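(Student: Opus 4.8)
The plan is to mirror the proof of \Cref{lemma:robust-feature-selection-l1}, but carried out throughout in terms of \emph{squared} $\ell_2$-distances, since these (unlike the distances themselves) decompose additively over a partition of the coordinates into good and bad regions. Fix a query $q$ and let $i^*$ be the index of its nearest neighbor. First I would condition on the event of \Cref{claim:Markovl2}, which holds with probability $1-\delta'$ and gives $\tfrac1T\|r_{i^*}-u\|_2^2\le\tfrac1{\delta'}\|c_{i^*}-q\|_2^2$. Then I fix an arbitrary $i\in[n]$ with $\|c_i-q\|_2\ge(1+\epsilon)\|c_{i^*}-q\|_2$ (for the remaining indices there is nothing to prove) and abbreviate $\Good=\Good(c_i,c_{i^*},q)$, $\Bad=\Bad(c_i,c_{i^*},q)$. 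Set $g_1=\dist^2_{\Good}(c_i,q)$, $g_2=\dist^2_{\Good}(c_{i^*},q)$, $b_1=\dist^2_{\Bad}(c_i,q)$, $b_2=\dist^2_{\Bad}(c_{i^*},q)$, together with the rescaled sampled analogues $g_1'=\dist^2_{\Good}(r_i,u)/T$, $g_2'=\dist^2_{\Good}(r_{i^*},u)/T$, $b_1'=\dist^2_{\Bad}(r_i,u)/T$, $b_2'=\dist^2_{\Bad}(r_{i^*},u)/T$. Since the partition of $[d]$ is common to both points, $g_1+b_1=\|c_i-q\|_2^2$, $g_2+b_2=\|c_{i^*}-q\|_2^2$, $g_1'+b_1'=\tfrac1T\|r_i-u\|_2^2$ and $g_2'+b_2'=\tfrac1T\|r_{i^*}-u\|_2^2$.

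The key point is that \Cref{claim:robust-estimate} is a statement purely about nonnegative reals, so it suffices to verify its four hypotheses for the quantities above. The first, $\tfrac{g_1+b_1}{g_2+b_2}\ge 1+\epsilon$, is immediate because the left side equals $\|c_i-q\|_2^2/\|c_{i^*}-q\|_2^2\ge(1+\epsilon)^2\ge1+\epsilon$. The third, $g_2'+b_2'\le\tfrac1{\delta'}(g_2+b_2)$, is exactly the event of \Cref{claim:Markovl2} on which we conditioned. The fourth hypothesis (the dichotomy that the good region is well estimated) is supplied verbatim by \Cref{lemma:chernoff_for_good_l2} applied to the pair $(c_i,c_{i^*})$, which holds with probability $1-\delta'/n$: its second item yields $g_1'\approx_{\epsilon/16}g_1$ and $g_2'\approx_{\epsilon/16}g_2$, while its third item yields $g_1/g_2>7$, $g_1'\approx_{\epsilon/16}g_1$ and $g_1'/g_2'\ge3$. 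The only hypothesis needing a short computation is the second, $1-\tfrac{\epsilon\delta'}8\le b_1/b_2\le1+\tfrac{\epsilon\delta'}8$ and likewise for $b_1'/b_2'$: for each $z\in\Bad$ the value $q^{(z)}$ lies on one side of the interval $[\min\{c_i^{(z)},c_{i^*}^{(z)}\},\max\{c_i^{(z)},c_{i^*}^{(z)}\}]$ and at distance more than $\tfrac{24}{\epsilon\delta'}|c_i^{(z)}-c_{i^*}^{(z)}|$ from it, so the two per-coordinate distances differ by at most $|c_i^{(z)}-c_{i^*}^{(z)}|$ while both exceed $\tfrac{24}{\epsilon\delta'}|c_i^{(z)}-c_{i^*}^{(z)}|$; squaring, the per-coordinate ratio lies in $((1-\tfrac{\epsilon\delta'}{24})^2,(1+\tfrac{\epsilon\delta'}{24})^2)\subseteq(1-\tfrac{\epsilon\delta'}8,1+\tfrac{\epsilon\delta'}8)$, and the mediant inequality lifts this to the sums defining $b_1/b_2$ and, after the coordinatewise cancellation of the rescaling factors $1/p^{(z)}$, to $b_1'/b_2'$ as well.

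With all four hypotheses verified, \Cref{claim:robust-estimate} gives $\tfrac{g_1'+b_1'}{g_2'+b_2'}\ge1+\tfrac{\epsilon\delta'}4$, that is $\|r_i-u\|_2^2/\|r_{i^*}-u\|_2^2\ge1+\tfrac{\epsilon\delta'}4$; taking square roots and using $\sqrt{1+x}\ge1+x/3$ for $x\in[0,3]$ yields $\|r_i-u\|_2/\|r_{i^*}-u\|_2\ge1+\tfrac{\epsilon\delta'}{12}$ for this fixed $i$. Finally I would take a union bound over the (at most $n$) relevant indices: the event of \Cref{claim:Markovl2} fails with probability $\delta'$, and over those indices the events of \Cref{lemma:chernoff_for_good_l2} fail with total probability at most $n\cdot\delta'/n=\delta'$, so with probability $1-2\delta'$ the conclusion holds simultaneously for every $i$ with $\|c_i-q\|_2\ge(1+\epsilon)\|c_{i^*}-q\|_2$.

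I do not expect a genuine obstacle: the argument is essentially a transcription of the $\ell_1$ case once one commits to squared distances. The only thing to track carefully is the bookkeeping of constants — the factor $\tfrac{24}{\epsilon\delta'}$ in the $\ell_2$ definition of the good region is chosen precisely so that squaring the per-coordinate bad-region ratio still lands inside the window $1\pm\tfrac{\epsilon\delta'}8$ required by \Cref{claim:robust-estimate}, and the concluding square root is what converts the $1+\tfrac{\epsilon\delta'}4$ of that claim into the $1+\tfrac{\epsilon\delta'}{12}$ in the statement. One should also confirm that \Cref{claim:robust-estimate} is genuinely used as a black box over the reals (it is, as stated), so that no $\ell_1$-specific reasoning is imported.
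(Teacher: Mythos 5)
Your proof is correct and follows essentially the same route as the paper's: condition on the Markov event of \Cref{claim:Markovl2}, work with squared distances split over the good/bad regions, verify the four hypotheses of \Cref{claim:robust-estimate} (with the bad-region ratio bound coming from the $\tfrac{24}{\epsilon\delta'}$ margin and \Cref{lemma:chernoff_for_good_l2} supplying the good-region dichotomy), and take a square root at the end. Your added justifications — the mediant-inequality lift of the per-coordinate bad-region ratio and the explicit $\sqrt{1+x}\ge 1+x/3$ step turning $1+\tfrac{\epsilon\delta'}{4}$ into $1+\tfrac{\epsilon\delta'}{12}$ — are details the paper leaves implicit, and they check out.
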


\begin{proof}
    We know that $\|r_{i^*} - u\|_2^2 \leq \frac{T}{\delta'} \|c_{i^*} - q\|_2^2$ holds with probability $1-\delta'$ by \Cref{claim:Markovl2}.
    Thus, it is sufficient to show the statement of the lemma holds under this assumption with probability $1-\delta'$.

    Fix $i\in [n]$. Since we only have 2 points from $C$ that are $c_i$ and $c_{i^*}$, for simplicity we use $\Good$ for $\Good(c_i, c_{i^*}, q)$ and $\Bad$ for $\Bad(c_i, c_{i^*}, q)$.

    Denote $b_1 = \dist^2_\Bad(c_i, q)$, $b_2 = \dist^2_\Bad(c_{i^*}, q)$, $g_1 = \dist^2_\Good(c_i, q)$, $g_2 = \dist^2_\Good(c_{i^*}, q)$. Also denote $b_1' = \dist^2_\Bad(r_i, u)/T$, $b_2' = \dist^2_\Bad(r_{i^*}, u)/T$, $g_1' = \dist^2_\Good(r_i, u)/T$, $g_2' = \dist^2_\Good(r_{i^*}, u)/T$.

    We have $\frac{\|c_i - q\|_2}{\|c_{i^*} - q\|_2} \geq 1+\epsilon$, which is $\frac{g_1 + b_1}{g_2 + b_2} \geq (1+\epsilon)^2 \geq 1+\epsilon$. We are given that $\|r_{i^*} - u\|_2^2 \leq T/\delta' \|c_{i^*} - q\|_2^2$, which is $g_2' + b_2' \leq \frac{1}{\delta'}(g_2 + b_2)$. From the definition of bad region, we know that $1 - \frac{\epsilon\delta'}{8} \leq (1 - \frac{\epsilon\delta'}{24})^2 \leq \frac{b_1}{b_2} \leq (1 + \frac{\epsilon\delta'}{24})^2 \leq 1 + \frac{\epsilon\delta'}{8}$. Also $1 - \frac{\epsilon\delta'}{8} \leq \frac{b_1'}{b_2'} \leq 1 + \frac{\epsilon\delta'}{8}$, because we select and rescale coordinates.

    Applying \cref{lemma:chernoff_for_good_l2} on $c_i$ and $c_{i^*}$ gives that with probability $1-\delta'/n$,
    \begin{itemize}
        \item either $g_1' \approx_{\frac{\epsilon}{16}} g_1$, $g_2' \approx_{\frac{\epsilon}{16}} g_2$,
        \item or $\frac{g_1}{g_2} > 7$, $g_1' \approx_{\frac{\epsilon}{16}} g_1$, $\frac{g_1'}{g_2'} \geq 3$.
    \end{itemize}
    
    Applying \cref{claim:robust-estimate}, we prove that $\frac{g_1' + b_1'}{g_2' + b_2'} \geq 1 + \frac{\epsilon\delta'}{4}$, that is $\frac{\|r_i - u\|^2_2}{\|r_{i^*} - u\|^2_2} \geq 1 + \frac{\epsilon\delta'}{4}$. Thus we have $\frac{\|r_i - u\|_2}{\|r_{i^*} - u\|_2} \geq 1 + \frac{\epsilon\delta'}{12}$.
\end{proof}

\subsubsection{\texorpdfstring{$\ell_2$}{l2} Dimension Reduction Preserves the Distance Ratio}

In this section, we bound the distortion introduced by the dimension reduction that our data structure performs. 

\begin{theorem}[distributional Johnson-Lindenstrauss Lemma]\label{thm:distJL}
    Consider 2 points $a, b \in \R^d$ and parameters $0<\epsilon, \delta < 1$. We sample a random matrix $M\in \R^{m \times d}$, where $m = O(\log(1/\delta)/(\epsilon^2))$ and each entry of $M$ is sampled uniformly from $\{\pm \sqrt{1/m}\}$. Then with probability $1-\delta$, $\|Ma- Mb\|_2 \approx_\epsilon \|a-b\|_2$.
\end{theorem}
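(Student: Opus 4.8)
The plan is to reduce the two-point statement to a one-vector concentration statement and then run a Chernoff argument on the squared length. Put $w=a-b$; we may assume $w\neq 0$, as otherwise both sides are $0$. Since $M$ acts linearly, $\|Ma-Mb\|_2=\|Mw\|_2=\|w\|_2\cdot\|Mv\|_2$ where $v=w/\|w\|_2$ is a unit vector, so it suffices to prove that for an arbitrary fixed unit vector $v$ we have $\Pr\big[\,\|Mv\|_2\approx_\epsilon 1\,\big]\ge 1-\delta$. Since $\sqrt{1-\epsilon}\ge 1-\epsilon$ and $\sqrt{1+\epsilon}\le 1+\epsilon$ for $\epsilon\in(0,1)$, this in turn follows once we show $\Pr\big[\,\big|\|Mv\|_2^2-1\big|>\epsilon\,\big]\le\delta$.

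First I would decompose $\|Mv\|_2^2$ into i.i.d.\ contributions. Writing the entries of $M$ as $M_{ij}=\sigma_{ij}/\sqrt m$ with $\sigma_{ij}$ independent and uniform on $\{\pm1\}$, and setting $Y_i=\sum_j\sigma_{ij}v_j$, we get $\|Mv\|_2^2=\tfrac1m\sum_{i=1}^m Y_i^2$, where the $Y_i$ are i.i.d.\ Rademacher sums with $\E[Y_i^2]=\|v\|_2^2=1$ (so the mean is exactly $1$) and $\E[Y_i^4]=3-2\sum_j v_j^4\le 3$. The task is now a two-sided tail bound for $\tfrac1m\sum_i Y_i^2$ around $1$.

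For the \emph{lower tail} I would use $e^{-x}\le 1-x+\tfrac{x^2}{2}$ ($x\ge0$) to get $\E[e^{-hY_i^2}]\le 1-h+\tfrac32 h^2\le e^{-h+\frac32 h^2}$, plug this into $\Pr[\sum_i Y_i^2\le m(1-\epsilon)]\le e^{hm(1-\epsilon)}\prod_i\E[e^{-hY_i^2}]$, and optimize at $h=\epsilon/3$ to obtain a bound $\exp(-m\epsilon^2/6)$. For the \emph{upper tail} I would use the classical moment comparison for Rademacher sums, $\E[Y_i^{2k}]\le(2k-1)!!=\E[g^{2k}]$ for all $k\ge0$ (Khintchine's inequality with sharp constants, equivalently Achlioptas' lemma), which via $\sum_{k}\binom{2k}{k}(h/2)^k=(1-2h)^{-1/2}$ yields $\E[e^{hY_i^2}]\le(1-2h)^{-1/2}$ for $0\le h<1/2$; then $\Pr[\sum_i Y_i^2\ge m(1+\epsilon)]\le\inf_{0<h<1/2}e^{-hm(1+\epsilon)}(1-2h)^{-m/2}$, and taking $h=\frac{\epsilon}{2(1+\epsilon)}$ makes the exponent $-\frac m2(\epsilon-\ln(1+\epsilon))\le-\frac{m\epsilon^2}{12}$ for $\epsilon\in(0,1)$. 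A union bound over the two tails gives failure probability at most $2e^{-m\epsilon^2/12}$, so $m=\Theta(\log(1/\delta)/\epsilon^2)$ with a sufficiently large constant makes this $\le\delta$; applying the whole argument with $\epsilon/3$ in place of $\epsilon$ and absorbing constants gives the stated $\approx_\epsilon$ bound.

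The only nonroutine ingredient, and the step I expect to be the real obstacle, is the upper-tail estimate $\E[e^{hY_i^2}]\le(1-2h)^{-1/2}$. Here $Y_i^2$ is merely sub-exponential rather than bounded — it can be as large as $d$ (near $v=(1/\sqrt d,\dots,1/\sqrt d)$) — so a bounded-variable Chernoff bound is unavailable, and one genuinely needs that the even moments of a unit-variance $\pm1$ sum are dominated by those of a standard Gaussian. I would cite this fact (the even-exponent case of Khintchine's inequality, or Lemma~5.1 of Achlioptas); the remainder is routine calculus and the standard exponential-moment method.
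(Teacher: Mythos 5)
The paper states \Cref{thm:distJL} as a classical result and gives no proof of it, so there is nothing paper-internal to compare against; I can only assess your argument on its own merits, and it is correct. Your route is the standard Achlioptas-style one: reduce to a fixed unit vector $v$, write $\|Mv\|_2^2 = \tfrac1m\sum_i Y_i^2$ with $Y_i$ unit-variance Rademacher sums, and run the exponential-moment method on the squared norm. The lower-tail step via $e^{-x}\le 1-x+\tfrac{x^2}{2}$ and $\E[Y_i^4]=3-2\sum_j v_j^4\le 3$ is sound, and the upper-tail step via the Khintchine even-moment bound $\E[Y_i^{2k}]\le(2k-1)!!$ (which, as you say, is a genuine ingredient since $Y_i^2$ is unbounded) together with $\sum_k\binom{2k}{k}(h/2)^k=(1-2h)^{-1/2}$ gives exactly the sub-exponential tail you need. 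The only thing worth making explicit, if you want a fully self-contained write-up, is a one-line justification of $\E[Y_i^{2k}]\le\E[g^{2k}]$: expand both by the multinomial theorem and note that for each even multi-index $\alpha$ the Rademacher contribution $\prod_j\E[\sigma_j^{\alpha_j}]=1$ is bounded by the Gaussian contribution $\prod_j(\alpha_j-1)!!\ge1$; citing Achlioptas' lemma is of course fine too. Otherwise the proposal is complete and the constants (e.g.\ rescaling $\epsilon\mapsto\epsilon/3$ to pass from the squared norm to the norm) are handled correctly.
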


\begin{claim}
    Let $q\in \R^n$ be a query. Let $i^*\in [n]$ be the index of its nearest neighbor. \Cref{alg:sample-reuse-l2} outputs an index $i$ such that
    \begin{align*}
        \frac{\|c_i - q\|_2}{\|c_{i^*} - q\|_2} \leq 1+\epsilon
    \end{align*}
    with probability $1-3\delta'$.
\end{claim}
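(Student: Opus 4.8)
The plan is to mirror the corresponding $\ell_1$ argument (the proof immediately following \cref{thm:indyk06}), replacing Indyk's $\ell_1$ embedding by the distributional Johnson--Lindenstrauss lemma \cref{thm:distJL} and \cref{lemma:robust-feature-selection-l1} by its $\ell_2$ analogue \cref{lemma:robust-feature-selection-l2}. First I would apply \cref{thm:distJL} with distortion parameter $\epsilon\delta'/100$ to each of the $n$ vectors $r_i-u$; the number of rows $m = O(\log(n/\delta)/(\epsilon^2\delta^2))$ chosen in \cref{alg:sample-reuse-l2} is exactly what this requires. For each $i$, $\|Mr_i-Mu\|_2 \approx_{\epsilon\delta'/100} \|r_i-u\|_2$ with probability $1-\delta'/(n+1)$, so a union bound over $i\in[n]$ shows this holds simultaneously for all $i$ with probability at least $1-\delta'$.

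Next I would condition on the event of \cref{lemma:robust-feature-selection-l2}, which holds with probability $1-2\delta'$: for every index $i$ with $\|c_i-q\|_2 \ge (1+\epsilon)\|c_{i^*}-q\|_2$ one has $\|r_i-u\|_2 \ge (1+\epsilon\delta'/12)\|r_{i^*}-u\|_2$. Combining the two events,
\[
  \frac{\|Mr_i-Mu\|_2}{\|Mr_{i^*}-Mu\|_2}
  \;\ge\; \frac{1-\epsilon\delta'/100}{1+\epsilon\delta'/100}\left(1+\frac{\epsilon\delta'}{12}\right)
  \;>\; 1,
\]
because the multiplicative gain $\epsilon\delta'/12$ comfortably dominates the two-sided loss of order $\epsilon\delta'/100$ coming from the embedding. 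Hence no such $i$ can be a minimizer of $\|Mr_i-Mu\|_2$, so the index $\hat i = \argmin_{i}\|Mr_i-Mu\|_2$ returned by \cref{alg:sample-reuse-l2} must satisfy $\|c_{\hat i}-q\|_2 < (1+\epsilon)\|c_{i^*}-q\|_2$, which is the claimed bound. A final union bound over the two conditioning events yields the stated success probability $1-3\delta'$.

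The one step that needs an actual check rather than pure bookkeeping is the displayed inequality: it is what pins down the JL distortion parameter (and thereby the extra $1/\delta'^2$ factor hidden in $m$), since we must ensure that the $1+\epsilon\delta'/12$ gap from \cref{lemma:robust-feature-selection-l2} genuinely survives the $(1\pm\epsilon\delta'/100)$ distortion of the JL map. Everything else carries over from the $\ell_1$ proof essentially verbatim; in particular \cref{lemma:robust-feature-selection-l2} already absorbs the Markov-type event of \cref{claim:Markovl2} inside its $1-2\delta'$ guarantee, so it is not double-counted here.
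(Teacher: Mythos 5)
Your proposal is correct and matches the paper's own proof essentially line for line: the same application of \cref{thm:distJL} with distortion $\epsilon\delta'/100$ and a union bound over the $n$ points, combined with \cref{lemma:robust-feature-selection-l2}, and the same final arithmetic showing the $1+\epsilon\delta'/12$ gap survives the JL distortion. Your closing remark that \cref{claim:Markovl2} is already accounted for inside the $1-2\delta'$ of \cref{lemma:robust-feature-selection-l2} is also consistent with how the paper budgets its failure probabilities.
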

\begin{proof}
Applying \cref{thm:distJL} with $m = O(\log(n/\delta')/(\epsilon^2 \delta'^2))$, given any $v \in R$, it holds with probability $1-\delta'/(n+1)$, that $\|Mv- Mu\|_2 \in [1-\epsilon\delta'/100, 1+\epsilon\delta'/100] \|v- u\|_1$. By union bound, with probability $1-\delta'$, it holds for all $v\in R$.
By \Cref{lemma:robust-feature-selection-l2} it holds with probability $1-2\delta'$ that for all $i\in [n]$, if $\frac{\|c_i - q\|_2}{\|c_{i^*} - q\|_2} \geq 1+\epsilon$, then $\frac{\|r_i - u\|_1}{\|r_{i^*} - u\|_1} \geq 1+\epsilon\delta'/12$, thus $\frac{\|Mr_i - Mu\|_2}{\|Mr_{i^*}- Mu\|_2} > 1$ with probability $1-3\delta'$, i.e. the algorithm will not output $i$ as the index of the approximate nearest neighbor.
\end{proof}
\begin{remark}
    Using approximate nearest neighbour technqiues, e.g.\ \cite{indyk-wagner18:apx-nn-limited-space}, can further reduce the space at the cost of slower query, i.e.\ reducing the power of $\frac{\log(n)}{\epsilon\delta}$ in space complexity.
\end{remark}

\subsubsection{Space and Query Complexity}
Next, we argue the space and query complexity.
The main result in this section is that for any set $C$ of $n$ centers the sum of probabilities $\sum_{b\in [d]} p^{(b)}$ is upper bounded by $n$. The techniques used to obtain the bound here is linear algebraic in contrast to the more combinatorial argument given in the $\ell_1$ case. 

\begin{claim}
    \[
    1\leq \sum_{b \in [d]} p^{(b)} = \sum_{b \in [d]} \max_{i\neq j}\frac{|c_i^{(b)} - c_j^{(b)}|^2}{\|c_i - c_j\|_2^2} \leq n.
    \] 
\end{claim}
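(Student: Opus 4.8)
The plan is to handle the lower bound by a one-line observation and the upper bound by a projection / singular-value argument. For the lower bound, fix any pair $i_1\neq j_1$ with $\|c_{i_1}-c_{j_1}\|_2\neq 0$ (such a pair exists unless all input points coincide, a degenerate case we may ignore exactly as in the $\ell_1$ analysis). Then $\sum_{b\in[d]}p^{(b)}\geq\sum_{b\in[d]}\frac{(c_{i_1}^{(b)}-c_{j_1}^{(b)})^2}{\|c_{i_1}-c_{j_1}\|_2^2}=1$.

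For the upper bound, let $V=\mathrm{span}\{c_i-c_j:i\neq j\}\subseteq\R^d$. Since $V$ is spanned by the $n-1$ vectors $c_2-c_1,\dots,c_n-c_1$, we have $\dim V\leq n-1$. Let $P\in\R^{d\times d}$ be the orthogonal projection onto $V$; it is symmetric and idempotent, so $P^{\transpose}=P$, $P^2=P$, and $\mathrm{tr}(P)=\dim V$. First I would show $p^{(b)}\leq\|Pe_b\|_2^2$ for every coordinate $b$, where $e_b$ denotes the $b$-th standard basis vector. Indeed, for any pair $i\neq j$ with $v:=c_i-c_j\neq 0$ we have $v\in V$, hence $Pv=v$, and therefore $v^{(b)}=e_b^{\transpose}v=e_b^{\transpose}Pv=(Pe_b)^{\transpose}v$; Cauchy--Schwarz now gives $(v^{(b)})^2\leq\|Pe_b\|_2^2\,\|v\|_2^2$, i.e.\ $\tfrac{(v^{(b)})^2}{\|v\|_2^2}\leq\|Pe_b\|_2^2$. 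Taking the maximum over all such pairs yields $p^{(b)}\leq\|Pe_b\|_2^2$.

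It then remains to sum over coordinates: $\sum_{b\in[d]}p^{(b)}\leq\sum_{b\in[d]}\|Pe_b\|_2^2=\sum_{b\in[d]}e_b^{\transpose}P^{\transpose}Pe_b=\sum_{b\in[d]}e_b^{\transpose}Pe_b=\mathrm{tr}(P)=\dim V\leq n-1\leq n$. This is exactly the ``latent connection to the singular-value decomposition'' alluded to in the overview: writing the centered data matrix $A\in\R^{n\times d}$ with rows $c_i-\tfrac1n\sum_k c_k$ and thin SVD $A=U\Sigma W^{\transpose}$, one has $P=WW^{\transpose}$, and the bounds $\|Pe_b\|_2^2=\|W_{b,\cdot}\|_2^2$ are precisely the rank leverage scores of the coordinates that underlie the feature-selection sampling of Boutsidis et al.; their sum $\|W\|_F^2=\mathrm{rank}(A)$ is again at most $n-1$.

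I do not anticipate a serious obstacle: the argument is short once one adopts the right linear-algebraic viewpoint, namely identifying the sampling probabilities with (upper bounds on) the squared norms of projections of standard basis vectors onto the low-dimensional span of the pairwise differences, whose trace is its dimension. The only points needing a line of care are the degenerate ``all points equal'' case (where the statement is vacuous) and the observation that coordinates on which the maximizing pair agrees contribute $0$ to the max, so the Cauchy--Schwarz step is never invoked for a zero vector.
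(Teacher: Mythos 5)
Your proof is correct, and it uses the same core mechanism as the paper's: bound $p^{(b)}$ by a leverage-score-type quantity via Cauchy--Schwarz, then sum to the trace of a projection. The difference is in the choice of subspace. The paper takes the SVD of the matrix $C$ whose rows are the $c_i$ (assuming, for simplicity of exposition, that the points are distinct and $\mathrm{rank}(C)=n$), shows $p^{(b)}\le\sum_{k\in[n]}V_{b,k}^2$, and sums to $n$; unwinding notation, this is exactly $\|P_{\mathrm{row}(C)}\,e_b\|_2^2$, i.e.\ the squared norm of the projection of $e_b$ onto the row space of $C$. You instead project onto $\mathrm{span}\{c_i-c_j\}$, which is always a subspace of the row space of dimension at most $n-1$ and contains every difference $c_i-c_j$ exactly. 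This buys you two things for free: you do not need to assume anything about the rank of $C$ or distinctness of the points, and you get the marginally tighter bound $n-1$. Conceptually the two arguments are the same Cauchy--Schwarz inequality $v^{(b)}=(Pe_b)^{\transpose}v\le\|Pe_b\|_2\|v\|_2$ read in two coordinate systems; your phrasing is cleaner and is the invariant statement underlying the paper's SVD computation.
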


\begin{proof}
    The argument for the lower bound on $\sum_{b\in [d]} p^{(b)}$ is the following. Fix arbitrary $i_1, j_1 \in [n]$, such that $\|c_{i_1} - c_{j_1}\|_2 \neq 0$. $\sum_{b\in [d]} p^{(b)} \geq \sum_{b\in [d]} \frac{\left|c_{i_1}^{(b)} - c_{j_1}^{(b)}\right|^2}{\|c_{i_1} - c_{j_1}\|_2^2} = 1$.

    Next, we show the upper bound on $\sum_{b\in [d]} p^{(b)}$. 
Let $C \in \R^{n\times d}$ denote a matrix of the $n$ input points. Assume for simplicity that there is no repetition of points and  that $\texttt{rank}(C) = n$. Let the singular value decomposition (SVD) of $C$ be $U\Sigma V^\transpose$, where $U \in \R^{n \times n}, \Sigma \in \R^{n \times d}, V \in \R^{d \times d}$. We use $u_i$ (and $v_i$) to denote the $i$-th column of $U$ (and $V$), and $\sigma_i$ to denote the $i$-th singular value.
    
    We prove that for each $b\in [d]$, 
    \begin{equation}\label{eq:math-program-l2-b}
        p^{(b)} \leq \sum_{k\in[n] } (V_{b,k})^2.
    \end{equation}
    This would imply that $\sum_{b \in [d]} p^{(b)} \leq \sum_{b \in [d], k\in[n]} (V_{b,k})^2 = \sum_{i\in [n]} \|v_k\|_2^2 = n,$ which proves our claim. 

    To prove \eqref{eq:math-program-l2-b}, it suffices to prove that for each $b\in[d], i,j\in [n], i\neq j$:
    \begin{equation}\label{eq:math-program-l2-bij}
        \frac{|c_i^{(b)} - c_j^{(b)}|^2}{\|c_i - c_j\|_2^2} \leq \sum_{k\in[n] } (V_{b,k})^2.
    \end{equation}

    We know from the SVD decomposition that 
    \[
    C = \sigma_1 u_1 v_1^\transpose + \sigma_2 u_2 v_2^\transpose + \cdots + \sigma_n u_n v_n^\transpose.
    \]
    Now, for each $i\in[n]$
    \[
    c_i = C_{i, \cdot} = \sigma_1 U_{i,1} v_1^\transpose + \sigma_2 U_{i,2} v_2^\transpose + \cdots + \sigma_n U_{i,n} v_n^\transpose.
    \]
    The above implies that:
    \begin{align*}
        |c_i^{(b)} - c_j^{(b)}|^2 &= ((\sigma_1 U_{i,1} V_{1,b} + \cdots + \sigma_n U_{i,n} V_{n,b}) - (\sigma_1 U_{j,1} V_{1,b} + \cdots + \sigma_n U_{j,n} V_{n,b}))^2 \\
        &= (\sigma_1 V_{1,b} (U_{i,1} - U_{j,1}) + \cdots + \sigma_n V_{n,b} (U_{i,n} - U_{j,n}))^2
    \end{align*}
    \begin{align*}
        \|c_i - c_j\|_2^2 &= \|C_{i,\cdot} - C_{j, \cdot}\|_2^2 \\
        &= \|\sigma_1 (U_{i,1} - U_{j,1})v_1^\transpose + \sigma_2 (U_{i,2} - U_{j,2})v_2^\transpose + \cdots + \sigma_n (U_{i,n}-U_{j,n}) v_n^\transpose \|_2^2  \\
        &= (\sigma_1 (U_{i,1} - U_{j,1}))^2 + (\sigma_2 (U_{i,2} - U_{j,2}))^2 + \cdots + (\sigma_n (U_{i,n}-U_{j,n}))^2 
    \end{align*}
    The last equality follows from the fact that $V$ is orthonormal.
    
    Finally, using Cauchy-Schwarz inequality, 
        \begin{align*}
            \left((\sigma_1 (U_{i,1} - U_{j,1}))^2 + \cdots + (\sigma_n (U_{i,n}-U_{j,n}))^2\right) \left(\sum_{k\in[n] } (V_{b,k})^2\right) \\
            \geq (\sigma_1 V_{1,b} (U_{i,1} - U_{j,1}) + \cdots + \sigma_n V_{n,b} (U_{i,n} - U_{j,n}))^2
        \end{align*}

        which proves \eqref{eq:math-program-l2-bij}.
\end{proof}
The proof of the following claim is analogous to the proof of~\cref{clm:I-size-upperbound-l1} and is omitted for succinctness.
\begin{claim}\label{clm:I-size-upperbound-l2}
    With probability $1-\delta'$, $|I| \leq 2Tn$.    
\end{claim}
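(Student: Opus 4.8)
The plan is to follow the template of the proof of \cref{clm:I-size-upperbound-l1} almost verbatim, substituting the $\ell_2$ sampling probabilities and invoking the $\ell_2$ version of \cref{lemma:math-program} that was just established. First I would define, for each $t\in[T]$ and $z\in[d]$, the indicator random variable $X_{t,z}$ for the event that coordinate $z$ is added to $I$ in iteration $t$; by construction these are independent across all $(t,z)$ and $\Pr[X_{t,z}=1]=p^{(z)}$, where now $p^{(z)}=\max_{i\neq j}|c_i^{(z)}-c_j^{(z)}|^2/\|c_i-c_j\|_2^2$. Then $|I|=\sum_{t\in[T],z\in[d]}X_{t,z}$, so by linearity $\E[|I|]=T\sum_{z\in[d]}p^{(z)}$.

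Next I would apply the $\ell_2$ analogue of \cref{lemma:math-program} (the claim proved immediately above via the SVD/Cauchy–Schwarz argument), which gives $1\le \sum_{z\in[d]}p^{(z)}\le n$. Hence $T\le \E[|I|]\le Tn$. I would then apply the multiplicative Chernoff upper tail (the first bound in the Chernoff theorem, with $c=1$ and $\delta=1$) to the sum of the independent indicators $X_{t,z}$ to conclude
\begin{equation*}
\Pr\bigl[|I|\ge 2\E[|I|]\bigr]\le \exp\bigl(-\E[|I|]/3\bigr)\le \exp(-T/3)\le \delta/4,
\end{equation*}
using $T=\Theta(\log(n/\delta)/(\epsilon^4\delta^2))$ so that $T/3\ge \ln(4/\delta)$. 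Conditioning on the complementary event, $|I|\le 2\E[|I|]\le 2Tn$, which is exactly the statement with $\delta'=\delta/4$.

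The argument is essentially identical to the $\ell_1$ case; the only genuinely new ingredient is the upper bound $\sum_{z\in[d]}p^{(z)}\le n$, but that has already been proved separately for $\ell_2$ in the preceding claim, so here it is simply cited. There is no real obstacle: the main thing to be careful about is that the value of $T$ differs from the $\ell_1$ case (it is $\Theta(\log(n/\delta)/(\epsilon^4\delta^2))$ rather than $\Theta(\log(n/\delta)/(\epsilon^3\delta^2))$), but since $\epsilon,\delta<1$ this only makes $T$ larger, so the Chernoff bound still yields failure probability at most $\delta/4$. Because the proof is a word-for-word transcription of the $\ell_1$ argument with these substitutions, it is reasonable to state the claim and remark that the proof is analogous to that of \cref{clm:I-size-upperbound-l1}, which is what the excerpt does.
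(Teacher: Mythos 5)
Your proof is correct and is exactly the argument the paper intends: the paper omits the proof of this claim, stating it is analogous to \cref{clm:I-size-upperbound-l1}, and your write-up is precisely that analogue with the $\ell_2$ sampling probabilities, the SVD-based bound $\sum_{z\in[d]}p^{(z)}\le n$, and the observation that the larger value of $T$ only helps the Chernoff tail. Nothing is missing.
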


\begin{lemma}
    With prob. $1-\delta'$, the space complexity of \cref{alg:sample-reuse-l2} is $O(\frac{n\log^2(n/\delta)\log(d)}{\epsilon^6 \delta^4})$ wordsize, and the number of coordinates that we query is $O(\frac{n\log(n/\delta)}{\epsilon^4 \delta^2})$. 
\end{lemma}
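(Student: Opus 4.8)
The plan is to mirror the space/query analysis carried out for the $\ell_1$ data structure in \cref{sec:l1-space-query-analysis}, now feeding in the linear-algebraic bound $\sum_{b\in[d]} p^{(b)} \leq n$ established just above for the $\ell_2$ sampling probabilities. First I would condition on the event of \cref{clm:I-size-upperbound-l2}, which holds with probability $1-\delta'$ and guarantees $|I| \leq 2Tn$. Since $T = O(\log(n/\delta)/(\epsilon^4\delta^2))$, this yields $|I| = O(n\log(n/\delta)/(\epsilon^4\delta^2))$. The query phase reads $q^{(b)}$ for each $b\in I$ and nothing else, so the number of queried coordinates is exactly $|I|$, which already gives the claimed query bound.

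Next I would tally the four stored objects. The multiset $I$ takes $O(|I|\log d)$ words (each element is an index in $[d]$); the scaling factors $\{p^{(b)}\mid b\in I\}$ take $O(|I|)$ words; the Johnson--Lindenstrauss matrix $M\in\R^{[m]\times I}$ takes $O(m|I|)$ words; and $M(R)=\{Mr_i\mid i\in[n]\}$ takes $O(mn)$ words. With $m = O(\log(n/\delta)/(\epsilon^2\delta^2))$ we get $m|I| = O(n\log^2(n/\delta)/(\epsilon^6\delta^4))$, $|I|\log d = O(n\log(n/\delta)\log d/(\epsilon^4\delta^2))$, and $mn = O(n\log(n/\delta)/(\epsilon^2\delta^2))$. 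Absorbing a factor $\log d \geq 1$, the $m|I|$ term dominates, so the total is $O(n\log^2(n/\delta)\log d/(\epsilon^6\delta^4))$ words, matching the statement.

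The only substantive ingredient is \cref{clm:I-size-upperbound-l2}, whose proof is the same Chernoff argument as \cref{clm:I-size-upperbound-l1}: one writes $|I| = \sum_{t\in[T],z\in[d]} X_{t,z}$ with $X_{t,z}$ the indicator that coordinate $z$ enters $I$ in round $t$, notes $T \leq \E[|I|] = T\sum_b p^{(b)} \leq Tn$ using the SVD/Cauchy--Schwarz bound, and applies the upper-tail Chernoff bound with deviation equal to the mean, so that the failure probability is $\exp(-\E[|I|]/3) \leq \exp(-T/3) \leq \delta'$. Everything else is bookkeeping over the sizes of the stored components, so I do not anticipate any real obstacle beyond correctly carrying the $\poly(1/(\epsilon\delta))$ factors through the product $m\cdot|I|$.

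\begin{proof}
    By \cref{clm:I-size-upperbound-l2}, with probability $1-\delta'$ we have $|I| \leq 2Tn = O(n\log(n/\delta)/(\epsilon^4\delta^2))$. The query phase probes $q$ exactly on the coordinates in $I$, so the number of queried coordinates is $|I| = O(n\log(n/\delta)/(\epsilon^4\delta^2))$. For the space, the matrix $M$ uses $O(m|I|)$ words, $M(R)$ uses $O(mn)$ words, $I$ uses $O(|I|\log d)$ words, and $\{p^{(b)}\mid b\in I\}$ uses $O(|I|)$ words. With $m = O(\log(n/\delta)/(\epsilon^2\delta^2))$, the total is $O(m|I| + mn + |I|\log d) = O(n\log^2(n/\delta)\log d/(\epsilon^6\delta^4))$ words.
\end{proof}
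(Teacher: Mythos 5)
Your proof takes the same route as the paper: invoke \cref{clm:I-size-upperbound-l2} to bound $|I| \leq 2Tn$ with probability $1-\delta'$, read off the query count as $|I|$, and sum the word sizes of $M$, $M(R)$, $I$, and $\{p^{(b)}\}$ using $m = O(\log(n/\delta)/(\epsilon^2\delta^2))$ and $T = O(\log(n/\delta)/(\epsilon^4\delta^2))$. The bookkeeping matches and the final bound is correct.
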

\begin{proof}
    By \cref{clm:I-size-upperbound-l2}, we have that the number of coordinates of $q$ that we query is $|I| = O(Tn) = O(\frac{n\log(n/\delta)}{\epsilon^4 \delta^2})$. Next we analyze the space complexity of the data structure. Matrix $M$ is of wordsize $O(m|I|)$, $M(R)$ is of wordsize $O(mn)$, $I$ is of wordsize $O(|I| \log(d))$. Thus the space complexity of the our data structure is $O(m|I| + mn +|I|\log(d)) = O(\frac{n\log^2(n/\delta)\log(d)}{\epsilon^6 \delta^4})$ wordsize.
\end{proof}

 Using union bound, the success probability of the data structure is at least $1-4\delta' = 1-\delta$.

\section{\texorpdfstring{$O(1)$}{O(1)}-Approximate Sublinear Representation of Clusterings}\label{sec:expected-apx}

In this section, we show a data structure with near-linear space and query complexity for approximate nearest neighbor under $\ell_1$ and $\ell_2$ metrics, whose approximation ratio is constant in expectation.

\begin{theorem}
    Consider a set $C$ of $n$ points $c_1, \dots, c_n \in [-\Delta, \Delta]^d$ equipped with $\ell_1$ or $\ell_2$ metric. We can efficiently construct a randomized data structure such that for any query point $q\in [-\Delta,\Delta]^d$ the following conditions hold. (i) The data structure reads $O(n \polylog(n,d,\Delta)/(\epsilon^3))$ coordinates of $q$. (ii) It returns $i\in [n]$ where $c_i$ is a $(8+\epsilon)$-approximate nearest neighbor of $q$ in $C$ in expectation. (iii) The data structure has a size of $O(n \polylog(n,d,\Delta)/(\epsilon^3))$ words.
\end{theorem}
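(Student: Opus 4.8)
The plan is to use the near-linear data structure of \cref{theorem:sample-reuse-l1l2} as a black box, but to escape its $\poly(1/\delta)$ dependence on the failure probability by (i) running many independent copies with a \emph{constant} failure probability, (ii) letting each copy expose the per-center distance estimates it already computes (for free, no extra coordinates of $q$ are read), and (iii) aggregating these estimates by a coordinatewise median. First I would do a preprocessing rounding: replace every coordinate of every $c_i$ by the nearest multiple of $w:=\poly(\epsilon/(nd\Delta))$ and identify coincident points, and at query time round each probed coordinate of $q$ likewise. On the bounded domain $[-\Delta,\Delta]^d$ this perturbs every distance $\|q-c_i\|$ by at most the additive amount $wd$ (resp.\ $w\sqrt d$ for $\ell_2$), which is a $(1+\epsilon)$ relative perturbation whenever $d^*:=\min_j\|q-c_j\|\ge wd/\epsilon$; in the residual degenerate regime the rounded instance has aspect ratio $A=\poly(nd\Delta/\epsilon)$, which is all the analysis will use, and $\log A=O(\log(nd\Delta/\epsilon))$.

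Next I would run $L=\Theta(\log(A/\epsilon))=O(\log(nd\Delta/\epsilon))$ independent copies $D_1,\dots,D_L$ of \cref{alg:sample-reuse-l1} (resp.\ \cref{alg:sample-reuse-l2}) on the rounded instance, each with a fixed constant error probability $\delta_0$ and accuracy parameter $\Theta(\epsilon)$. Because $\delta_0$ is constant, each $D_\ell$ is near-linear in size and query, so the $L$ copies together cost $O(n\,\polylog(n,d,\Delta)/\poly(\epsilon))$; in particular one records, for every center $i$, the estimate $\hat E_\ell(i):=\tfrac1T F(Mr_i,Mu)$ (resp.\ $(\tfrac1T\|Mr_i-Mu\|_2^2)^{1/2}$) that $D_\ell$ already evaluates in its $\argmin$, at no additional coordinate cost. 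The final output is $\argmin_{i\in[n]}\mathrm{median}_{\ell\in[L]}\hat E_\ell(i)$.

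For the analysis write $o=c_{i^*}$. Call a copy $D_\ell$ \emph{clean} if the good-region Chernoff bounds of \cref{lemma:chernoff_for_good_l1} (resp.\ \cref{lemma:chernoff_for_good_l2}) and the dimension-reduction guarantee hold simultaneously for all pairs $(i,i^*)$, and moreover $\hat E_\ell(i^*)\le (1/\delta_0)\,d^*$, which holds with probability $1-\delta_0$ by the Markov computation behind \cref{claim:Markovlone} (resp.\ \cref{claim:Markovl2}); picking $\delta_0$ a small constant gives $\Pr[D_\ell\text{ clean}]\ge 3/4$. On a clean copy two facts hold for \emph{every} center $i$: (a) $\hat E_\ell(i^*)\le(1/\delta_0)d^*$, and (b) if $\|c_i-q\|\ge\beta d^*$ for a suitable constant $\beta$ then, since $q$ is indifferent between $c_i$ and $o$ on their common bad region, \cref{lemma:good-region-is-large-l1} forces the good region to carry a constant fraction of $\|c_i-c_{i^*}\|$ and hence of $\|c_i-q\|$, so the Chernoff bound yields $\hat E_\ell(i)\ge\Omega(1)\cdot\|c_i-q\|$. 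By a Chernoff bound over $\ell\in[L]$, at least $L/2$ copies are clean except with probability $2^{-\Omega(L)}\le\epsilon/A$; on that event $\mathrm{median}_\ell\hat E_\ell(i^*)\le(1/\delta_0)d^*$ while every $i$ with $\|c_i-q\|\ge\beta d^*$ has $\mathrm{median}_\ell\hat E_\ell(i)\ge\Omega(\beta)d^*$, so for $\beta$ a large enough constant the returned center lies at distance $\le\beta d^*$ from $q$; tuning $\delta_0$ and the inner accuracy brings $\beta$ to $8$ up to the $(1+\epsilon)$ factors from rounding and dimension reduction. On the complementary $2^{-\Omega(L)}$-probability event the returned center is at distance at most $A\cdot d^*$, contributing $O(\epsilon\, d^*)$ to the expectation by the choice of $L$; summing gives $\E\|q-c_{\mathrm{out}}\|\le(8+\epsilon)\,d^*$, and the size and query bounds are inherited from the $L$ copies.

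The step I expect to be the main obstacle is (iii): showing that the median of the per-copy estimates is a \emph{two-sided} faithful proxy of the true distances \emph{simultaneously for all $n$ centers}, with failure probability small enough to dominate the worst-case ratio $A$ — equivalently, ruling out that a non-clean copy lets a far center ``win'' through a spuriously small estimate. This is exactly what forces both the Markov control $\hat E_\ell(i^*)\le(1/\delta_0)d^*$ and the matching good-region lower bound $\hat E_\ell(i)\ge\Omega(1)\|c_i-q\|$ for far $i$, and why the preliminary rounding to an instance of polynomial aspect ratio is needed at all; the remaining work is bookkeeping of the absolute constants (the $13/16$ of \cref{lemma:good-region-is-large-l1}, the thresholds of \cref{lemma:chernoff_for_good_l1}, and the choice of $\delta_0$) to push $\beta$ down to $8$.
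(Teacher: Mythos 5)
Your proposal is correct in spirit but takes a genuinely different route from the paper's proof. The paper builds a \emph{single} data structure and computes $\E[\Gamma]$ directly: it defines $\Lambda = \dist(r_{i^*},u)/(T\cdot\dist(c_{i^*},q))$, proves (\cref{lem:coordinateselectionfails}, \cref{lem:badevent}) that conditioned on the dimension-reduction and Chernoff events $G$, no center at distance more than $4\max\{\Lambda,1\}\cdot d^*$ from $q$ is ever returned, and then exploits the unbiasedness identity $\E[\Lambda]=1$ to obtain
$\E[\Gamma] \le \epsilon + 4\E[\Lambda] + 4 = 8+\epsilon$
via \cref{lem:linearina} and the chain of inequalities in the proof of \cref{thm:expectedthm}. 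No repetition, no median, and no rounding is needed: the failure event $\overline{G}$ is absorbed using the grid upper bound $\Gamma\le 2\Delta d$ and $\Pr[\overline{G}]\le\epsilon/(2\Delta d)$, and the constant $8$ falls out of $4\E[\Lambda]+4$. Your proposal instead escapes the $\poly(1/\delta)$ dependence by the standard median-of-$L$-copies amplification followed by a union-over-aspect-ratio argument, which \emph{does} work but is strictly heavier machinery, requires a preliminary discretization to a polynomial aspect ratio (the paper gets this for free from the integer grid hypothesis, $\Delta\in\mathbb{N}$), costs an extra $\log(nd\Delta/\epsilon)$ multiplicative factor in both space and query (harmless for the stated $\polylog$ bound), and obtains the constant $8$ only after carefully calibrating the inner Markov parameter $\delta_0$, which is less natural than the paper's clean $4\E[\Lambda]+4$.

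One technical imprecision you should fix if you write this out. Your claim that on a clean copy ``$\hat E_\ell(i)\ge\Omega(1)\cdot\|c_i-q\|$'' for far $i$ is not what the good-region/bad-region decomposition gives, and it is not what you actually need. The bad-region indifference (\cref{lemma:good-region-is-large-l1} combined with the subtraction step as in \cref{lem:coordinateselectionfails}) gives the lower bound $\dist_{\Good}(c_i,q)\ge(\beta-1-O(\epsilon))\,d^*$, and hence $\hat E_\ell(i)\ge(1-O(\epsilon))(\beta-1-O(\epsilon))\,d^*$; the useful comparison is between this and the Markov upper bound $\hat E_\ell(i^*)\le(1/\delta_0)\,d^*$, both expressed relative to $d^*$, not relative to $\|c_i-q\|$. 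The bad region can easily carry most of $\|c_i-q\|$, so a lower bound of the form $\Omega(1)\|c_i-q\|$ is neither available nor needed. Once this is phrased correctly, solving $(1-O(\epsilon))(\beta-1-O(\epsilon))>(1+O(\epsilon))/\delta_0$ for $\beta=8$ forces $\delta_0\gtrsim 1/7$, which still leaves enough slack for the union of the Markov, Chernoff, and dimension-reduction failures to keep the per-copy clean probability bounded above $1/2$, so the median argument goes through.
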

\begin{definition}[Cost of clustering]
    Consider a set $C=\{c_1, \dots, c_k\}$ of $k$ centers, and a set $P$ of points in $\R^d$ equipped with $\ell_p$-metric. The cost of $C$ w.r.t. partition $\sigma:P\rightarrow [k]$ is $\mathsf{cost}_p(C,P,\sigma) = \sum_{q\in P} \|q-\sigma(q)\|_p^p$. The cost $\mathsf{cost}_p(C,P)=\sum_{q\in P} \min_{c_i\in C}\|q-c_i\|_p^p$ of $C$ is defined as the minimum cost of any partitioning. We call $\mathsf{cost}_1(C,P)$ the \emph{$k$-Median} cost and $\mathsf{cost}_2(C,P)$ the \emph{$k$-Means} cost of $C$, respectively.
\end{definition}

\begin{corollary}
    Consider a set $C$ of $k$ centers $c_1, \dots, c_k \in [-\Delta, \Delta]^d$ equipped with $\ell_p$ metric ($p=1,2$). We can efficiently construct a randomized data structure such that for any set $P \subset [-\Delta,\Delta]^d$ of points the following conditions hold. (i) The data structure reads $O(k \polylog(k,d,\Delta)/(\epsilon^3))$ coordinates of $P$. (ii) It returns a partition $\sigma:P\rightarrow [k]$ (iii) $\mathsf{cost}_p(C,P,\sigma) \leq O(1) \mathsf{cost}_p(C, P)$.
    (iv) The data structure has a size of $O(k \polylog(k,d,\Delta)/(\epsilon^3))$ words.
\end{corollary}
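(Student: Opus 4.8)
The plan is to invoke the expected-$O(1)$-approximate nearest neighbor data structure of the preceding theorem as a black box, once, on the center set $C$ (in the role of the $n$ input points), and then to classify each point of $P$ by a single query. Concretely, I would build the data structure $\mathcal D$ of the theorem on $C$ with accuracy parameter $\epsilon$; let $I\subseteq[d]$ be the (fixed, preprocessing-time) set of coordinates that its query procedure probes, which by part~(i) of the theorem has $|I|=O(k\,\polylog(k,d,\Delta)/\epsilon^3)$. Given $P$, for each $q\in P$ I read the entries $\{q^{(b)}:b\in I\}$, run the query procedure of $\mathcal D$ on $q$, and set $\sigma(q)$ to be the index it returns. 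Since every query uses the same coordinate set $I$, classifying all of $P$ reads only the $|I|=O(k\,\polylog(k,d,\Delta)/\epsilon^3)$ coordinates in $I$ of each point, giving~(i); the size bound~(iv) is inherited verbatim from part~(iii) of the theorem; and~(ii) is immediate since $\sigma$ assigns to each $q$ a label in $[k]$. Taking the theorem's accuracy parameter equal to the $\epsilon$ of the corollary matches all the stated $\polylog$ factors.

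For the cost bound~(iii) with $p=1$, the theorem guarantees, for each fixed $q$, that $\mathbb E\big[\|q-c_{\sigma(q)}\|_1\big]\le(8+\epsilon)\min_{j\in[k]}\|q-c_j\|_1$, the expectation being over the randomness of $\mathcal D$ (the right-hand side is deterministic). Summing over $q\in P$ and using linearity of expectation (no independence among the per-point outcomes is needed, since $\mathcal D$ is built once and reused), $\mathbb E[\mathsf{cost}_1(C,P,\sigma)]=\sum_{q\in P}\mathbb E\big[\|q-c_{\sigma(q)}\|_1\big]\le(8+\epsilon)\sum_{q\in P}\min_{j}\|q-c_j\|_1=(8+\epsilon)\,\mathsf{cost}_1(C,P)$, which is~(iii) with constant $8+\epsilon$. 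If a high-probability rather than in-expectation version of~(iii) is desired, one further application of Markov's inequality loses only an additional constant factor.

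The one genuinely non-routine point is the $p=2$ case, because the $k$-Means objective charges \emph{squared} distances $\|q-c_{\sigma(q)}\|_2^2$, and a bound of the form $\mathbb E[\|q-c_{\sigma(q)}\|_2]\le O(1)\min_j\|q-c_j\|_2$ does \emph{not} imply $\mathbb E[\|q-c_{\sigma(q)}\|_2^2]\le O(1)\min_j\|q-c_j\|_2^2$: the inequality $\mathbb E[X^2]\ge(\mathbb E X)^2$ points the wrong way, and a bad-event contribution of order $\Delta^2 d$ cannot be absorbed when $\min_j\|q-c_j\|_2$ is tiny. I would therefore observe that the $\ell_2$ instantiation of the theorem's data structure is built around \emph{squared} $\ell_2$ estimates throughout (cf.\ \cref{claim:Markovl2} and \cref{lemma:chernoff_for_good_l2}), so its guarantee can be read directly as $\mathbb E[\|q-c_{\sigma(q)}\|_2^2]\le O(1)\min_j\|q-c_j\|_2^2$; summing over $P$ by linearity then gives $\mathbb E[\mathsf{cost}_2(C,P,\sigma)]\le O(1)\,\mathsf{cost}_2(C,P)$. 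This squared-distance reading of the $\ell_2$ theorem is the step I would want to state and justify carefully; the rest is a direct substitution of the points of $P$ as queries together with linearity of expectation.
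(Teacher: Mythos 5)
Your proposal is correct and follows the same route the paper intends: build the expected-$O(1)$-approximate nearest neighbor structure once on the center set $C$, classify each $q\in P$ by a single query over the fixed coordinate multiset $I$, and sum the per-point guarantees by linearity of expectation (the paper states the corollary without an explicit proof, and this is the intended derivation). Your flagged subtlety for $p=2$ is real and correctly resolved: since $\mathsf{cost}_2$ charges squared distances, one needs the expected ratio of \emph{squared} distances to be $O(1)$, and this indeed follows because the $\ell_2$ analysis (cf.\ the unbiasedness $\E[\|r_{i^*}-u\|_2^2]=T\|c_{i^*}-q\|_2^2$ in \cref{claim:Markovl2} and the squared-distance concentration in \cref{lemma:chernoff_for_good_l2}) is carried out entirely in terms of $\|\cdot\|_2^2$, so the analogue of $\E[\Lambda]=1$ and hence of \cref{thm:expectedthm} holds for the squared ratio — a point the paper leaves implicit by only writing out the $\ell_1$ case.
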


Here we only show the analysis for $\ell_1$ metric, and the analysis for $\ell_2$ metric is similar.

\Cref{alg:expected-l1} describes the data structure that achieves the above guarantees. It is very similar to the data structure in \Cref{sec:l1samplereuse}. We obtain \Cref{alg:expected-l1} from \Cref{sec:l1samplereuse} essentially by restricting the points to $[-\Delta, \Delta]^d$. To be self-contained, we restate the whole algorithm.

\begin{algorithm}[htp]\label{alg:expected-l1}
    \caption{An approximate nearest neighbor data structure for $n$ points in $[-\Delta,\Delta]^d$ for $\ell_1$ metric.}
    \SetKwProg{preprocessing}{Preprocessing}{}{}
    \SetKwProg{query}{Query}{}{}
    \SetKwProg{comparator}{$E$}{}{}
    \SetKw{store}{store}
    \preprocessing{$(\Delta, C, \epsilon)$\tcp*[f]{Inputs: $\Delta \in \mathbb{N}, C=\{c_1, \dots, c_n\}\subseteq [-\Delta,\Delta]^d, \epsilon \in (0, \frac{1}{4})$}}{
        Let $I \leftarrow \emptyset$ be a multiset and
        $T \leftarrow O( \log (n)\log(\Delta d)/(\epsilon^3))$\;
        
        \For{$t\in [T]$}{
            \For{$b\in [d]$}{
                Add $b$ to $I$ with probability  $p^{(b)}\triangleq \max_{(i,j)\in {\binom{n}{2}}} \frac{\left|c_i^{(b)} - c_j^{(b)}\right|}{\|c_i- c_j\|_1}$\;
            }
        }
        Let $R=\{r_1, r_2, \dots, r_n\}\subseteq \R^{I}$, where for $i \in [n], b \in I$, we have $r_i^{(b)} = c_i^{(b)} / p^{(b)}$\;
        Let $m \leftarrow O(\log(n)\log(\Delta d)/(\epsilon^2))$\;
        Sample $M\in \mathbb{R}^{[m] \times I}$, where each entry of $M$ follows the Cauchy distribution, whose density function is $c(x) = \frac{1}{\pi(1+x^2)}$. 
        Notice that $M(\cdot)\colon\mathbb{R}^{I} \rightarrow \mathbb{R}^m$ is an oblivious linear mapping \cite{indyk06:l1-dimension-reduction}\;

        \store{$I$, $M(R) = \{Mr_i|i\in [n]\}$, $\{p^{(b)}| b\in I\}$, $M$ }
    }
    \query{$(q)$ \tcp*[f]{Inputs: $q\in [-\Delta,\Delta]^d$}}{
        Query $q^{(b)}, \forall b\in I$\;
        Let $u\in \mathbb{R}^{I}$, where $\forall b \in I, u^{(b)} = q^{(b)} / p^{(b)}$\;
        Let $F((x_1, \dots, x_m), (y_1, \dots, y_m)) \coloneq \mathsf{median}(|x_1-y_1|, \dots, |x_m- y_m|)$\;
        Let $\hat{i} = \argmin_{i\in [n]} F(Mr_i, Mu)$\;
        \Return{$\hat{i}$}
    }
\end{algorithm}

Let $\Gamma=\dist(c_{\hat{i}},q)/\dist(c_{i^*},q)$ be the random variable that describes the quality of the solution $\hat{i}$ returned by the algorithm compared to the optimal solution $i^*$. Note that $\hat{i}$ is also a random variable. 

We show that in expectation the algorithm outputs an $O(1)$-nearest neighbor.

\begin{theorem}\label{thm:expectedthm}
$\mathbb{E}[\Gamma] = O(1)$.
\end{theorem}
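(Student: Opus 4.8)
The plan is to control the \emph{deterministic} quantity $d^*:=\dist(c_{i^*},q)$ against the random quantity $\hat d:=\dist(c_{\hat i},q)$. Since $d^*$ does not depend on the random choices of \cref{alg:expected-l1}, we have $\mathbb{E}[\Gamma]=\mathbb{E}[\hat d]/d^*$, so it suffices to prove $\mathbb{E}[\hat d]=O(d^*)$. I describe the $\ell_1$ case; the $\ell_2$ case is analogous with squared distances throughout (this is where the explicit constant in the $(8+\epsilon)$-guarantee ultimately comes from; it is not optimized here), and I may assume $d^*>0$, since $d^*=0$ means $q$ coincides with one of the (distinct) input points and is trivial/excluded. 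The first step is to fix a single high-probability event $\mathcal H$ on which the algorithm is well behaved, chosen so large that $\neg\mathcal H$ contributes negligibly to the expectation. Let $\mathcal H$ be the intersection of: (a) the Cauchy dimension reduction satisfies $F(Mr_i,Mu)\approx_{\epsilon/100}\|r_i-u\|_1$ for \emph{all} $i\in[n]$ simultaneously (\cref{thm:indyk06} plus a union bound); and (b) for every $i$ with $\dist(c_i,q)\ge(1+\epsilon)d^*$ the good-region estimate with respect to $(c_i,c_{i^*},q)$ concentrates as in \cref{lemma:chernoff_for_good_l1}, i.e.\ $\dist_{\Good(c_i,c_{i^*},q)}(r_i,u)\approx_{\epsilon/16}T\cdot\dist_{\Good(c_i,c_{i^*},q)}(c_i,q)$ (union bound over $i$). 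Because $T$ and $m$ in \cref{alg:expected-l1} carry a $\log(\Delta d)$ factor, the Chernoff and Johnson--Lindenstrauss failure probabilities are each at most $1/\poly(n,\Delta,d)$, so $\Pr[\neg\mathcal H]\le\delta_0$ for a $\delta_0$ we can take as small as $\epsilon\,d^*/(2\Delta d)$ (using $d^*\ge 1$ for integer coordinates, or any fixed positive lower bound on $d^*$).

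The key step is a purely \emph{deterministic} bound valid on $\mathcal H$:
\[
\hat d\ \le\ (1+O(\epsilon))\,\tfrac1T\|r_{i^*}-u\|_1\ +\ O(1)\cdot d^*.
\]
If $\hat d<(1+\epsilon)d^*$ this is immediate; otherwise write $\Good:=\Good(c_{\hat i},c_{i^*},q)$ and $\Bad:=[d]\setminus\Good$. On the bad coordinates the definition of $\Bad$ forces $q^{(z)}$ to be far from both $c_{\hat i}^{(z)}$ and $c_{i^*}^{(z)}$, so $|c_{\hat i}^{(z)}-q^{(z)}|\le(1+O(\epsilon))|c_{i^*}^{(z)}-q^{(z)}|$ for each $z\in\Bad$, hence $\dist_\Bad(c_{\hat i},q)\le(1+O(\epsilon))\dist_\Bad(c_{i^*},q)\le(1+O(\epsilon))d^*$. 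On the good coordinates, since $c_{\hat i}$ is the \emph{farther} of the two points, \cref{lemma:chernoff_for_good_l1} together with event (b) gives $\dist_\Good(c_{\hat i},q)\le(1+O(\epsilon))\tfrac1T\dist_\Good(r_{\hat i},u)\le(1+O(\epsilon))\tfrac1T\|r_{\hat i}-u\|_1$; and the $\argmin$ property $F(Mr_{\hat i},Mu)\le F(Mr_{i^*},Mu)$ combined with event (a) gives $\|r_{\hat i}-u\|_1\le(1+O(\epsilon))\|r_{i^*}-u\|_1$. Adding the good and bad contributions yields the displayed inequality. (One may instead bound $\|c_{\hat i}-c_{i^*}\|_1$ using \cref{lemma:good-region-is-large-l1} and the triangle inequality $\hat d\le\|c_{\hat i}-c_{i^*}\|_1+d^*$; this routes through the same ingredients and gives a cleaner explicit constant.)

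Finally I take expectations. Using the identity $\mathbb{E}[\|r_{i^*}-u\|_1]=T\cdot d^*$ established in the proof of \cref{claim:Markovlone}, the deterministic bound gives $\mathbb{E}[\hat d\,\mathbbm{1}_{\mathcal H}]\le(1+O(\epsilon))\,\mathbb{E}[\tfrac1T\|r_{i^*}-u\|_1]+O(d^*)=O(d^*)$. Since all input points and $q$ lie in $[-\Delta,\Delta]^d$ we have $\hat d\le 2\Delta d$ deterministically, so $\mathbb{E}[\hat d\,\mathbbm{1}_{\neg\mathcal H}]\le 2\Delta d\cdot\Pr[\neg\mathcal H]\le 2\Delta d\cdot\delta_0\le\epsilon\,d^*$. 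Hence $\mathbb{E}[\hat d]=O(d^*)$ and $\mathbb{E}[\Gamma]=O(1)$; carrying the constants through (dominated by the $\ell_2$ analysis) gives the $8+\epsilon$ figure quoted at the start of this section.

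The main obstacle is the deterministic bound on $\mathcal H$. The algorithm only guarantees that $\tfrac1T\|r_{\hat i}-u\|_1$ is \emph{minimal} (an upper bound), whereas the importance-sampling estimator can grossly \emph{underestimate} $\hat d$ on coordinates where $q$ lies far outside the bounding box of $c_{\hat i}$ and $c_{i^*}$, those terms being heavy-tailed; the good/bad-region split of \cref{sec:sample-reuse} is precisely what rescues this, since on the good region the estimator concentrates while on the bad region $q$ cannot distinguish $c_{\hat i}$ from $c_{i^*}$, bounding the bad-region contribution to $\hat d$ by $(1+O(\epsilon))d^*$. A secondary, quantitative point is that the ``catastrophic failure'' event $\neg\mathcal H$ could otherwise contribute an unbounded amount to $\mathbb{E}[\hat d]$; it is exactly to force $\Pr[\neg\mathcal H]\le\epsilon\,d^*/(2\Delta d)$ that \cref{alg:expected-l1} pays the extra $\log(\Delta d)$ factors in $T$ and $m$, and it is why the statement needs the bounded box $[-\Delta,\Delta]^d$ and (implicitly) a positive lower bound on $d^*$.
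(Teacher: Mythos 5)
Your proof is correct and follows the same strategy as the paper: isolate a high-probability event ($G=G_1\wedge G_2$ in the paper, $\mathcal H$ in yours) whose complement contributes only $O(\epsilon)$ thanks to the $\log(\Delta d)$-inflated choices of $T$ and $m$, and on the good event control $\hat d$ via the good/bad-region split together with $\E[\|r_{i^*}-u\|_1]=Td^*$, which the paper encodes as $\E[\Lambda]=1$. The paper packages the second step by conditioning on $\Lambda=\lambda$ and classifying centers as $\lambda$-near or $\lambda$-far (\Cref{lem:coordinateselectionfails,lem:badevent,lem:linearina}); your pointwise inequality $\hat d\le(1+O(\epsilon))\tfrac1T\|r_{i^*}-u\|_1+O(d^*)$ on $\mathcal H$ is an equivalent but cleaner way of saying the same thing — it avoids the conditional-probability manipulations and the loose factor $4$, so it even tightens the final constant.
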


In the following lemma, we bound the probability of the event that the dimension reduction preserves the distances well. It is a direct consequence of \Cref{thm:indyk06} by plugging in $\delta=1/(4\Delta dn)$ and applying the union bound.   

Let $G_1$ be the event that $F(Mr_i,Mu)\in [1-\eps/2,1+\eps/2]\|r_i-u\|_1$ for all centers $c_i$.

\begin{lemma}\label{lem:jlfails}
    It holds that $\mathbb{P}[G_1] \ge 1- \epsilon/(4\Delta d )$
\end{lemma}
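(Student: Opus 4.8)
The plan is to derive this as an immediate corollary of the $\ell_1$ dimension-reduction guarantee \Cref{thm:indyk06}, applied once per center and then combined by a union bound. Recall that in \Cref{alg:expected-l1} the target dimension is $m=O(\log(n)\log(\Delta d)/\epsilon^2)$ and $M$ is an \emph{oblivious} Cauchy map, so the same matrix $M$ is simultaneously applied to all of $r_1,\dots,r_n$ and to $u$; hence \Cref{thm:indyk06}, which is a per-pair statement, must be invoked separately for each pair $(r_i,u)$.

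First I would fix a center $c_i$ and view $r_i,u\in\R^{I}$ as two points living in $|I|$-dimensional $\ell_1$ space. Applying \Cref{thm:indyk06} to $a=r_i$, $b=u$ with accuracy parameter $\epsilon/2$ and failure probability $\delta_0:=\epsilon/(4\Delta d n)$ yields $F(Mr_i,Mu)\in[1-\epsilon/2,\,1+\epsilon/2]\,\|r_i-u\|_1$ with probability at least $1-\delta_0$, provided $m\ge c\,\log(1/\delta_0)/(\epsilon/2)^2$ for the absolute constant $c$ in that theorem. The one routine check is that the $m$ set in the algorithm is large enough: since $\log(1/\delta_0)=\log(4\Delta d n/\epsilon)=O(\log(\Delta d n/\epsilon))=O(\log(n)\log(\Delta d))$ (the $\log(1/\epsilon)$ term being absorbed into the poly-logarithmic factors), the choice $m=O(\log(n)\log(\Delta d)/\epsilon^2)$ meets the requirement with an appropriate hidden constant.

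Next I would take a union bound over the $n$ centers. The event $G_1$ fails only if the above guarantee fails for some $i\in[n]$, so
\[
\mathbb{P}[\neg G_1]\le n\cdot\delta_0=n\cdot\frac{\epsilon}{4\Delta d n}=\frac{\epsilon}{4\Delta d},
\]
which gives $\mathbb{P}[G_1]\ge 1-\epsilon/(4\Delta d)$, as claimed.

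There is no genuine obstacle here; the lemma is essentially a bookkeeping consequence of \Cref{thm:indyk06}. The only point that deserves care — and the one I would make explicit — is that the oblivious map only comes with a per-pair distortion bound, so one genuinely needs the union bound over all $n$ distances $\|r_i-u\|_1$ and correspondingly the smaller per-pair failure probability $\epsilon/(4\Delta d n)$; one should then confirm that paying this extra $O(\log n)$ factor is already accounted for in the setting of $m$ in \Cref{alg:expected-l1}, so the data structure's space and query bounds are unaffected.
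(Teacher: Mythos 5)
Your proof is correct and follows essentially the same route as the paper, which derives the lemma as a direct consequence of \Cref{thm:indyk06} with a per-center failure probability of order $\epsilon/(4\Delta d n)$ followed by a union bound over the $n$ centers. Your version is in fact slightly more careful: the paper's one-line justification quotes the per-pair failure probability as $1/(4\Delta dn)$, which would only yield $1-1/(4\Delta d)$, whereas your choice of $\epsilon/(4\Delta dn)$ is the one that actually produces the stated bound $1-\epsilon/(4\Delta d)$.
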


The definitions of good and bad regions in this case are similar to the ones in \cref{sec:l1samplereuse} except for the specific parameters. We define them in the following for completeness.


\begin{definition}[Good and bad regions]
    Consider two points $c_1, c_2\in \R^d$, and a query point $q\in\R^d$. We define the good region $\Good(c_1, c_2, q)$ w.r.t.\ $(c_1,c_2,q)$ as $$\{z\in [d] \mid\min\{c_1^{(z)}, c_2^{(z)}\} - \frac{8}{\epsilon } |c_1^{(z)} - c_2^{(z)}| \leq q^{(z)} \leq \max\{c_1^{(z)}, c_2^{(z)}\} + \frac{8}{\epsilon } |c_1^{(z)} - c_2^{(z)}|\}$$. We define the bad region w.r.t.\ $(c_1,c_2,q)$ as $\Bad(c_1, c_2, q) = [d]\setminus \Good(c_1, c_2, q)$. When there is no ambiguity, we use $\Good$ for $\Good(c_1, c_2, q)$, and $\Bad$ for $\Bad(c_1, c_2, q)$. 
\end{definition}

Let $G_2$ be the event that $\dist_\Good(r_i, u) \approx_\epsilon T\cdot \dist_\Good(c_i, q)$ for all centers $c_i$ such that $(\|c_i- q\|_1)/(\|c_{i^*}-q\|_1)\geq 1+\epsilon$. 
\begin{lemma}\label{lem:G2}
    $\mathbb{P}[G_2] \ge 1-\epsilon/(4\Delta d)$
\end{lemma}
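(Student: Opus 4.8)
The plan is to re-run the Chernoff-bound argument from the proof of \cref{lemma:chernoff_for_good_l1}, now calibrated to the bounded domain $[-\Delta,\Delta]^d$ and to the choice $T=\Theta(\log(n)\log(\Delta d)/\epsilon^{3})$ used by \cref{alg:expected-l1}. Fix the query $q$ and let $i^*$ be the index of its nearest neighbour. It suffices to show, for a \emph{single} center $c_i$ with $\|c_i-q\|_1\ge(1+\epsilon)\|c_{i^*}-q\|_1$, that $\dist_\Good(r_i,u)\approx_\epsilon T\cdot\dist_\Good(c_i,q)$ fails with probability at most $\epsilon/(4\Delta d n)$, where $\Good=\Good(c_i,c_{i^*},q)$; a union bound over the at most $n$ such centers then yields $\mathbb{P}[G_2]\ge 1-\epsilon/(4\Delta d)$.

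First I would introduce, exactly as in the proof of \cref{lemma:chernoff_for_good_l1}, the random variables $P_{t,z}=|q^{(z)}-c_i^{(z)}|/p^{(z)}$ for $t\in[T]$ and $z\in\Good$ if $z$ is added to $I$ in iteration $t$, and $P_{t,z}=0$ otherwise, so that $\sum_{t,z}P_{t,z}=\dist_\Good(r_i,u)$ and $\E[\sum_{z\in\Good}P_{t,z}]=\dist_\Good(c_i,q)$ for each $t$. By ``oversampling'' ($p^{(z)}\ge |c_i^{(z)}-c_{i^*}^{(z)}|/\|c_i-c_{i^*}\|_1$) together with the good-region definition used in this section (parameter $8/\epsilon$), one gets $0\le P_{t,z}\le(1+8/\epsilon)\|c_i-c_{i^*}\|_1$. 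For the mean lower bound I would use the same chain as before: the analogue of part~1 of \cref{lemma:chernoff_for_good_l1} (its proof only uses the bad-region ratio bound $1+\epsilon/8<1+\epsilon$) gives $\dist_\Good(c_i,q)\ge(1+\epsilon)\dist_\Good(c_{i^*},q)$; combining this with $\dist_\Good(c_i,q)+\dist_\Good(c_{i^*},q)\ge\dist_\Good(c_i,c_{i^*})$ and the analogue of \cref{lemma:good-region-is-large-l1} with parameter $8/\epsilon$ (which still yields $\dist_\Good(c_i,c_{i^*})\ge\tfrac{13}{16}\|c_i-c_{i^*}\|_1$ since $\epsilon<1$) gives $\dist_\Good(c_i,q)\ge\tfrac12\dist_\Good(c_i,c_{i^*})\ge\tfrac{13}{32}\|c_i-c_{i^*}\|_1$.

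Next, apply the two-sided Chernoff bound from \cref{sec:preliminaries} to $X=\sum_{t,z}P_{t,z}$ with $\mu=T\cdot\dist_\Good(c_i,q)$, per-variable ceiling $c=(1+8/\epsilon)\|c_i-c_{i^*}\|_1$, and deviation parameter $\epsilon<1/4$. The failure probability is then at most
\[
2\exp\!\left(-\frac{\epsilon^{2}\mu}{3c}\right)
=2\exp\!\left(-\frac{\epsilon^{2}T\,\dist_\Good(c_i,q)}{3(1+8/\epsilon)\|c_i-c_{i^*}\|_1}\right)
\le 2\exp\!\left(-\frac{13\,\epsilon^{2}T}{96(1+8/\epsilon)}\right)
=2\exp\bigl(-\Theta(\epsilon^{3}T)\bigr),
\]
which, for a suitable constant hidden in $T=\Theta(\log(n)\log(\Delta d)/\epsilon^{3})$, is at most $\epsilon/(4\Delta d n)$. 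A union bound over the at most $n$ centers satisfying the ratio hypothesis completes the argument.

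I do not expect a genuine obstacle: this is essentially a parameter retuning of \cref{lemma:chernoff_for_good_l1}. The one point requiring care is that the exponent $\Theta(\epsilon^{3}T)$ must dominate $\log(\Delta d n/\epsilon)$; the product $\log(n)\log(\Delta d)$ built into $T$ dominates $\log n+\log(\Delta d)$ automatically, and the residual $\log(1/\epsilon)$ is absorbed either by assuming $\epsilon\ge 1/\poly(n,d,\Delta)$ (consistent with the $\polylog$ bounds claimed in the theorem) or by inflating $T$ by an extra $\log(1/\epsilon)$ factor. A second minor subtlety worth stating is that $\Good=\Good(c_i,c_{i^*},q)$ depends on $i^*$ and hence on the adversarial query $q$; this is harmless because the sampling of $I$ is independent of $q$, so we may fix $q$ (hence $i^*$ and the good region) before invoking concentration, and the ceiling on $P_{t,z}$ only invokes the oversampling inequality, which holds for every center pair.
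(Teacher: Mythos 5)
Your proposal is correct and follows essentially the same route as the paper's proof: the same random variables $P_{t,z}$, the same ceiling $(1+8/\epsilon)\|c_i-c_{i^*}\|_1$ from oversampling, the same mean lower bound $\dist_\Good(c_i,q)\ge\tfrac12\dist_\Good(c_i,c_{i^*})$ via the analogues of the earlier good-region lemmas, the same Chernoff exponent $\Theta(\epsilon^3 T)$, and a final union bound over the far centers. If anything your bookkeeping is slightly more careful than the paper's (you target the per-center failure probability $\epsilon/(4\Delta d n)$ explicitly and flag the residual $\log(1/\epsilon)$ in the exponent, which the paper glosses over), but there is no substantive difference in approach.
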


\begin{proof}
Let $c_i\ne c_{i^*}$ be an arbitrary center such that $(\|c_i- q\|_1)/(\|c_{i^*}-q\|_1)\geq 1+\epsilon$.    
In the algorithm, we construct $\{r_i\mid i\in [n]\}$ and $u$ through sampling a multiset $I$. To analyze how the data structure samples $I$ in the preprocessing, we define the following random variable $\forall t\in [T], z\in \Good$
    \[P_{t, z} =  \begin{cases} 
      \frac{|q^{(z)} - c_i^{(z)}|}{p^{(z)} }, & w.p.\; p^{(z)},\hspace{0.95cm}\text{i.e. if we add $z$ to $I$ at iteration $t$.}\\
      0, & w.p.\; 1-p^{(z)}, \quad\text{i.e. if we do not add $z$ to $I$ at iteration $t$.}
   \end{cases}\]
   In the rest of the proof, we use $\E[\sum_{z\in \Good} P_{t,z}]$ without specifying $t\in [T]$, because they have the same value for all $t\in [T]$.
   We obtain the following four observations by the same argument as in the proof of \cref{lemma:chernoff_for_good_l1}.
   \begin{itemize}
       \item $\sum_{t\in [T],z\in \Good} P_{t,z} = \dist_\Good(r_i, u)$.
       \item $\forall t\in [T], z\in \Good$, $0 \leq P_{t,z} \leq \left(1+\frac{8}{\epsilon}\right) \|c_i - c_{i^*}\|_1$.
        \item  $\E[\sum_{z\in \Good} P_{t,z}] = \dist_{\Good}(c_i, q)$.
        \item $\E[\sum_{z\in \Good}P_{t,z}] \geq \frac{1}{2} \dist_{\Good}(c_i, c_{i^*})$.
   \end{itemize}
   
     Using Chernoff bound, 
    \begin{align*}
    &\Pr\left[\left|\sum_{t\in [T], z\in \Good} P_{t,z} - \E\left[\sum_{t\in [T], z\in \Good} P_{t,z}\right]\right| \geq \epsilon_2 \E\left[\sum_{t\in [T], z\in \Good} P_{t,z}\right]\right] \\
    \leq & 2\exp\left(-\frac{\epsilon_2^2 T \E[\sum_{z\in [d]} P_{t,z}]}{3(1+\frac{8}{\epsilon}) \|c_i - c_{i^*}\|_1}\right) \\
    = & 2\exp\left(-\frac{\epsilon_2^2 T }{3(1+\frac{8}{\epsilon})} \frac{\dist_{\Good}(c_i, q)}{\dist_{\Good}(c_i, c_{i^*})} \frac{\dist_{\Good}(c_i, c_{i^*})}{\|c_i - c_{i^*}\|_1}\right) \\
    \leq & 2\exp\left(-\frac{\epsilon_2^2 T }{3(1+\frac{8}{\epsilon})} \cdot \frac12 \cdot \frac{13}{16}\right) \leq \frac{1}{2\Delta n},
    \end{align*}
    by choosing 
    $T = \Theta(\frac{\log(n\Delta )}{\epsilon^3} )$. Note that the second last inequality above follows from \cref{lemma:good-region-is-large-l1}.

    We conclude the lemma by union bound over all $c_i\ne c_{i^*}$ such that $(\|c_i- q\|_1)/(\|c_{i^*}-q\|_1)\geq 1+\epsilon$.
\end{proof}

Let $G = G_1 \wedge G_2$. \Cref{lem:jlfails,lem:G2}, together with union bound gives the following.

\begin{lemma}\label{lem:goodevent}
    $\mathbb{P}[G] \ge 1-\epsilon/(2\Delta d)$.
\end{lemma}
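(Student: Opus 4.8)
The plan is to obtain the bound by a single union bound over the two bad events. By definition $G = G_1 \wedge G_2$, so its complement is $\neg G = \neg G_1 \vee \neg G_2$. I would invoke \cref{lem:jlfails}, which gives $\mathbb{P}[\neg G_1] \le \epsilon/(4\Delta d)$, and \cref{lem:G2}, which gives $\mathbb{P}[\neg G_2] \le \epsilon/(4\Delta d)$. Then
\[
\mathbb{P}[\neg G] \;\le\; \mathbb{P}[\neg G_1] + \mathbb{P}[\neg G_2] \;\le\; \frac{\epsilon}{4\Delta d} + \frac{\epsilon}{4\Delta d} \;=\; \frac{\epsilon}{2\Delta d},
\]
so $\mathbb{P}[G] \ge 1 - \epsilon/(2\Delta d)$, as claimed.

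Before applying this I would check that the two events are defined on the same probability space, namely the joint randomness of the sampled multiset $I$, the scaled point set $R$, and the Cauchy matrix $M$ produced by the preprocessing of \cref{alg:expected-l1} on the fixed input $C$ and the fixed query point $q$. Both $G_1$ (the dimension-reduction map $F$ preserves all distances $\|r_i-u\|_1$ up to a $(1\pm\epsilon/2)$ factor) and $G_2$ (coordinate selection preserves the good-region distances of every center that is far from $q$) are measurable with respect to exactly this randomness, so the union bound is legitimate and nothing further needs to be verified.

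I do not anticipate any real obstacle here; the substance lies entirely in \cref{lem:jlfails} and \cref{lem:G2}. The only thing to watch is the constant bookkeeping inside those two lemmas — that the per-event failure budget $\epsilon/(4\Delta d)$ is consistent with the internal parameter choices (the setting $\delta = 1/(4\Delta d n)$ in the application of \cref{thm:indyk06}, and $T = \Theta(\log(n\Delta)/\epsilon^3)$ in the Chernoff argument) together with the union bounds over the at most $n$ centers already taken in their proofs. Once those constants are in place, the present lemma is immediate.
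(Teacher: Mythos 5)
Your argument is exactly the paper's: $\neg G = \neg G_1 \vee \neg G_2$ together with \cref{lem:jlfails} and \cref{lem:G2} and a union bound. This is correct and matches the paper's (one-line) proof; the additional remarks about the common probability space and the constant bookkeeping are sensible but not needed once those two lemmas are taken as given.
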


Let $\Lambda = \dist(r_{i^*},u)/(T\cdot\dist(c_{i^*},q))$ be the random variable that describes how well the coordinate selection preserves the distance from the query $q$ to its nearest neighbor $c_{i^*}$.
Let $L$ describe the set of all possible values $\Lambda$ could take over all possible random outcomes of the algorithm. Note that it is finite since there are only finitely many outcomes for the multiset~$I$.
 For $\lambda \in L$, we define $N_\lambda = \{i\in [n] \mid \dist(c_i,q) \le 4\max\{ \lambda,1\} \dist(c_{i^*},q)\}$ and $F_\lambda = \{i\in [n] \mid \dist(c_i,q) > 4\max\{ \lambda,1\}\dist(c_{i^*},q) \}$. For $i\in N_\lambda$ we call $c_i$ $\lambda$-near. Accordingly, we call $c_i$ $\lambda$-far if $i\in F_\lambda$.

We bound the probability of the event that  the distance of ``far'' centers is close to the distance of the nearest neighbor after the coordinate selection.
\begin{lemma}\label{lem:coordinateselectionfails}
    Let $\lambda \in L$. Let $c_i$ be an arbitrary $\lambda$-far center (i.e.\ $i\in F_\lambda$). If event $G$ holds and $\Lambda = \lambda$, then $\dist(r_{i},u)> (1+\epsilon)^2 \dist(r_{i^*},u)$. 
\end{lemma}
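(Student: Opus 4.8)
The plan is to argue entirely on the \emph{good} region $\Good = \Good(c_i,c_{i^*},q)$, where event $G_2$ gives us quantitative control, and to show that the good‑region contribution to $\dist(r_i,u)$ by itself already exceeds $(1+\epsilon)^2\dist(r_{i^*},u)$. First I would note that $i\in F_\lambda$ forces $\dist(c_i,q) > 4\max\{\lambda,1\}\dist(c_{i^*},q)\geq 4\dist(c_{i^*},q)$, so in particular $\|c_i-q\|_1/\|c_{i^*}-q\|_1 > 4 \geq 1+\epsilon$; hence $c_i$ is one of the centers for which event $G_2$ makes a promise, and conditioning on $G$ (hence on $G_2$) gives $\dist_\Good(r_i,u)\geq (1-\epsilon)\,T\,\dist_\Good(c_i,q)$.

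Next I would bound the bad‑region contribution. For any $z\in\Bad(c_i,c_{i^*},q)$, the definition of the good region (with its $8/\epsilon$ slack) gives $|q^{(z)}-c_{i^*}^{(z)}|\geq \tfrac{8}{\epsilon}|c_i^{(z)}-c_{i^*}^{(z)}|$, whence $|q^{(z)}-c_i^{(z)}|\leq |q^{(z)}-c_{i^*}^{(z)}|+|c_i^{(z)}-c_{i^*}^{(z)}|\leq (1+\tfrac{\epsilon}{8})|q^{(z)}-c_{i^*}^{(z)}|$; summing over $z\in\Bad$ yields $\dist_\Bad(c_i,q)\leq (1+\tfrac{\epsilon}{8})\dist_\Bad(c_{i^*},q)\leq (1+\tfrac{\epsilon}{8})\dist(c_{i^*},q)$. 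Subtracting this from $\dist(c_i,q) > 4\max\{\lambda,1\}\dist(c_{i^*},q)$ and using $\max\{\lambda,1\}\geq 1$ together with $\epsilon<\tfrac14$ leaves $\dist_\Good(c_i,q)=\dist(c_i,q)-\dist_\Bad(c_i,q) > \big(4\max\{\lambda,1\}-1-\tfrac{\epsilon}{8}\big)\dist(c_{i^*},q)\geq \tfrac52\max\{\lambda,1\}\dist(c_{i^*},q)\geq \tfrac52\lambda\,\dist(c_{i^*},q)$, where the middle step is just $\tfrac32\max\{\lambda,1\}\geq \tfrac32 > 1+\tfrac{\epsilon}{8}$.

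Finally I would assemble: $\dist(r_i,u)\geq \dist_\Good(r_i,u)\geq (1-\epsilon)\,T\,\dist_\Good(c_i,q) > \tfrac52(1-\epsilon)\,\lambda\,T\,\dist(c_{i^*},q)$, while the hypothesis $\Lambda=\lambda$ unwinds, by the definition of $\Lambda=\dist(r_{i^*},u)/(T\dist(c_{i^*},q))$, to $\dist(r_{i^*},u)=\lambda\,T\,\dist(c_{i^*},q)$. Therefore $\dist(r_i,u) > \tfrac52(1-\epsilon)\,\dist(r_{i^*},u) > (1+\epsilon)^2\,\dist(r_{i^*},u)$, the last inequality using $\tfrac52(1-\epsilon)-(1+\epsilon)^2 = \tfrac32-\tfrac92\epsilon-\epsilon^2 > 0$ for $\epsilon<\tfrac14$.

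The only genuinely delicate point is the constant bookkeeping: one must check that the factor‑$4$ gap hard‑wired into the definition of $F_\lambda$, the $1+\tfrac{\epsilon}{8}$ distortion on the bad coordinates, and the $(1-\epsilon)$ loss from $G_2$ together still clear the $(1+\epsilon)^2$ target over the whole range $\epsilon\in(0,\tfrac14)$ — this is precisely what pins down the choice of the constant $4$ (equivalently the $8/\epsilon$ slack in the good‑region definition), and it is the step most likely to need a small adjustment. A harmless edge case is $\dist(c_{i^*},q)=0$: then $\dist_\Bad(c_{i^*},q)=0$ forces $\dist_\Bad(c_i,q)=0$, so $\dist_\Good(c_i,q)=\dist(c_i,q) > 0$ strictly (as $i\in F_\lambda$), and the same chain gives $\dist(r_i,u) > 0 = (1+\epsilon)^2\dist(r_{i^*},u)$, so it requires no separate treatment.
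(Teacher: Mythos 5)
Your proof is correct and follows the same route as the paper: split coordinates into the good and bad regions for the pair $(c_i,c_{i^*})$, invoke $G_2$ on the good region, use the $8/\epsilon$ slack in the good-region definition to show the bad-region distances are essentially indifferent between $c_i$ and $c_{i^*}$, and then unwind $\Lambda=\lambda$ to compare against $\dist(r_{i^*},u)$. The only divergence is the constant bookkeeping: you subtract the tight bad-region bound $(1+\tfrac{\epsilon}{8})\dist(c_{i^*},q)$ directly from $\dist(c_i,q)$ to get the factor $\tfrac52\max\{\lambda,1\}$, while the paper works in the ratio form with a looser $(1+\epsilon)$ slack and lands on factor $2\max\{\lambda,1\}$; your tighter $\tfrac52$ makes the final check $\tfrac52(1-\epsilon)>(1+\epsilon)^2$ go through with room to spare on all of $\epsilon\in(0,\tfrac14)$, and incidentally sidesteps the fact that the paper's closing line uses $2/(1+\epsilon)$ where the $\approx_\epsilon$ hypothesis actually yields only $2(1-\epsilon)$, which is uncomfortably tight near $\epsilon=\tfrac14$.
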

\begin{proof}
    If $G$ holds, then $\dist_\Good(r_i, u) \approx_\epsilon T\cdot\dist_\Good(c_i, q)$.
    From the definition of bad region, we have 
    \begin{itemize}
        \item $\dist_\Bad(r_i, u) \approx_\epsilon \dist_\Bad(r_{i^*}, u)$
        \item $\dist_\Bad(c_i, q) \approx_\epsilon \dist_\Bad(c_{i^*}, q)$
    \end{itemize}

    Since $i\in F_\lambda$, we know $\frac{\|c_i - q\|_1}{\|c_{i^*} - q\|_1} \geq 4\max\{1,\lambda\}$. By separating coordinates into good and bad regions, we have $\frac{\dist_\Good(c_i, q) + \dist_\Bad(c_i, q)}{\dist_\Good(c_{i^*}, q) + \dist_\Bad(c_{i^*}, q)} \geq 4\max\{1,\lambda\}$. Subtracting both sides by $(1+\epsilon)$ gives $$\frac{\dist_\Good(c_i, q) + \dist_\Bad(c_i, q) - (1+\epsilon)\cdot\left(\dist_\Good(c_{i^*}, q) + \dist_\Bad(c_{i^*}, q)\right)}{\dist_\Good(c_{i^*}, q) + \dist_\Bad(c_{i^*}, q)} \geq 2\max\{1,\lambda\}.$$
    Since we know that $\dist_\Bad(c_i, q) - (1+\epsilon) \dist_\Bad(c_{i^*}, q) \leq 0$, we have $\frac{\dist_\Good(c_i, q)}{\dist_\Good(c_{i^*}, q) + \dist_\Bad(c_{i^*}, q)} \geq 2\max\{1,\lambda\}$, which is $\frac{\dist_\Good(c_i, q)}{\|c_{i^*} - q\|_1} \geq 2\max\{1,\lambda\}$. Since $\dist_\Good(r_i, u) \approx_\epsilon T\cdot\dist_\Good(c_i,q)$, and $\|r_{i^*} - u\|_1 = \lambda\cdot T\cdot\|c_{i^*}-q\|_1$, we have $\frac{\dist_\Good(r_i, u)}{\|r_{i^*} - u\|_1} \geq \frac{2\max\{1,\lambda\}}{(1+\epsilon)\lambda}$. Thus $\frac{\|r_i - u\|_1}{\|r_{i^*} - u\|_1} \geq 2/(1+\epsilon) \geq (1+\epsilon)^2$.
\end{proof}

We next bound the probability that the algorithm outputs a far center.
\begin{lemma}\label{lem:badevent}
Let $\lambda \in L$. Let $c_i$ be an arbitrary $\lambda$-far center (i.e.\ $i\in F_\lambda$). If event $G$ holds and $\Lambda =\lambda$, then $\hat{i}\ne i$. 
\end{lemma}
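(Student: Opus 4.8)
The plan is to derive Lemma~\ref{lem:badevent} directly from \cref{lem:coordinateselectionfails} together with the guarantee of event $G_1$ on the $\ell_1$ dimension reduction. Since the algorithm outputs $\hat{i}=\argmin_{i\in[n]}F(Mr_i,Mu)$, it suffices to show that, conditioned on $G$ and $\Lambda=\lambda$, we have $F(Mr_i,Mu)>F(Mr_{i^*},Mu)$ for every $\lambda$-far center $c_i$; then $c_{i^*}$ strictly beats $c_i$ in the query's minimization, so $\hat{i}\ne i$.

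First I would check that \cref{lem:coordinateselectionfails} is applicable to $c_i$. Because $i\in F_\lambda$ we have $\|c_i-q\|_1>4\max\{\lambda,1\}\,\|c_{i^*}-q\|_1\ge 4\|c_{i^*}-q\|_1>(1+\epsilon)\|c_{i^*}-q\|_1$, using $\epsilon<1/4$; hence $c_i$ lies in the regime covered by event $G_2$ and by \cref{lem:coordinateselectionfails}. Applying that lemma under the hypotheses ``$G$ holds and $\Lambda=\lambda$'' gives $\dist(r_i,u)=\|r_i-u\|_1>(1+\epsilon)^2\,\|r_{i^*}-u\|_1$. Next I would push this gap through the sketch: event $G_1$ guarantees $F(Mr_j,Mu)\in[1-\epsilon/2,\,1+\epsilon/2]\,\|r_j-u\|_1$ for every center $c_j$, so
\[
F(Mr_i,Mu)\;\ge\;\left(1-\tfrac{\epsilon}{2}\right)\|r_i-u\|_1\;>\;\left(1-\tfrac{\epsilon}{2}\right)(1+\epsilon)^2\,\|r_{i^*}-u\|_1,
\]
while $F(Mr_{i^*},Mu)\le\left(1+\tfrac{\epsilon}{2}\right)\|r_{i^*}-u\|_1$. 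It remains to note the elementary inequality $\left(1-\tfrac{\epsilon}{2}\right)(1+\epsilon)^2\ge\left(1-\tfrac{\epsilon}{2}\right)(1+2\epsilon)=1+\tfrac{3\epsilon}{2}-\epsilon^2\ge 1+\tfrac{5\epsilon}{4}>1+\tfrac{\epsilon}{2}$ for $0<\epsilon<1/4$ (using $\epsilon^2<\epsilon/4$). Combining the displays yields $F(Mr_i,Mu)>F(Mr_{i^*},Mu)\ge\min_{j\in[n]}F(Mr_j,Mu)=F(Mr_{\hat{i}},Mu)$, which forces $\hat{i}\ne i$.

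The argument is essentially routine once \cref{lem:coordinateselectionfails} is in hand; the only points requiring care are (a) verifying the distance-ratio bound that places a $\lambda$-far center inside the scope of $G_2$ and of \cref{lem:coordinateselectionfails}, and (b) the arithmetic check that the gap factor $(1+\epsilon)^2$ from coordinate selection strictly dominates the $(1+\epsilon/2)/(1-\epsilon/2)$ distortion of the Cauchy sketch. No new idea beyond chaining the two multiplicative guarantees is needed.
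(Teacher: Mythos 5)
Your proof is correct and follows the same route as the paper: apply \cref{lem:coordinateselectionfails} to get $\|r_i-u\|_1>(1+\epsilon)^2\|r_{i^*}-u\|_1$, then use event $G_1$ to push this gap through the Cauchy sketch and conclude $F(Mr_i,Mu)>F(Mr_{i^*},Mu)$. The only difference is presentational — you spell out the arithmetic showing $(1+\epsilon)^2$ dominates the $(1+\epsilon/2)/(1-\epsilon/2)$ distortion and add a (redundant, since \cref{lem:coordinateselectionfails} already assumes $i\in F_\lambda$) check that $c_i$ falls in the scope of $G_2$, while the paper leaves these details implicit.
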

\begin{proof}
By \Cref{lem:coordinateselectionfails} it follows that $\dist(r_{i},u)> (1+\epsilon)^2 \dist(r_{i^*},u)$. Since the event $G$ contains the event $G_1$ it follows that $F(Mr_i,Mu) > F(Mr_{i^*},Mu)$. Thus, the algorithm does not output $i$.
\end{proof}
We show that the approximation ratio of the output is linearly bounded by how well the algorithm approximates the distance to the closest center. 
\begin{lemma}\label{lem:linearina}
For $\lambda \in L$ it holds that 
$\mathbb{E}[\Gamma\mid \Lambda=\lambda\wedge G] \le 4\max\{\lambda,1\}$
\end{lemma}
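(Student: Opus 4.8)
The plan is to condition on the good event $G$ (which we have fixed in the conditioning) together with the outcome $\Lambda = \lambda$, and show that under this conditioning the index $\hat{i}$ returned by the algorithm always satisfies $\dist(c_{\hat i},q)\le 4\max\{\lambda,1\}\dist(c_{i^*},q)$; this bounds $\Gamma$ deterministically, so the conditional expectation is trivially at most $4\max\{\lambda,1\}$. First I would recall the partition $[n]=N_\lambda\cup F_\lambda$ into $\lambda$-near and $\lambda$-far centers from the definition just above. For every $\lambda$-far center $c_i$ (i.e.\ $i\in F_\lambda$), \Cref{lem:badevent} — invoked with this $\lambda$, using exactly the hypotheses ``$G$ holds and $\Lambda=\lambda$'' — gives $\hat i\ne i$. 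Hence, conditioned on $G\wedge\{\Lambda=\lambda\}$, the returned index $\hat i$ cannot lie in $F_\lambda$, so $\hat i\in N_\lambda$.

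By the definition of $N_\lambda$, $\hat i\in N_\lambda$ means $\dist(c_{\hat i},q)\le 4\max\{\lambda,1\}\,\dist(c_{i^*},q)$, that is, $\Gamma=\dist(c_{\hat i},q)/\dist(c_{i^*},q)\le 4\max\{\lambda,1\}$ pointwise on the event $G\wedge\{\Lambda=\lambda\}$. Taking conditional expectation over this event yields
\[
\mathbb{E}[\Gamma\mid \Lambda=\lambda\wedge G]\le 4\max\{\lambda,1\},
\]
which is the claim. (Here we should also note $\dist(c_{i^*},q)>0$ can be assumed, or otherwise handled trivially, so that $\Gamma$ is well-defined; if $q=c_{i^*}$ then $\Lambda=0$ forces $\dist_\Good(r_{i},u)$ large for all $i\ne i^*$ by the same argument and $\hat i=i^*$, giving $\Gamma=1$.)

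The only real content is already packaged in \Cref{lem:badevent} (which in turn rests on \Cref{lem:coordinateselectionfails} and the definition of $G_1$ inside $G$); the remaining work is purely bookkeeping about the set $N_\lambda$. So I do not anticipate a genuine obstacle here — the subtle point to get right is merely that the conditioning in \Cref{lem:badevent} matches the conditioning in this lemma (both are ``$G\wedge\{\Lambda=\lambda\}$''), and that ``$\hat i\notin F_\lambda$'' is equivalent to ``$\hat i\in N_\lambda$'' because $\{N_\lambda,F_\lambda\}$ partitions $[n]$. The one thing to be careful about is that $\hat i$ is a random variable and $N_\lambda,F_\lambda$ depend on $\lambda$; once $\lambda$ is fixed by the conditioning $\Lambda=\lambda$, these sets are deterministic, so the argument goes through cleanly.
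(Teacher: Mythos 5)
Your proof is correct and takes essentially the same approach as the paper: the paper writes out the near/far decomposition explicitly and then uses \Cref{lem:badevent} to conclude $\mathbb{P}[\hat{i}\in F_\lambda \mid \Lambda =\lambda\wedge G] = 0$, which is exactly your observation that $\hat{i}\in N_\lambda$ holds deterministically under the conditioning. Your version is marginally cleaner (it never needs the crude $2\Delta d$ upper bound on $\Gamma$ that the paper introduces for the far term), but the key lemma and logic are identical.
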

\begin{proof}
The idea is to distinguish between the expected cost for returning ``near'' and ``far'' centers.
\begin{align*}
\mathbb{E}[\Gamma\mid \Lambda=\lambda\wedge G]&\le \mathbb{P}[\hat{i}\in N_\lambda\mid \Lambda = \lambda\wedge G]4\max\{\lambda,1\}+\mathbb{P}[\hat{i}\in F_\lambda\mid \Lambda =\lambda\wedge G]2\Delta d\\
&\le 4\max\{\lambda,1\} + \mathbb{P}[\hat{i}\in F_\lambda \mid \Lambda =\lambda\wedge G]2\Delta d\\
&\le 4\max\{\lambda,1\}
\end{align*}
The last inequality follows from \Cref{lem:badevent}.
\end{proof}

We use \Cref{lem:linearina} to conclude the proof for the constant approximation factor. 
\begin{proof}[Proof of \ref{thm:expectedthm}]
\begin{align}
\mathbb{E}[\Gamma] &=\mathbb{E}[\Gamma\mid \overline{G}]\mathbb{P}[\overline{G}]+\mathbb{E}[\Gamma\mid G]\mathbb{P}[G]\nonumber\\
&\le 2\Delta d \frac{\epsilon}{2\Delta d}+\mathbb{P}[G]\mathbb{E}[\Gamma\mid G]\label{eq:maxdist}\\
&= \epsilon+ \mathbb{P}[G]\sum_{\lambda\in L}\mathbb{E}[\Gamma\mid \Lambda=\lambda\wedge G]\mathbb{P}[\Lambda = \lambda\mid G]\nonumber\\
&\le \epsilon + \sum_{\lambda \in L}\mathbb{E}[\Gamma\mid \Lambda = \lambda\wedge G]\mathbb{P}[\Lambda=\lambda]\label{eq:removecondition}\\
&\le \epsilon + \sum_{\lambda \in L}4\max\{\lambda,1\}\mathbb{P}[\Lambda=\lambda]\label{eq:linearina}\\
&\le \epsilon + \sum_{\lambda \in L}4(\lambda+1)\mathbb{P}[\Lambda = \lambda]\nonumber\\
&\le \epsilon + 4\sum_{\lambda \in L} \lambda\mathbb{P}[\Lambda =\lambda] + 4\sum_{\lambda \in L} \mathbb{P}[\Lambda=\lambda]\nonumber\\
&\le \epsilon + 4\mathbb{E}[\Lambda] + 4 \leq 8+\epsilon\label{eq:lastline}
\end{align}
In Line~\ref{eq:maxdist} we use the fact that the maximal distance inside the $[-\Delta,\Delta]^d$ grid is $2\Delta d$ and \Cref{lem:goodevent}. In Line~\ref{eq:removecondition} we use the fact $\mathbb{P}[\Lambda=\lambda\mid G] \le \mathbb{P}[\Lambda =\lambda]/\mathbb{P}[G]$. In Line~\ref{eq:linearina} we use \cref{lem:linearina}. In Line~\ref{eq:lastline} we use the fact that $\E[\Lambda]=1$.
\end{proof}

It remains to show that with high probability the data structure does not violate the space and query complexity. The analysis is analogous to \cref{sec:l1-space-query-analysis}.
\begin{claim}\label{clm:I-size-upperbound-l1expected}
    With probability $1-1/(nd\Delta)$, we have $|I| \leq 2Tn$.    
\end{claim}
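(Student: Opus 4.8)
The plan is to mirror the proof of \cref{clm:I-size-upperbound-l1} verbatim, since the sampling probabilities $p^{(b)}$ in \cref{alg:expected-l1} are defined exactly as in \cref{alg:sample-reuse-l1}. First I would introduce, for each $t \in [T]$ and $z \in [d]$, the indicator random variable $X_{t,z}$ for the event that coordinate $z$ is added to $I$ in iteration $t$; by construction $\Pr[X_{t,z} = 1] = p^{(z)}$, and these variables are mutually independent. Since $|I| = \sum_{t \in [T], z \in [d]} X_{t,z}$, linearity of expectation gives $\E[|I|] = T \sum_{b \in [d]} p^{(b)}$. Here I would invoke \cref{lemma:math-program}, whose statement and proof are unaffected by the restriction of the input points to $[-\Delta,\Delta]^d$, to conclude $T \leq \E[|I|] \leq Tn$.

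Next I would apply the multiplicative Chernoff bound (with deviation parameter $1$ and range parameter $c=1$) to $|I| = \sum_{t,z} X_{t,z}$, obtaining
\[
\Pr\left[\,|I| \geq 2\,\E[|I|]\,\right] \;\leq\; \exp\!\left(-\E[|I|]/3\right) \;\leq\; \exp(-T/3).
\]
Finally, I would substitute the value $T = \Theta(\log(n)\log(\Delta d)/\epsilon^3)$ used in \cref{alg:expected-l1}: using $\log(n)\log(\Delta d) \geq \tfrac12\log(nd\Delta)$ (valid whenever $\log n, \log(\Delta d) \geq 1$) and choosing the hidden constant in the definition of $T$ large enough, this makes $\exp(-T/3) \leq 1/(nd\Delta)$. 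Hence with probability at least $1 - 1/(nd\Delta)$ we have $|I| \leq 2\,\E[|I|] \leq 2Tn$, which is exactly the claim.

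I do not expect a genuine obstacle here: the only step requiring a word of justification is that \cref{lemma:math-program} still applies, which is immediate since the quantity $\sum_{b} p^{(b)}$ is defined identically; everything else is a direct instantiation of the Chernoff bound with the larger value of $T$ appearing in \cref{alg:expected-l1}. (As a side remark, the same computation shows that the number of coordinates of $q$ probed in the query phase is $|I| = O(Tn)$, matching the stated query complexity, but this is not needed for the claim itself.)
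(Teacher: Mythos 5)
Your proof is correct and follows the paper's own argument essentially verbatim: same indicator variables, same appeal to \cref{lemma:math-program} to bound $\E[|I|]$, same Chernoff bound yielding $\exp(-T/3)$. You spell out the final substitution of $T = \Theta(\log(n)\log(\Delta d)/\epsilon^3)$ a bit more explicitly than the paper (which states $\exp(-T/3) \le 1/(nd\Delta)$ without elaboration), but this is the same step, not a different route.
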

\begin{proof}
   We define $X_{t,z}$ for $t \in [T], z \in [d]$ to be the indicator random variable for the event that we add $z$ to $I$ at iteration $t$. Note that $\Pr[X_{t,z} = 1] = p^{(z)}$.
    We can see that $|I| = \sum_{t\in[T], z\in[d]} X_{t,z}$.
    Hence, $\E[|I|] = \sum_{t\in[T], z\in[d]} p^{(z)} \leq T\cdot \sum_{b \in [d]} p^{(b)} \leq T\cdot n$, where the last inequality follows from \cref{lemma:math-program}. Similarly, we can see that 
    $\E[|I|] \geq T$.
    
    Using Chernoff bound, we have 
    \begin{equation*}
        \Pr\left[|I| - \E\left[|I|\right] \geq \E\left[|I|\right]\right] \leq \exp (-\E[|I|]/3) \leq \exp(-T/3) \leq 1/(nd\Delta).
    \end{equation*}
    Thus, with probability $1/(nd\Delta)$, we have $|I| \leq 2 E[|I|] \leq 2Tn$.
\end{proof}

\begin{lemma}
    With high probability the data structure has $O(n \polylog(n,d,\Delta)/(\epsilon^3))$ space and $O(n \polylog(n,d,\Delta)/(\epsilon^3))$ query complexity.
\end{lemma}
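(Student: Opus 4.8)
The plan is to mirror the space and query analysis of \cref{sec:l1-space-query-analysis}, the only change being the larger values of $T$ and $m$ used in \cref{alg:expected-l1}. The starting point is \cref{clm:I-size-upperbound-l1expected}, which already establishes $|I| \le 2Tn$ except with probability $1/(nd\Delta)$. I would first observe that its proof transfers verbatim here: it relies only on \cref{lemma:math-program} (the bound $\sum_{b\in[d]} p^{(b)} \le n$, which is a statement about the sampling probabilities $p^{(b)}$ and is independent of the choice of $T$) together with a Chernoff bound, and the latter goes through since $T = \Theta(\log(n)\log(\Delta d)/\epsilon^3)$ is at least $3\log(nd\Delta)$.

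Conditioning on the event $|I| \le 2Tn = O(n\log(n)\log(\Delta d)/\epsilon^3)$, the query phase of \cref{alg:expected-l1} reads exactly the coordinates $q^{(b)}$ for $b\in I$ and afterwards only performs word-RAM arithmetic on the stored data, so the number of coordinates of $q$ it reads is $|I| = O(n\,\polylog(n,d,\Delta)/\poly(\epsilon))$, which gives the query-complexity part of the theorem. For the space bound I would account for the four stored objects: the multiset $I$ costs $O(|I|\log d)$ words (each of the at most $2Tn$ indices lies in $[d]$); the Cauchy matrix $M\in\mathbb{R}^{[m]\times I}$ costs $O(m|I|)$ words; the image $M(R)=\{Mr_i : i\in[n]\}$ costs $O(mn)$ words; and the scaling factors $\{p^{(b)} : b\in I\}$ cost $O(|I|)$ words. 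Summing yields $O(m|I| + mn + |I|\log d)$, and substituting $m = O(\log(n)\log(\Delta d)/\epsilon^2)$ and $|I| = O(Tn)$ (absorbing $\polylog(n,d,\Delta)$ factors) gives the claimed $O(n\,\polylog(n,d,\Delta)/\poly(\epsilon))$ size. Since the single failure event $|I| > 2Tn$ has probability at most $1/(nd\Delta)$, both bounds hold with high probability. The $\ell_2$ case is handled identically, replacing \cref{alg:expected-l1} by the analogue of \cref{alg:sample-reuse-l2} restricted to $[-\Delta,\Delta]^d$ and the Cauchy matrix by the Johnson--Lindenstrauss matrix.

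I do not expect a genuine obstacle here: all the technical content---the bound $\sum_b p^{(b)} \le n$ and the Chernoff concentration of $|I|$---is already available from \cref{lemma:math-program} and \cref{clm:I-size-upperbound-l1expected}, and what remains is the routine substitution of the parameter values. The only point worth stating explicitly is that \cref{lemma:math-program} is a property of the $p^{(b)}$'s alone and therefore applies unchanged in this setting, which is precisely what lets \cref{clm:I-size-upperbound-l1expected} be invoked with the new value of $T$.
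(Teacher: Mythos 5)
Your proposal is correct and follows essentially the same route as the paper: invoke \cref{clm:I-size-upperbound-l1expected} (which rests on \cref{lemma:math-program} and Chernoff) to get $|I|=O(Tn)$ with high probability, then tally the word sizes of $I$, $M$, $M(R)$, and $\{p^{(b)}\}$ to obtain $O(m|I|+mn+|I|\log d)$. One small remark in your favor: you write the final bound as $O(n\,\polylog(n,d,\Delta)/\poly(\epsilon))$ rather than the $/\epsilon^3$ in the lemma statement; since $m|I|=O(\log n\log(\Delta d)/\epsilon^2)\cdot O(n\log n\log(\Delta d)/\epsilon^3)$ is $\Theta(n\,\polylog(n,d,\Delta)/\epsilon^5)$, your hedged exponent is actually the more defensible reading of the arithmetic, and the paper's stated $\epsilon^3$ appears to be off in the dominant $m|I|$ term.
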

\begin{proof}
    By \cref{clm:I-size-upperbound-l1expected}, we have that the number of coordinates of $q$ that we query is $|I| = O(Tn) = O(\frac{n\log(n)\log(\Delta d)}{\epsilon^3})$. Next we analyze the space complexity of the data structure. Matrix $M$ is of wordsize $O(m|I|)$, $M(R)$ is of wordsize $O(mn)$, $I$ is of wordsize $O(|I| \log(d))$. Thus the space complexity of the our data strcuture is $O(m|I| + mn +|I|\log(d)) = O(n\polylog(n,d,\Delta)/(\epsilon^3))$ wordsize.
\end{proof}

\section{Data Structure with \texorpdfstring{$\tilde{O}(n^2)$}{near-quadratic} Space for General \texorpdfstring{$\ell_p$}{lp} Metrics}\label{sec:point-reduction}

In this section, we describe our data structure with $O(n^2 \log(n/\delta) \log(d) (\log\log n)^p/\epsilon^{p+2})$ space and $O(n \log (n/\delta) (\log\log n)^p/\epsilon^{p+2})$ query time for the approximate nearest neighbor problem for the $\ell_p$ metric for any $p$ and prove~\cref{theorem:Lp-NN}. 

\begin{theorem}\label{theorem:Lp-NN}
    Consider a set $C$ of $n$ points $c_1, \dots, c_n \in \mathbb{R}^d$ with $\ell_p$ metric, $1\leq p < \infty$. Given $0 < \epsilon < 1/(4p)$, we can construct a data structure such that the data structure is of word size $O(n^2 \log(n/\delta) \log(d)(\log\log n)^p/\epsilon^{p+2})$, and for any query $q\in R^d$, the following conditions hold. (i) The data structure reads $O(n \log (n/\delta)(\log\log n)^p/\epsilon^{p+2})$ coordinates of $q$. (ii) With probability $1-\delta$, it returns $i$ where $c_i$ is a $(1+\epsilon)$-approximate nearest neighbor of $q$ from $C$.
\end{theorem}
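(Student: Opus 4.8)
The plan is to generalize the warm-up construction (\cref{alg:strong-2points} and \cref{alg:naive-min-finding}) from $\ell_1$ to $\ell_p$ and, crucially, to remove the scaled-bounding-box assumption by inserting a \emph{truncation} step into the query procedure. For every unordered pair $\{c_i,c_j\}$ we build an $\ell_p$ two-point comparator: in preprocessing we sample a multiset $I_{ij}$ of $T$ coordinates, in each iteration adding $b$ with probability $p^{(b)}_{ij}=|c_i^{(b)}-c_j^{(b)}|^p/\|c_i-c_j\|_p^p$, and we store $I_{ij}$, the weights $p^{(b)}_{ij}$, and the projections $c_i^{(I_{ij})},c_j^{(I_{ij})}$. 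This costs $O(T\log d)$ words per pair and $O(n^2 T\log d)$ words overall; unlike the near-linear-space construction of \cref{sec:sample-reuse}, we cannot replace this with a single global sampling plus dimension reduction, precisely because the truncation below is pair-dependent, which is the source of the extra factor of $n$ in the space bound.

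On a query $q$ we run the same imprecise min-finding loop as in \cref{alg:naive-min-finding}, but when comparing the current candidate $c_j$ against $c_i$ we first \emph{truncate} each probed coordinate $q^{(b)}$, $b\in I_{ij}$, to the interval $[\,\min\{c_i^{(b)},c_j^{(b)}\}-\gamma|c_i^{(b)}-c_j^{(b)}|,\ \max\{c_i^{(b)},c_j^{(b)}\}+\gamma|c_i^{(b)}-c_j^{(b)}|\,]$, obtaining a value $\trunc(q^{(b)})$, where $\gamma=\Theta(\log\log n/\epsilon)$ is a sufficiently generous threshold. We then form $\hat A=\tfrac1T\sum_{b\in I_{ij}}|\trunc(q^{(b)})-c_i^{(b)}|^p/p^{(b)}_{ij}$ and $\hat B=\tfrac1T\sum_{b\in I_{ij}}|\trunc(q^{(b)})-c_j^{(b)}|^p/p^{(b)}_{ij}$, declare $c_j$ (resp.\ $c_i$) the nearer point if $(\hat A/\hat B)^{1/p}\ge 1+\epsilon/2$ (resp.\ $\le 1/(1+\epsilon/2)$), and return the special symbol $\bot$ otherwise, exactly as in \cref{alg:strong-2points}.

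The analysis has two parts. First, a \emph{truncation lemma}: writing $q'$ for $q$ truncated with respect to $\{c_i,c_j\}$, we show that $\|q-c_i\|_p\ge(1+\epsilon)\|q-c_j\|_p$ implies $\|q'-c_i\|_p\ge(1+\epsilon/2)\|q'-c_j\|_p$, and $\|q-c_i\|_p\ge\|q-c_j\|_p$ implies $\|q'-c_i\|_p\ge\|q'-c_j\|_p$. This is proved by splitting $[d]$ into the good region $\Good(c_i,c_j,q)$ (where $q$ is not truncated, so $q'=q$) and the bad region $\Bad(c_i,c_j,q)$, in the spirit of \cref{lemma:good-region-is-large-l1}: on a bad coordinate $b$ one has $|q^{(b)}-c_i^{(b)}|/|q^{(b)}-c_j^{(b)}|\in[1-1/\gamma,1+1/\gamma]$ both before and after truncation, so (after raising to the $p$-th power and using $\epsilon<1/(4p)$) the entire $(1+\epsilon)$-gap must already reside in the good region; one then recombines the regions using the bound $\dist_{\Bad}(q,c_j)^p\ge\gamma^p\sum_{b\in\Bad}|c_i^{(b)}-c_j^{(b)}|^p$, the fact that after truncation $\dist_{\Bad}(q',c_i)^p$ and $\dist_{\Bad}(q',c_j)^p$ agree up to a $(1+1/\gamma)^p$ factor, and the convexity estimate $(1-t)^{1/p}\ge 1-t/p$. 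Second, a \emph{concentration lemma}: since every summand of $\hat A,\hat B$ is now bounded by $(1+\gamma)^p\|c_i-c_j\|_p^p$, while the larger of $\|q'-c_i\|_p^p,\|q'-c_j\|_p^p$ is at least $2^{-p}\|c_i-c_j\|_p^p$ by the triangle inequality, a Chernoff bound with $T=\Theta\big((\log\log n/\epsilon)^{p}\log(n/\delta)/\epsilon^{2}\big)$ gives $\hat A\approx_{\epsilon/16}\|q'-c_i\|_p^p$ and $\hat B\approx_{\epsilon/16}\|q'-c_j\|_p^p$ with probability $1-\delta/n^2$ (with a short case split, as in \cref{lemma:chernoff_for_good_l1} and \cref{lem:strong-comparator-l1}, on whether the smaller truncated distance is itself a constant fraction of $\|c_i-c_j\|_p$). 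Combining the two lemmas yields a two-point comparator with the guarantees of \cref{lem:strong-comparator-l1} for \emph{arbitrary} queries; feeding it into the min-finding loop exactly as in the proof of \cref{thm:warm-up-l1} and taking a union bound over the $n-1$ comparisons gives the correctness claim of \cref{theorem:Lp-NN}, and the space and query bounds follow from $|I_{ij}|=O(T)$ together with there being $\binom n2$ pairs (using $\log(n^2/\delta)=O(\log(n/\delta))$).

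The main obstacle is the truncation lemma, and in particular its recombination step: truncating a bad coordinate can \emph{flip} which of $c_i,c_j$ it favors and can amplify a tiny per-coordinate imbalance into a $(1+1/\gamma)^p$ factor, so one must show that the slack accumulated in the good region always dominates this amplification—which is exactly where the generous choice $\gamma=\Theta(\log\log n/\epsilon)$, and hence the resulting $(\log\log n)^p$ factor from the bound $(1+\gamma)^p$ on the sampled terms, comes in. Everything else—the $\ell_p$ analogues of \cref{lemma:good-region-is-large-l1} and \cref{lemma:chernoff_for_good_l1}, the $\ell_p$ version of \cref{claim:Markovlone}, and the Chernoff calculation—is routine once the post-truncation per-coordinate bound is in place, since $\|x-y\|_p^p=\sum_b|x^{(b)}-y^{(b)}|^p$ is a sum of nonnegative terms and is therefore well behaved under importance sampling.
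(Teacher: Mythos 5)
Your overall architecture—pairwise importance sampling with probabilities $|c_i^{(b)}-c_j^{(b)}|^p/\|c_i-c_j\|_p^p$, per-pair truncation of the probed query coordinates, and a Chernoff bound whose summands are controlled by the truncation width raised to the $p$-th power—matches the paper's. The gap is exactly where you locate the main obstacle: the order-preservation half of your truncation lemma, that $\|q-c_i\|_p\ge\|q-c_j\|_p$ implies $\|q'-c_i\|_p\ge\|q'-c_j\|_p$, is false for $p>1$, and no choice of $\gamma$ fixes it. Concretely, take one bad coordinate on which $q$ lies at distance $R$ beyond $c_j$ (outside the window); before truncation this coordinate favors $c_j$ by $(R+\Delta)^p-R^p\approx pR^{p-1}\Delta$, but after truncation it favors $c_j$ only by $(1+\gamma)^p\Delta^p-\gamma^p\Delta^p\approx p\gamma^{p-1}\Delta^p$. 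If the good region favors $c_i$ by an amount $G$ with $p\gamma^{p-1}\Delta^p\ll G\le pR^{p-1}\Delta$ (achievable by choosing $R$ appropriately), then $c_j$ is the true nearer point of the pair, yet $\|q'-c_i\|_p^p/\|q'-c_j\|_p^p\approx G/((1+\gamma)^p\Delta^p)$ can be made arbitrarily large, so your comparator confidently returns $c_i$. Truncation rescales different bad coordinates by wildly different factors and destroys the cancellation among them; the per-coordinate ratio bound $(1\pm 1/\gamma)$ cannot control a difference of sums. This is why the paper's \cref{claim:truncate} asserts ratio preservation only when the original ratio is at least $1+\epsilon$ (its calculus argument on $g(x)=|x|^p-(1+m)^p|x-b^{(1)}|^p$ needs $m\ge\epsilon$), and why \cref{lem:lp-2-point-comparator} delivers only the \emph{weak} guarantee (some $(1+\epsilon)$-approximate nearest neighbor of the pair) with no analogue of property 2 of \cref{lem:strong-comparator-l1}.

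This breaks the downstream step as well: with only the weak guarantee, the linear scan of \cref{alg:naive-min-finding} may replace the candidate by a genuinely farther point (up to a $1+\epsilon$ factor) at each of the $n-1$ comparisons, compounding to $(1+\epsilon)^{n-1}$, so you cannot "feed it into the min-finding loop exactly as in the proof of \cref{thm:warm-up-l1}". The paper instead runs the recursive $\sqrt{|S|}$-tournament of \cref{sec:transformation-2-to-n} (after Ajtai et al.), which has recursion depth $\log\log n$, loses a factor $(1+\epsilon)^2$ per level, and therefore invokes the two-point comparator with accuracy $\epsilon/\log\log n$. That rescaling—not a choice of truncation width—is the true origin of the $(\log\log n)^p$ factor in the theorem; your $\gamma=\Theta(\log\log n/\epsilon)$ appears without derivation precisely because the tournament argument is missing. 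To repair the proof, drop the order-preservation claim, accept the weak comparator, and replace the linear scan with the $\log\log n$-depth approximate-minimum tournament together with the $\epsilon\mapsto\epsilon/\log\log n$ rescaling.
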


\subsection{A Data Structure for Two Input Points}\label{sec:2-center-lp}

We start with a data structure (see \cref{alg:alg-2points}) answering approximate nearest neighbor queries for the case of $2$ points, say $a$ and $b$. 
It generates \cref{thm:warm-up-l1}, and the main differences are: (1) It works for $\ell_p$ metric. (2) It explicitly use truncation to remove the bounding box assumption.

\begin{lemma}(2-point comparator, $\ell_p$)\label{lem:lp-2-point-comparator}
    Given $0<\epsilon<1/(4p)$, there exists a data structure $A_{(a,b)}$ for any two points $a, b \in \R^d$, such that for any query $q\in \R^d$, the following conditions hold: (i) $A_{(a,b)}$ is of wordsize $O(\log (1/\delta) \log (d)2^p/\epsilon^{p+2})$. (ii) $A_{(a,b)}$ reads $O( \log (1/\delta)2^p/\epsilon^{p+2})$ coordinates of $q$. (iii) $A_{(a,b)}$ returns 
    a $(1+\epsilon)$-approximate nearest neighbor of $q$ in $\ell_p$ metric with probability $1-\delta$.
\end{lemma}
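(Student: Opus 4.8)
The plan is to build the 2-point comparator $A_{(a,b)}$ for general $\ell_p$ by following the template of the $\ell_1$ warm-up (\cref{alg:strong-2points,lem:strong-comparator-l1}), but with two modifications: importance sampling adapted to the $p$-th power of coordinate differences, and an explicit \emph{truncation} step that replaces the bounding-box assumption. Concretely, in the preprocessing phase I would sample a multiset $I$ of $T = O(\log(1/\delta)2^p/\epsilon^{p+2})$ coordinates, each coordinate $z$ chosen with probability $p^{(z)} = |a^{(z)}-b^{(z)}|^p/\|a-b\|_p^p$, store $a^{(I)}, b^{(I)}, I$ and the probabilities. In the query phase, upon receiving $q$, I would first \emph{truncate}: for each $z \in I$, replace $q^{(z)}$ by the nearest point of a scaled interval around $[\min\{a^{(z)},b^{(z)}\}, \max\{a^{(z)},b^{(z)}\}]$ (scaled by an $O(1/\epsilon)$-type factor, analogous to the $\Good$-region definition), obtaining $\tilde q^{(z)}$; then compute the scaled estimates $x = \sum_{z\in I}|\tilde q^{(z)}-a^{(z)}|^p/p^{(z)}$ and $y = \sum_{z\in I}|\tilde q^{(z)}-b^{(z)}|^p/p^{(z)}$ and output whichever center has the smaller estimate (with a $1+\Theta(\epsilon)$ margin as in the warm-up).

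The analysis then splits into two regimes exactly as in \cref{sec:l1samplereuse}. First, I would show truncation does not hurt: because $q$ being far outside the bounding box on a coordinate means both $|q^{(z)}-a^{(z)}|^p$ and $|q^{(z)}-b^{(z)}|^p$ are dominated by the (common) term $|q^{(z)}-\text{box}|^p$, the ratio $\|q-a\|_p/\|q-b\|_p$ and the ratio of the truncated quantities $\|\tilde q - a\|_p/\|\tilde q - b\|_p$ differ by at most a $(1+\epsilon)$-ish factor — this is the $\ell_p$ generalization of the good/bad-region bookkeeping (\cref{lemma:good-region-is-large-l1} and \cref{claim:robust-estimate}), and it is where the $(\log\log n)$-free, purely $\epsilon, p$-dependent constants come from; note that we need $\epsilon < 1/(4p)$ so that $(1+\epsilon)^p \approx 1+p\epsilon$ stays bounded. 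Second, with truncation in force each summand $|\tilde q^{(z)}-a^{(z)}|^p/p^{(z)}$ is bounded by $O(1/\epsilon)^p \cdot \|a-b\|_p^p$, and since $\E[x] = \|\tilde q - a\|_p^p$, $\E[y]=\|\tilde q-b\|_p^p$, and (as in the warm-up, WLOG $\|\tilde q - b\|_p \ge \|\tilde q - a\|_p$) the larger expectation is $\ge \tfrac12 \|a-b\|_p^p$, a Chernoff bound with $T = \Theta(2^p \log(1/\delta)/\epsilon^{p+2})$ gives a $(1\pm\epsilon/8)$ relative estimate (respectively, a one-sided bound in the case the closer expectation is tiny). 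Combining the two regimes yields that whenever $\|q-b\|_p \ge (1+\epsilon)\|q-a\|_p$ the comparator outputs $a$ with probability $1-\delta$, which is claim (iii).

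For the resource bounds (i) and (ii): the number of queried coordinates is $|I| = T = O(2^p\log(1/\delta)/\epsilon^{p+2})$, giving (ii). For the word size, storing $I$, $a^{(I)}$, $b^{(I)}$, and the $|I|$ probabilities costs $O(|I|\log d)$ words (the $\log d$ for the coordinate indices), i.e.\ $O(2^p\log(1/\delta)\log(d)/\epsilon^{p+2})$, giving (i). The extra $(\log\log n)^p$ factor in the full theorem \cref{theorem:Lp-NN} does \emph{not} appear here; it arises only later when $n-1$ such comparators are chained in a min-finding procedure and the per-comparator failure probability must be lowered to $\delta/\mathrm{poly}(n)$ while additionally a $p$-th-power triangle-inequality slack accumulates — so in this lemma $\delta$ is simply the stated parameter.

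The main obstacle I expect is the $\ell_p$ generalization of the good/bad-region argument and truncation analysis: with $p$-th powers, the clean cancellations of the $\ell_1$ case (where $|q-a|-|q-b|$ telescopes) are replaced by estimates of the form $| |t|^p - |t-\Delta|^p | \le p|\Delta|\max(|t|,|t-\Delta|)^{p-1}$, and one must be careful that the relative error incurred on the "bad" coordinates stays $O(\epsilon)$ rather than $O(p\epsilon)$ or worse — this is exactly why the hypothesis $\epsilon < 1/(4p)$ is imposed, and getting the constants to line up so that a $(1+\epsilon/2)$ decision margin survives is the delicate part. Everything else (the Chernoff computation, the min-finding wrapper, the storage accounting) is routine once this core estimate is in place.
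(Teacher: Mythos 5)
Your overall architecture matches the paper: importance sampling with $p^{(z)} \propto |a^{(z)}-b^{(z)}|^p$, an explicit per-coordinate truncation of $q$, then a Chernoff argument with $T=O(2^p\log(1/\delta)/\epsilon^{p+2})$, and the resource accounting is correct. But the key correctness step — why truncation does not corrupt the answer — is argued differently, and your version as stated has a genuine gap.

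You claim that the untruncated ratio $\|q-a\|_p/\|q-b\|_p$ and the truncated ratio $\|\tilde q-a\|_p/\|\tilde q-b\|_p$ ``differ by at most a $(1+\epsilon)$-ish factor,'' i.e.\ a \emph{two-sided} multiplicative bound obtained from a good/bad-region decomposition. That is not enough: if the true ratio is exactly $1+\epsilon$, a two-sided $(1+\epsilon)$ bound allows the truncated ratio to drop to $1$ (or below), leaving no margin for the Chernoff step to detect. Worse, the good/bad-region decomposition alone does not immediately give even the two-sided bound, because truncation can shrink the bad-coordinate contributions $\tilde B,\tilde B'$ far below $B,B'$, so after truncation the bad coordinates (whose ratio can legitimately be $<1$ by a $(1\pm\Theta(p\epsilon))$ factor) may dominate; you would need to additionally use that $G\ge (1+\epsilon)^p G'$ and carefully aggregate, which is not what you write. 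The paper's \Cref{claim:truncate} instead proves a \emph{one-sided monotonicity} statement: with the truncation thresholds tuned to $m(\epsilon,p)=(1+\epsilon)^{p/(p-1)}-1$ (so $l^{(i)} = a^{(i)}-\tfrac{1}{m}(b^{(i)}-a^{(i)})$, $u^{(i)}=b^{(i)}+\tfrac{1}{m}(b^{(i)}-a^{(i)})$, reducing to $l=a,u=b$ for $p=1$), truncating $q$ toward the box can \emph{only increase} the ratio $\dist(a,q)/\dist(b,q)$ whenever that ratio is $\ge 1+\epsilon$. The proof is a small optimization argument: for each coordinate it shows $g(x)=|x|^p-(1+m)^p|x-b^{(1)}|^p$ is maximized at $q_*^{(1)}=(1+\tfrac{1}{(1+m)^{p/(p-1)}-1})b^{(1)}$, and since the true ratio exponent $m$ is at least $\epsilon$, the truncation point $q'^{(1)}$ sits at or past $q_*^{(1)}$ on the decreasing side, so moving $q^{(1)}$ to $q'^{(1)}$ increases $g$. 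This yields an exact, lossless preservation of the ordering, after which your Chernoff computation (with the boundedness $X_t,Y_t\le(1+1/\epsilon)^p$ and $\E[Y_t]\ge 1/2^p$) goes through unchanged. You would need to either adopt this tuned-threshold monotonicity argument, or substantially strengthen the good/bad aggregation to a one-sided bound with an $\Omega(\epsilon)$ margin; the phrasing ``differs by at most a $(1+\epsilon)$-ish factor'' does not do it.
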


\begin{algorithm}[htbp]\label{alg:alg-2points}
    \caption{An approximate nearest neighbor data structure $A_{(a,b)}$ for two points $a,b\in \R^d$.}
    \SetKwProg{preprocessing}{Preprocessing}{}{}
    \SetKwProg{query}{Query}{}{}
    \SetKw{store}{store}
    \preprocessing{$(a,b, \epsilon, \delta)$\tcp*[f]{Inputs: $a,b\in \R^d, \epsilon \in (0, \frac{1}{4p}), \delta\in (0,1)$}}{
        Let $I \leftarrow \emptyset$ be a multiset and
        $T \leftarrow O( \log (1/\delta)2^p/\epsilon^{p+2})$\;
        \For{$t\in [T]$}{
            Sample $i_t\in [d]$, where $
        \forall i\in [d], \Pr[i_t = i] = \frac{|b^{(i)} - a^{(i)}|^p}{\sum_{j\in [d]} |b^{(j)} - a^{(j)}|^p}$\;
        }
        \store{$I, a^{(I)}, b^{(I)}$}.
    }
    \query{$(q)$ \tcp*[f]{Inputs: $q\in \R^d$}}{
        \For{$i\in I$}{
            Query $q$ on coordinate ${i}$, i.e. $q^{(i)}$\; 
            \If{$p>1$ }{
                Let $m(\epsilon, p) = (1+\epsilon)^{p/(p-1)}-1$\;
                $l^{(i)} \leftarrow -\frac{1}{m(\epsilon, p)}(b^{(i)} - a^{(i)}) + a^{(i)}$ and $u^{(i)} \leftarrow (1+\frac{1}{m(\epsilon, p)})(b^{(i)} - a^{(i)})+a^{(i)}$\; \label{line:def-lu-1}
            }
            \Else($p=1$){
                $l^{(i)} \leftarrow a^{(i)}$ and $u^{(i)} \leftarrow b^{(i)}$\; \label{line:def-lu-2}
            }
            Truncate $q'^{(i)} \leftarrow \trunc_{\{l^{(i)}, u^{(i)}\}}(q^{(i)})$\;
            Let $X_t = \frac{|q'^{(i_t)} - a^{(i_t)}|^p}{|b^{(i_t)} - a^{(i_t)}|^p}, Y_t = \frac{|b^{(i_t)} - q'^{(i_t)}|^p}{|b^{(i_t)} - a^{(i_t)}|^p}$\;
        }
        \lIf{$\sum_{t\in [T]} X_t \leq \sum_{t\in [T]} Y_t$}{\Return{$a$} as the approximate nearest neighbor}\lElse{\Return{$b$}}
    }
\end{algorithm}
\begin{proof}
    Assume w.l.o.g.\ that $q\notin \{a,b\}$ to avoid the ratio $\frac{\dist(a,q)}{\dist(b,q)}$ to be $\infty$, and our data structure works without this assumption. 
    The idea is to determine if the ratio of distances $\frac{\dist(a,q)}{\dist(b,q)}$ is greater or smaller than $1$. To achieve this, we \emph{truncate} $q$ to $q'$ so that if
    $1+\epsilon \leq \frac{\dist(a,q)}{\dist(b,q)}$, then $\frac{\dist(a,q)}{\dist(b,q)} \leq \frac{\dist(a,q')}{\dist(b,q')}$, making it potentially easier to distinguish between $a$ and $b$ based on their distances to $q'$. 
    
    
    In the following, we assume w.l.o.g.\ that $a = (0, 0, \dots, 0)$, $b = (b^{(1)}, b^{(2)}, \dots, b^{(d)})$, where $b^{(i)} \geq 0$. It is okay to make this assumption because we can flip and shift the coordinates. Let $q = (q^{(1)}, \dots, q^{(d)})$. In the rest of the proof, we use $A$ to denote $A_{(a,b)}$.
    \begin{definition}{(Truncate)}\label{def:truncation}
    For $l,u \in \R$, and $l< u$, 
            \begin{equation*}
                \trunc_{\{l,u\}}(x) = 
                \begin{cases}
                    l, & x \leq l,\\
                    x, & l < x < u,\\
                    u, & x \geq u.
                \end{cases}
            \end{equation*}
\end{definition}




\paragraph{Analysis.}  Let $q'$ be a truncated point where $q'^{(i)} = \trunc_{\{l^{(i)}, u^{(i)}\}}(q^{(i)})$, $ \forall i\in [d]$.  

We can bound the random variables:
    $0 \leq X_t, Y_t \leq (1+1/\epsilon)^p$ because 
    \begin{equation}\label{eq:rv-bound}
        0 \leq X_t \leq (1+\frac{1}{(1+\epsilon)^{p/(p-1)}-1})^p \leq (1+\frac{1}{(1+\epsilon(p/(p-1))-1})^p = (1+\frac{p-1}{p\epsilon})^p \leq (1+1/\epsilon)^{p}.          
    \end{equation}
        
    and similar arguments hold for $Y_t$. 
    
    
    \begin{claim}{(Truncation preserves ratio)}\label{claim:truncate} 
    For $\ell_p$ metric, let $q'$ be a truncated point, where
    $q'^{(i)} = \trunc_{\{l^{(i)}, u^{(i)}\}}(q^{(i)})$. $l^{(i)}, u^{(i)}$ are defined in Line \ref{line:def-lu-1} and \ref{line:def-lu-2} of \cref{alg:alg-2points}.

    
    If $\frac{\dist(a,q)}{\dist(b,q)} \geq (1+\epsilon)$, then 
    \begin{equation*}
        \frac{\dist(a,q)}{\dist(b,q)} \leq \frac{\dist(a,q')}{\dist(b,q')}.
    \end{equation*}
    If $\frac{\dist(a,q)}{\dist(b,q)} \leq 1/(1+\epsilon)$, then
    \begin{equation*}
        \frac{\dist(a,q)}{\dist(b,q)} \geq \frac{\dist(a,q')}{\dist(b,q')}.
    \end{equation*}

    \end{claim}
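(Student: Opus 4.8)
The plan is to prove the claim coordinate-by-coordinate after reducing it to a single scalar inequality about the ``increments'' of the $p$-th powers on the truncated coordinates. First I would invoke the normalization already set up in the proof of the lemma ($a=0$, $b^{(i)}\ge 0$), so that for $p>1$ the truncation interval on coordinate $i$ is $[l^{(i)},u^{(i)}]=[-b^{(i)}/m,\ (1+1/m)b^{(i)}]$ with $m=m(\epsilon,p)=(1+\epsilon)^{p/(p-1)}-1$, and $[0,b^{(i)}]$ for $p=1$; in particular $[l^{(i)},u^{(i)}]\supseteq[a^{(i)},b^{(i)}]$. Let $S\subseteq[d]$ be the set of coordinates actually changed by truncation; off $S$ we have $q'^{(i)}=q^{(i)}$, and on $S$ truncation moves $q^{(i)}$ strictly towards the segment $[a^{(i)},b^{(i)}]$, so writing $X_i=|q^{(i)}-a^{(i)}|^p$, $Y_i=|q^{(i)}-b^{(i)}|^p$ (and $X_i',Y_i'$ for $q'$) we get $X_i'\le X_i$ and $Y_i'\le Y_i$, with $Y_i'<Y_i$ for every $i\in S$.

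Next I would reduce the claim to a scalar inequality. Put $\Delta_a=\sum_{i\in S}(X_i-X_i')=\dist(a,q)^p-\dist(a,q')^p\ge 0$ and $\Delta_b=\sum_{i\in S}(Y_i-Y_i')\ge 0$. The desired inequality $\dist(a,q')^p/\dist(b,q')^p\ge \dist(a,q)^p/\dist(b,q)^p$ is, after cross-multiplying, equivalent to $\dist(a,q)^p\,\Delta_b\ge \dist(b,q)^p\,\Delta_a$ (the case $S=\emptyset$, equivalently $\Delta_b=0$, is trivial). Since the hypothesis is exactly $\dist(a,q)^p/\dist(b,q)^p\ge(1+\epsilon)^p$, it suffices to show $\Delta_a/\Delta_b\le(1+\epsilon)^p$, and by the mediant inequality $\Delta_a/\Delta_b\le\max_{i\in S}(X_i-X_i')/(Y_i-Y_i')$, so it is enough to prove the per-coordinate bound $(X_i-X_i')/(Y_i-Y_i')\le(1+\epsilon)^p$ for every $i\in S$.

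The heart of the argument is this per-coordinate bound, which is where the value of $m$ is used. For a coordinate truncated on the $b$-side ($s:=q^{(i)}>u^{(i)}$, $q'^{(i)}=u^{(i)}$), write $c=b^{(i)}>0$ and $u=u^{(i)}=(1+1/m)c$, and express the increments as integrals of the derivative of $t\mapsto t^p$: $X_i-X_i'=\int_u^s p\,t^{p-1}\,dt$ and $Y_i-Y_i'=\int_u^s p\,(t-c)^{p-1}\,dt$. The integrand ratio $(t/(t-c))^{p-1}$ is decreasing in $t$ on $(c,\infty)$, hence for $t\ge u$ it is at most its value at $t=u$, namely $\big((1+1/m)m\big)^{p-1}=(m+1)^{p-1}=(1+\epsilon)^p$; a pointwise comparison then gives $X_i-X_i'\le(1+\epsilon)^p(Y_i-Y_i')$. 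For a coordinate truncated on the $a$-side the two increments are increments of $t\mapsto t^p$ over intervals of equal length, the one for $Y$ a rightward shift of the one for $X$, so convexity yields $X_i-X_i'\le Y_i-Y_i'$ and the ratio is at most $1\le(1+\epsilon)^p$. For $p=1$ the interval is not expanded and one checks directly that $X_i-X_i'=Y_i-Y_i'$ in both directions, so the ratio equals $1$. This proves the first inequality; the second follows by swapping the roles of $a$ and $b$, using that $\{l^{(i)},u^{(i)}\}$ (and hence the truncation map) is symmetric under that swap.

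The main obstacle is the per-coordinate bound for $p>1$: the naive mean-value estimate $X_i-X_i'=p\xi^{p-1}(s-u)$, $Y_i-Y_i'=p\eta^{p-1}(s-u)$ only gives $\xi^{p-1}/\eta^{p-1}\le\big(s/(u-c)\big)^{p-1}$, which blows up as $s\to\infty$ and is far too weak. The fix is to observe that the integrand ratio $(t/(t-c))^{p-1}$ is monotone in $t$ and is therefore controlled uniformly by its boundary value at $t=u$, which is exactly $(1+\epsilon)^p$ precisely because $m=(1+\epsilon)^{p/(p-1)}-1$; everything else (the cross-multiplication, the mediant step, and the bookkeeping for the degenerate cases $b^{(i)}=a^{(i)}$, $S=\emptyset$, and $q\in\{a,b\}$) is routine.
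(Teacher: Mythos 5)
Your proof is correct, and it reaches the claim by a genuinely different decomposition than the paper's. The paper truncates one coordinate at a time: writing $(1+m)^p$ for the \emph{current} ratio $\dist^p(a,q)/\dist^p(b,q)$, it shows that $g(x)=|x|^p-(1+m)^p|x-b^{(1)}|^p$ is maximized at $\bigl(1+\tfrac{1}{(1+m)^{p/(p-1)}-1}\bigr)b^{(1)}$ and decreases beyond that point, so pulling $q^{(1)}$ back to the truncation threshold (which lies to the right of the maximizer since $m\ge\epsilon$) can only increase $g$ and hence the ratio; it then iterates over coordinates, with $m$ only growing. You instead treat all truncated coordinates simultaneously: cross-multiplication reduces the claim to $\dist^p(a,q)\,\Delta_b\ge \dist^p(b,q)\,\Delta_a$, the mediant inequality reduces that to the uniform per-coordinate bound $(X_i-X_i')/(Y_i-Y_i')\le(1+\epsilon)^p$, and the monotone-integrand comparison proves that bound with the fixed constant $(1+\epsilon)^p$ rather than the instance-dependent $(1+m)^p$. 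The underlying single-variable fact is identical in both proofs --- the condition $g'(x)\le 0$ is exactly your integrand inequality $\bigl(t/(t-c)\bigr)^{p-1}\le(m+1)^{p-1}=(1+\epsilon)^p$, with equality precisely at $t=u=(1+1/m)c$ --- but your aggregation avoids the iterative bookkeeping and also dispenses with the separate ad hoc arguments the paper gives for $\ell_1$ and $\ell_2$. Your treatment of the $a$-side truncation by a shift-and-convexity argument (ratio at most $1$) and of the second statement via the $a\leftrightarrow b$ symmetry of $\{l^{(i)},u^{(i)}\}$ both check out; the only loose ends are the degenerate cases you already flag ($a^{(i)}=b^{(i)}$, $S=\emptyset$, and the possibility $q'=b$, where the target ratio is infinite and the conclusion is vacuous), none of which affects correctness.
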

        
    This claim shows that if we can return a $(1+\epsilon)$-approximate nearest neighbor of $q'$, it would also be a $(1+\epsilon)$-approximate nearest neighbor of $q$. This is because, 
    \begin{itemize}
        \item If $\dist(a,q) > (1+\epsilon)\dist(b,q)$, then from above claim $\dist(a,q') > (1+\epsilon)\dist(b,q')$. Since it returns $(1+\epsilon)$-approximate nearest neighbor of $q'$, it must return $b$. This is the nearest neighbor for $q$.
        \item If $\frac{1}{1+\epsilon} \leq \frac{\dist(a,q)}{\dist(b,q)} \leq (1+\epsilon)$, then both $a$ and $b$ are $(1+\epsilon)$-approximation nearest neighbor. Thus it does not matter which one is returned.
        \item If If $\dist(a,q) < 1/(1+\epsilon)\dist(b,q)$, this is similar to the first case, and it must returns $a$.
    \end{itemize}
    
    Now let us prove that the data structure $A$ returns a $(1+\epsilon)$-approximate nearest neighbor of $q'$. 

    Assume that $\dist(a,q') \leq \dist(b,q')$, i.e. $a$ is the nearest neighbor of $q'$. Notice $(X_t)^{1/p} + (Y_t)^{1/p} \geq 1$ which gives $X_t + Y_t \geq \frac{1}{2^{p-1}}$, we have $\E[Y_t] \geq \frac{1}{2^p}.$ 
    Since $\E[X_t] = \dist^p(a,q')$ and $\E[Y_t] = \dist^p(b,q')$, we can use Chernoff bound to approximate $\dist(a,q')$ and $\dist(b,q')$.

    Let $s = \frac{1}{2^{p+2}} $.
    
    \noindent \underline{Case 1.} If $\E[X_t] \geq s$, using Chernoff bound, for $0<\epsilon_2 = p\epsilon/10 < 1$, we have 
    \begin{equation}\label{eq:chernoff-case1-x}
    \Pr\left[\left|\frac1T\sum_t X_t - \E[X_t]\right| \geq \epsilon_2 \E[X_t]\right] \leq 2\exp(-\frac{\epsilon_2^2 T \E[X_t]}{3(1+1/\epsilon)^p}) \leq \delta/2.
    \end{equation}
    by choosing 
    $T = \Theta( \frac{2^p}{p^2\epsilon^{p+2}}\log(1/\delta)) \geq \Theta((1+\frac{1}{\epsilon})^p \frac{1}{\epsilon_2^2} \frac{1}{s}\log(1/\delta)) $.
    
    Similarly for $Y_t$, using the same choice of $T$, since $\E[Y_t] \geq \frac{1}{2^p} \geq s$, $\Pr[\left|\frac1T\sum_t Y_t - \E[Y_t]\right| > \epsilon_2 \E[Y_t]] < \delta/2.$  

    By union bound, with probability $1-\delta$, we have 
    \begin{align}\label{eq:chernoff-case1}
    \begin{split}
        \frac{1}{T}\sum_t X_t &\in [1-\epsilon_2, 1+\epsilon_2] \E[X_t],\\
    \frac1T\sum_t Y_t &\in [1-\epsilon_2, 1+\epsilon_2] \E[Y_t].
    \end{split}
    \end{align}

    Recall that $a$ is the nearest neighbor of $q'$. If $\dist(b,q')/\dist(a,q') \geq 1+\epsilon$, 
    \begin{equation*}
        \frac{\sum Y_t}{\sum X_t} \geq \frac{1-\epsilon_2}{1+\epsilon_2} \frac{\E[Y_t]}{\E[X_t]} \geq \frac{1-\epsilon_2}{1+\epsilon_2} (1+\epsilon)^p \geq 1.
    \end{equation*}
    Thus $A_{(a,b)}$ will returns $a$ correctly.

    \noindent \underline{Case 2.} If $\E[X_t] < s$. Using Chernoff bound, we have 
    \begin{equation}
    \Pr\left[\left|\frac1T\sum_t X_t - \E[X_t]\right| > s\right] < 2\exp\left( -\frac{Ts}{3(1+1/\epsilon)^p}\right) \leq \delta/2. \label{eq:chernoff-case2-x}
    \end{equation}
    by choosing $T = \Theta( \frac{2^p}{p^2\epsilon^{p+2}}\log(1/\delta)) \geq \Theta(\frac{1}{s}(1+\epsilon)^p \log(1/\delta))$.
    
    For $Y_t$, since $\E[Y_t] \geq \frac{1}{2^p} \geq s$, $\Pr[\left|\frac1T\sum_t Y_t - \E[Y_t]\right| > \frac12 \E[Y_t]] < \delta/2.$ 

    By union bound, with probability $1-\delta$, we have 
    \begin{equation}\label{eq:chernoff-case2}
        \frac1T\sum_t X_t \leq 2s, \quad\frac1T\sum_t Y_t \in \left[\frac12, \frac32\right] E[Y_t].
    \end{equation}
    Since 
    \[
    \frac{\sum Y_t}{\sum X_t} \geq \frac{1/2 \E[Y_t]}{2s} \geq 1,
    \]
    $A$ will output $a$ with probability $(1-\delta)$.

    To conclude, we proved that $A$ outputs a $(1+\epsilon)$-approximate nearest neighbor of $q'$ with probability $1-\delta$, which is also a $(1+\epsilon)$-approximate nearest neighbor of $q$.

\end{proof}

\begin{proof}{(For Claim \ref{claim:truncate})}
    Recall our assumption that $a = \mathbbm{0}$ and $\forall i, b_i\geq 0 $. We prove that if $\dist(a,q)/\dist(b,q) \geq 1+\epsilon$, \[
    \frac{\dist(a,q)}{\dist(b,q)} \leq \frac{\dist(a,q')}{\dist(b,q')}. \]
    We first argue for the cases of $\ell_1$ and $\ell_2$ metrics to give some intuition, and then give a general proof for the $\ell_p$ metric.
    
    (1) For $\ell_1$, it is true, because $\dist(a,q) - \dist(a,q') = \dist(b,q) - \dist(b,q') \geq 0$. 
    \smallskip
    
    (2) For $\ell_2$ (see \cref{fig:truncation}),
    $q'^{(i)} 
    = \trunc_{\{-\frac{1}{2\epsilon+\epsilon^2}b^{(i)}, (1+\frac{1}{2\epsilon+\epsilon^2})b^{(i)}\}}(q^{(i)})$.
    
     For the function $f(q) = \frac{\dist(a,q)}{\dist(b,q)}$, consider the contour $B(t) \triangleq \{x | f(x) = t\}$. For each $t\neq 1, B(t)$ is ball, because
    \[
    f(x) = t \Leftrightarrow \|x\|_2 = t \|x-b\|_2.
    \]
    
     Denote the center of $B(t)$ as $C(t)$ and the radius as $R(t)$. Denote $H$ as the truncation hyperrectange where $H \triangleq \{x | l^{(i)} \leq x^{(i)} \leq u^{(i)}, i\in [d]\}$.  Denote $c_1  \triangleq C(1+\epsilon) = (1+\frac{1}{2\epsilon+\epsilon^2}) b$, which is a corner of $H$. 
    Define $1+m \triangleq \frac{\dist(a,q)}{\dist(b,q)} \geq 1+\epsilon$. Denote $c_2 \triangleq C(1+m) = (1+\frac{1}{2m+m^2}) b$. 

    For simplicity, assume that there is only one coordinate $i$ that requires truncation. We can make this assumption because, if we have multiple coordinates to be truncated, we just repeat the 1-coordinate argument, and the ratio only increases.

    To prove that the ratio increases, it suffices to show that $q'$ is inside $B(1+m)$, and $q$ is on the surface of $B(1+m)$. The intuition is that, $B(1+\epsilon)$ is a ball centered in the corner of the truncation hyperrectangle $H$. If we project points from $B(1+\epsilon)$ to $H$, then the projected points lie inside $B(1+\epsilon)$. For $B(1+m)$, which is a smaller ball contained in $B(1+\epsilon)$, and whose center $C(1+m)$ is between $C(1+\epsilon)$ and $b$, this is also true. 
    \begin{figure}[h]
    \centering  
    \includegraphics[width=0.6\textwidth]{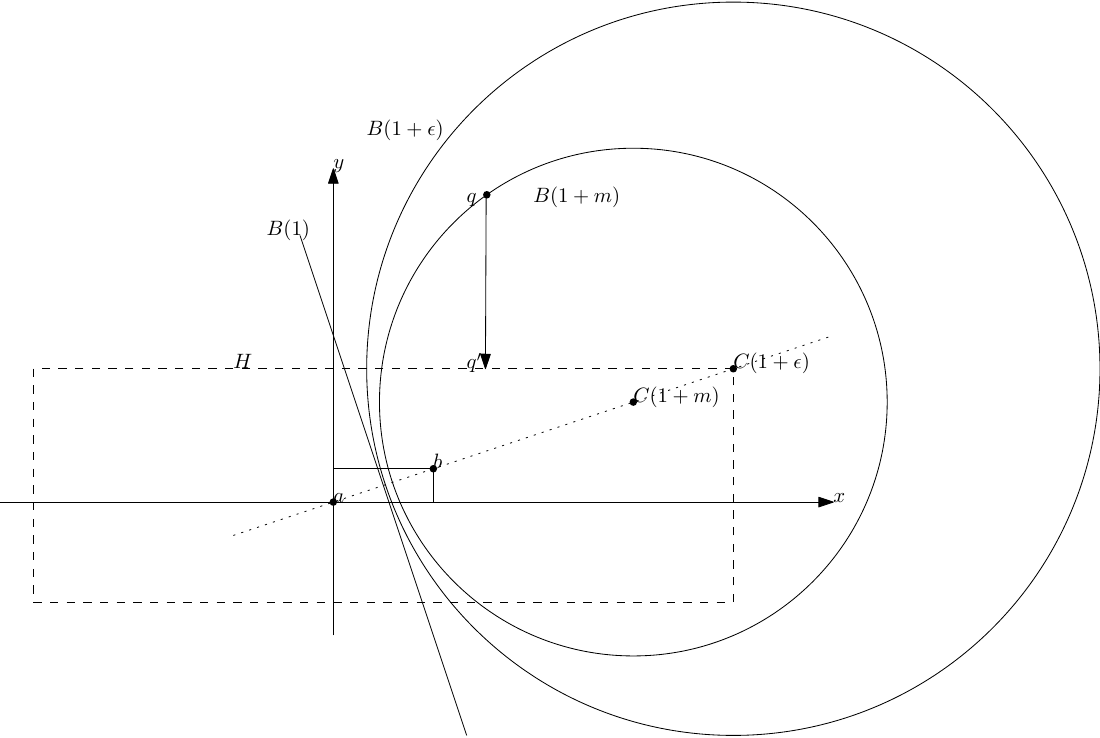}  \captionsetup{justification=centering}
    \caption{An example for $\ell_2$ metric: truncating $q$ to $q'$.}
    \label{fig:truncation}
    \end{figure}
    
    (3) For $\ell_p$, we do $1$-coordinate truncation repeatedly like in the $\ell_2$ case. Suppose we we are truncate the first coordinate.

    We have 
    \begin{equation*}
        \frac{\dist(a,q)}{\dist(b,q)}=\frac{|q^{(1)}|^p + |q^{(2)}|^p + \cdots + |q^{(d)}|^p }{|q^{(1)} - b^{(1)}|^p + |q^{(2)} - b^{(2)}|^p + \cdots + |q^{(d)} - b^{(d)}|^p} = (1+m)^p \geq (1+\epsilon)^p,
    \end{equation*}
    and we want to prove 
    \begin{equation*}
        \frac{|q'^{(1)}|^p + |q^{(2)}|^p + \cdots + |q^{(d)}|^p }{|q'^{(1)} - b^{(1)}|^p + |q^{(2)} - b^{(2)}|^p + \cdots + |q^{(d)} - b^{(d)}|^p} \geq (1+m)^p.
    \end{equation*}
    It suffices to prove that 
    \begin{equation*}
        |q'^{(1)}|^p - (1+m)^p |q'^{(1)} - b^{(1)}|^p \geq |q^{(1)}|^p - (1+m)^p |q^{(1)} - b^{(1)}|^p.
    \end{equation*}
    Case 1: $b^{(1)} \leq q'^{(1)} \leq q^{(1)}$.
    Define $g(x) = |x|^p - (1+m)^p |x-b^{(1)}|^p, x\geq b^{(1)}$. We claim that $q_*^{(1)} = (1+\frac{1}{(1+m)^{p/(p-1)}-1})b^{(1)}$ is the maximizer of $g(x)$. This is because derivative $g'(x) = p(x^{p-1} - (1+m)^p(x-b^{(1)})^{p-1})$. For $x\in [b^{(1)}, q_*^{(1)}])$, $g'(x)$ is positive, and for $x\in (q_*^{(1)}, \infty)$, $g'(x)$ is negative, which shows that $g(x)$ is decreasing in $[q^*, \infty)$. Since $q'^{(1)} = (1+\frac{1}{(1+\epsilon)^{p/(p-1)}-1})b_1 \geq q_*^{(1)}$ and $q^{(1)} \geq q'^{(1)}$, we know that $g(q'^{(1)}) \geq g(q^{(1)})$.  

    \noindent Case 2: $q^{(1)} \leq q'^{(1)} \leq 0$, the argument is similar.
\end{proof}

\subsection{A Data Structure for \texorpdfstring{$n$}{n} Points}\label{sec:transformation-2-to-n}
In this section, we prove Theorem \ref{theorem:Lp-NN}. 
Let $C = \{c_1, c_2, \dots c_n\} \subseteq \R^d$ denote the set of input points and let $q \in \R^d$ denote the query point. 
Our goal is to find an approximate nearest neighbor of $q$ in the set $C$. The argument used here is from \cite{AjtaiFHN16}, but we still present it to be self-contained.

\paragraph{Preprocessing.} For all $(c_i, c_j) \in \binom{C}{2}$, we run the preprocessing steps of the algorithm guaranteed by Lemma~\ref{lem:lp-2-point-comparator}, where we set $a \gets c_i$ and $b \gets c_j$. 

\paragraph{Query.} We first cast the nearest neighbor problem in the language of finding the minimum value in an array of length $n$ using only \emph{imperfect} comparison queries as follows. 
Consider an array of length $n$ whose values are $\dist(q,c_i)$ for $i \in [n]$. Although we do not have access to these values directly, Lemma~\ref{lem:lp-2-point-comparator} guarantees an algorithm that, given two indices $i, j \in [n]$, with probability at least $1 - \delta$, returns the symbol $>$ if $\dist(q,c_i) > (1+\epsilon) \dist(q,c_j)$ and returns the symbol $<$ if $(1+\epsilon) \dist(q,c_i) < \dist(q,c_j)$. In other words, we have access to a comparison operator for the entries of the array such that the comparator works correctly with high probability, if the entries are more than a factor of $1+\epsilon$ apart. 
We use the following recursive algorithm, called initially with $S$ set to $[n]$, can be used to find an \emph{approximate minimum} in this case. 

\smallskip

\noindent\textsf{Min-Finding($S$).}
The input to the algorithm is $S \subseteq [n]$, a set of array indices or, equivalently, the labels of input points.
\begin{enumerate}
    \item If $|S| = 2$, run the algorithm guaranteed by Lemma~\ref{lem:lp-2-point-comparator} and return the index of the smaller value. Note that this step corresponds to the use of the \emph{imperfect comparison operator}.
    \item If $|S| > 2$:
    \begin{enumerate}
        \item Partition $S$ into $\Theta(\sqrt{|S|})$ many subsets $\{T_1, T_2, \dots, T_{\Theta(\sqrt{|S|})}\}$ of size $\sqrt{|S|}$ each. 
        \item Recursively call \textsf{Min-Finding}$(T_i)$ for all $i \in \Theta(\sqrt{|S|})$. 
        \item Let $M = \{m_1, m_2, \dots,  m_{\Theta(\sqrt{|S|})}\}$ be the indices returned by the recursive calls.
        \item For each pair $(m_i, m_j) \in \binom{M}{2}$, run the algorithm guaranteed by Lemma~\ref{lem:lp-2-point-comparator}. 
        \item Return the index $m \in M$ that is returned during the execution of the above step the most number of times, where we break ties arbitrarily.
    \end{enumerate}
\end{enumerate}

\begin{lemma}
Let $c^* \in C$ be the nearest neighbor of $q$ among points in $C$. \textsf{Min-Finding}$([n])$, with probability at least $1 - \Theta(n \log \log n)\cdot \delta$, returns an index $i \in [n]$ such that $\dist(q,c_i) \leq (1+\epsilon)^{2t} \cdot \dist(q,c^*)$, where $t = \log \log n$. Moreover, the total number of times the query algorithm of Lemma~\ref{lem:lp-2-point-comparator} is called is $O(n\log \log n)$.
\end{lemma}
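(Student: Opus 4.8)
The plan is to analyze the recursive \textsf{Min-Finding} procedure along two axes: the approximation factor accumulated by passing through levels of recursion, and the failure probability accumulated by the individual imperfect comparisons. First I would set up the recursion depth. Since each level of recursion replaces a set $S$ by subsets of size $\sqrt{|S|}$, the size after $\ell$ levels is roughly $n^{1/2^\ell}$, so the recursion bottoms out at size $2$ after $t = \Theta(\log\log n)$ levels. This immediately gives the parameter $t$ in the statement.

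Next I would track the approximation factor. The key invariant I would prove by induction on the recursion depth is: if \textsf{Min-Finding}$(S)$ is called and $c_s$ is the true nearest neighbor of $q$ among $\{c_i : i\in S\}$, then (conditioned on all relevant comparisons behaving correctly) the returned index $i$ satisfies $\dist(q,c_i) \le (1+\epsilon)^{2\ell}\dist(q,c_s)$ where $\ell$ is the depth of the subtree rooted at this call. The base case $|S|=2$ follows directly from \cref{lem:lp-2-point-comparator}: the comparator either identifies the nearer point or, when the two are within a $(1+\epsilon)$ factor, may return either, so the returned point is a $(1+\epsilon)$-approximate nearest neighbor. For the inductive step, let $T_{i^\star}$ be the block containing $s$; by induction its recursive call returns some $m_{i^\star}$ with $\dist(q,c_{m_{i^\star}}) \le (1+\epsilon)^{2(\ell-1)}\dist(q,c_s)$. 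Now in the all-pairs comparison phase on $M$, I claim the winner $m$ (the index returned most often) satisfies $\dist(q,c_m) \le (1+\epsilon)^2 \dist(q,c_{m_{i^\star}})$: indeed, if $\dist(q,c_m) > (1+\epsilon)^2\dist(q,c_{m_{i^\star}})$, then in every comparison between $m$ and $m_{i^\star}$ the comparator (working correctly) returns $m_{i^\star}$, so $m_{i^\star}$ beats $m$; and by a counting argument on the tournament, a point that loses to some fixed element $m_{i^\star}$ more than... — here I would use the standard majority-tournament fact that the most-frequently-returned index cannot be beaten by a factor more than $(1+\epsilon)^2$ by the element $m_{i^\star}$, because $m_{i^\star}$ appears in $|M|-1$ comparisons and would defeat $m$ in all of them. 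Composing, $\dist(q,c_m)\le (1+\epsilon)^2(1+\epsilon)^{2(\ell-1)}\dist(q,c_s) = (1+\epsilon)^{2\ell}\dist(q,c_s)$, closing the induction. At the root $\ell = t$, giving the claimed $(1+\epsilon)^{2t}$.

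Then I would bound the total number of comparator calls and hence the failure probability. Let $Q(n)$ be the number of calls made by \textsf{Min-Finding} on a set of size $n$. The recursion gives $Q(n) = \Theta(\sqrt n)\cdot Q(\sqrt n) + \Theta(\sqrt n)$, where the additive $\Theta(\sqrt n)$ counts the $\binom{|M|}{2} = \Theta(n)$... wait — $|M| = \Theta(\sqrt n)$ so the all-pairs phase is $\Theta(n)$ comparisons at each internal node; I would unroll the recurrence and observe the total is dominated by summing $\Theta(\text{work at each node})$ over all nodes, giving $Q(n) = O(n\log\log n)$. (The cleanest way: at recursion depth $\ell$ there are $O(n^{1-1/2^\ell})$ nodes each doing $O(n^{1/2^{\ell-1}})$ work, and there are $O(\log\log n)$ depths, each contributing $O(n)$.) Since each of these $O(n\log\log n)$ comparator invocations fails with probability at most $\delta$ independently of the correctness event we conditioned on, a union bound shows all of them succeed with probability at least $1 - O(n\log\log n)\cdot\delta$; on that event the approximation analysis above applies, yielding $\dist(q,c_i)\le (1+\epsilon)^{2t}\dist(q,c^\star)$ with $t=\log\log n$.

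The main obstacle I expect is making the tournament/majority argument in the inductive step fully rigorous — specifically, proving that the most-frequently-returned index $m$ in the all-pairs phase on $M$ cannot be worse than $(1+\epsilon)^2$ times the best element of $M$, given that comparisons are only reliable when the two distances differ by more than a $(1+\epsilon)$ factor. One has to be careful that "near-ties" (pairs within a $(1+\epsilon)$ factor) can go either way, so the winner $m$ is only guaranteed to be within $(1+\epsilon)^2$ of the genuine minimum of $M$, not equal to it; the argument is that the true best element of $M$ defeats, with certainty, every element more than a $(1+\epsilon)^2$ factor away, hence such far-away elements cannot accumulate enough wins to be the overall winner. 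This is exactly the imperfect-comparison min-finding analysis of \cite{AjtaiFHN16}, which I would cite and adapt rather than reprove from scratch.
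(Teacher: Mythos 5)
Your proposal is correct and follows essentially the same route as the paper: recursion depth $\log\log n$, $\Theta(n)$ comparator calls per level with a union bound over all $O(n\log\log n)$ calls, and a per-level loss of $(1+\epsilon)^2$ coming from the tournament on $M$. The one detail you defer to \cite{AjtaiFHN16} is exactly the step the paper spells out — if the winner $m$ were more than a $(1+\epsilon)^2$ factor from the minimum $m^*$ of $M$, then $m^*$ would beat $m$ and also beat every element that $m$ beats, so $m^*$ would strictly out-win $m$, a contradiction — so make sure to include that counting argument rather than only citing it.
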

\begin{proof}
The recursion depth of the algorithm is $\log \log n$ by virtue of the recurrence $t(n) = t(\sqrt{n}) + 1$. 
The total number of calls to the query algorithm of Lemma~\ref{lem:lp-2-point-comparator} at each recursion level is $\Theta(n)$ and hence, overall, this 2-point query algorithm is invoked $O(n\log \log n)$ times. 

We first condition on the good event that every call to the query algorithm guaranteed by Lemma~\ref{lem:lp-2-point-comparator} is correct. This event happens with probability at least $1 - \Theta(n \log \log n)\cdot \delta$.
Next, we argue about the deterioration in the approximation guarantee.
The main idea is that Step 2(d) and 2(e) together return a $(1+\epsilon)^2$-approximate nearest neighbor of $q$ in the set $M$, given our conditioning.
Let $m^* \in M$ denote the nearest neighbor of $q$ in $M$.  
Assume for the sake of contradiction that the point $m \in M$ returned in Step 2(e) is such that $\dist(q,m) > (1+\epsilon)^2 \cdot \dist(q,m^*)$.  By the guarantee of our imperfect comparison operator, $m$ does not "win" against $m^*$ in the 2-point comparison game. Consider now an $m'' \in M$ that $m$ wins against in the 2-point comparison game. If $\dist(m'',q) > (1+\epsilon) \cdot \dist(m,q)$, then it must also be the case that $m^*$ wins against $m''$ in the comparison game.
Even if $\dist(m'',q) \leq (1+\epsilon) \cdot \dist(m,q)$ and $m$ wins against $m''$ because of the do not care condition,
the distance of $m''$ to $q$ is still a $(1+\epsilon)$ factor more than the distance of $m^*$ to $q$. Thus, $m^*$ one more 2-point comparison game as compared to $m$ and this is a contradiction.
Since we incur a loss of  a factor of $(1+\epsilon)^2$ in each recursion level, overall, the loss in approximation guarantee is $(1+\epsilon)^{2\log \log n}$.
\end{proof}

\begin{proof}[Proof of Theorem \ref{theorem:Lp-NN}]
To get our data structure, we run each invocation of the algorithm of Lemma~\ref{lem:lp-2-point-comparator} above (i.e., in Min-Finding) with $\epsilon = \epsilon'/\log \log n$ and $\delta = \delta' / \Theta(n \log \log n)$, we can get a $(1+\epsilon')$-approximate nearest neighbor data structure with success probability at least $1- \delta'$. 

The overall query time is $O(n \frac{2^p}{\epsilon^{p+2}} \log \frac{n}{\delta}  (\log \log n)^{p+3})$. Since we are running the preprocessing algorithm for all pairs of points, the space is $O(n^2 \log d \frac{2^p}{\epsilon^{p+2}} \log \frac{n}{\delta}  (\log \log n)^{p+3})$.
\end{proof}

\section{Implication for Range Search} \label{sec:range-search}

In this section, we describe the implications of our results to the approximate range search problem. 

\begin{definition}[$(1+\epsilon)$-approximate range search problem]
    Given a set $C = \{c_1, c_2, \dots, c_n\}$ of $n$ points in a euclidean space $\mathbb{R}^d$, build a data structure $S$ that, given any query range $R\subseteq \mathbb{R}^d$ defined by two corner points $q_1,q_2 \in \mathbb{R}^d$ with the promise that there is an $i\in [n]$ such that $c_i \in R$, returns all indices $i \in [n]$, such that 
    \[
    c_i \in R
    \]
    but no $i\in [n]$ such that 
    \[
    \mathsf{dist}(c_i,R) > \epsilon \cdot \mathsf{diam}(R).
    \]
\end{definition}
\begin{algorithm}[htp]\label{alg:range-search}
    \caption{An approximate range search data structure for $n$ points in $\R^d$ for $\ell_1$ metric.}
    \SetKwProg{preprocessing}{Preprocessing}{}{}
    \SetKwProg{query}{Query}{}{}
    \SetKwProg{comparator}{$E$}{}{}
    \SetKw{store}{store}
    \preprocessing{$(C, \epsilon, \delta)$\tcp*[f]{Inputs: $C=\{c_1, \dots, c_n\}\subseteq \R^d, \epsilon \in (0, \frac{1}{4}), \delta\in (0,1)$}}{
        Let $I \leftarrow \emptyset$ be a multiset and
        $T \leftarrow O( \log (n/\delta)/\epsilon)$\;
        
        \For{$t\in [T]$}{
            \For{$b\in [d]$}{
                Add $b$ to $I$ with probability  $p^{(b)}\triangleq \max_{(i,j)\in {\binom{n}{2}}} \frac{\left|c_i^{(b)} - c_j^{(b)}\right|}{\|c_i- c_j\|_1}$\;
            }
        }
        
        \store{$I$, $C^{(I)} = \{c_1^{(I)}, c_2^{(I)}, \dots, c_n^{(I)}\}$. }
    }
    \query{$(q_1, q_2)$ \tcp*[f]{Inputs: $q_1, q_2\in \R^d$}}{
        Query $q_1^{(b)}$ and $q_2^{(b)}, \forall b\in I$\;
        Let $W = \emptyset$\;
        \For{$i\in [n]$}{
            \If{$\forall z\in I, \min\{q_1^{(z)},q_2^{(z)}\} \leq c_i^{(z)} \leq \max\{q_1^{(z)},q_2^{(z)}\}$}{
            Add $i$ to $W$\;
            }
        }
        \Return{$W$}
    }
\end{algorithm}

\begin{theorem}
    Consider a set $C$ of $n$ points $c_1, \dots, c_n \in \mathbb{R}^d$ with $\ell_1$ metric. Given $0 < \epsilon < 1/4$, we can construct a data structure (see \cref{alg:range-search}) of word size $O(n  \log(n/\delta) \log(d)/\epsilon)$ that, given any query range $R$ by the corner points $q_1,q_2 \in R^d$, reads $O(n \log (n/\delta)/\epsilon)$ coordinates of $q_1,q_2$, returns an $(1+\epsilon)$-approximate range search of $q_1,q_2$ from $C$ with probability $1-\delta$.
\end{theorem}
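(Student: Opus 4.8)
The output $W$ returned by the query must be sandwiched between $\{i : c_i\in R\}$ and $\{i : \dist(c_i,R)\le \epsilon\cdot\mathsf{diam}(R)\}$, so there are two things to verify. The first (completeness) is deterministic and immediate: if $c_i\in R$ then $\min\{q_1^{(z)},q_2^{(z)}\}\le c_i^{(z)}\le\max\{q_1^{(z)},q_2^{(z)}\}$ holds for every $z\in[d]$, hence for every $z\in I$, so the query loop adds $i$ to $W$. Thus all the work lies in the soundness direction (no ``far'' center is returned) and the resource bookkeeping. For a general index $i$, let $D_i\subseteq[d]$ be the set of coordinates on which $c_i$ lies outside the interval $[\min\{q_1^{(z)},q_2^{(z)}\},\max\{q_1^{(z)},q_2^{(z)}\}]$; by the query rule, $i$ is added to $W$ precisely when $I\cap D_i=\emptyset$. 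Since each coordinate $z$ is included in $I$ independently with probability $p^{(z)}$ in each of the $T$ iterations, $\Pr[I\cap D_i=\emptyset]=\prod_{z\in D_i}(1-p^{(z)})^T\le\exp\!\big(-T\sum_{z\in D_i}p^{(z)}\big)$.

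The plan is therefore to reduce soundness to the single inequality $\sum_{z\in D_i}p^{(z)}\ge\epsilon/2$ for every far center $c_i$ (those with $\dist(c_i,R)>\epsilon\cdot\mathsf{diam}(R)$): granting this, $\Pr[I\cap D_i=\emptyset]\le\exp(-T\epsilon/2)\le\delta/n$ for $T=\Theta(\log(n/\delta)/\epsilon)$, and a union bound over the at most $n$ far centers shows that with probability $1-\delta$ no far center appears in $W$. Establishing $\sum_{z\in D_i}p^{(z)}=\Omega(\epsilon)$ is the crux, and this is where the promise that some center $c_{i^*}$ lies in $R$ gets used. Writing $d_z$ for the one-dimensional distance from $c_i^{(z)}$ to the $z$-th interval and $L_z=|q_1^{(z)}-q_2^{(z)}|$ for its length, coordinatewise decomposition of $\ell_1$ gives $\dist(c_i,R)=\sum_z d_z=\sum_{z\in D_i}d_z$ and $\mathsf{diam}(R)=\sum_z L_z$; moreover $\dist(c_i,R)>0$ since $c_i$ is far. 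Because $c_{i^*}^{(z)}$ lies inside the $z$-th interval, $|c_i^{(z)}-c_{i^*}^{(z)}|\ge d_z$ for every $z$, and also $|c_i^{(z)}-c_{i^*}^{(z)}|\le d_z+L_z$ for every $z$ (for $z\notin D_i$ both points sit in the interval and $d_z=0$); summing the latter bound gives $\|c_i-c_{i^*}\|_1\le\dist(c_i,R)+\mathsf{diam}(R)$. Now invoking oversampling, $p^{(z)}\ge|c_i^{(z)}-c_{i^*}^{(z)}|/\|c_i-c_{i^*}\|_1$ for all $z$, we get
\[
\sum_{z\in D_i}p^{(z)} \ \ge\ \frac{\sum_{z\in D_i}|c_i^{(z)}-c_{i^*}^{(z)}|}{\|c_i-c_{i^*}\|_1} \ \ge\ \frac{\dist(c_i,R)}{\dist(c_i,R)+\mathsf{diam}(R)} \ >\ \frac{\epsilon}{1+\epsilon} \ \ge\ \frac{\epsilon}{2},
\]
where the third inequality uses $\mathsf{diam}(R)<\dist(c_i,R)/\epsilon$. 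I expect this step to be the main obstacle; the rest is routine.

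Finally, the resource bounds are obtained exactly as in Section~\ref{sec:l1-space-query-analysis}: the query reads $q_1^{(z)}$ and $q_2^{(z)}$ for each $z\in I$, i.e.\ $2|I|$ coordinates, and by Lemma~\ref{lemma:math-program} ($\sum_{b\in[d]}p^{(b)}\le n$) we have $\mathbb{E}[|I|]\le Tn$, so a Chernoff bound exactly as in Claim~\ref{clm:I-size-upperbound-l1} gives $|I|\le 2Tn=O(n\log(n/\delta)/\epsilon)$ with probability $1-\delta$; this yields the stated query complexity, and together with storing $I$, $C^{(I)}$ and $\{p^{(b)}\}_{b\in I}$ it gives the stated word size. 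A concluding union bound over the soundness event and the $|I|$-size event gives overall success probability $1-O(\delta)$, which is absorbed by rescaling $\delta$.
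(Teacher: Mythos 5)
Your proof is correct and follows essentially the same route as the paper: both use the promised center $c_{i^*}\in R$ together with the oversampling property $p^{(z)}\ge|c_i^{(z)}-c_{i^*}^{(z)}|/\|c_i-c_{i^*}\|_1$ to show the sampling mass on the violating coordinates of any far center is $\Omega(\epsilon)$, then union-bound and reuse Lemma~\ref{lemma:math-program} for the resource bounds. Your direct bound $\Pr[I\cap D_i=\emptyset]\le\exp(-T\sum_{z\in D_i}p^{(z)})$ is in fact marginally cleaner than the paper's per-iteration estimate via $1-e^{-x}\ge(1-1/e)x$, which implicitly assumes $\sum_{b\in\mathsf{Bad}}p^{(b)}\le 1$.
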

\begin{proof}
    First note that when $c_i\in R$, then we always return $i$.
    It remains to show that for $c_i$ such that $\mathsf{dist}(c_i,R) > \epsilon \cdot \mathsf{diam}(R)$, we do not return $i$ with probability at least $1-\delta$.

    We fix an arbitrary $i\in [n]$ such that $\mathsf{dist}(c_i,R) > \epsilon \cdot \mathsf{diam}(R)$. We partition the coordinates in two sets.
    \begin{align*}
        \mathsf{Good}&= \left\{b\in [d] \mid \min \{q_1^{(b)},q_2^{(b)}\} \le c_i^{(b)} \le \max\{q_1^{(b)},q_2^{(b)}\}\right\}\\
        \mathsf{Bad} &= [d]\setminus \mathsf{Good}
    \end{align*}
    For intuition $\mathsf{Bad}$ is the set of coordinates that certify that $c_i \notin R$.
    We show that a coordinate from $\mathsf{Bad}$ is queried with probability at least $1-\delta$ which concludes the proof.

    Let $\mathsf{dist}_{J}$ for a set of coordinates $J\subseteq [d]$ be the distance of two points projected to the subspace on the coordinates in $J$. 
    Furthermore, let $c_{i^*}$ be the point that is inside the range $R$ which exists by the promise of the problem. 
    \begin{claim}
        \begin{align*}
            \mathsf{dist}_{\mathsf{Bad}}(c_i,c_{i^*}) \ge \epsilon \mathsf{dist}_{\mathsf{Good}}(c_i,c_{i^*})
        \end{align*}
    \end{claim}
    \begin{proof}
        Assume towards contradiction that $\mathsf{dist}_{\mathsf{Bad}}(c_i,c_{i^*}) < \epsilon \mathsf{dist}_{\mathsf{Good}}(c_i,c_{i^*})$.
        Since $c_i$ projected to $\mathsf{Good}$ is in $R$ projected to $\mathsf{Good}$ by the definition of $\mathsf{Good}$, it follows that $\mathsf{dist}_{\mathsf{Good}}(c_i,c_{i^*})\le \mathsf{diam}(R)$.
        We conclude 
        \begin{align*}
            \epsilon \mathsf{diam}(R) \ge \epsilon \mathsf{dist}_{\mathsf{Good}}(c_i,c_{i^*}) > \mathsf{dist}_{\mathsf{Bad}}(c_i,c_{i^*}) = \mathsf{dist}(c_i,R).
        \end{align*}
        This is a contradiction to the choice of $c_i$. 
    \end{proof}
    We sample each coordinate $b\in [d]$ in an iteration with probability at least $p^{(b)}=(|c_i^{(b)}-c_{i^*}^{(b)}|)/(\sum_{b'\in [d]}|c_i^{(b')}-c_{i^*}^{(b')}|)$. 
    Thus we sample any coordinate from $\mathsf{Bad}$ with probability at least 
    \begin{align*}
        1-\left(\prod_{b\in \mathsf{Bad}}(1-p^{(b)})\right) &\ge 1-\left(\prod_{b\in \mathsf{Bad}} e^{-p^{(b)}}\right)
        = 1-\exp({-\sum_{b\in \mathsf{Bad}} {p^{(b)}}})
        \ge \left(1-\frac{1}{e}\right)\sum_{b\in \mathsf{Bad}}p^{(b)}\\
        &= \left(1-\frac{1}{e}\right)\frac{\sum_{b\in \mathsf{Bad}}|c_i^{(b)}-c_{i^*}^{(b)}|}{\sum_{b'\in \mathsf{Bad}}|c_i^{(b')}-c_{i^*}^{(b')}|+\sum_{b'\in \mathsf{Good}}|c_i^{(b')}-c_{i^*}^{(b')}|}.
    \end{align*}
    Since we consider $\ell_1$ metric the last expression is
    \begin{align*}
        &\left(1-\frac{1}{e}\right)\frac{\mathsf{dist}_{\mathsf{Bad}}(c_i,c_{i^*})}{\mathsf{dist}_{\mathsf{Bad}}(c_i,c_{i^*})+\mathsf{dist}_{\mathsf{Good}}(c_i,c_{i^*})} \\
        \ge & \left(1-\frac{1}{e}\right)\left(\frac{ \mathsf{dist}_{\mathsf{Bad}}(c_i,c_{i^*})+\mathsf{dist}_{\mathsf{Good}}(c_i,c_{i^*})}{\mathsf{dist}_{\mathsf{Bad}}(c_i,c_{i^*})}\right)^{-1}\\
        \ge &\left(1-\frac{1}{e}\right)\left(1+\frac{ \mathsf{dist}_{\mathsf{Good}}(c_i,c_{i^*})}{\mathsf{dist}_{\mathsf{Bad}}(c_i,c_{i^*})}\right)^{-1}\\
        \ge & \left(1-\frac{1}{e}\right)\left(1+\frac{ \mathsf{dist}_{\mathsf{Good}}(c_i,c_{i^*})}{\epsilon\mathsf{dist}_{ \mathsf{Good}}(c_i,c_{i^*})}\right)^{-1}\\
        = & \left(1-\frac{1}{e}\right)\frac{1}{1+\frac{1}{\epsilon}} \ge \frac{1}{4}\epsilon
    \end{align*}
    Since we repeat for $T= O( \log (n/\delta)/\epsilon)$ iterations, we sample a coordinate from $\mathsf{Bad}$ with probability at least $1-\delta/n$. Using union bound we succeed for all points $i\in [n]$ with probability at least $1-\delta$.
\end{proof}

\section{Reducing Approximate Nearest Neighbor under \texorpdfstring{$\ell_2$}{l2}-norm to Feature Selection for \texorpdfstring{$k$}{k}-means Clustering}\label{sec:reduction-clustering}

In this section, we show a $(\sqrt{3}+\epsilon)$-approximate nearest neighbor data structure under $\ell_2$-norm, using feature selection for $k$-means clustering. 

\begin{algorithm}[htp]
    \caption{An approximate nearest neighbor data structure for $n$ points in $\R^d$ for $\ell_2$ metric.}
    \label{alg:l2-ANN}
    \SetKwProg{preprocessing}{Preprocessing}{}{}
    \SetKwProg{query}{Query}{}{}
    \SetKwProg{selectfeature}{SelectFeature}{}{}
    \SetKwProg{comparator}{$E$}{}{}
    \SetKw{store}{store}
    \selectfeature{$(Z, t)$ \tcp*[f]{$Z \in \R^{n\times k}, t \in \R$}}{
        Let $Z_i$ be the $i$-th row of $Z$\;
        Let $p^{(i)} = \|Z_i\|_2^2 / \|Z\|_F^2$\;
        Initialize $\Omega = \mathbf{0}^{n \times t}, S = \mathbf{0}^{t \times t}$\;
        \For{$j=1$ to $t$}{
            Sample $i_j\in [n]$, with probabilities $\{p^{(i)}\}$\;
            Set $\Omega_{i_j, j} = 1$ and $S_{j,j} = 1/\sqrt{t p^{(i_j)}}$\;
        }
        \Return{$\Omega, S$}
    }
    \preprocessing{$(C, \epsilon, \delta)$\tcp*[f]{Inputs: $C\in \R^{n\times d}, \epsilon \in (0, \frac{1}{4}), \delta\in (0,1)$}}{
        Let $k = n$\; 
        Compute SVD of $C$, let $Z = V_k(C)$\;
        Determine the sampling probabilities and rescaling: $[\Omega, S] = SelectFeature(Z, t)$, where $t = O(n\log n/\epsilon^2)$\;
        Compute $R = C\Omega S \in \R^{n\times t}$ be the feature-selected centers, where $i$-th row of $R$ is $r_i$\;
        Sample $M\in \R^{t\times \log(n/\epsilon)/\epsilon^2}$ to be a JL-mapping\;        
        \store{$\Omega, S$, $RM = \{r_i M|i\in [n]\}$, $M$}
    }
    \query{$(q)$ \tcp*[f]{Inputs: $q\in \{0, 1\}^d$}}{
        Query $t$ coordinates of $q$ to compute $y^\transpose = q^\transpose \Omega S$\;
        Let $\hat{i} = \argmin_{i\in n} \|r_i M - y^\transpose M \|_2$\;
        \Return{$\hat{i}$}
    }
\end{algorithm}

\begin{theorem}\label{theorem:l2-ann}
    Consider $n$ points $c_1, \dots, c_n \in \R^d$ with $\ell_2$ metric. Given $0 < \epsilon < 1$, we can construct a  data structure (see \cref{alg:l2-ANN}) of word size $\tilde{O}(n/\poly(\epsilon))$ that, given any query $q\in \R^d$, reads $\tilde{O}(n/\poly(\epsilon))$ coordinates of $q$, and with probability $1-\delta$, returns $i$ where $c_i$ is a $(\sqrt{3}+\epsilon)$-approximate nearest neighbor of $q$ in $C$.
\end{theorem}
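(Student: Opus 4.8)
The plan is to reduce the analysis to the feature-selected coordinates, decompose the query into the part lying in $\mathrm{span}(C)$ and the part orthogonal to it, and then invoke the feature-selection guarantee of Boutsidis et al.~\cite{DBLP:journals/tit/BoutsidisZMD15} for $k$-means, reading off a factor $\sqrt3+\epsilon$ for distances from their factor $3+\epsilon$ for the squared objective.

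First I would dispense with the Johnson--Lindenstrauss layer and the resource bounds. Since the algorithm returns $\hat{i}=\argmin_{i\in[n]}\|r_iM-y^\transpose M\|_2$, applying the distributional Johnson--Lindenstrauss lemma (\cref{thm:distJL}) with target dimension $O(\log(n/\delta)/\epsilon^2)$ and a union bound over the $n$ vectors $r_i-y^\transpose$ shows that, with probability $1-\delta/3$, $\|r_iM-y^\transpose M\|_2\approx_\epsilon\|r_i-y^\transpose\|_2$ for every $i$; hence it suffices to show that $\argmin_i\|r_i-y^\transpose\|_2=\argmin_i\|(c_i-q)^\transpose\Omega S\|_2$ identifies a $(\sqrt3+\epsilon)$-approximate nearest neighbour. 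The complexity bounds are then immediate: the preprocessing computes one SVD of $C$, so it runs in $\poly(n,d)$ time; the sampled coordinate multiset has size $t=O(n\log n/\epsilon^2)$, so the query reads $\tilde{O}(n/\poly(\epsilon))$ coordinates of $q$; and the stored data ($\Omega,S$ as a list of $t$ coordinate/scale pairs costing $O(t\log d)$ words, $RM$ with $n\cdot O(\log(n/\delta)/\epsilon^2)$ entries, $M$ with $t\cdot O(\log(n/\delta)/\epsilon^2)$ entries) occupies $\tilde{O}(n/\poly(\epsilon))$ words.

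Next I would set up the decomposition. Let $\mathcal V=\mathrm{rowspace}(C)$, which has dimension at most $n=k$ and is spanned by the columns of $V_n(C)$, so every $c_i$ lies in $\mathcal V$. Write $q=q_\parallel+q_\perp$ with $q_\parallel$ the orthogonal projection of $q$ onto $\mathcal V$ and $q_\perp\perp\mathcal V$, and set $h_i:=q_\parallel-c_i\in\mathcal V$. By the Pythagorean theorem $\|q-c_i\|_2^2=\|h_i\|_2^2+\|q_\perp\|_2^2$, hence, writing $\mathrm{OPT}=\min_i\|q-c_i\|_2$, the nearest neighbour of $q$ coincides with the nearest neighbour of $q_\parallel$ in $C$ and, crucially, $\|q_\perp\|_2\le\mathrm{OPT}$. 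After feature selection we get $r_i-y^\transpose=-h_i^\transpose\Omega S-w$ where $w:=q_\perp^\transpose\Omega S$ is \emph{common} to all $i$. Now I would invoke \cite{DBLP:journals/tit/BoutsidisZMD15}: because $\Omega S$ is obtained by leverage-score sampling from the right singular vectors of $C$ with $t=O(n\log n/\epsilon^2)$ samples, it is simultaneously an $\epsilon$-subspace embedding for $\mathcal V$ --- so $\|v^\transpose\Omega S\|_2\approx_\epsilon\|v\|_2$ for all $v\in\mathcal V$, in particular for each $h_i$ --- and it approximately preserves the orthogonal residual, $\|q_\perp^\transpose\Omega S\|_2^2\le(1+\epsilon)\|q_\perp\|_2^2$, all with probability $1-\delta/3$. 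Here $q_\perp$ plays exactly the role of the tail $C-C_k$ in their $k$-means analysis, and $\|q_\perp\|_2\le\mathrm{OPT}$ the role of $\|C-C_k\|_F^2\le\mathrm{OPT}_{k\text{-means}}$. Conditioning on these events, one combines $\|r_{\hat i}-y^\transpose\|_2\le\|r_{i^*}-y^\transpose\|_2$ with the subspace-embedding and residual bounds, following the constant-tracking of the $k$-means argument of \cite{DBLP:journals/tit/BoutsidisZMD15}, to conclude $\|q-c_{\hat i}\|_2^2\le(3+O(\epsilon))\,\mathrm{OPT}^2$, i.e.\ $\|q-c_{\hat i}\|_2\le(\sqrt3+\epsilon)\,\mathrm{OPT}$ after rescaling $\epsilon$; a union bound over the three events gives success probability $1-\delta$.

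I expect this last step to be the main obstacle. A direct triangle-inequality estimate --- bounding $\|h_{\hat i}^\transpose\Omega S\|_2\le\|h_{i^*}^\transpose\Omega S\|_2+2\|w\|_2$, then $\|h_{\hat i}\|_2\le(1+O(\epsilon))(\|h_{i^*}\|_2+2\|q_\perp\|_2)$, and finally $\|q-c_{\hat i}\|_2^2=\|h_{\hat i}\|_2^2+\|q_\perp\|_2^2$ --- only yields a factor of about $\sqrt5$ in the worst case $\|h_{i^*}\|_2\to0$, so extracting the sharp constant $\sqrt3$ genuinely requires the second-moment / approximate-matrix-multiplication refinements behind \cite{DBLP:journals/tit/BoutsidisZMD15}, not crude norm inequalities. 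A secondary point to verify is that the feature-selection distribution in \cref{alg:l2-ANN} matches the preconditions of their analysis (leverage-score sampling, possibly blended with the column norms of $C$); with the pure leverage-score distribution the residual bound $\|q_\perp^\transpose\Omega S\|_2^2\le(1+\epsilon)\|q_\perp\|_2^2$ need not hold and would have to be weakened to a Markov-type estimate, at the price of a worse constant. Obtaining a $1+\epsilon$ guarantee is out of reach for this reduction and requires the separate techniques of \cref{sec:sample-reuse}.
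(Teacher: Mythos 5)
Your resource accounting, your handling of the Johnson--Lindenstrauss layer, and your decomposition $q=q_\parallel+q_\perp$ with $\|q_\perp\|_2\le\mathrm{OPT}$ are all sound (and your JL step is arguably simpler than the paper's, which instead invokes cost-preservation of $k$-means clusterings under random projection). But the crux of the theorem --- the constant $\sqrt{3}$ --- is never actually established in your proposal, and you say so yourself: the elementary triangle-inequality route gives only $\sqrt{5}$, and the residual bound $\|q_\perp^\transpose\Omega S\|_2^2\le(1+\epsilon)\|q_\perp\|_2^2$ fails for pure leverage-score sampling (a fixed vector orthogonal to $\mathrm{rowspace}(C)$ can sit on coordinates of tiny leverage, so only a Markov-type bound holds, which destroys the constant). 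Deferring to ``the constant-tracking of \cite{DBLP:journals/tit/BoutsidisZMD15}'' does not close this gap, because their theorem is a statement about the $k$-means \emph{cost of a point set} under feature selection; it says nothing directly about distances from a single out-of-sample query to the points of a subspace, so the approximate-matrix-multiplication machinery would have to be re-derived in your single-query setting --- precisely the step you have not done and where your two caveats bite.

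The paper closes this gap with a different idea: it makes the black-box invocation of \cite{DBLP:journals/tit/BoutsidisZMD15} legitimate by \emph{manufacturing} a $k$-means instance from the nearest-neighbor instance. It stacks $w$ copies of $C$ together with the single row $q$ into a matrix $B\in\R^{(wn+1)\times d}$ and sets $k=n$. For $w$ large, the optimal $n$-means clustering of $B$ puts each block of $w$ identical centers in its own cluster and assigns $q$ to the cluster of its nearest center, and the cost of the clustering that assigns $q$ to $c_i$ is $\tfrac{w}{w+1}\dist^2(q,c_i)$; hence the $(1+(2+\epsilon)\gamma)$ cost guarantee of their Theorem~11 translates verbatim into a $(3+O(\epsilon))$ bound on \emph{squared} distances, i.e.\ $\sqrt{3}+\epsilon$ on distances. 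The one nontrivial lemma this requires --- and the reason the reduction is not immediate --- is that the preprocessing can only compute $V_n(C)$, not $V_n(B)$ (which depends on the unknown query), so one must show $\|B-BV_n(C)V_n(C)^\transpose\|_F^2\le(1+\epsilon)\|B-B_n\|_F^2$ for $w$ sufficiently large; the paper proves this via a secular-equation computation showing $\lambda_{n+1}(B^\transpose B)\to\|q^\transpose-q^\transpose V_n(C)V_n(C)^\transpose\|_2^2$ as $w\to\infty$. If you want to salvage your direct route instead, you would need to prove a cross-term bound of the form $|\langle h_i^\transpose\Omega S,\,q_\perp^\transpose\Omega S\rangle|\le\epsilon\|h_i\|_2\|q_\perp\|_2$ for all $i$ simultaneously (a second-moment computation using $|h_i^{(b)}|\le\|Z_b\|_2\|h_i\|_2$ controls the variance, but you then need more than Chebyshev to union-bound over $n$ indices); as written, the proposal does not contain a proof of the approximation factor.
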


\subsection{Preliminaries}
Before proving the \cref{theorem:l2-ann}, we introduce some notations. 

For a matrix $A \in \R^{n \times d}$, we write the singular value decomposition (SVD) of $A$ as $A = U(A)\Sigma(A) V(A)^\transpose$, where $U(A) \in \R^{n \times n}$ and $V(A) \in \R^{d \times d}$ are orthonormal matrices, and $\Sigma(A) = diag\{\sigma_1, \sigma_2, \dots, \sigma_{rank(A)}, 0, \dots, 0\}$. For an integer $1 \leq k \leq d$, we use $V_k\in \R^{d\times k}$ to denote $V$ restricted to its first $k$ columns of $V$. The matrix $A_k = U_k(A) \Sigma_k(A) V_k(A)$ is the best rank $k$ approximation for $A$.

\subsection{Main Proof}

Let $C\in \R^{n\times d}$ be the matrix whose rows are the $n$ centers. Let $P \in \R^{1 \times d}$ be the row vector of the query. 
For some large enough $w>0$, we define $B\in \R^{(wn+1) \times d}$ to be
\[
B = \begin{bmatrix}
C\\
C\\
\vdots\\
C\\
P
\end{bmatrix}
\]
$B$ can be viewed as a point set of size $(wn+1)$ in $\R^d$ and we use $B$ as the instance of feature selection for $k$-means clustering, where $k = n$.

The following claim shows that $V_n(C) \in \R^{d \times n}$ approximates $V_n(B) \in \R^{d \times n}$.

\begin{claim}
    $\|B - B V_n(C) V_n(C)^\transpose \|_F^2 \leq (1+\epsilon) \| B - B_n\|_F^2$ for sufficiently large $w$.
\end{claim}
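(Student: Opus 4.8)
The plan is to reduce the claim to the single inequality
\[
\|B - B_n\|_F^2 \;\geq\; \tfrac{1}{1+\epsilon}\,\|B - B V_n(C)V_n(C)^\transpose\|_F^2,
\]
and to show this holds once $w$ is a large enough integer. Throughout I adopt the same simplifying assumption used in \cref{sec:linear-space-l2}, namely $\texttt{rank}(C)=n$; then $\mathrm{span}(V_n(C))$ is exactly the row space of $C$, so $C V_n(C)V_n(C)^\transpose = C$. Writing $\Pi_0 := V_n(C)V_n(C)^\transpose$ and exploiting that $B$ consists of $w$ copies of $C$ stacked above $P$, we get
\[
\|B - B\Pi_0\|_F^2 \;=\; w\,\|C - C\Pi_0\|_F^2 + \|P - P\Pi_0\|_2^2 \;=\; \|P - P\Pi_0\|_2^2 \;=:\; \gamma,
\]
which is the squared distance from the query $P$ to $\mathrm{rowspace}(C)$ and, crucially, does \emph{not} depend on $w$. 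If $\gamma = 0$ the claim is trivial (the left-hand side is $0$), so assume $\gamma>0$; in particular $P\neq 0$.

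Now let $\Pi := V_n(B)V_n(B)^\transpose$, so that $B_n = B\Pi$, and split by the same block structure: $\|B - B_n\|_F^2 = w\,\|C - C\Pi\|_F^2 + \|P - P\Pi\|_2^2$. The proof hinges on a dichotomy controlled by a threshold $\delta_0>0$ fixed below. If $\|C-C\Pi\|_F^2 \geq \delta_0$, then $\|B-B_n\|_F^2 \geq w\delta_0 \geq \gamma/(1+\epsilon)$ as soon as $w \geq \gamma/(\delta_0(1+\epsilon))$. Otherwise $\|C-C\Pi\|_F^2 < \delta_0$, and I claim $\Pi$ must be close to $\Pi_0$: using $C^\transpose C \succeq \sigma_n(C)^2\Pi_0$ (where $\sigma_n(C)>0$ is the smallest singular value) and $\Pi_0^2=\Pi_0=\Pi_0^\transpose$,
\[
\|C(I-\Pi)\|_F^2 = \operatorname{tr}\!\big((I-\Pi)C^\transpose C(I-\Pi)\big) \geq \sigma_n(C)^2\operatorname{tr}\!\big((I-\Pi)\Pi_0(I-\Pi)\big) = \sigma_n(C)^2\,\|\Pi_0(I-\Pi)\|_F^2,
\]
so $\|\Pi_0(I-\Pi)\|_F \leq \sqrt{\delta_0}/\sigma_n(C)$. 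Since $\Pi_0,\Pi$ are orthogonal projections of equal rank $n$, $\|(I-\Pi_0)\Pi\|_F = \|\Pi_0(I-\Pi)\|_F$ (both equal $(n-\operatorname{tr}(\Pi_0\Pi))^{1/2}$); combined with $\Pi_0-\Pi = \Pi_0(I-\Pi) - (I-\Pi_0)\Pi$ this yields $\|\Pi_0-\Pi\|_F \leq 2\sqrt{\delta_0}/\sigma_n(C)$. Hence
\[
\|P - P\Pi\|_2 \;\geq\; \|P - P\Pi_0\|_2 - \|P(\Pi_0-\Pi)\|_2 \;\geq\; \sqrt{\gamma} - \frac{2\|P\|_2\sqrt{\delta_0}}{\sigma_n(C)}.
\]

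Choosing $\delta_0 := \big(\sigma_n(C)(\sqrt{\gamma}-\sqrt{\gamma/(1+\epsilon)})\,/\,(2\|P\|_2)\big)^2 > 0$ makes the right-hand side at least $\sqrt{\gamma/(1+\epsilon)}$, so $\|B-B_n\|_F^2 \geq \|P-P\Pi\|_2^2 \geq \gamma/(1+\epsilon)$ in this case too; then taking $w \geq \gamma/(\delta_0(1+\epsilon))$ (an integer) covers both branches and gives $\|B-B_n\|_F^2 \geq \gamma/(1+\epsilon) = \tfrac{1}{1+\epsilon}\|B-B\Pi_0\|_F^2$, which rearranges to the claim. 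I expect the second branch to be the only delicate point — quantifying how the small $C$-error forces $\Pi$ near $\Pi_0$ and translates that into $\|P-P\Pi\|_2 \approx \sqrt{\gamma}$; everything else is bookkeeping with the block structure and the choice of $w$. A constant-free alternative for that branch is to note that the Grassmannian of $n$-dimensional subspaces of $\mathbb{R}^d$ is compact, that $W\mapsto\|C-C\Pi_W\|_F^2$ and $W\mapsto\|P-P\Pi_W\|_2^2$ are continuous, and that the zero set of the former is the single point $\mathrm{rowspace}(C)$ (here the rank assumption is essential) on which the latter equals $\gamma$; hence $\min\{\|P-P\Pi_W\|_2^2 : \|C-C\Pi_W\|_F^2 \leq \delta_0\} \to \gamma$ as $\delta_0\to 0$, so a suitable $\delta_0$ exists without computing it.
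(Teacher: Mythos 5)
Your proof is correct, but it takes a genuinely different route from the paper. The paper identifies $\|B-B_n\|_F^2$ with $\sigma_{n+1}^2(B)=\lambda_{n+1}(wC^\transpose C+P^\transpose P)$ and analyzes this as a rank-one update of the diagonal matrix $w\Sigma^2(C)$ via the secular equation $\sum_{i=1}^n u_i^2/(wd_i-\lambda)=-1+\sum_{i=n+1}^d u_i^2/\lambda$, deducing that $h(w)$ is monotone increasing in $w$ with limit $\|P-PV_n(C)V_n(C)^\transpose\|_F^2$; this buys an exact characterization of the eigenvalue and its monotone convergence, at the price of invoking the diagonal-plus-rank-one eigenvalue machinery. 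You instead write $B_n=B\Pi$ for the optimal rank-$n$ projection $\Pi$, split the Frobenius error by the block structure, and run a dichotomy: either $\Pi$ incurs error at least $\delta_0$ on $C$ (which the $w$ copies amplify past $\gamma/(1+\epsilon)$), or the bound $C^\transpose C\succeq\sigma_n(C)^2\Pi_0$ forces $\|\Pi-\Pi_0\|_F\le 2\sqrt{\delta_0}/\sigma_n(C)$ and hence $\|P-P\Pi\|_2\approx\sqrt{\gamma}$. I checked the delicate steps — the identity $\|(I-\Pi_0)\Pi\|_F=\|\Pi_0(I-\Pi)\|_F$ via equal traces, the decomposition $\Pi_0-\Pi=\Pi_0(I-\Pi)-(I-\Pi_0)\Pi$, and the choice of $\delta_0$ — and they are sound; note only that your argument genuinely needs $\sigma_n(C)>0$ (i.e.\ $\mathrm{rank}(C)=n$), which you state up front and which matches the regime the paper works in, and that your approach additionally yields an explicit quantitative threshold $w\ge\gamma/(\delta_0(1+\epsilon))$ in terms of $\sigma_n(C)$, $\|P\|_2$, $\gamma$ and $\epsilon$, whereas the paper's argument is purely asymptotic in $w$.
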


\begin{proof}
    $\|B - B V_n(C)V_n(C)^\transpose \|_F^2$ does not increase with $w$, because
    \begin{align*}
        \|B - B V_n(C)V_n(C)^\transpose \|_F^2 &= w \|C - C V_n(C)V_n(C)^\transpose \|_F^2 + \|P - P V_n(C)V_n(C)^\transpose \|_F^2 \\
        &= 0 + \|P - P V_n(C)V_n(C)^\transpose \|_F^2
    \end{align*}

    However, $\| B - B_n\|_F^2$ changes with $w$.
    Define a function $h(w)$:
    \begin{align*}
        h(w) := \|B - B_n \|_F^2 = \sigma_{n+1}^2(B) 
    \end{align*}


    Denote $u = V(C)^\transpose P^\transpose$.
    \begin{align*}
        h(w) &= \lambda_{n+1} (B^\transpose B) = \lambda_{n+1} (wC^\transpose C +P^\transpose P )\\
        &= \lambda_{n+1}(w\Sigma^2(C) + V(C)^\transpose P^\transpose P V(C)) = \lambda_{n+1}(w\Sigma^2(C) + uu^\transpose) 
    \end{align*}
    Denote the $d_i = \sigma_i^2(C)$, thus $w\Sigma^2(C) = diag(wd_1, \dots, wd_n, 0, \dots, 0)\in \R^{d\times d}$.

    To calculate the eigenvalues of $B^\transpose B$, we let $\lambda I - B^\transpose B = 0$. After some manipulation\cite{DBLP:books/ox/07/Golub07e}, we know the eigenvalues $\lambda$ of $B^\transpose B$ corresponds to the solutions of 
    \[
    \sum_{i=1}^n \frac{u_i^2}{wd_i - \lambda} = -1+\sum_{i=n+1}^d \frac{u_i^2}{\lambda}
    \]

    And $h(w) = \lambda_{n+1} (B^\transpose B)$ is the smallest non-zero solution, since it is the smallest non-zero eigenvalue. Consider both sides as functions of $\lambda \in [0, wd_n]$. The LHS increases from a positive number to $\infty$, while the RHS decreases from $\infty$ to $0$. Thus $h(w) \in [0, wd_n]$. If we increase $w$, the RHS remains the same, and the LHS can be viewed as shifted rightwards. This means that $h(w)$ increases if $w$ increases, and $\lim_{w\rightarrow \infty} h(w) = \sum_{i=n+1}^d u_i^2$, which is 

    \begin{align*}
        \lim_{w\rightarrow \infty} h(w) &= \sum_{i=n+1}^d u_i^2 \\
        &=\sum_{i=n+1}^d (pv_{i})^2, \text{ where } v_i \text{ i-th column of V}\\
        &= \|P - PV_n(C)V_n(C)^\transpose\|_F^2.
    \end{align*}

    Thus for any $0< \epsilon < 1$, there exists sufficiently large $w$, such that 
    $\|P - PV_n(C)V_n(C)^\transpose\|_F^2 \allowbreak\leq (1+\epsilon) h(w)$, i.e. 
    $\|B - B V_n(C) V_n(C)^\transpose \|_F^2 \leq (1+\epsilon) \| B - B_n\|_F^2$.

    \end{proof}

    \begin{definition}(Approximate partition for $k$-means clustering) Consider a $k$-means clustering problem with $n$ points $C = \{c_1, \dots, c_n\} \subseteq \R^d$. A partition $\rho: [n] \rightarrow [k]$, describes $c_i$ should be in $\rho(i)$-th cluster. The $j$-th cluster $C_j$ of $\rho$ is ${c_i | i \in \rho^{-1}(j)}$ and the $j$-th center of $\rho$ is $center_j = \sum_{c_i \in C_j} c_i / |C_j|$. The cost of $(C, \rho)$ is $cost(C, \rho) = \sum_{i\in [n]} \|c_i - center_{\rho(i)} \|_2^2$.     
    Let $\rho^*$ be the optimal partition, which minimizes the k-means cost. We say $\rho$ is a $\alpha$-approximation partition if $cost(C, \rho) \leq \alpha cost(C, \rho^*)$.        
    \end{definition}

    \begin{theorem}(Theorem 11 and Lemma 4 of \cite{DBLP:journals/tit/BoutsidisZMD15})\label{theorem:bzmd15}
        Let $B\in \R^{m\times d}$ be a set of $m$ points in $\R^d$, and $k$ be the input of k-means clustering. Let $Z\in \R^{d\times k}$ be an approximation of $V_k(B)$, such that $Z^\transpose Z = I_{k\times k}, (B - BZZ^\transpose)Z = \mathbf{0}_{m\times k}$, and $\|B - BZZ^\transpose\|_F^2 \leq (1+\epsilon) \|B - B_k\|_F^2$. Then let $[\Omega, S] = SelectFeature(Z, t)$, with $t = O(k\log k/\epsilon^2)$. Let $B' = B\Omega S \in \R^{m\times t}$ be the points after feature selection. Then a $\gamma$-approxmation partition $\rho$ for $B'$ is $(1+(2+\epsilon)\gamma)$-approximation parition for $B$ with probability $0.2$.
    \end{theorem}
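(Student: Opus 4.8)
Throughout, identify $B$ with the list of its rows and reformulate $k$-means in matrix form: for a partition $\sigma\colon[m]\to[k]$ let $X_\sigma\in\R^{m\times k}$ be the normalized cluster-indicator matrix (so $X_\sigma^\transpose X_\sigma=I_k$), and let $P_\sigma=X_\sigma X_\sigma^\transpose$ be the corresponding rank-$\le k$ orthogonal projection; then the cost of $\sigma$ on a point set $A$ equals $\mathcal F(A,\sigma)=\|(I-P_\sigma)A\|_F^2$ and $\mathcal F^\star(A)=\min_\sigma\mathcal F(A,\sigma)$. Set $\hat B=BZZ^\transpose$ (rank $\le k$) and $E=B-\hat B=B(I-ZZ^\transpose)$. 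The hypotheses on $Z$ give $EZ=0$, hence $\hat B E^\transpose=0$, and $\|E\|_F^2\le(1+\epsilon)\|B-B_k\|_F^2$. Since the centroid matrix $P_\sigma B$ of any $k$-clustering has rank $\le k$ we have $\|B-B_k\|_F^2\le\mathcal F^\star(B)$, so $\|E\|_F^2\le(1+\epsilon)\mathcal F^\star(B)$. The identity $\hat B E^\transpose=0$ yields the key Pythagorean split: for every projection $P_\sigma$, $\langle(I-P_\sigma)\hat B,(I-P_\sigma)E\rangle=\mathrm{tr}(\hat B^\transpose(I-P_\sigma)E)=\mathrm{tr}((I-P_\sigma)E\hat B^\transpose)=0$, so $\mathcal F(B,\sigma)=\|(I-P_\sigma)\hat B\|_F^2+\|(I-P_\sigma)E\|_F^2$.

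Next I record what the random sketch $W:=\Omega S\in\R^{d\times t}$ produced by $\mathrm{SelectFeature}(Z,t)$ achieves. It is exactly a leverage-score sample (with rescaling) of the $d$ rows of $Z$, $\|Z\|_F^2=k$, and $t=O(k\log k/\epsilon^2)$. Standard matrix-concentration arguments give, with tunably large constant probability, the subspace-embedding bound $\|Z^\transpose WW^\transpose Z-I_k\|_2\le\epsilon$ (matrix Chernoff), and the approximate-matrix-multiplication second-moment bound $\E\|Z^\transpose WW^\transpose E^\transpose-Z^\transpose E^\transpose\|_F^2\le \|Z\|_F^2\|E\|_F^2/t=k\|E\|_F^2/t$; combined with $Z^\transpose E^\transpose=(EZ)^\transpose=0$ and Markov this gives $\|EWW^\transpose Z\|_F\le\epsilon\|E\|_F$ with constant probability. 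Finally $\E[WW^\transpose]=I_d$, so for the (fixed, $W$-independent) optimal partition $\sigma^\star$ of $B$ we have $\E\|(I-P_{\sigma^\star})EW\|_F^2=\|(I-P_{\sigma^\star})E\|_F^2$; since $\|(I-P_{\sigma^\star})\hat B\|_F^2$ is $W$-independent and $\|(I-P_{\sigma^\star})\hat B\|_F^2+\|(I-P_{\sigma^\star})E\|_F^2=\mathcal F^\star(B)$, one Markov inequality gives $\|(I-P_{\sigma^\star})\hat B\|_F^2+\|(I-P_{\sigma^\star})EW\|_F^2\le 2\mathcal F^\star(B)$ with probability $\ge1/2$. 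A union bound over these three events, with the two "$\epsilon$-events" set to fail with probability $\le 0.15$ each, leaves overall success probability $\ge 0.2$; condition on all of them.

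Now the chain of inequalities. For any $\sigma$, write $\hat BW=\hat BZ\,(Z^\transpose W)$; using $\|(I-P_\sigma)\hat BZ\|_F=\|(I-P_\sigma)\hat B\|_F$ and the subspace embedding, $\|(I-P_\sigma)\hat BW\|_F^2=\mathrm{tr}\big((\hat BZ)^\transpose(I-P_\sigma)(\hat BZ)\cdot Z^\transpose WW^\transpose Z\big)=(1\pm\epsilon)\|(I-P_\sigma)\hat B\|_F^2$. Expanding $BW=\hat BW+EW$ gives $\mathcal F(BW,\sigma)=\|(I-P_\sigma)\hat BW\|_F^2+2\langle(I-P_\sigma)\hat BW,(I-P_\sigma)EW\rangle+\|(I-P_\sigma)EW\|_F^2$, and rewriting the cross term as $\langle(I-P_\sigma)\hat BZ,\ (I-P_\sigma)EWW^\transpose Z\rangle$ bounds it in absolute value by $\|(I-P_\sigma)\hat B\|_F\cdot\epsilon\|E\|_F$. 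Applying this to $\sigma^\star$: $\mathcal F(BW,\sigma^\star)\le(1+\epsilon)\|(I-P_{\sigma^\star})\hat B\|_F^2+\|(I-P_{\sigma^\star})EW\|_F^2+2\epsilon\|(I-P_{\sigma^\star})\hat B\|_F\|E\|_F\le 2\mathcal F^\star(B)(1+O(\epsilon))$ by the coupled Markov bound and $\|(I-P_{\sigma^\star})\hat B\|_F^2\le\mathcal F^\star(B)$, $\|E\|_F^2\le(1+\epsilon)\mathcal F^\star(B)$. Applying it to the returned partition $\rho$, dropping the nonnegative $\|(I-P_\rho)EW\|_F^2$ term and using $\mathcal F(BW,\rho)\le\gamma\mathcal F^\star(BW)\le\gamma\mathcal F(BW,\sigma^\star)$: $(1-\epsilon)\|(I-P_\rho)\hat B\|_F^2\le\|(I-P_\rho)\hat BW\|_F^2\le\gamma\mathcal F(BW,\sigma^\star)+2\epsilon\|(I-P_\rho)\hat B\|_F\|E\|_F$. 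Solving this quadratic inequality in $\|(I-P_\rho)\hat B\|_F$ and substituting the bound on $\mathcal F(BW,\sigma^\star)$ yields $\|(I-P_\rho)\hat B\|_F^2\le 2\gamma\,\mathcal F^\star(B)(1+O(\epsilon))$. Finally $\mathcal F(B,\rho)=\|(I-P_\rho)\hat B\|_F^2+\|(I-P_\rho)E\|_F^2\le 2\gamma(1+O(\epsilon))\mathcal F^\star(B)+(1+\epsilon)\mathcal F^\star(B)$; since $\gamma\ge1$ all $O(\epsilon)$ slack is absorbed into the $\gamma$-term, giving $\mathcal F(B,\rho)\le(1+(2+\epsilon)\gamma)\mathcal F^\star(B)$ after renaming $\epsilon$ (having rescaled $\epsilon$ by a constant at the start).

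The main obstacle is obtaining the precise constant $2$ (rather than a generic $O(1)$) together with a nontrivial success probability. Bounding $\|(I-P_{\sigma^\star})\hat B\|_F^2$ and $\|(I-P_{\sigma^\star})EW\|_F^2$ separately would either cost a factor $\ge 3$ (Markov on the second term alone) or force a tiny success probability; the fix is to observe that their sum is an \emph{unbiased estimator of $\mathcal F^\star(B)$} and apply a single Markov step — this is exactly why the statement's success probability is only $\approx 0.2$. The second delicate point is controlling the cross terms $\langle(I-P)\hat BW,(I-P)EW\rangle$: one must rewrite them via $\hat BW=\hat BZ(Z^\transpose W)$ and $EZ=0$ so that they reduce to the approximate-matrix-multiplication quantity $\|EWW^\transpose Z\|_F\le\epsilon\|E\|_F$; the orthogonality $EZ=0$ (a consequence of the hypotheses on $Z$) is essential here. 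The remaining ingredients — the matrix Chernoff bound for the subspace embedding, the second-moment computation for AMM, and the quadratic-formula bookkeeping — are routine.
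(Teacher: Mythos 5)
This statement is imported verbatim from Boutsidis et al.~\cite{DBLP:journals/tit/BoutsidisZMD15} (their Theorem~11 and Lemma~4); the paper gives no proof of its own, so there is nothing internal to compare against. Your reconstruction is correct and is essentially the argument of the cited source: the decomposition $B=\hat B+E$ with $\hat BE^\transpose=0$ and the resulting Pythagorean split, the subspace-embedding bound on $Z^\transpose WW^\transpose Z$ for the ``thin'' part, the approximate-matrix-multiplication bound $\|EWW^\transpose Z\|_F\le\epsilon\|E\|_F$ (crucially using $EZ=0$) for the cross terms, and a single Markov step on the unbiased estimator $\|(I-P_{\sigma^\star})\hat B\|_F^2+\|(I-P_{\sigma^\star})EW\|_F^2$ of $\mathcal F^\star(B)$, which is exactly what produces the constant $2$ and the modest success probability $0.2$. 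I see no gap; the only cosmetic caveat is that the various ``$\epsilon$'' tolerances must be set to constant fractions of the target $\epsilon$ at the outset, which you already acknowledge.
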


    \begin{theorem}(Theorem 1.3 of \cite{makarychev-etal19:jl-k-means}) \label{theorem:mmr19}Let $B' \in \R^{m\times t}$ be a set of $m$ points in $\R^t$ and $k$ be the input of $k$-means clustering. Given $0<\epsilon, \delta < 1/4$. Let $r = \log(k/\epsilon\delta)/\epsilon^2$. There exists a family of random maps $\pi:\R^t \rightarrow \R^r$, which is independent of $B'$, such that 
    $\prob{M \sim \pi}{\forall \rho, cost(B', \rho) \approx_{\epsilon} cost(M(B'), \rho) } \geq 1-\delta$. 
    \end{theorem}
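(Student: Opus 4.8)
The plan is to take $\pi$ to be a standard oblivious Johnson--Lindenstrauss ensemble, say $M\in\R^{r\times t}$ with i.i.d.\ $N(0,1/r)$ entries (or the $\pm\sqrt{1/r}$ / sparse variants), normalized so that $\E\|Mx\|_2^2=\|x\|_2^2$ for all $x$. The first step is a reformulation that deletes the minimization over centers from the cost. Fix a partition $\rho$ with clusters $C_1,\dots,C_k$; the optimal centers of $B'$ under $\rho$ are the centroids $\mu_j=\frac1{|C_j|}\sum_{i\in C_j}b_i$, and, because $M$ is linear, the optimal centers of $M(B')$ under $\rho$ are exactly $M\mu_1,\dots,M\mu_k$. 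Letting $Z_\rho\in\R^{m\times t}$ be the residual matrix whose $i$-th row is $b_i-\mu_{\rho(i)}$, this yields the clean identities
\begin{equation*}
\mathrm{cost}(B',\rho)=\|Z_\rho\|_F^2,\qquad \mathrm{cost}(M(B'),\rho)=\|Z_\rho M^{\transpose}\|_F^2 .
\end{equation*}
So it suffices to show that with probability at least $1-\delta$ we have $\|Z_\rho M^{\transpose}\|_F^2\approx_{\epsilon}\|Z_\rho\|_F^2$ for \emph{all} partitions $\rho$ simultaneously.

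For a single fixed $\rho$ this is a matrix-JL statement, and it concentrates far better than a single distance does because a cost is an average. Writing $A=Z_\rho^{\transpose}Z_\rho\succeq0$, one has $\|Z_\rho M^{\transpose}\|_F^2=\frac1r\sum_{a=1}^r g_a^{\transpose}Ag_a$ for i.i.d.\ standard Gaussians $g_a$, and Hanson--Wright applied to each quadratic form together with a Bernstein bound over the $r$ independent copies gives
\begin{equation*}
\Pr\!\left[\,\bigl|\,\|Z_\rho M^{\transpose}\|_F^2-\|Z_\rho\|_F^2\,\bigr|>\epsilon\|Z_\rho\|_F^2\,\right]\le 2\exp\!\left(-c\,\epsilon^2 r\cdot\frac{\|Z_\rho\|_F^2}{\|Z_\rho\|_{\mathrm{op}}^2}\right)\le 2\exp(-c\,\epsilon^2 r),
\end{equation*}
since the stable rank of $Z_\rho$ is at least $1$. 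With $r=\Theta(\log(k/(\epsilon\delta))/\epsilon^2)$ this probability is $\poly(\epsilon\delta/k)$.

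The crux is upgrading this to a guarantee uniform over all $\rho$ without paying a $\log m$ factor: the naive union bound over the $k^m$ partitions (or over all $\binom m2$ pairwise distances) is hopeless, and even ``project onto $\operatorname{rowspace}(B')$ and then preserve that subspace'' only gives the weaker target dimension $O(\operatorname{rank}(B')/\epsilon^2)$. I would follow the Makarychev--Makarychev--Razenshteyn route. After re-centering so that $\sum_i b_i=0$ (harmless, since both $\mathrm{cost}$ and $M$ are translation-compatible), expand
\begin{equation*}
\mathrm{cost}(M(B'),\rho)-\mathrm{cost}(B',\rho)=\Bigl(\|B'M^{\transpose}\|_F^2-\|B'\|_F^2\Bigr)-\sum_{j=1}^{k}\frac1{|C_j|}\Bigl(\|Ms_j\|_2^2-\|s_j\|_2^2\Bigr),\qquad s_j:=\sum_{i\in C_j}b_i .
\end{equation*}
The first parenthesis is a single fixed event, controlled by the displayed estimate. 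For the sum over clusters one sorts the clusters by ``weight'' $\|s_j\|_2^2/|C_j|=|C_j|\,\|\mu_j\|_2^2$: any partition has only $O(1/\eta)$ clusters of weight at least $\eta\|B'\|_F^2$, and the total weight of all clusters equals $\|B'\|_F^2-\mathrm{cost}(B',\rho)$, so the aggregate distortion of the \emph{light} clusters can be charged against the cost itself once $M$ is a uniform $\epsilon$-isometry on a suitable small set.

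What makes the target dimension $\log k$ rather than $\log m$ is the net needed for the $O(1/\eta)$ \emph{heavy} centers: each heavy $\mu_j$ is a bounded-weight convex combination of the $b_i$, so by Maurey's empirical-approximation lemma it is $\epsilon$-close to an average of $O(1/\epsilon^2)$ of the points after a dimension-free sparsification, which—after a standard rescaling and quantization of the scales—lets one take a net of candidate heavy center-tuples of size $\poly(k/(\epsilon\delta))$. Union-bounding the stable-rank estimate over this net together with the single global event, and then plugging in the net point closest to $\rho$'s heavy centers, shows that both bracketed quantities above are at most $\epsilon\cdot\mathrm{cost}(B',\rho)$ in absolute value with probability at least $1-\delta$, which is exactly the claim. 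I expect the main obstacle to be precisely this last step: getting the number of ``effective'' clusterings down to $\poly(k/(\epsilon\delta))$ while keeping the charging argument tight enough that the light clusters cost only an $\epsilon$-fraction of $\mathrm{cost}(B',\rho)$, not of $\|B'\|_F^2$. (For the $\pm1$ and sparse ensembles, Hanson--Wright is replaced by the corresponding fourth-moment / Bernstein bounds; the structure of the argument is unchanged.)
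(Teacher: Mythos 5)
This statement is not proved in the paper at all: it is imported verbatim as Theorem~1.3 of Makarychev--Makarychev--Razenshteyn and used as a black box in Section~\ref{sec:reduction-clustering}, so there is no in-paper proof to compare against. Judged on its own terms, your proposal is a faithful outline of the external proof's architecture rather than a proof. The preliminary identities are correct: since $M$ is linear, the optimal centers for $M(B')$ under a fixed $\rho$ are the images of the centroids, giving $\mathrm{cost}(B',\rho)=\|Z_\rho\|_F^2$ and $\mathrm{cost}(M(B'),\rho)=\|Z_\rho M^{\transpose}\|_F^2$, and your expansion
$\mathrm{cost}(M(B'),\rho)-\mathrm{cost}(B',\rho)=\bigl(\|B'M^{\transpose}\|_F^2-\|B'\|_F^2\bigr)-\sum_j|C_j|^{-1}\bigl(\|Ms_j\|_2^2-\|s_j\|_2^2\bigr)$ checks out. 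The Hanson--Wright/stable-rank bound for a single fixed partition is also standard and correct.

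The genuine gap is that everything hard about the theorem is concentrated in the step you explicitly defer: making the guarantee uniform over all $k^m$ partitions with target dimension $O(\log(k/\epsilon\delta)/\epsilon^2)$, i.e.\ independent of $m$. Two sub-issues there are not resolved by your sketch. First, the Maurey-type net over heavy-center configurations has to be constructed so that approximating $\mu_j$ by a sparse average changes the quantity $\|s_j\|_2^2/|C_j|$ (and its image under $M$) by at most an $\epsilon$-fraction of $\mathrm{cost}(B',\rho)$, not of $\|B'\|_F^2$; since $\|B'\|_F^2$ can exceed $\mathrm{cost}(B',\rho)$ by an arbitrary factor, the quantization scales and the net granularity must be tied to the cost itself, which is the partition-dependent quantity you are trying to control. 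Second, the same mismatch appears in your charging of the light clusters: the total cluster weight is $\|B'\|_F^2-\mathrm{cost}(B',\rho)$, so a bound of the form ``aggregate distortion $\le\epsilon\cdot(\text{total weight})$'' does not yield the multiplicative guarantee $\epsilon\cdot\mathrm{cost}(B',\rho)$ without an additional argument. You correctly identify this as the main obstacle, but identifying it is not the same as surmounting it, and the actual MMR19 proof resolves it with a substantially more involved argument than the one paragraph you devote to it. As a reconstruction of the cited result, then, the proposal is directionally right but incomplete at its core.
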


    \begin{proof}(for \cref{theorem:l2-ann})
        Recall that $B$ can be viewed as a point set of size $(wn+1)$ in $\R^d$. Let $B' = B\Omega S$ as defined in \cref{theorem:bzmd15}, which can be viewed as a point set of size $(wn+1)$ in $\R^t$. Let $M$ be the random dimension reduction map in \cref{theorem:mmr19}. Then $M(B')$ can be viewed as a point set of size $(wn+1)$ in $\R^r$.
        
        Let $\rho_1$ be the optimal partition of $k$-means clustering on $B'$ with $k = n$. Let $\rho_2$ be the optimal partition of $k$-means clustering on $M(B')$ with $k = n$.

        From \cref{theorem:mmr19}, we know that with probability $1-\delta$, 
        \begin{align*}
            \cost(B', \rho_1) \approx_{\epsilon} \cost(M(B'), \rho_1) \\
            \cost(B', \rho_2) \approx_{\epsilon} \cost(M(B'), \rho_2)
        \end{align*}
        Because $\rho_1, \rho_2$ are respectively optimal solutions, we have
        \begin{align*}
            \cost(B', \rho_1) \leq \cost(B', \rho_2) \\
            \cost(M(B'), \rho_2) \leq \cost(M(B'), \rho_1)
        \end{align*}

        Combine them gives $\cost(B', \rho_2) \leq (1+\epsilon) \cost(M(B'), \rho_2) \leq (1+\epsilon) (1+\epsilon) \cost(M(B'), \rho_1) \leq (1+\epsilon)^2 \cost(B', \rho_2)$. This shows that $\rho_2$ is $(1+\epsilon)^2$-approximation for $B'$ with probability $1-\delta$. Furthermore, \cref{theorem:bzmd15} tells us that $\rho_2$ is $(3+10\epsilon)$-approximation for $B$ with probability $0.2 - \delta$, i.e. 
        $$\cost(B, \rho_2) \leq (3+10\epsilon) \cost(B, \rho^*).$$

        We claim that $\cost(B, \rho^*) = (\frac{w}{w+1} \dist(q, c^*))^2 + w(\frac{1}{w+1} \dist(q, c^*))^2$, where $c^*$ is the nearest neighbor of $q$ from $C$. Because $B$ is $w$ copies of $n$ centers $c_i$ plus one $q$, for sufficiently large $w$, $\rho^*$ will not map two centers into one cluster. So $q$ will be assigned to a cluster together with $w$ copies of $c^*$.

        Similarly we claim that $\cost(B, \rho_2) = (\frac{w}{w+1} \dist(q, c_{\hat{i}}))^2 + w(\frac{1}{w+1} \dist(q, c_{\hat{i}}))^2$, where $\hat{i}$ is the output index. This is because $\rho_2$ is the optimal partition of $M(B')$, so $M(y)$ will be assigned to a cluster together with $w$ copies of its nearest neighbor, which is $r_{\hat{i}}$. 
        
        This means $\dist^2(c_{\hat{i}}, q) \leq (3+10\epsilon) \dist^2(c^*, q)$, i.e. $\dist(c_{\hat{i}}, q) \leq \sqrt{3+10\epsilon} \dist(c^*, q)$.
        
\end{proof}

The reason that we can not use the same trick in \cite{DBLP:conf/stoc/CohenEMMP15} to achieve $(1+\epsilon)$-ANN is, it select features based on $Z = V_n(C)$ and $B-BZZ^\transpose = \begin{bmatrix}
    0\\
    0\\
    \vdots\\
    0\\
    P - PZZ^T
\end{bmatrix}$, which requires some information of the query $q$ in the preprocessing of the nearest neighbor data structure.

\section{Lower Bounds} \label{sec:lower-bound}

\subsection{Lower Bound for Approximate Nearest Neighbor}
In this section, we prove a $\Omega(n)$ lower bound for the sketch size and a $\Omega(n)$ lower bound for the number of coordinates queried for any data structure for  the approximate nearest neighbor problem. This shows that our data structures are nearly optimal.

\begin{theorem}\label{thm:approx-NN-lb}
    Consider a set $C$ of $n$ points $c_1, \dots, c_n \in \R^d$ equipped with $\ell_1$ or $\ell_2$ metric. Consider a data structure that preprocesses $C$ to store a sketch. For any query point $q\in \R^d$, the data structure uses the sketch, queries some coordinates of $q$, and then returns $i$, where $c_i$ is a $(1+\epsilon)$-approximate nearest neighbor of $q$ in $C$ with probability $2/3$. Then the size of the sketch is $\Omega(n)$ and the number of coordinates queried is $\Omega(n)$.
\end{theorem}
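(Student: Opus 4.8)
The plan is to prove both lower bounds—$\Omega(n)$ on sketch size and $\Omega(n)$ on the number of queried coordinates—via a single communication/encoding argument, exploiting the fact that a $(1+\epsilon)$-approximate nearest neighbor query can distinguish a point that is very close to some center from one that is far from all centers. First I would set up a hard instance that embeds an arbitrary string $s\in\{0,1\}^n$ into the center set $C$. A convenient choice: take $d=n$ and let the centers be $c_i = s_i \cdot e_i$ (or, to keep all pairwise distances comparable, $c_i = M\cdot e_i + s_i\cdot e_i'$ on a slightly larger coordinate set, so that the $s_i$ bit is a small perturbation of a ``far'' base point along coordinate $i$). The key property to arrange is: by issuing a query $q$ that agrees with the ``location'' of $c_i$ on coordinate $i$ and is far from every other center, the index returned by the data structure reveals whether $c_i$ is the one with $s_i=1$ versus $s_i=0$—i.e., a single approximate-NN query on a suitably chosen $q$ recovers one bit $s_i$. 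Since the data structure is allowed to fail with probability $1/3$, I would use a standard amplification/repetition-then-majority wrapper, or more cleanly invoke a distributional/Yao-style argument: if $s$ is uniformly random, then the sketch together with the (at most $t$) queried coordinates of each of $n$ probe queries must let us recover all $n$ bits of $s$ with constant advantage, so by Fano's inequality (or a direct entropy/counting argument) the sketch must have $\Omega(n)$ bits, giving the $\Omega(n)$ space bound.

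For the query-complexity bound I would argue contrapositively within the same instance: suppose the data structure reads only $o(n)$ coordinates of $q$ on the probe query $q^{(i)}$ designed to test bit $s_i$. The adversary picks the instance so that the bit $s_i$ is encoded \emph{only} in coordinate $i$ of $c_i$, and the correct answer on $q^{(i)}$ depends on $s_i$ only through whether $q^{(i)}$ reads coordinate $i$ (together with the sketch). Concretely, fix the sketch; the answer on $q^{(i)}$ is then a function of the $o(n)$ coordinates it reads. Averaging over a uniformly random permutation of which center holds which bit (equivalently, randomizing the embedding), the probability that the query algorithm reads the single ``informative'' coordinate is $o(1)$, so conditioned on not reading it the output is independent of $s_i$ and cannot beat $1/2$—contradicting the $2/3$ success guarantee. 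This pins the per-query coordinate count at $\Omega(n)$; but one must be a little careful, because the bound we want is on the \emph{total} query budget for a single query $q$, and the instance should force that $q$ to carry $\Omega(n)$ independent ``tests.'' I would therefore use a single worst-case query $q$ (not $n$ separate ones): choose $q$ so that it is a $(1+\epsilon)$-approximate nearest neighbor question whose answer changes under flipping any one of $\Omega(n)$ disjoint coordinate blocks, so that identifying the right center forces reading at least one coordinate from each block—$\Omega(n)$ coordinates.

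Both parts reduce to the same core gadget, so the cleanest writeup is: (1) define the gadget instance with $n$ centers and parameter gap $1+\epsilon$, proving the two ``sensitivity'' claims—(a) the correct index determines $\Omega(n)$ bits of the hidden string, and (b) flipping any one hidden bit changes the correct index and is only detectable by reading the corresponding coordinate; (2) for space, run the data structure as a one-way protocol from the preprocessor (who knows $C$, hence $s$) to the querier, and apply an information-theoretic lower bound (Fano / entropy of the recovered string) to conclude $|\text{sketch}|=\Omega(n)$ after the standard success-probability amplification; (3) for query complexity, fix the (now large) sketch and use the randomized-embedding averaging argument above to show any algorithm reading $o(n)$ coordinates errs with probability $\ge 1/2 - o(1)$ on the hard query. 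The $\ell_1$ and $\ell_2$ cases are handled uniformly since the gadget lives on the Boolean-type cube where $\|\cdot\|_1$ and $\|\cdot\|_2^2$ differ only by the (known, fixed) dimension, so the $(1+\epsilon)$ gap can be arranged in either norm by rescaling; I would just state the construction once and note $\ell_2$ follows by taking square roots of the squared distances.

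The main obstacle I anticipate is engineering the gadget so that a \emph{single} $(1+\epsilon)$-approximate answer genuinely reveals $\Omega(n)$ bits (for the space bound) \emph{and} so that reading $o(n)$ coordinates provably loses information about the correct index (for the query bound), while keeping all pairwise center distances within a controlled range so that the $(1+\epsilon)$ approximation is meaningful—if the centers are too spread out, the approximation factor makes many of them valid answers and the ``decoding'' breaks. The trick will be a base point $B\cdot\mathbf{1}$ far from the origin plus tiny, coordinate-localized $\pm$ perturbations of magnitude $\Theta(\epsilon B/\sqrt n)$ (for $\ell_2$) or $\Theta(\epsilon B/n)$ (for $\ell_1$), so that the query $q$, which sits ``next to'' the true center's perturbation pattern, is a factor $<1+\epsilon$ closer to it than to any wrong center, but this margin collapses if the algorithm mis-reads even one perturbed coordinate. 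Verifying these distance inequalities is the one place where a careful (but still routine) calculation is needed; everything else is the standard reduction-plus-Fano template.
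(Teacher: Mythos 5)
Your overall plan—a self-contained encoding/Fano argument for the sketch size and an averaging argument for the query complexity—is a genuinely different route from the paper, which simply \emph{cites} the asymmetric communication lower bound of Andoni, Indyk, and P\v{a}tra\c{s}cu (Theorem~2 of~\cite{AndoniIP06}) to handle the sketch size and uses the trivial instance $c_i=e_i$, $q=e_j$ for the query count. However, the central gadget in your proposal does not work as stated, and this gap is not cosmetic.

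Take either of your candidate gadgets, say $c_i = M e_i + s_i e_i'$ with the natural probe $q_i = M e_i + e_i'$. Then $\|q_i - c_i\|$ is $0$ when $s_i=1$ and $1$ when $s_i=0$, while $\|q_i - c_j\| \approx 2M$ for $j\ne i$. So $c_i$ is the \emph{unique} nearest neighbor of $q_i$ in both cases, the data structure returns index $i$ either way, and the bit $s_i$ is not revealed. The same failure occurs for $c_i = s_i e_i$: if $s_i=0$ every center with bit $0$ is equidistant from $q_i$, and if $s_i=1$ the answer is $i$, but $i$ is also a legal answer when $s_i=0$, so the output index cannot decode $s_i$. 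The missing ingredient is a \emph{reference center} whose distance to $q_i$ is independent of $s_i$ and wedged strictly between the two candidate distances to $c_i$, so that the $(1+\eps)$-threshold flips the unique valid answer between $c_i$ and the reference as $s_i$ flips. Concretely, $c_0=0$, $c_i=(1+s_i)e_i$ and $q_i = 0.9\,e_i$ would do this for small $\eps$: when $s_i=1$ only $c_i$ (at distance $0.1$) is a $(1+\eps)$-approximate answer, and when $s_i=0$ only $c_0$ (at distance $0.9$, vs.\ $c_i$ at $1.1$) is. Your proposed ``base point $B\mathbf{1}$ plus $\pm$ perturbations'' variant has a different problem: it encodes a single index $i^*\in[n]$, which is $\log n$ bits, not $n$ bits, so without a further layering argument it does not give the $\Omega(n)$ space bound. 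You acknowledge that the gadget engineering is the obstacle, but you do not resolve it, and the naive constructions you sketch fail; so this is a real gap, which the paper sidesteps entirely by invoking~\cite{AndoniIP06}.

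Two more issues. First, your argument for the query-coordinate bound is phrased in terms of the data structure failing to read ``coordinate $i$ of $c_i$''—but the sketch already knows $C$ (and hence $s_i$), so missing information about the dataset cannot be the source of a query lower bound; the adversary's uncertainty must live in $q$, not in $C$. The block-averaging version you give later is correct in spirit and essentially equivalent to the paper's argument (uniformly random $q=e_j$: any algorithm reading fewer than $2n/3$ coordinates misses $j$ with probability at least $1/3$ and must then guess), but the intermediate framing should be corrected. Second, note that the bound you would get from an $\ell_2$ computation on a $\{0,1\}$-type gadget is not automatic from ``taking square roots'': a $(1+\eps)$ gap in $\ell_2^2$ becomes only a $(1+\eps/2)$ gap in $\ell_2$, so constants need to be tracked, though this is indeed routine once the gadget is fixed. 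Once the reference-center gadget is put in place, the Fano step you outline is correct: $H(s_i\mid\text{sketch})\le H(1/3)<1$ for each $i$, hence $I(s;\text{sketch})\ge(1-H(1/3))n=\Omega(n)$, giving the space bound.
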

\begin{proof}
    The lower bound on the sketch size follows directly from Theorem 2 of \cite{AndoniIP06}. 
    \begin{theorem}[Theorem 2 of \cite{AndoniIP06}] \label{theorem:CC-ANN}
        Consider the communication complexity version of $(1+\epsilon)$-approximate nearest neighbor in $\{0,1\}^d$, $d = O(\frac{\log^2 n}{\epsilon^5})$, where Alice receives the query $q\in \{0,1\}^d$ and Bob receives the set $C\subset \{0,1\}^d$, then for any $\epsilon = \Omega(n^{-\gamma}), \gamma < 1/2$, in any randomized protocol deciding the $(1+\epsilon)$-approximate nearest neighbor problem, either Alice sends $\Omega(\frac{\log n}{\epsilon^2})$ bits or Bob sends $\Omega(n^{1-c})$ bits, for any $c>0$.
    \end{theorem}
    Assume for contradiction that we have a data structure of size $n^{0.99}$. Then Bob can send this sketch to Alice and let Alice decide the $(1+\epsilon)$-approximate nearest neighbor problem, which contradicts \cref{theorem:CC-ANN}.

    Next, we show a lower bound $\Omega(n)$ of the number of coordinate queries. Consider a set of points $\{c_i\} \subset \R^n$, where $c_i = e_i$ is the unit vector. Let the query to be co-located with one of the points, so that the approximate nearest neighbor has to be the exact nearest neighbor. To find out which coordinate of the query is non-zero, a deterministic data structure needs to look at all $n$ coordinates. One can make the same claim about randomized data structures by applying the Yao's Principle.
\end{proof}

\subsection{Lower Bound for Exact Nearest Neighbor}
We prove a simple and optimal lower bound on the size of the data structure to store $n$ centers if we want to compute nearest neighbor exactly. This lower bound follows from a two-party communication protocol where we assume that one of the parties, Alice, holds all centers and the other party, Bob, holds the query point. Alice is allowed to send a one-shot message to Bob, and upon receiving the message, Bob has to compute the nearest neighbor center of the query point. It is not hard to see that a data-structure and query protocol with data-structure of size $D$ will imply a communication protocol with communication cost of $D$.

\begin{lemma} \label{lem:exact-lb}
    If Alice sends a sketch of her centres from which Bob can exactly compute the nearest neighbour of $X$ among $C$, then Alice needs to send at least $\Omega(nd)$ bits.
\end{lemma}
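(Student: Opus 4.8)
The plan is to establish the communication lower bound directly by an encoding argument and then appeal to the reduction already noted above (a data structure of size $D$ yields a one-way protocol of cost $D$). Consider the family of instances indexed by tuples $a=(a_1,\dots,a_n)$ with each $a_i\in\{0,1\}^d$: set $C(a)=\{c_1,\dots,c_n\}\subseteq\R^d$ where $c_i^{(j)}=2i+a_i^{(j)}$ for all $j\in[d]$. Each $a$ gives a legitimate instance of $n$ \emph{distinct} points, since the coordinate values of $c_i$ lie in $\{2i,2i+1\}$ and these pairs are disjoint for distinct $i$; moreover $a\mapsto C(a)$ is injective, so the family has size $2^{nd}$.

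Next I would show that Bob can recover all of $a$ from Alice's single message using $n$ fixed queries. For $i\in[n]$ let $q_i=(2i+\tfrac12)\cdot\mathbf{1}_d$; crucially $q_i$ depends only on $i,n,d$, so Bob can form it without any knowledge of $a$. For $\ell_1$, every coordinate of $q_i-c_i$ has absolute value $\tfrac12$ while for $k\neq i$ every coordinate of $q_i-c_k$ has absolute value at least $\tfrac32$, hence $\dist(q_i,c_i)=\tfrac12 d<\tfrac32 d\le\dist(q_i,c_k)$; the analogous inequality $\tfrac12\sqrt d<\tfrac32\sqrt d$ handles $\ell_2$. Thus $c_i$ is the unique nearest neighbor of $q_i$ in $C(a)$, so from the exact answer Bob reads $a_i^{(j)}=c_i^{(j)}-2i$ for every $j$; running the queries $q_1,\dots,q_n$ recovers the whole tuple $a$.

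Finally the counting step: if Alice's sketch used fewer than $nd$ bits it would take fewer than $2^{nd}$ distinct values, so by pigeonhole two instances $C(a)\neq C(a')$ would receive the same message $m$; picking $i,j$ with $a_i^{(j)}\neq a_i'^{(j)}$, Bob's output on query $q_i$ is a deterministic function of $(m,q_i)$ and hence identical for both instances, contradicting that it must equal $c_i$ on input $C(a)$ and $c_i'$ on input $C(a')$, which differ in coordinate $j$. Therefore Alice must transmit at least $nd=\Omega(nd)$ bits, and the claimed data-structure size bound follows from the data-structure-to-communication reduction. I do not anticipate a genuine obstacle: the statement is for deterministic data structures (the stored sketch being a fixed function of $C$), which is precisely the regime of the lemma, and the only thing requiring care is the elementary verification above that each $q_i$ isolates its intended nearest neighbor.
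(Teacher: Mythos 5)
There is a genuine gap: your recovery step assumes Bob learns the \emph{coordinates} of the nearest center, but in this paper a nearest-neighbor data structure (and hence the induced one-way protocol) returns only the \emph{index} $i\in[n]$ of the nearest center. This is explicit in the problem definition and in the paper's own proof of this lemma, which works with the assignment map $\sigma$ sending each query to a cluster \emph{number}. In your construction the correct answer to the query $q_i=(2i+\tfrac12)\mathbf{1}_d$ is the index $i$ for \emph{every} choice of $a$, so Bob's outputs on $q_1,\dots,q_n$ are identical across all $2^{nd}$ instances and carry zero information about $a$. The pigeonhole step then produces no contradiction: two instances $C(a)\neq C(a')$ sharing the same sketch give the same (and still correct) index answers. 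The sentence ``from the exact answer Bob reads $a_i^{(j)}=c_i^{(j)}-2i$'' is exactly where the argument breaks, since $c_i^{(j)}$ is not part of the answer.

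To repair this you must design instances in which the \emph{identity} of the nearest neighbor, not its coordinates, encodes the hidden bits. This is what the paper's proof does: it pairs each hidden sparse vector $x_i$ with a companion point $z_i$ and uses queries $y_{ij}$ whose unique nearest neighbor is $x_i$ when $x_i(j)\neq 0$ and $z_i$ when $x_i(j)=0$, so each index answer reveals one bit; counting the $\binom{d'}{d'/2}^n$ choices then gives $\Omega(nd)$. Your counting/pigeonhole framing is fine and could be kept, but the instance family and queries need this ``two candidate answers per bit'' structure (and one must verify the separation between the two candidate distances, e.g.\ $\sqrt{2-2\epsilon}$ versus $\sqrt{2-\epsilon}$ in the paper), which is the essential idea missing from your proposal.
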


\begin{proof}
    The proof is a slight modification of Theorem B.1 of \cite{indyk-wagner18:apx-nn-limited-space} (with a change in notation which we explain below). First, note that, if Bob can compute $\cost(C,X)$ exactly, then it must be the case that the map $\sigma$ that maps each point $x \in X$ to cluster numbered $\sigma(x)$ correctly computes the nearest cluster for each $x$. Indeed, if this is not the case, some $x \in X$ is mapped to a cluster further away from its nearest one. In this case, the additive error incurred cannot be negated by other points in $X$ as none of those points cannot be mapped to a cluster which is closer than its nearest one.

    We assume $d=d'+1+\log(n)$. Setting $\epsilon = \sqrt{2/d'}$ and fixing $\ell = 1/\epsilon^2$ in Theorem B.1 of \cite{indyk-wagner18:apx-nn-limited-space}, we have dataset that consists of $2n$ many points, $x_1, \cdots, x_n$ and $z_1, \cdots, z_n$ each of dimension $d$ such that: \begin{itemize}
        \item The first $d'$ coordinates of $x_i$ is an arbitrary $\ell = d'/2$ sparse vector in which every non-zero coordinate is set to $1/\sqrt{\ell}$. The $(d'+1)$-th coordinate is 0 and the remaining $\log n$ coordinates encode the binary encoding of $i$, with each coordinate multiplied by 10.

        \item The first $d'$ coordinates of $z_i$ is 0. The $(d'+1)$-th coordinate is $\sqrt{1-\epsilon}$ and the rest of the coordinates are encoded similar to $x_i$.
    \end{itemize}

    Similar to \cite{indyk-wagner18:apx-nn-limited-space}, we can see that the number of bits required to represent $x_1, \cdots, x_n$ is at least $\log (\binom{d'}{d'/2}^n) = \Omega(dn)$ which is our desired lower bound. Also, we note that for a query point $y_{ij}$ whose first $d'+1$ coordinates are 0 except the $j$-th coordinate which is set to 1, and the rest of the coordinates are similar to $x_i$ or $z_i$, the following properties hold: \begin{itemize}
        \item If $x_i(j) \neq 0$, then the closest point of $y_{ij}$ is $x_i$ at a distance $\sqrt{2-2\epsilon}$.
        \item if $x_i(j) = 0$, then the closest point of $y_{ij}$ is $z_i$ at a distance $\sqrt{2-\epsilon}$.
    \end{itemize}
    An exact nearest neighbor query, which can be obtained from $\sigma$ will reveal the value of $x_i(j)$ for any $i \in [k]$ and $j\in[d']$. This proves the $\Omega(dn)$ lower bound.
\end{proof}

\end{document}